\newcommand{\pderiv}[2]{\frac{\partial #1}{\partial #2}}
\newcommand{\E}{\mathbb{E}}
\newcommand{\Var}{\mathrm{Var}}
\newcommand{\Cov}{\mathrm{Cov}}
\newtheorem{definition}{Definition}
\newtheorem{theorem}{Theorem}
\theoremstyle{definition}
\newtheorem{remark}{Remark}
\theoremstyle{definition}
\newtheorem{lemma}[theorem]{Lemma}
\theoremstyle{definition}
\newtheorem{proposition}[theorem]{Proposition}
\numberwithin{theorem}{section}
\begin{document}

\title{Predicting Rare Events by Shrinking Towards Proportional Odds}
\date{\today}

\author{Gregory Faletto\thanks{Corresponding author: gregory.faletto@marshall.usc.edu} }
\author{Jacob Bien}
\affil{Department of Data Sciences and Operations \\ University of Southern California Marshall School of Business}



%
       



\maketitle

\begin{abstract}

Training classifiers is difficult with severe class imbalance, but many rare events are the culmination of a sequence with much more common intermediate outcomes. For example, in online marketing a user first sees an ad, then may click on it, and finally may make a purchase; estimating the probability of purchases is difficult because of their rarity. We show both theoretically and through data experiments that the more abundant data in earlier steps may be leveraged to improve estimation of probabilities of rare events. We present \textsc{PRESTO}, a relaxation of the proportional odds model for ordinal regression. Instead of estimating weights for one separating hyperplane that is shifted by separate intercepts for each of the estimated Bayes decision boundaries between adjacent pairs of categorical responses, we estimate separate weights for each of these transitions. We impose an L1 penalty on the differences between weights for the same feature in adjacent weight vectors in order to shrink towards the proportional odds model. We prove that \textsc{PRESTO} consistently estimates the decision boundary weights under a sparsity assumption. Synthetic and real data experiments show that our method can estimate rare probabilities in this setting better than both logistic regression on the rare category, which fails to borrow strength from more abundant categories, and the proportional odds model, which is too inflexible.

\end{abstract}

\section{Introduction}

Estimating probabilities of rare events is known to be difficult due to class imbalance. However, sometimes these events are the culmination of a sequential process with intermediate outcomes. For example:
\begin{enumerate}
\item In online marketing, a customer is first served an ad, then may click on it, then may indicate interest in making a purchase (by ``liking" the product, for example), and finally may make a purchase. 
\item In health and medicine, many outcomes can be encoded as ordered categorical variables, like reported quality of life and disease progression \citep{norris2006ordinal}. 
\item Sales of high-price durable goods typically follow a \textit{sales funnel} \citep{10.1145/2783258.2788578}. For example, when buying a car often a potential buyer first comes in to see a car, may take a test drive, and finally may buy the car.
\end{enumerate}
In many of these cases, the intermediate events are much more common than the rare events. Though these intermediate events may not be of direct interest, if the features that contribute to the probability of advancing through earlier classes also contribute to the probability of advancing through later classes, then the more abundant intermediate events can be leveraged to improve estimation of the rare event probabilities.

The \textit{proportional odds model} \citep{mccullagh1980regression}, also called the \textit{ordered logit model} \citep[Section 15.9.1]{cameron2005microeconometrics}, satisfies, for ordinal outcomes \(k \in \{1, \ldots, K - 1\}\),
\begin{equation}\label{prop_odds}
\log \left( \frac{\mathbb{P} \left( y \leq k \mid \boldsymbol{x} \right)}{\mathbb{P} \left( y > k  \mid \boldsymbol{x} \right)} \right) = \alpha_k + \boldsymbol{\beta}^\top \boldsymbol{x}  ,
\end{equation}
where \(\boldsymbol{\beta} \in \mathbb{R}^p\) is a vector of weights and \(\boldsymbol{x} \in \mathbb{R}^p\) is a vector of features. This implies that for all \(k \in \{1, \ldots, K - 1\}\)
\begin{equation}\label{prop_odds2}
p_k ( \boldsymbol{x} ) := \mathbb{P} \left( y \leq k \mid \boldsymbol{x} \right) = F \left( \alpha_k + \boldsymbol{\beta}^\top \boldsymbol{x}  \right),  
\end{equation}
where \(F(\cdot)\) is the logistic cumulative distribution function, \(F(t) = \exp\{t\}/[1 + \exp\{t\}]\). Notice that \( \alpha_k + \boldsymbol{\beta}^\top \boldsymbol{x} \) is the Bayes decision boundary for the binary random variable \(\mathbbm{1} \left\{ y \leq k \right\} \mid \boldsymbol{x}  \). This problem could instead be cast as \(K - 1\) binary classification problems of the form \eqref{prop_odds2} for adjacent classes:
\begin{equation}\label{prop_odds_gen}
\log \left( \frac{\mathbb{P} \left( y \leq k \mid \boldsymbol{x} \right)}{\mathbb{P} \left( y > k \mid \boldsymbol{x} \right)} \right) = \alpha_k + \boldsymbol{\beta}_k^\top \boldsymbol{x} , \qquad k \in \{1, \ldots, K - 1 \}
.
\end{equation}
The condition that the weight vectors \(\boldsymbol{\beta}_k\) of the separating hyperplanes in \eqref{prop_odds_gen} are all equal, as in \eqref{prop_odds}, has been called the \textit{proportional odds assumption} \citep{mccullagh1980regression} or the \textit{parallel regression assumption} \citep[Section 18.3.2]{Greene2012Econometric}. One way to motivate this model is by supposing that the response is driven by a latent (unobserved) variable \(U\),
\begin{equation}\label{prop.odds.utilty}
U = \boldsymbol{\beta}^\top \boldsymbol{x} + \epsilon,
\end{equation}
where \(\epsilon\) has a standard logistic distribution and is independent of \(\boldsymbol{x}\). Response \(k\) is observed if and only if \( -\alpha_{k} \leq  U_i  < -\alpha_{k - 1} \) (where we define \(\alpha_0 := - \infty\) and \(\alpha_K := \infty\)). This model leads to \eqref{prop_odds2}. (See Section 3.3.2 of \citealt{agresti2010analysis} for a more detailed explanation.)

Because the proportional odds model assumes that the decision boundaries between adjacent classes are all governed by the same hyperplane defined by \(\boldsymbol{\beta}\) (separated only by different intercepts \(\alpha_k\)), it assumes that the decision boundary between any two classes perfectly explains the decision boundary between any two other classes, other than an intercept term. If a rare event has much more common intermediate events before it, this model can therefore be very useful for better estimating the parameters of the model, and therefore better estimating the rare event probabilities. However, it could be that the proportional odds assumption is too rigid to be realistic, because observed features may have varying influence at different decision boundaries. For example: 
\begin{enumerate}
\item In online marketing, users may click on an ad only to realize that the product is not what they were expecting, resulting in a particularly low probability of purchase.
\item For expensive goods like a home or car, potential buyers may express interest by going on a tour or taking a test drive purely out of curiosity; this may be distinct from their level of interest in actually making a purchase.
\item Students may place weights on different factors when deciding whether to apply to graduate school than they did when deciding whether to apply to an undergraduate program---they may have more appealing alternatives to additional schooling, they may face new financial or personal constraints because they are older, etc.
\end{enumerate}
In each of these settings, if specific features vary in relevance for different decision boundaries while other features have about the same influence at every boundary, the proportional odds assumption may be too strong. Violations or relaxations of the proportional odds assumption along the lines of \eqref{prop_odds_gen} have previously been considered by, for example, \citet{brant1990assessing}. \citet{peterson1990partial} developed \textit{partial proportional odds models}, which allow the proportional odds assumption to hold for some features but not others, an idea previously mentioned by \citet{armstrong1989ordinal}. (See Section 3.6.1 of \citealt{agresti2010analysis} for a textbook-level discussion). These relaxations have not been widely adopted because fitting separate weights for each outcome is too flexible unless \(p(K-1) \ll n\) and all classes are reasonably common (and we discuss additional difficulties of this kind of model in Sections \ref{presto.cons.sec} and \ref{sparse.sim}).

\subsection{Our Contributions}

In this paper we propose relaxing the proportional odds assumption as in \eqref{prop_odds_gen}, but controlling the amount of relaxation by placing \(\ell_1\) penalties on the differences in weights corresponding to the same features in adjacent \(\boldsymbol{\beta}_k\) vectors, in a way that is reminiscent of the fused lasso \citep{tibshirani2005sparsity}. This model allows us to borrow strength from outcomes where data is much more abundant to improve rare probability estimates when outcomes are much more rare without making the strong assumption that the weights in these models are exactly equal. In particular, it allows for the proportional odds model to hold for some specific features in some adjacent pairs of decision boundaries, but not others.

We formalize the intuitive argument we outline above---that the proportional odds model allows for precise estimation of the \(\boldsymbol{\beta}\) vector as long as at least one decision boundary is surrounded by reasonably well-balanced outcomes, and this allows for improved estimation of rare probabilities at the end of the sequence---through theoretical results in Section \ref{sec.theory.presto}. Motivated by this argument but skeptical of the proportional odds assumption holding exactly, we propose \textsc{PRESTO} in Section \ref{sec.method} and prove that it consistently estimates \(\boldsymbol{\beta}_1, \ldots, \boldsymbol{\beta}_{K-1}\) under a sparsity assumption in Section \ref{presto.cons.sec}. In Section \ref{sec.sim} we demonstrate through synthetic and real data experiments that \textsc{PRESTO} can outperform both logistic regression on the rare class and the proportional odds model, both in settings where the differences in adjacent \(\boldsymbol{\beta}_k\) vectors are sparse, as \textsc{PRESTO} assumes, and in settings where these differences are not sparse. Before we move on from the introduction, we review related literature.

\subsection{Related Work}

The difficulty of classification with class imbalance has been well-known for decades. \citet{johnson2019} provide a recent review focusing on deep learning methods for handling class imbalance, and they also provide references for many other ways of dealing with class imbalance. One particularly closely related work is \citet{owen2007infinitely}, which explores how logistic regression handles a vanishingly rare class. A particularly popular approach, SMOTE \citep{chawla2002smote}, has its own recent review paper \citep{fernandez2018smote}.

\citet{tutz2016regularized} discuss the possibility of penalizing differences in weights between adjacent models, including briefly proposing an \(\ell_1\) penalty between weights in corresponding categories for proportional hazard models, though this is not the focus of their article and they only mention the idea very briefly without investigating it. 

\citet{ordnet} propose a generalization of a proportional odds model (and implement it in the R package \texttt{ordinalNet}) that allows for the possibility that adjacent categories have equal (or very close) weights, but their method differs from ours. The most closely related model \citeauthor{ordnet} propose is an over-parameterized \textit{semi-parallel} model with both a matrix of separate parameters for each level, an approach reminiscent of \citet{peterson1990partial}. This results in more flexible, less structured models than our approach, which assumes similarity between adjacent \(\boldsymbol{\beta}_k\) vectors. Further, \citet{ordnet} do not investigate the theoretical properties of their model, or the use of their model for improving estimates of rare event probabilities.

\citet{ugba2021smoothing} and \citet{epub26912} implement an \(\ell_2\) rather than \(\ell_1\) penalty between weights in models for adjacent decision boundaries. However, these works also focus on ordinal regression more generally, while we focus both theoretically and in simulations on leveraging common classes to improve estimated probabilities of rare events. Further, the \(\ell_1\) penalty, which imposes sparse differences, allows the proportional odds assumption to hold for some features and decision boundaries and not others, while the \(\ell_2\) ridge penalty used by \citet{ugba2021smoothing} (and previously proposed by \citealt[Section 4.2.2]{tutz2016regularized}) relaxes the proportional odds assumption for all features but regularizes the relaxation. The \(\ell_2\) group lasso penalty used by \citet{epub26912} can impose the proportional odds assumption for a given feature either at all decision boundaries or none of them, making it less flexible than \textsc{PRESTO}.

Besides the fused and generalized \citep{tibshirani2011solution} lasso, our work relates more specifically to the generalized fused lasso \citep{hofling2010coordinate, xin2014efficient}. \citet{xin2014efficient} in particular propose and analyze an algorithm to solve a class of optimization problems similar to the PRESTO optimization problem, \eqref{ordinal.pen.opt}, with \(\ell_1\) fusion penalties. In contrast to the present work, \citet{xin2014efficient} focus almost entirely on the properties of their algorithm. Further, PRESTO lies outside their class of optimization problems because PRESTO directly penalizes only the coefficients in the first decision boundary, not all of the coefficients. This distinction is central to our proof strategy for Theorem \ref{cons.thm}; \citet{xin2014efficient} do not prove the consistency of their method. \citet{viallon2013adaptive} provide theoretical results for the generalized fused lasso specifically in the cases of linear and logistic regression, though not for ordinal regression. \citet{viallon2016robustness} prove theoretical results for a broader class of generalized linear models that still does not include the proportional odds model or a generalization like PRESTO. Lastly, \citet{Ekvall2022} prove theoretical results for a class of models in which PRESTO can be expressed, and indeed we leverage their results in proving our own theory, though they do not directly consider fusion penalties.

\section{Motivating Theory}\label{sec.theory.presto}

We present the following theoretical results to motivate \textsc{PRESTO}. The thrust of our motivation is as follows: 
\begin{enumerate}
\item Logistic regression does arbitrarily badly as class imbalance worsens (Theorem \ref{log.imb}).
\item However, as one would expect, a logistic regression model's ability to estimate probabilities improves when the parameters \(\boldsymbol{\beta}\) are known (Theorem \ref{est.known.beta}). 
\item The proportional odds model allows for precise estimation of \(\boldsymbol{\beta}\) as long as two adjacent classes are reasonably common, even if the remaining classes are arbitrarily rare (Theorem \ref{main.cov.thm.2}). 
\item Taking 2 and 3 together, our conclusion is that we can better estimate probabilities of rare events by using a method that leverages data from decision boundaries between abundant classes to better estimate decision boundaries near rare classes. (Both the proportional odds model and \textsc{PRESTO} leverage the data in this way.)
\end{enumerate}

Before we present our results, we discuss the metrics we will use in our results and some of the assumptions we will make.

\subsection{Preliminaries}

Our goal is to characterize and compare the prediction error of estimated conditional probabilities of a rare class from both logistic regression and the proportional odds model. There are many settings where estimating rare probabilities accurately (as opposed to, for example, predicting class labels accurately) is important. For example, in online advertising, advertisers bid on the price to display an ad to a given user. Advertisers could bid optimally if they knew the true probability each user would click a given ad, so they'd like to estimate these probabilities as precisely as possible \citep{he2014practical, zhang2014optimal}. Another example is public policy, where scarce resources may be allocated based on estimated probability of bad outcomes \citep{von2019predicting}. To prioritize optimally, precisely estimated probabilities are needed, not just accurate labels. 

A natural metric in an estimation setting is mean squared error, \( \E \left[  \left( \hat{\pi}(\boldsymbol{x}) - \pi(\boldsymbol{x})  \right)^2 \right]\), where \(\pi(\boldsymbol{x}) \) is the actual probability of a rare event conditional on \(\boldsymbol{x}\) and \(\hat{\pi}(\boldsymbol{x})\) is an estimate. Further, we leverage asymptotic statistics and present results for \textit{large-sample} estimators. We define the notions of asymptotic mean squared error we will use below:

\begin{definition}\label{def.asym.mse}

Let \(\hat{\theta}_n\) be a maximum likelihood estimator for a parameter \(\theta \in \mathbb{R}\) from a sample size of \(n\). Under regularity conditions, the sequence of random variables \(\{\sqrt{n} \cdot ( \hat{\theta}_n - \theta)\}\) converges in distribution to a Gaussian random variable. Then we define the \textbf{asymptotic mean squared error} of \(\hat{\theta}_n\) to be (suppressing \(n\) from the notation)
\[
\mathrm{Asym. MSE}(\hat{\theta}):= \E \left[ \left( \lim_{n \to \infty}  \sqrt{n} \left[\hat{\theta}_n - \theta \right] \right)^2 \right]
.
\]

\end{definition}
Asymptotic metrics are commonly used to compare the performance of estimators. The \textit{asymptotic relative efficiency} of two estimators is the ratio of their asymptotic variances,
\[
\mathrm{Asym. }\Var(\hat{\theta}):=  \Var \left( \lim_{n \to \infty}  \sqrt{n} \left[\hat{\theta}_n - \theta \right] \right),
\]
which is equal to \(\mathrm{Asym. MSE}(\hat{\theta}_n)\) for the (asymptotically unbiased) maximum likelihood estimators we consider. See Section 10.1.3 of \citet{casella2021statistical}, Section 8.2 of \citet{van2000asymptotic}, or Section 4.4.5 of \citet{Greene2012Econometric} for textbook-level discussions. The asymptotic MSE could also be used as an estimator of the MSE for large (but finite) \(n\), under the heuristic reasoning that for large \(n\),
\begin{align*}
 \mathrm{MSE}(\hat{\theta}) =  ~ & \frac{1}{n} \E \left[  \left(  \sqrt{n} \cdot \left[ \hat{\theta} - \theta \right] \right)^2 \right] 
\\  \approx   ~ &    \frac{1}{n}  \E \left[ \left( \lim_{n \to \infty}  \sqrt{n} \cdot \left[ \hat{\theta} - \theta \right] \right)^2 \right] 
\\= ~ &   \frac{1}{n}   \mathrm{Asym. MSE}(\hat{\theta})
.
\end{align*}
See Section 4.4 of \citet{Greene2012Econometric}, Section 7.3 of \citet{hansen2022econometrics}, or Section 3.5 of \citet{wooldridge2010econometric} for more discussion of this kind of finite-sample estimation using asymptotic quantities.

We briefly present and discuss some of our assumptions.

\begin{itemize}

\item \textbf{Assumption \(X(\mathcal{A})\):} The random vectors \(\boldsymbol{x}_i \in \mathbb{R}^p\) are independent and identically distributed (iid) for \(i \in \{1, \ldots, n\}\), each with probability measure \(dF(\boldsymbol{x})\) with measurable, bounded support \(\mathcal{S} \subset \mathcal{A} \subseteq \mathbb{R}^p\), with \(\Cov \left(\boldsymbol{X} \right)\) positive definite.

\item \textbf{Assumption \(Y(K)\):} The response \(y_i \in \{1, \ldots, K\}\) is distributed conditionally on \(\boldsymbol{x}_i\) as in the proportional odds model \eqref{prop_odds}. (Note that if \(K = 2\), this is equivalent to the logistic regression model.) All classes have positive probability for all \(\boldsymbol{x}\) on the support of \(\boldsymbol{x}_i\) (equivalently, the intercepts strictly differ: \(\alpha_1 < \ldots < \alpha_{K-1}\).) 

\end{itemize}

Assumption \(X(\mathcal{A})\) allows a very broad class of distributions, including both discrete and continuous random variables. Notice that the boundedness assumption within \(X(\mathcal{A})\) implies that the matrix \(\boldsymbol{\tilde{X}} := (\boldsymbol{1}, \boldsymbol{X})\) (where \(\boldsymbol{1}\) is an \(n\)-vector of ones) has a finite maximum eigenvalue. When we will refer to it, we call it \(\lambda_{\text{max}}\) and write \textbf{Assumption \(X(\mathcal{A}, \lambda_{\text{max}})\)}.

From \eqref{prop_odds2} we see that in the proportional odds model if the intercepts strictly differ (\(\alpha_1 < \ldots < \alpha_{K-1}\)) then for any \(\boldsymbol{x}\) all of the classes have conditional probability strictly between 0 and 1. That said, if the support of \(\boldsymbol{X}\) is unbounded then all of the probabilities for individual classes can become arbitrarily close to 0 or 1. Under Assumption \(X(\mathcal{A})\), however, we can strictly bound quantities like \(\sup_{\boldsymbol{x} \in \mathcal{S}} \{\pi_k(\boldsymbol{x})\}\) (where \(\pi_k(\boldsymbol{x}) := p_k(\boldsymbol{x}) - p_{k-1}(\boldsymbol{x}) = \mathbb{P}( y = k \mid \boldsymbol{x})\)) away from 0 or 1.

Theorem \ref{est.known.beta} holds under Assumption \(X([0, \infty)^p)\), though for any bounded \(\mathcal{S} \subseteq \mathbb{R}^p\), there is some finite \(a\) one could add to each coordinate to shift \(\mathcal{S}\) to a subset of \([0, \infty)^p\); Theorem \ref{est.known.beta} would then apply to these translated features. 

\subsection{Theorem \ref{log.imb}}\label{thm.log.imb.sec}

It is well-known that class imbalance poses a major challenge for classifiers. Theorem \ref{log.imb} exhibits this concretely for logistic regression.

\begin{theorem}\label{log.imb}

Assume \(X(\mathbb{R}^p, \lambda_{\text{max}})\) and \(Y(2)\) hold. Let \(\pi(\boldsymbol{x}) := \mathbb{P}(y =2 \mid \boldsymbol{x})\), and assume that \(\sup_{\boldsymbol{x} \in \mathcal{S}} \pi(\boldsymbol{x}) = \pi_{\text{rare}}\) for some \(\pi_{\text{rare}} \leq 1/2\). Then 

\begin{enumerate}

\item for any fixed \(\boldsymbol{v} \in \mathbb{R}^{p+1}\),
\[
\frac{1}{\lVert \boldsymbol{v} \rVert_2^2}\mathrm{Asym. MSE}\left(   (\hat{\alpha}, \boldsymbol{\hat{\beta}}^\top)  \boldsymbol{v} \right) \geq  \frac{1}{\lambda_{\text{max}}  \pi_{\text{rare}} }
,
\]

and

\item for any fixed \(\boldsymbol{z} \in \mathcal{S}\),

\[
\mathrm{Asym. MSE}\left( \frac{\hat{\pi}(\boldsymbol{z}) }{ \pi (\boldsymbol{z})} \right) \geq \frac{1 - \pi_{\text{rare}} }{ \pi_{\text{rare}}} \frac{1}{\lambda_{\text{max}}}.
\]

\end{enumerate}
\end{theorem}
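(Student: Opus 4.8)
Both statements reduce to a spectral bound on the Fisher information of the logistic model, so the plan is: (i) identify the asymptotic covariance of the MLE, (ii) bound the per-observation Fisher information from above in the Loewner order using $\pi_{\text{rare}}$, then invert, and (iii) read off parts 1 and 2, the latter after a delta-method step. Write $\boldsymbol{\theta} = (\alpha, \boldsymbol{\beta}^\top)^\top$, $\tilde{\boldsymbol{x}} = (1, \boldsymbol{x}^\top)^\top$, and $w(\boldsymbol{x}) = \pi(\boldsymbol{x})(1 - \pi(\boldsymbol{x}))$, so that the per-observation Fisher information for logistic regression is $\mathcal{I} = \E[w(\boldsymbol{x})\,\tilde{\boldsymbol{x}}\tilde{\boldsymbol{x}}^\top]$. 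Under $X(\mathbb{R}^p, \lambda_{\text{max}})$ and $Y(2)$ the standard regularity conditions for the logistic MLE hold: bounded support of $\boldsymbol{x}$ makes $w(\boldsymbol{x})$ bounded below by a positive constant and $\E[\tilde{\boldsymbol{x}}\tilde{\boldsymbol{x}}^\top]$ bounded above, while $\Cov(\boldsymbol{X}) \succ 0$ gives $\mathcal{I} \succ 0$. Hence $\sqrt{n}(\hat{\boldsymbol{\theta}}_n - \boldsymbol{\theta})$ converges in distribution to $N(0, \mathcal{I}^{-1})$, and since $(\hat\alpha,\hat{\boldsymbol{\beta}}^\top)\boldsymbol{v} = \hat{\boldsymbol{\theta}}_n^\top\boldsymbol{v}$ is itself an (asymptotically unbiased) MLE, Definition~\ref{def.asym.mse} gives $\mathrm{Asym.\,MSE}(\hat{\boldsymbol{\theta}}_n^\top\boldsymbol{v}) = \boldsymbol{v}^\top\mathcal{I}^{-1}\boldsymbol{v}$.

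The crux is the elementary bound $w(\boldsymbol{x}) \le \pi_{\text{rare}}(1 - \pi_{\text{rare}})$ for every $\boldsymbol{x} \in \mathcal{S}$: since $0 \le \pi(\boldsymbol{x}) \le \pi_{\text{rare}} \le 1/2$ and $t \mapsto t(1-t)$ is nondecreasing on $[0, 1/2]$. Therefore $\mathcal{I} \preceq \pi_{\text{rare}}(1-\pi_{\text{rare}})\,\E[\tilde{\boldsymbol{x}}\tilde{\boldsymbol{x}}^\top] \preceq \pi_{\text{rare}}(1-\pi_{\text{rare}})\,\lambda_{\text{max}}\,I_{p+1}$, where the second inequality uses that $\lambda_{\text{max}}$ dominates the largest eigenvalue of $\E[\tilde{\boldsymbol{x}}\tilde{\boldsymbol{x}}^\top]$. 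Inverting a positive definite matrix reverses the Loewner order, so $\mathcal{I}^{-1} \succeq [\pi_{\text{rare}}(1-\pi_{\text{rare}})\,\lambda_{\text{max}}]^{-1} I_{p+1}$, whence $\boldsymbol{v}^\top\mathcal{I}^{-1}\boldsymbol{v} \ge \|\boldsymbol{v}\|_2^2 / [\pi_{\text{rare}}(1-\pi_{\text{rare}})\lambda_{\text{max}}] \ge \|\boldsymbol{v}\|_2^2 / (\pi_{\text{rare}}\lambda_{\text{max}})$ using $1-\pi_{\text{rare}} \le 1$. Dividing by $\|\boldsymbol{v}\|_2^2$ gives part 1.

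For part 2, write $\hat\pi(\boldsymbol{z}) = F(\tilde{\boldsymbol{z}}^\top\hat{\boldsymbol{\theta}}_n)$ with $\tilde{\boldsymbol{z}} = (1,\boldsymbol{z}^\top)^\top$ and apply the delta method to the smooth map $\boldsymbol{\theta} \mapsto F(\tilde{\boldsymbol{z}}^\top\boldsymbol{\theta})/\pi(\boldsymbol{z})$, in which $\pi(\boldsymbol{z}) = F(\tilde{\boldsymbol{z}}^\top\boldsymbol{\theta})$ is a fixed positive constant. Because $F' = F(1-F)$, the gradient of this map at the truth is $(1-\pi(\boldsymbol{z}))\,\tilde{\boldsymbol{z}}$, so $\mathrm{Asym.\,MSE}(\hat\pi(\boldsymbol{z})/\pi(\boldsymbol{z})) = (1-\pi(\boldsymbol{z}))^2\,\tilde{\boldsymbol{z}}^\top\mathcal{I}^{-1}\tilde{\boldsymbol{z}}$. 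Now use $\tilde{\boldsymbol{z}}^\top\mathcal{I}^{-1}\tilde{\boldsymbol{z}} \ge \|\tilde{\boldsymbol{z}}\|_2^2 / [\pi_{\text{rare}}(1-\pi_{\text{rare}})\lambda_{\text{max}}] \ge 1 / [\pi_{\text{rare}}(1-\pi_{\text{rare}})\lambda_{\text{max}}]$, where $\|\tilde{\boldsymbol{z}}\|_2^2 \ge 1$ because of the intercept coordinate, together with $(1-\pi(\boldsymbol{z}))^2 \ge (1-\pi_{\text{rare}})^2$; one factor of $1-\pi_{\text{rare}}$ cancels the one in the denominator, leaving exactly $(1-\pi_{\text{rare}})/(\pi_{\text{rare}}\lambda_{\text{max}})$.

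The only real friction I anticipate is the justification of the limiting distribution itself, i.e.\ that the logistic MLE exists and is consistent with probability tending to one so that Definition~\ref{def.asym.mse} applies and the limiting second moment is exactly the quadratic form above (and likewise for the delta-method functional in part 2); the bounded-support and positive-definite-covariance parts of $X(\mathbb{R}^p, \lambda_{\text{max}})$ together with $Y(2)$ supply precisely what is needed. Everything after that is the short Loewner-order computation sketched above.
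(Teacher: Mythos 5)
Your proposal is correct and follows essentially the same route as the paper's proof: bound the per-observation Fisher information $\E[\pi(\boldsymbol{X})(1-\pi(\boldsymbol{X}))\tilde{\boldsymbol{X}}\tilde{\boldsymbol{X}}^\top]$ above by $\pi_{\text{rare}}(1-\pi_{\text{rare}})\lambda_{\text{max}} I$ using monotonicity of $t\mapsto t(1-t)$ on $[0,1/2]$, invert the Loewner order for part 1, and apply the delta method with $\lVert(1,\boldsymbol{z}^\top)\rVert_2^2\ge 1$ and $1-\pi(\boldsymbol{z})\ge 1-\pi_{\text{rare}}$ for part 2. The regularity and asymptotic-unbiasedness issues you flag as the "only real friction" are exactly what the paper delegates to its Lemma~\ref{asym.matrix}, so nothing is missing.
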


\begin{proof} Provided in Section \ref{thms.1.2.proofs}. \end{proof}

To give an example of applying part 1 of this result, consider the choice \(\boldsymbol{v} = (0, 1, 0, \ldots, 0)\). Then we have that \( \mathrm{Asym. MSE}\left( \hat{\beta}_1 \right) \geq 1/(\lambda_{\text{max}} \pi_{\text{rare}})\), so \(\hat{\beta}_1\) (or any other estimated coefficient) has arbitrarily large asymptotic mean squared error as \(\pi_{\text{rare}}\) vanishes. Part 2 shows that the same thing happens to the asymptotic mean squared error for the estimated probabilities of the logistic regression estimator, when scaled by \(\pi(\boldsymbol{z})\).

\subsection{Theorem \ref{est.known.beta}}

Theorem \ref{est.known.beta} suggests a possible way to circumvent the problem of class imbalance. We compare the typical logistic regression intercept estimate \(\hat{\alpha}\) to the \textit{quasi-estimated} estimator \(\hat{\alpha}_q\) obtained when one estimates only the intercept of the logistic regression model with a known \(\boldsymbol{\beta}\). We also compare the resulting estimators of conditional probabilities for any \(\boldsymbol{z} \in \mathbb{R}^p\): the usual logistic regression estimator \(\hat{\pi}(\boldsymbol{z}) \) and \(\hat{\pi}_q(\boldsymbol{z}) \), the estimator when \(\boldsymbol{\beta}\) is known. Theorem \ref{est.known.beta} proves the reasonable intuition that \(\hat{\alpha}_q\) must be a better estimator than \(\hat{\alpha}\), and likewise for \(\hat{\pi}_q(\boldsymbol{z}) \) and \(\hat{\pi}(\boldsymbol{z})\). 

\begin{theorem}\label{est.known.beta}

Assume \(X([0, \infty)^p, \lambda_{\text{max}})\) and \(Y(2)\) hold. Let \(\pi(\boldsymbol{x}) := \mathbb{P}(y =2 \mid \boldsymbol{x})\), and let  \(\pi_{\text{min}} := \inf_{\boldsymbol{x} \in \mathcal{S}} \left\{ \pi(\boldsymbol{x}) \wedge 1 - \pi(\boldsymbol{x})  \right\} \). Then

\begin{enumerate}
\item 
\begin{align*}
\frac{\mathrm{Asym. MSE}(\hat{\alpha})    -  \mathrm{Asym. MSE}(\hat{\alpha}_q)  }{ \left[ \mathrm{Asym. MSE}(\hat{\alpha}_q) \right]^2} \geq   \Delta
\end{align*}
where
\[
\Delta :=   \frac{4  \pi_{\text{min}}^2(1 - \pi_{\text{min}})^2 \left \lVert \E \left[ \boldsymbol{X} \right] \right \rVert_2^2} {\lambda_{\text{max}}} ,
\]
and

\item For any \(\boldsymbol{z} \in \mathbb{R}^p \setminus \{\boldsymbol{z}^*\}\), where
\[
\boldsymbol{z}^* := \frac{\E \left[ \boldsymbol{X} \pi(\boldsymbol{X})[1 - \pi(\boldsymbol{X})] \right] }{ \E \left[ \pi(\boldsymbol{X})[1 - \pi(\boldsymbol{X})] \right]},
\]
it holds that
\[
\mathrm{Asym. MSE}(\hat{\pi}_q(\boldsymbol{z})) < \mathrm{Asym. MSE}(\hat{\pi}(\boldsymbol{z}))
.
\]
(For \(\boldsymbol{z}^*\), the above holds with \(\leq\) rather than \(<\).)
\end{enumerate}
\end{theorem}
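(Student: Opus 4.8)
The plan is to express both estimators' asymptotic MSEs through the logistic Fisher information matrix and then compare them via a block decomposition. Write $\boldsymbol{\theta} := (\alpha, \boldsymbol{\beta}^\top)^\top$, $\boldsymbol{\tilde{z}} := (1, \boldsymbol{z}^\top)^\top$, and $w(\boldsymbol{x}) := \pi(\boldsymbol{x})[1 - \pi(\boldsymbol{x})]$. Under $X([0,\infty)^p, \lambda_{\text{max}})$ and $Y(2)$ the logistic model is correctly specified, $w$ is bounded below by the positive constant $\pi_{\text{min}}(1-\pi_{\text{min}})$ on the bounded support $\mathcal{S}$, and $\Cov(\boldsymbol{X}) \succ 0$; these facts supply the standard regularity for maximum likelihood and guarantee that $\mathcal{I} := \E[w(\boldsymbol{X}) \boldsymbol{\tilde{X}} \boldsymbol{\tilde{X}}^\top]$ and its block $\mathcal{I}_{\beta\beta} := \E[w(\boldsymbol{X}) \boldsymbol{X} \boldsymbol{X}^\top]$ are positive definite, hence invertible. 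Consequently $\sqrt{n}(\hat{\boldsymbol{\theta}} - \boldsymbol{\theta})$ is asymptotically $\mathcal{N}(\boldsymbol{0}, \mathcal{I}^{-1})$, while with $\boldsymbol{\beta}$ known $\sqrt{n}(\hat{\alpha}_q - \alpha)$ is asymptotically $\mathcal{N}(0, 1/\mathcal{I}_{\alpha\alpha})$ with $\mathcal{I}_{\alpha\alpha} = \E[w(\boldsymbol{X})]$ the $(1,1)$ entry of $\mathcal{I}$. Since both MLEs are asymptotically unbiased, asymptotic MSE equals asymptotic variance, so $\mathrm{Asym. MSE}(\hat{\alpha}) = [\mathcal{I}^{-1}]_{11}$ and $\mathrm{Asym. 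MSE}(\hat{\alpha}_q) = 1/\mathcal{I}_{\alpha\alpha}$.

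For part 1, the Schur-complement form of the block inverse gives $[\mathcal{I}^{-1}]_{11} = 1/(\mathcal{I}_{\alpha\alpha} - b)$, where $b := \mathcal{I}_{\alpha\beta}\mathcal{I}_{\beta\beta}^{-1}\mathcal{I}_{\beta\alpha} \geq 0$ and $\mathcal{I}_{\alpha\beta} = \E[w(\boldsymbol{X})\boldsymbol{X}^\top]$. A direct computation then shows that the left-hand side of the claimed inequality equals $\mathcal{I}_{\alpha\alpha}\, b / (\mathcal{I}_{\alpha\alpha} - b)$, which is at least $b$ because $0 < \mathcal{I}_{\alpha\alpha} - b \leq \mathcal{I}_{\alpha\alpha}$; so it suffices to show $b \geq \Delta$, and only a crude bound on $b$ is needed. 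Since $0 \leq w(\boldsymbol{x}) \leq 1/4$ we have $\mathcal{I}_{\beta\beta} \preceq \tfrac14 \E[\boldsymbol{X}\boldsymbol{X}^\top]$, and $\E[\boldsymbol{X}\boldsymbol{X}^\top]$ is a principal submatrix of $\E[\boldsymbol{\tilde{X}}\boldsymbol{\tilde{X}}^\top]$ so its largest eigenvalue is at most $\lambda_{\text{max}}$; hence $\mathcal{I}_{\beta\beta}^{-1} \succeq (4/\lambda_{\text{max}})$ times the identity, giving $b \geq (4/\lambda_{\text{max}}) \lVert \E[w(\boldsymbol{X})\boldsymbol{X}] \rVert_2^2$. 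Finally, $\pi_{\text{min}} \leq \pi(\boldsymbol{x}) \leq 1 - \pi_{\text{min}}$ on $\mathcal{S}$ forces $w(\boldsymbol{x}) \geq \pi_{\text{min}}(1 - \pi_{\text{min}})$, and since $\mathcal{S} \subseteq [0,\infty)^p$ the vector $\E[w(\boldsymbol{X})\boldsymbol{X}] - \pi_{\text{min}}(1 - \pi_{\text{min}}) \E[\boldsymbol{X}]$ has nonnegative coordinates, so $\lVert \E[w(\boldsymbol{X})\boldsymbol{X}] \rVert_2 \geq \pi_{\text{min}}(1 - \pi_{\text{min}}) \lVert \E[\boldsymbol{X}] \rVert_2$. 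Chaining these bounds yields $b \geq \Delta$, which proves part 1.

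For part 2, the delta method applied to $\boldsymbol{z} \mapsto F(\alpha + \boldsymbol{\beta}^\top \boldsymbol{z})$ gives $\mathrm{Asym. MSE}(\hat{\pi}(\boldsymbol{z})) = [F'(\eta)]^2\, \boldsymbol{\tilde{z}}^\top \mathcal{I}^{-1} \boldsymbol{\tilde{z}}$ and $\mathrm{Asym. MSE}(\hat{\pi}_q(\boldsymbol{z})) = [F'(\eta)]^2 / \mathcal{I}_{\alpha\alpha}$, where $\eta := \alpha + \boldsymbol{\beta}^\top \boldsymbol{z}$ and $F'(\eta) > 0$; so the claim reduces to $\boldsymbol{\tilde{z}}^\top \mathcal{I}^{-1} \boldsymbol{\tilde{z}} \geq 1/\mathcal{I}_{\alpha\alpha}$, with equality exactly when $\boldsymbol{z} = \boldsymbol{z}^*$. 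Writing $\boldsymbol{e}_1 := (1, 0, \ldots, 0)^\top \in \mathbb{R}^{p+1}$, so that $\mathcal{I}_{\alpha\alpha} = \boldsymbol{e}_1^\top \mathcal{I} \boldsymbol{e}_1$ and $\boldsymbol{\tilde{z}}^\top \boldsymbol{e}_1 = 1$, the Cauchy--Schwarz inequality for the inner product $\langle \boldsymbol{u}, \boldsymbol{v} \rangle := \boldsymbol{u}^\top \mathcal{I} \boldsymbol{v}$ applied to $\mathcal{I}^{-1}\boldsymbol{\tilde{z}}$ and $\boldsymbol{e}_1$ gives $1 = (\boldsymbol{\tilde{z}}^\top \boldsymbol{e}_1)^2 \leq (\boldsymbol{\tilde{z}}^\top \mathcal{I}^{-1} \boldsymbol{\tilde{z}})(\boldsymbol{e}_1^\top \mathcal{I} \boldsymbol{e}_1)$, which is the desired inequality, with equality iff $\mathcal{I}^{-1}\boldsymbol{\tilde{z}} \propto \boldsymbol{e}_1$, i.e.\ iff $\boldsymbol{\tilde{z}} \propto \mathcal{I} \boldsymbol{e}_1 = (\E[w(\boldsymbol{X})],\, \E[w(\boldsymbol{X})\boldsymbol{X}]^\top)^\top$. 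Matching first coordinates forces the constant of proportionality to be $1/\E[w(\boldsymbol{X})]$, hence $\boldsymbol{z} = \E[w(\boldsymbol{X})\boldsymbol{X}]/\E[w(\boldsymbol{X})] = \boldsymbol{z}^*$; multiplying by $[F'(\eta)]^2 > 0$ then gives the strict inequality for $\boldsymbol{z} \neq \boldsymbol{z}^*$ and equality at $\boldsymbol{z}^*$.

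The individual steps are elementary; the two things that require care are the algebraic identity in part 1 that reduces the normalized MSE gap to $\mathcal{I}_{\alpha\alpha}\, b/(\mathcal{I}_{\alpha\alpha} - b)$, and --- what I expect to be the real crux --- the use of the $[0,\infty)^p$ support restriction to pass from $\lVert \E[w(\boldsymbol{X})\boldsymbol{X}] \rVert_2$ down to $\pi_{\text{min}}(1 - \pi_{\text{min}}) \lVert \E[\boldsymbol{X}] \rVert_2$, which is the one place that otherwise innocuous hypothesis does essential work. For part 2 the only subtle point is pinning the Cauchy--Schwarz equality case precisely to $\boldsymbol{z}^*$. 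Throughout, one must also verify that the stated assumptions deliver $\sqrt{n}$-asymptotic normality of both MLEs and positive-definiteness (hence invertibility) of $\mathcal{I}$ and $\mathcal{I}_{\beta\beta}$; this is routine given bounded support and $\Cov(\boldsymbol{X}) \succ 0$ but should be recorded.
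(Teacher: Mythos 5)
Your proof is correct. Part 1 follows essentially the paper's route: both identify $\mathrm{Asym.\,MSE}(\hat{\alpha}_q)=1/\mathcal{I}_{\alpha\alpha}$ and $\mathrm{Asym.\,MSE}(\hat{\alpha})=1/(\mathcal{I}_{\alpha\alpha}-b)$ via the Schur complement, bound $b$ below by $\Delta$ using $\mathcal{I}_{\beta\beta}\preceq\tfrac14\E[\boldsymbol{X}\boldsymbol{X}^\top]$ and the nonnegative-support argument for $\lVert\E[w(\boldsymbol{X})\boldsymbol{X}]\rVert_2\geq\pi_{\text{min}}(1-\pi_{\text{min}})\lVert\E[\boldsymbol{X}]\rVert_2$ (you correctly flag this as the place where $X([0,\infty)^p,\lambda_{\text{max}})$ does essential work). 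Your only departure there is cosmetic but nicer: you compute the normalized gap exactly as $\mathcal{I}_{\alpha\alpha}b/(\mathcal{I}_{\alpha\alpha}-b)\geq b$, whereas the paper reaches the same conclusion through the inequality $c/(1-ct)\geq c(1+ct)$. Part 2 is where you genuinely diverge. The paper writes out the full block inverse of $\mathcal{I}$ in terms of $\boldsymbol{D}=\mathcal{I}_{\beta\beta}-\mathcal{I}_{\beta\alpha}\mathcal{I}_{\beta\alpha}^\top/\mathcal{I}_{\alpha\alpha}$ (first verifying $\boldsymbol{D}\succ 0$ via a Schur-complement theorem), expands the quadratic form, and completes a square to obtain $\tilde{\boldsymbol{z}}^\top\mathcal{I}^{-1}\tilde{\boldsymbol{z}}=\mathcal{I}_{\alpha\alpha}^{-1}+\mathcal{I}_{\alpha\alpha}^{-2}\lVert\boldsymbol{D}^{-1/2}(\mathcal{I}_{\beta\alpha}-\mathcal{I}_{\alpha\alpha}\boldsymbol{z})\rVert_2^2$, reading off the equality case from the vanishing of the residual vector. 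You instead apply Cauchy--Schwarz in the $\mathcal{I}$-inner product to $\mathcal{I}^{-1}\tilde{\boldsymbol{z}}$ and $\boldsymbol{e}_1$, which yields $\tilde{\boldsymbol{z}}^\top\mathcal{I}^{-1}\tilde{\boldsymbol{z}}\cdot\mathcal{I}_{\alpha\alpha}\geq 1$ in one line and pins the equality case to $\tilde{\boldsymbol{z}}\propto\mathcal{I}\boldsymbol{e}_1$, hence $\boldsymbol{z}=\boldsymbol{z}^*$ after matching first coordinates. Your argument is shorter, needs only $\mathcal{I}\succ 0$ rather than a separate positive-definiteness check for $\boldsymbol{D}$, and makes the identity of $\boldsymbol{z}^*$ as the $w$-weighted mean of $\boldsymbol{X}$ transparent; the paper's computation buys an explicit formula for the size of the gap as a function of $\boldsymbol{z}$, which your inequality does not provide.
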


Examining the first result, it is sensible that the lower bound for the gap between the asymptotic variances of the two estimators vanishes as \(\pi_{\text{min}}\) vanishes because if \(\min\{\pi_1(\boldsymbol{x}), 1 - \pi_1(\boldsymbol{x})\}\) becomes very small on the bounded support, then the imbalance between the two classes potentially becomes very large, and estimating the intercept becomes difficult regardless of whether or not \(\boldsymbol{\beta}\) is known. As the class balance improves (\(\pi_{\text{min}}\) becomes closer to its upper bound \(1/2\)), the guaranteed gap between \(  \mathrm{Asym. MSE}(\hat{\alpha}) \) and \(  \mathrm{Asym. MSE}(\hat{\alpha}_q) \) becomes larger. 

In addition to formally verifying intuition, Theorem \ref{est.known.beta} also quantifies the estimation gap between the rare intercept estimators in terms of noteworthy parameters and shows that the assumptions needed for this intuition to hold are minimal.

\subsection{Theorem \ref{main.cov.thm.2}}

Theorem \ref{est.known.beta} suggests that if only we could estimate \(\boldsymbol{\beta}\) very well, we could improve our estimated probabilities even in the face of class imbalance. Theorem \ref{main.cov.thm.2} suggests a way to leverage abundant data among other classes to do this.

In the proportional odds model \eqref{prop_odds}, \(\mathbb{R}^p\) is partitioned into regions with separating hyperplanes defined by \(\boldsymbol{\beta}\), which we note are Bayes decision boundaries: for \(\boldsymbol{x} \in \mathbb{R}^p\) such that \( \alpha_k + \boldsymbol{\beta}^\top \boldsymbol{x} = 0\), we have \(p_k ( \boldsymbol{x} ) = 1/2\).

Consider the setting of ordered categorical data generated by the proportional odds model with categories 1 and 2 similarly common over the support of a bounded distribution of \(\boldsymbol{x}_i\) and categories \(3, \ldots, K\) all rare. In this setting, for many of the observed values of \(\boldsymbol{x}_i\), the probabilities of being in class 1 or 2 will both be close to 1/2. Intuitively it should be relatively easy to estimate \(\boldsymbol{\beta}\) and \(\alpha_1\), the parameters that define the Bayes decision boundary between classes 1 and 2, and therefore \( p_1  ( \boldsymbol{x}_i )  \). Theorem \ref{est.known.beta} suggests this should help us in estimating the rare class probabilities. In Theorem \ref{main.cov.thm.2}, we prove that even if class \(K\) becomes arbitrarily rare, as long as the first two classes are reasonably well balanced, the proportional odds model still learns \(\boldsymbol{\beta}\) quite well.

\begin{theorem}\label{main.cov.thm.2} Assume \(X(\mathbb{R}^p)\) and \(Y(3)\) hold. Assume for all \(\boldsymbol{x} \in \mathcal{S}\) it holds that \(|\pi_k(\boldsymbol{x}) - 1/2 | \leq \Delta\) for \(k \in \{1, 2\}\) for some \(\Delta \in (0, 1/2)\) and let \(M := \sup_{\boldsymbol{x} \in \mathcal{S}} \lVert \boldsymbol{x} \rVert_2 \) (notice that \(X(\mathbb{R}^p)\) ensures that \(M < \infty\)). Suppose \(\sup_{\boldsymbol{x} \in \mathcal{S}} \{ \pi_3(\boldsymbol{x}) \} = \pi_{\text{rare}}\), where \(\pi_{\text{rare}}\) is no greater than
\begin{align}
\min \left\{ \frac{1}{2} \left(\frac{1}{2} - \Delta \right) \left(\frac{1}{2} + \Delta \right) ,
 \frac{  \lambda_{\text{min}} \left(I_{\beta \beta} -   2 \frac{I_{\beta \alpha_1} I_{\beta \alpha_1}^\top}{I_{\alpha_1 \alpha_1}}  \right)  }{3 M^2   \left(  M + 2 \right) }
 \right\} 
 \label{delta.small.assum.fin}
 ,
 \end{align}
where  \(\lambda_{\text{min}}(\cdot)\) denotes the minimum eigenvalue of \(\cdot\) and \( I_{\beta \beta} -   2 \frac{I_{\beta \alpha_1} I_{\beta \alpha_1}^\top}{I_{\alpha_1 \alpha_1}} \) is a symmetric matrix composed of terms from the Fisher information matrix for the proportional odds model (see the definitions of these terms in Equations \ref{alpha.block}, \ref{alpha.beta.block}, and \ref{beta.block} in the appendix). Then there exists \(C < \infty\) not depending on \(\pi_{\text{rare}}\) such that for any fixed \(\boldsymbol{v} \in \mathbb{R}^p\), 
 \begin{align*}
\frac{1}{\lVert \boldsymbol{v} \rVert_2^2} \mathrm{Asym. MSE} \left(  \boldsymbol{v}^\top \boldsymbol{\hat{\beta}}^{\text{prop. odds}}  \right) 
\leq  C 
.
\end{align*}
 
\end{theorem}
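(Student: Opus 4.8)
\emph{Proof idea.} The plan is to rewrite the asymptotic MSE through the Fisher information, reduce the stated bound to a spectral lower bound on a Schur complement, and then show that this Schur complement stays bounded away from singularity as class $3$ vanishes because the second intercept $\alpha_2$ becomes an asymptotically negligible nuisance. Concretely: under Assumptions $X(\mathbb{R}^p)$ and $Y(3)$ the proportional odds log-likelihood is smooth and its per-observation Fisher information $I$ for $(\alpha_1, \alpha_2, \boldsymbol{\beta})$ is nonsingular (since $\Cov(\boldsymbol{X}) \succ 0$ and all classes have positive probability), so by standard large-sample maximum-likelihood theory $\sqrt{n}\,(\hat{\boldsymbol{\beta}}^{\text{prop. odds}} - \boldsymbol{\beta})$ is asymptotically normal with mean zero and covariance $[I^{-1}]_{\beta\beta}$; by the block-inversion formula $[I^{-1}]_{\beta\beta} = S^{-1}$, where $S := I_{\beta\beta} - I_{\beta\alpha} I_{\alpha\alpha}^{-1} I_{\alpha\beta}$ is the Schur complement with respect to the intercept block $\alpha = (\alpha_1, \alpha_2)$. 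Since $\boldsymbol{v}^\top \hat{\boldsymbol{\beta}}^{\text{prop. odds}}$ is an asymptotically unbiased MLE of $\boldsymbol{v}^\top \boldsymbol{\beta}$, Definition \ref{def.asym.mse} gives $\mathrm{Asym. MSE}(\boldsymbol{v}^\top \hat{\boldsymbol{\beta}}^{\text{prop. odds}}) = \boldsymbol{v}^\top S^{-1} \boldsymbol{v} \le \lVert \boldsymbol{v} \rVert_2^2 / \lambda_{\text{min}}(S)$, so it suffices to bound $\lambda_{\text{min}}(S)$ below by a positive quantity not depending on $\pi_{\text{rare}}$ and take $C$ to be its reciprocal.

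For the spectral bound I would work from the explicit Fisher-information blocks (Equations \ref{alpha.block}, \ref{alpha.beta.block}, \ref{beta.block}). The elementary fact driving everything is $1 - F(\alpha_2 + \boldsymbol{\beta}^\top \boldsymbol{x}) = \pi_3(\boldsymbol{x}) \le \pi_{\text{rare}}$ on $\mathcal{S}$, which gives $f(\alpha_2 + \boldsymbol{\beta}^\top \boldsymbol{x}) \le \pi_3(\boldsymbol{x}) \le \pi_{\text{rare}}$ for the logistic density at the second boundary; every Fisher-information term involving $\alpha_2$ carries this factor, so $|I_{\alpha_1\alpha_2}|$, $\lVert I_{\beta\alpha_2} \rVert_2$ and $I_{\alpha_2\alpha_2}$ are all $O(\pi_{\text{rare}})$ with $M$-dependent constants. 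I would then evaluate $S$ by iterated Schur complements---eliminate $\alpha_2$ first, then $\alpha_1$---writing $S = \tilde{I}_{\beta\beta} - \tilde{I}_{\beta\alpha_1} \tilde{I}_{\alpha_1\alpha_1}^{-1} \tilde{I}_{\alpha_1\beta}$, with tildes denoting the one-step Schur complements in $\alpha_2$. The main obstacle, and the delicate step, is that $I_{\alpha_2\alpha_2} \to 0$, so the one-step corrections are ratios of vanishing quantities and one must rule out a blow-up; the resolution is a Cauchy--Schwarz estimate that exploits the common factor $f(\alpha_2 + \boldsymbol{\beta}^\top \boldsymbol{x}) \le \pi_3(\boldsymbol{x})$ shared by the numerators and the denominator, yielding $\tilde{I}_{\beta\beta} = I_{\beta\beta} + O(\pi_{\text{rare}})$, $\tilde{I}_{\beta\alpha_1} = I_{\beta\alpha_1} + O(\pi_{\text{rare}})$, and $\tilde{I}_{\alpha_1\alpha_1} = I_{\alpha_1\alpha_1} + O(\pi_{\text{rare}})$ (in the Loewner / Euclidean sense, with constants polynomial in $M$), rather than anything divergent.

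Substituting these into $S$, using that $\pi_1, \pi_2 \in [\tfrac12 - \Delta, \tfrac12 + \Delta]$ forces $f(\alpha_1 + \boldsymbol{\beta}^\top \boldsymbol{x}) = \pi_1(\boldsymbol{x})(1 - \pi_1(\boldsymbol{x}))$ and hence $I_{\alpha_1\alpha_1}$ to be bounded away from $0$, and using $I_{\beta\alpha_1} I_{\beta\alpha_1}^\top / I_{\alpha_1\alpha_1} \succeq 0$, I expect to arrive at $S \succeq \bigl(I_{\beta\beta} - 2\, I_{\beta\alpha_1} I_{\beta\alpha_1}^\top / I_{\alpha_1\alpha_1}\bigr) - c\,M^2(M + 2)\,\pi_{\text{rare}}\, I_p$ for some $c < 3$ depending only on $\Delta$; condition \eqref{delta.small.assum.fin} is calibrated precisely so that the error term is at most a fixed fraction of $\lambda_{\text{min}}\bigl(I_{\beta\beta} - 2\, I_{\beta\alpha_1} I_{\beta\alpha_1}^\top / I_{\alpha_1\alpha_1}\bigr)$, leaving $\lambda_{\text{min}}(S) \ge (1 - c/3)\,\lambda_{\text{min}}\bigl(I_{\beta\beta} - 2\, I_{\beta\alpha_1} I_{\beta\alpha_1}^\top / I_{\alpha_1\alpha_1}\bigr) > 0$. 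Finally, since $I_{\beta\beta} - 2\, I_{\beta\alpha_1} I_{\beta\alpha_1}^\top / I_{\alpha_1\alpha_1}$ differs by only an $O(\pi_{\text{rare}})$ perturbation from the $\pi_{\text{rare}}$-free matrix $\E[\boldsymbol{X} \boldsymbol{X}^\top w(\boldsymbol{X})] - 2\,\E[\boldsymbol{X} w(\boldsymbol{X})]\E[\boldsymbol{X} w(\boldsymbol{X})]^\top / \E[w(\boldsymbol{X})]$ with $w(\boldsymbol{x}) := [\pi_1(\boldsymbol{x})(1 - \pi_1(\boldsymbol{x}))]^2 (\pi_1(\boldsymbol{x})^{-1} + \pi_2(\boldsymbol{x})^{-1})$---a positive-definite weighted covariance of $\boldsymbol{X}$ under the hypotheses, its positive definiteness being exactly what makes \eqref{delta.small.assum.fin} a nonvacuous requirement---the lower bound on $\lambda_{\text{min}}(S)$ is uniform over the admissible range of $\pi_{\text{rare}}$, so $C$ may be taken as its reciprocal.
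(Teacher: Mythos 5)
Your proposal follows essentially the same route as the paper's proof: block inversion to identify the asymptotic covariance of $\boldsymbol{\hat{\beta}}^{\text{prop. odds}}$ with the inverse Schur complement $S^{-1}$, explicit $O(\pi_{\text{rare}})$ bounds on every Fisher-information entry involving $\alpha_2$ (including the crucial cancellation $\lVert I_{\beta\alpha_2}\rVert_2 \le M\, I_{\alpha_2\alpha_2}$ that prevents the $I_{\alpha_2\alpha_2}^{-1}$ blow-up), a Weyl-type perturbation reducing everything to $\lambda_{\text{min}}\bigl(I_{\beta\beta} - 2 I_{\beta\alpha_1} I_{\beta\alpha_1}^\top / I_{\alpha_1\alpha_1}\bigr)$, and condition \eqref{delta.small.assum.fin} to absorb the error term and leave half of that eigenvalue; the paper inverts the $2\times 2$ block $I_{\alpha\alpha}$ directly via a determinant lower bound rather than iterating Schur complements, but this is only an organizational difference.

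One inaccuracy in your closing paragraph: with the factor $2$, the matrix $\E[\boldsymbol{X}\boldsymbol{X}^\top w(\boldsymbol{X})] - 2\,\E[\boldsymbol{X} w(\boldsymbol{X})]\E[\boldsymbol{X} w(\boldsymbol{X})]^\top/\E[w(\boldsymbol{X})]$ is \emph{not} a weighted covariance (that would require factor $1$), and its positive definiteness does not follow from the hypotheses. The paper is deliberately careful here: it only argues in a remark and via simulations that positive definiteness is \emph{plausible}, and treats $\lambda_{\text{min}}\bigl(I_{\beta\beta} - 2 I_{\beta\alpha_1} I_{\beta\alpha_1}^\top / I_{\alpha_1\alpha_1}\bigr) > 0$ as implicitly required for \eqref{delta.small.assum.fin} to be satisfiable at all. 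This does not invalidate your argument (the theorem is vacuous otherwise), but you should not assert that positive definiteness is a consequence of the stated assumptions.
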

Theorem \ref{main.cov.thm.2} shows that In contrast to logistic regression, the proportional odds model still learns \(\boldsymbol{\beta}\) within a fixed precision even as \(\pi_{\text{rare}}\) vanishes.

\begin{remark}
We briefly discuss the upper bound \eqref{delta.small.assum.fin}. For this bound to make sense, it must hold that the symmetric matrix \( I_{\beta \beta} -   2 \frac{I_{\beta \alpha_1} I_{\beta \alpha_1}^\top}{I_{\alpha_1 \alpha_1}} \) is positive definite so that its minimum eigenvalue is strictly positive. The matrix \(\boldsymbol{S} := I_{\beta \beta} -    \frac{I_{\beta \alpha_1} I_{\beta \alpha_1}^\top}{I_{\alpha_1 \alpha_1}}  \) is the Schur complement of \(I_{\alpha_1 \alpha_1} = M_1\) in the submatrix
\begin{equation}\label{submat}
\begin{pmatrix}
I_{\alpha_1 \alpha_1} &  I_{\beta \alpha_1}^\top
\\ I_{\beta \alpha_1} &  I_{\beta \beta}
\end{pmatrix}
\end{equation}
of the Fisher information matrix \(I^{\text{prop. odds}} (\boldsymbol{\alpha}, \boldsymbol{\beta})\) for the proportional odds model (see Lemma \ref{asym.matrix} in the appendix). Note \eqref{submat} is a principal submatrix of the positive definite \(I^{\text{prop. odds}} (\boldsymbol{\alpha}, \boldsymbol{\beta})\), so is positive definite by Observation 7.1.2 in \citet{horn_johnson_2012}. From \eqref{alpha.block} we also know that \(I_{\alpha_1 \alpha_1} > 0\),  so \(\boldsymbol{S}\) is positive definite by Theorem 1.12 in \citet{zhang2005schur}. It seems plausible that
\[
I_{\beta \beta} -   2 \frac{I_{\beta \alpha_1} I_{\beta \alpha_1}^\top}{I_{\alpha_1 \alpha_1}}   = \boldsymbol{S} -  \frac{I_{\beta \alpha_1} I_{\beta \alpha_1}^\top}{I_{\alpha_1 \alpha_1}}   
\]
is also positive definite because \( I_{\beta \beta}\) is the inverse of the asymptotic covariance matrix of \(\boldsymbol{\hat{\beta}}^{\text{ideal}}\), the maximum likelihood estimator of \(\boldsymbol{\beta}\) when \(\alpha_1\) and \(\alpha_2\) are known. We expect that \(\Cov (\boldsymbol{\hat{\beta}}^{\text{ideal}})\) would be small (and the eigenvalues of \( I_{\beta \beta}\) would be large) in this setting because we can estimate \(\boldsymbol{\beta}\) well due to the abundant observations in classes 1 and 2 (ensured if \(\Delta\) is not too large), so we should be able to learn the decision boundary between these classes well. If the eigenvalues of \( I_{\beta \beta}\) are indeed large, it might be reasonable to expect \(I_{\beta \beta} -   2 \frac{I_{\beta \alpha_1} I_{\beta \alpha_1}^\top}{I_{\alpha_1 \alpha_1}}  \) to be positive definite. In Sections \ref{min.eigen.bound} and \ref{min.eigen.cond.sim} in the appendix, we present more detailed analysis as well as the results of synthetic experiments that indicate that it is plausible both that \( I_{\beta \beta} -   2 \frac{I_{\beta \alpha_1} I_{\beta \alpha_1}^\top}{I_{\alpha_1 \alpha_1}} \) is positive definite and that the upper bound \eqref{delta.small.assum.fin} is reasonable.

\end{remark}

\section{Predicting Rare Events by Shrinking Towards proportional Odds (\textsc{PRESTO})}\label{sec.method}

Theorems \ref{est.known.beta} and \ref{main.cov.thm.2} suggest a path to improve estimated probabilities for a rare event that is at the end of an ordered sequence: use the more common events that come before it to improve the estimation of the decision boundary affecting the rare class. In practice, however, the proportional odds model assumption is strong and unlikely to hold in many settings. \textsc{PRESTO} allows for this assumption to be relaxed; instead of assuming the \(\boldsymbol{\beta}\) vectors governing the decision boundaries are identical, we assume they are in general different, but with differences that are (approximately) sparse. 

One concrete model to motivate this is a relaxation of \eqref{prop.odds.utilty} along the lines of \eqref{prop_odds_gen}. Suppose that \(U_{1} := U\) as defined in \eqref{prop.odds.utilty} (with \(\boldsymbol{\beta}_1 = \boldsymbol{\beta}\)), and it still holds that an observation is in class 1 if \(U_1 \geq  -\alpha_1\). However, for \(k \in \{2, \ldots, K - 1 \}\), outcome \(k\) is observed if and only if \( -\alpha_{k}  \leq U_k < -\alpha_{k-  1} +  \boldsymbol{\psi}_k^\top \boldsymbol{x}\) for sparse vectors \(\boldsymbol{\psi}_2, \ldots, \boldsymbol{\psi}_{K-1} \in \mathbb{R}^p\) satisfying \(\boldsymbol{\psi}_k = \boldsymbol{\beta}_k - \boldsymbol{\beta}_{k-1}\), so \(U_k = U_{k-1} + \boldsymbol{\psi}_k^\top \boldsymbol{x}\) for \(k \in \{2, \ldots, K-1\}\). Note that this is within the scope of \eqref{prop_odds_gen}, but we assume a structure on the differing \(\boldsymbol{\beta}_k\) vectors rather than allowing for arbitrary differences.

Assuming sparse differences in adjacent \(\boldsymbol{\beta}_k\) vectors in this way suggests the following optimization problem for data \(\boldsymbol{X} = (\boldsymbol{x}_1, \ldots, \boldsymbol{x}_n )^\top\) and \(\boldsymbol{y} = (y_1, \ldots, y_n)\):
\begin{align}
 \underset{\boldsymbol{\beta}, \boldsymbol{\alpha}}{\arg \min}
&\bigg\{ - \frac{1}{n} \sum_{i=1}^n \log \bigg[  F \left(\alpha_{y_i} + \boldsymbol{\beta}_{y_i}^\top \boldsymbol{x}_i \right) \nonumber
\\ & 
 - F \left( \alpha_{y_i - 1} + \boldsymbol{\beta}_{y_i - 1}^\top \boldsymbol{x}_i  \right) \bigg]    \nonumber
 \\ &  + \lambda_n \left( \sum_{j=1}^p   \left|  \beta_{j1} \right| +  \sum_{j=1}^p  \sum_{k=2}^{K-1} \left|  \beta_{jk} - \beta_{j,k-1}\right| \right) \bigg\} 
 ,
 \label{ordinal.pen.opt}
\end{align}
where we define \(\alpha_K := \infty, \alpha_0 := - \infty\) and \(\boldsymbol{\beta}_0 := \boldsymbol{0}\). The penalties on the \(  \left|  \beta_{j1} \right| \) terms are sufficient to regularize all of the weights given the penalties on the difference terms starting from the \(\boldsymbol{\beta}_1\) vector, improving parameter estimation. Like the proportional odds model and the generalized lasso \citep{tibshirani2011solution} optimization problem, \eqref{ordinal.pen.opt} is strictly convex if and only if \( \alpha_{y_i} + \boldsymbol{\beta}_{y_i}^\top \boldsymbol{x}_i > \alpha_{y_i - 1} + \boldsymbol{\beta}_{y_i - 1}^\top \boldsymbol{x}_i \) for all \(i\) \citep{Pratt1981}. This can be violated if the decision boundaries, which are not parallel, cross in the support of \(\boldsymbol{X}\). In Section \ref{sparse.sim}, we discuss the practical issues this presents when implementing relaxed proportional odds models like \textsc{PRESTO}, and in the next section, we prove \textsc{PRESTO} is consistent relying in part on an assumption that these decision boundaries do not cross in the support of \(\boldsymbol{X}\). See Appendix \ref{sec.est.presto} for details on how we estimate \textsc{PRESTO} in practice.

\subsection{Theoretical Analysis}\label{presto.cons.sec}

In this section, we present Theorem \ref{cons.thm}, which shows that \textsc{PRESTO} is a consistent estimator of \(\boldsymbol{\beta}_1, \ldots, \boldsymbol{\beta}_{K-1}\) under suitable assumptions. Before stating Theorem \ref{cons.thm}, we present and briefly discuss some of the new assumptions we will make.

\begin{itemize}
\item \textbf{Assumption \(S(s, c)\)}: The distribution of \(y_i \mid \boldsymbol{x}_i\) is distributed according to the \textsc{PRESTO} likelihood \eqref{ordinal.pen.opt}, where the true coefficients \( \boldsymbol{\theta}_* = \left( \boldsymbol{\beta}_1^\top, \boldsymbol{\psi}_2^\top, \ldots, \boldsymbol{\psi}_{K-1}^\top  \right)^\top \in \mathbb{R}^{p(K-1)}\) are \(s\)-sparse (have \(s\) nonzero entries for a fixed \(s\) not increasing in \(n\) or \(p\)). Further, \(\lVert \boldsymbol{\theta}_* \rVert_\infty \leq c\) for a fixed \(c\).

\item \textbf{Assumption \(T(c)\)}:  For all small enough \(\rho > 0\), for all \(\boldsymbol{\theta} \in \mathbb{R}^{p(K-1)}\) with \(\lVert \boldsymbol{\theta} - \boldsymbol{\theta}_* \rVert_1 \leq \rho\) it holds that none of the decision boundaries defined by \(\boldsymbol{\theta}\) and the true \(\alpha_1, \ldots, \alpha_{K-1}\) cross in \(\mathcal{S}\). Also, \(\max_{k \in \{1, \ldots, K-1\}} \left| \alpha_k \right| \leq c\).
\end{itemize}

The fixed sparsity assumption \(S(s, c)\) is helpful theoretically and also because without it in higher dimensions it becomes increasingly difficult to have nonparallel decision boundaries that do not cross. The first part of Assumption \(T(c)\) can be interpreted to mean that none of the decision boundaries cross ``too closely" to \(\mathcal{S}\). Other than these aspects, Assumptions \(S(s, c)\) and \(T(c)\) are mild. 

\begin{theorem}\label{cons.thm}
In a setting with fixed \(K \geq 3\) and \(p = p_n \to \infty\) as \(n \to \infty\) and satisfying \(p_n \leq C_1 n^{C_2}\) for some \(C_1 > 0\) and \(C_2 \in (0, 1)\), consider estimating \textsc{PRESTO} with penalty \(\lambda_n = C_3 \log(p_n[K-1]) /n    \) for some \(C_3 > 0\). Suppose Assumption \(X(\mathbb{R}^{p_n})\) holds and there is some \(C_4 < \infty\) such that \(\sup_{\boldsymbol{x} \in \mathcal{S}} \lVert \boldsymbol{x} \rVert_\infty \leq C_4\) and Assumptions \(S(s, C_4)\) and \(T(C_4)\) hold. Assume for some fixed \(b >0\) it holds that \( \lambda_{\text{min}}^* := \min_{k \in \{1, \ldots, K\}}  \lambda_{\text{min}} \left(  \boldsymbol{\Sigma}_k  \right)  >  b\), where \(\boldsymbol{\Sigma}_k := \E \left[  \boldsymbol{x}_i \boldsymbol{x}_i^\top \mid y_i = k\right] \). Then \textsc{PRESTO} is a consistent estimator of \(  \boldsymbol{\beta}_1, \ldots, \boldsymbol{\beta}_{K-1} \).

\end{theorem}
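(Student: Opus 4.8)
The plan is to eliminate the fusion penalty by a linear change of variables and then invoke an existing consistency result for $\ell_1$-penalized maximum likelihood in the resulting flat parametrization. Put $\boldsymbol{\theta} = (\boldsymbol{\beta}_1^\top, \boldsymbol{\psi}_2^\top, \ldots, \boldsymbol{\psi}_{K-1}^\top)^\top$ with $\boldsymbol{\psi}_k := \boldsymbol{\beta}_k - \boldsymbol{\beta}_{k-1}$, so that $\boldsymbol{\beta}_k = \sum_{\ell=1}^{k}\boldsymbol{\psi}_\ell$ is an invertible linear function of $\boldsymbol{\theta}$. Under this reparametrization each linear predictor $\alpha_k + \boldsymbol{\beta}_k^\top\boldsymbol{x}_i$ is affine in $\boldsymbol{\theta}$, the per-observation log-likelihood is a log-concave function of a linear image of $\boldsymbol{\theta}$, and---crucially---the generalized fused lasso penalty in \eqref{ordinal.pen.opt} becomes exactly $\lambda_n\lVert\boldsymbol{\theta}\rVert_1$. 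This is why \textsc{PRESTO} penalizes only $\boldsymbol{\beta}_1$ together with successive differences rather than all of the $\boldsymbol{\beta}_k$ as in \citet{xin2014efficient}: any other choice would fail to separate after this change of variables. The non-crossing part of Assumption $T(C_4)$ also makes \eqref{ordinal.pen.opt} strictly convex (as noted after that display), so $\hat{\boldsymbol{\theta}}$ is well-defined; and since $K$ is fixed, consistency of $\hat{\boldsymbol{\theta}}$ for $\boldsymbol{\theta}_*$ immediately yields consistency of each $\hat{\boldsymbol{\beta}}_k = \sum_{\ell\le k}\hat{\boldsymbol{\psi}}_\ell$, so it suffices to show $\hat{\boldsymbol{\theta}}\to\boldsymbol{\theta}_*$.

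I would then place the reparametrized problem in the framework of \citet{Ekvall2022}: the \textsc{PRESTO} likelihood is a categorical-response model whose $K-1$ linear predictors are affine in $\boldsymbol{\theta}$, which is the type of model for which that paper establishes consistency of the $\ell_1$-penalized maximum likelihood estimator (it does not treat fusion penalties, but after the change of variables there is none left to treat). The work is in verifying their hypotheses. Assumption $S(s,C_4)$ supplies a correctly specified model with a fixed-sparsity $\boldsymbol{\theta}_*$; $\sup_{\boldsymbol{x}\in\mathcal{S}}\lVert\boldsymbol{x}\rVert_\infty\le C_4$, $\lVert\boldsymbol{\theta}_*\rVert_\infty\le C_4$, and $\max_k|\alpha_k|\le C_4$ bound all linear predictors uniformly on $\mathcal{S}$; and non-crossing keeps $F(\alpha_{y_i}+\boldsymbol{\beta}_{y_i}^\top\boldsymbol{x}_i)-F(\alpha_{y_i-1}+\boldsymbol{\beta}_{y_i-1}^\top\boldsymbol{x}_i)$ bounded away from $0$ in a neighborhood of $\boldsymbol{\theta}_*$, so the average negative log-likelihood $L_n$ appearing in \eqref{ordinal.pen.opt} is smooth there with uniformly bounded first through third derivatives. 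Boundedness of the per-observation score then gives, by Hoeffding and a union bound over the $p_n(K-1)$ coordinates, $\lVert\nabla L_n(\boldsymbol{\theta}_*)\rVert_\infty = O_P(\sqrt{\log(p_n[K-1])/n})$, which $\to 0$ because $p_n\le C_1 n^{C_2}$ with $C_2<1$; together with the prescribed $\lambda_n\to 0$ this supplies the penalty calibration that \citet{Ekvall2022}'s theorem requires.

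The main obstacle is the curvature condition: one needs the population Fisher information of the reparametrized model to be positive definite at $\boldsymbol{\theta}_*$, with a lower bound uniform over a neighborhood and over the $\ell_1$ cone, and with a constant that does not degrade as $p_n\to\infty$. This information is a block matrix assembled from terms $\E[w_k(\boldsymbol{x})\,\boldsymbol{x}\boldsymbol{x}^\top]$ whose weights $w_k$ are smooth and strictly positive functions of the linear predictors---strict positivity being exactly what non-crossing buys, since it keeps the logistic-CDF differences and their derivatives away from $0$ on $\mathcal{S}$. The assumption $\lambda_{\text{min}}^* = \min_k\lambda_{\text{min}}(\boldsymbol{\Sigma}_k)>b$ on the within-class second-moment matrices $\boldsymbol{\Sigma}_k=\E[\boldsymbol{x}_i\boldsymbol{x}_i^\top\mid y_i=k]$ is what forces each such weighted second-moment matrix, and hence the information, to be bounded below by a positive multiple of the $\boldsymbol{\Sigma}_k$ with constants free of $p_n$: regrouping the expectation by conditioning on $y_i$, $\E[w_k(\boldsymbol{x})\boldsymbol{x}\boldsymbol{x}^\top]$ dominates $(\inf_{\boldsymbol{x}\in\mathcal{S}}w_k(\boldsymbol{x}))\,\mathbb{P}(y=k)\,\boldsymbol{\Sigma}_k$ in the Loewner order. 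Feeding this lower bound into \citet{Ekvall2022}, and checking that the $\ell_1$-cone restricted eigenvalue is inherited from it so that no separate incoherence argument is needed, completes the proof. Keeping every constant explicitly independent of $p_n$---which the fixed-$s$, bounded-$\lVert\boldsymbol{x}\rVert_\infty$ regime makes possible---is the delicate part.
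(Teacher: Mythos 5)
Your high-level architecture is exactly the paper's: reparametrize to $\boldsymbol{\theta} = (\boldsymbol{\beta}_1^\top, \boldsymbol{\psi}_2^\top, \ldots, \boldsymbol{\psi}_{K-1}^\top)^\top$ so the fusion penalty collapses to $\lambda_n\lVert\boldsymbol{\theta}\rVert_1$, cast the model in the framework of \citet{Ekvall2022}, verify their regularity assumptions using $S(s,C_4)$, $T(C_4)$, and the boundedness of $\mathcal{S}$, and recover the $\boldsymbol{\beta}_k$ from $\hat{\boldsymbol{\theta}}$ by partial sums (the paper's Lemma \ref{theta.beta.bounds} makes the last step quantitative, giving $\lVert\hat{\boldsymbol{\beta}}-\boldsymbol{\beta}\rVert_1 \le (K-1)\lVert\hat{\boldsymbol{\theta}}-\boldsymbol{\theta}_*\rVert_1$). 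You also correctly identify the curvature condition as the crux and correctly identify the role of $\lambda_{\text{min}}^* > b$.

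The gap is in how you propose to discharge that curvature condition. The hypothesis of Theorem 3 in \citet{Ekvall2022} is not a population Fisher-information bound but a restricted eigenvalue condition on the \emph{empirical} Gram matrix $\tfrac{1}{n}\sum_i \boldsymbol{Z}_i\boldsymbol{Z}_i^\top$ over the $\ell_1$ cone --- the event $\mathcal{C}_{\kappa,n,p_n(K-1)}$ --- and the bulk of the paper's proof (Proposition \ref{min.eigen.lem} and its supporting lemmas) is devoted to showing this event has probability tending to one. Your Loewner bound $\E[w_k(\boldsymbol{x})\boldsymbol{x}\boldsymbol{x}^\top] \succeq (\inf w_k)\,\mathbb{P}(y=k)\,\boldsymbol{\Sigma}_k$ is a correct population statement, but with $p_n\to\infty$ the passage from population to sample is not free: one must (i) lower-bound each class count $n_k$ by $n\pi_{\text{rare,min}}/2$ with high probability via Hoeffding, (ii) lower-bound $\lambda_{\text{min}}\bigl(\tfrac{1}{n_k}\sum_{i:y_i=k}\boldsymbol{x}_i\boldsymbol{x}_i^\top\bigr)$ using sub-Gaussian random-matrix concentration (Theorem 5.39 of \citealt{vershynin_2012}), which is where the condition $p_n \le C_1 n^{C_2}$ with $C_2<1$ is actually consumed (so that $\sqrt{n_k} - C\sqrt{p_n}$ stays positive), and (iii) handle the block structure via the identity $\boldsymbol{Z}_i = \boldsymbol{A}^{(y_i)}\otimes\boldsymbol{x}_i$ and the algebra $\sum_k \tfrac{n_k}{n}\boldsymbol{A}^{(k)}(\boldsymbol{A}^{(k)})^\top = \boldsymbol{B}\boldsymbol{D}\boldsymbol{B}^\top$ with $\sigma_{\text{min}}^2(\boldsymbol{B})\ge 1/2$, which is what lets the within-class eigenvalue bounds combine into a bound for the full $p_n(K-1)$-dimensional matrix. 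Your remark that the cone-restricted eigenvalue is ``inherited'' with ``no separate incoherence argument'' is true only once this empirical minimum-eigenvalue bound is in hand (the paper simply lower-bounds the infimum over the cone by the infimum over the whole sphere); without the concentration step the claim is unsupported, and it is the step where all of the quantitative assumptions of the theorem do their work.
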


Theorem \ref{cons.thm} shows that under fairly mild regularity conditions and a sparsity assumption in a high-dimensional setting, \textsc{PRESTO} consistently estimates all of the decision boundaries. That is, it is consistent both if the proportional odds assumption holds and in more flexible settings, where the proportional odds model is unrealistic, under sparsity. Theorem \ref{est.known.beta} suggests this should be helpful for estimating rare class probabilities. The proof of Theorem \ref{cons.thm} leverages recent theory developed for \(\ell_1\)-penalized ordinal regression \citep{Ekvall2022}.

\begin{figure}[htbp]
\begin{center}
\includegraphics[width=\linewidth]{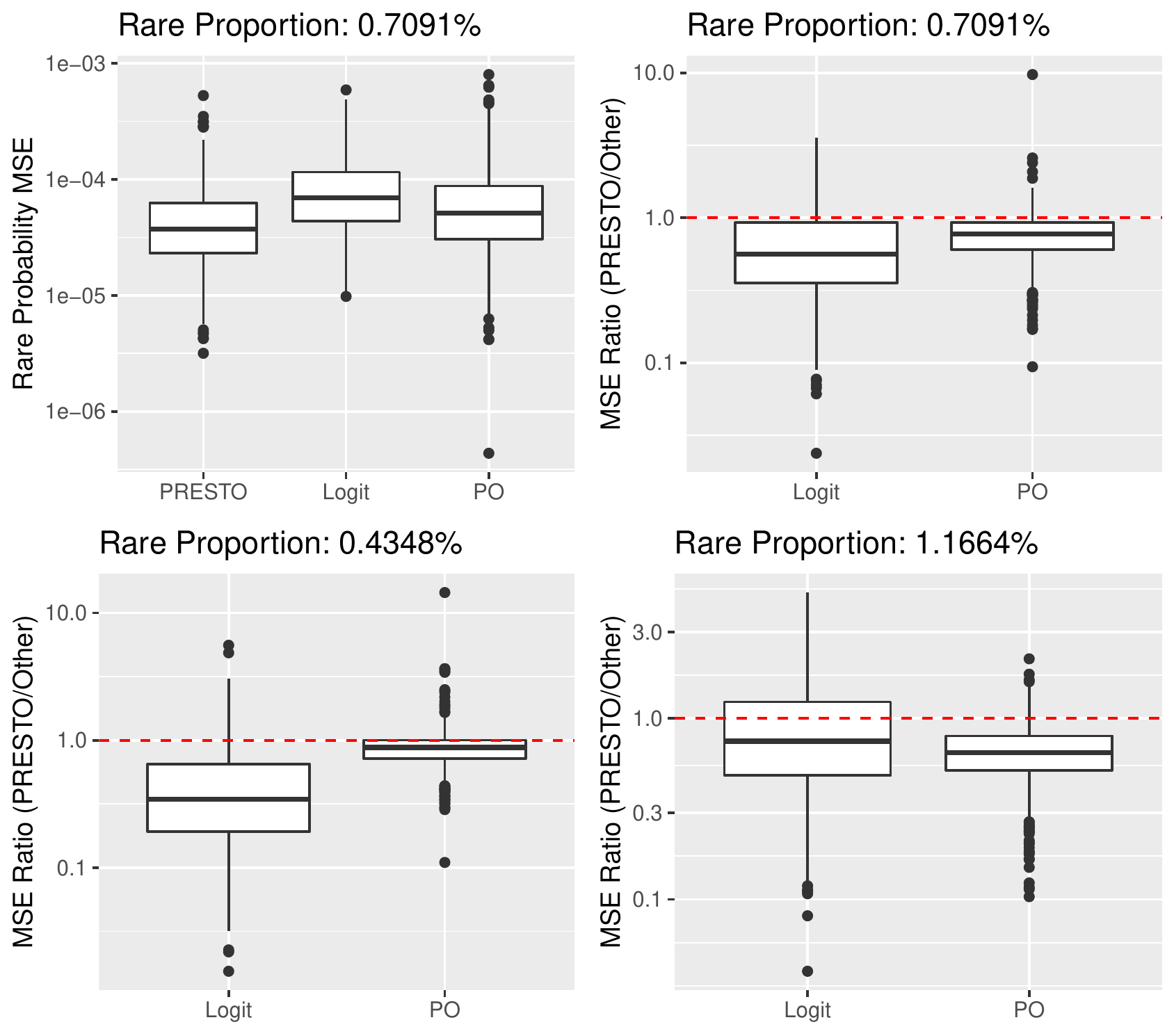}
\caption{Top left: MSE of estimated rare class probabilities for each method across all \(n = 2500\) observations, across 700 simulations, in sparse differences simulation setting of Section \ref{sparse.sim}, for intercept setting yielding rare class proportions of about \(0.71\%\) on average and sparsity \(1/2\). Remaining plots: ratios of MSE for \textsc{PRESTO} divided by MSE of each other method for each of three sets of intercepts with sparsity \(1/2\) (\textsc{PRESTO} performs better if ratio is less than 1). All plots on log scale.}
\label{fig.sparse.3}
\end{center}
\end{figure}

\begin{figure}[htbp]
\begin{center}
\includegraphics[width=\linewidth]{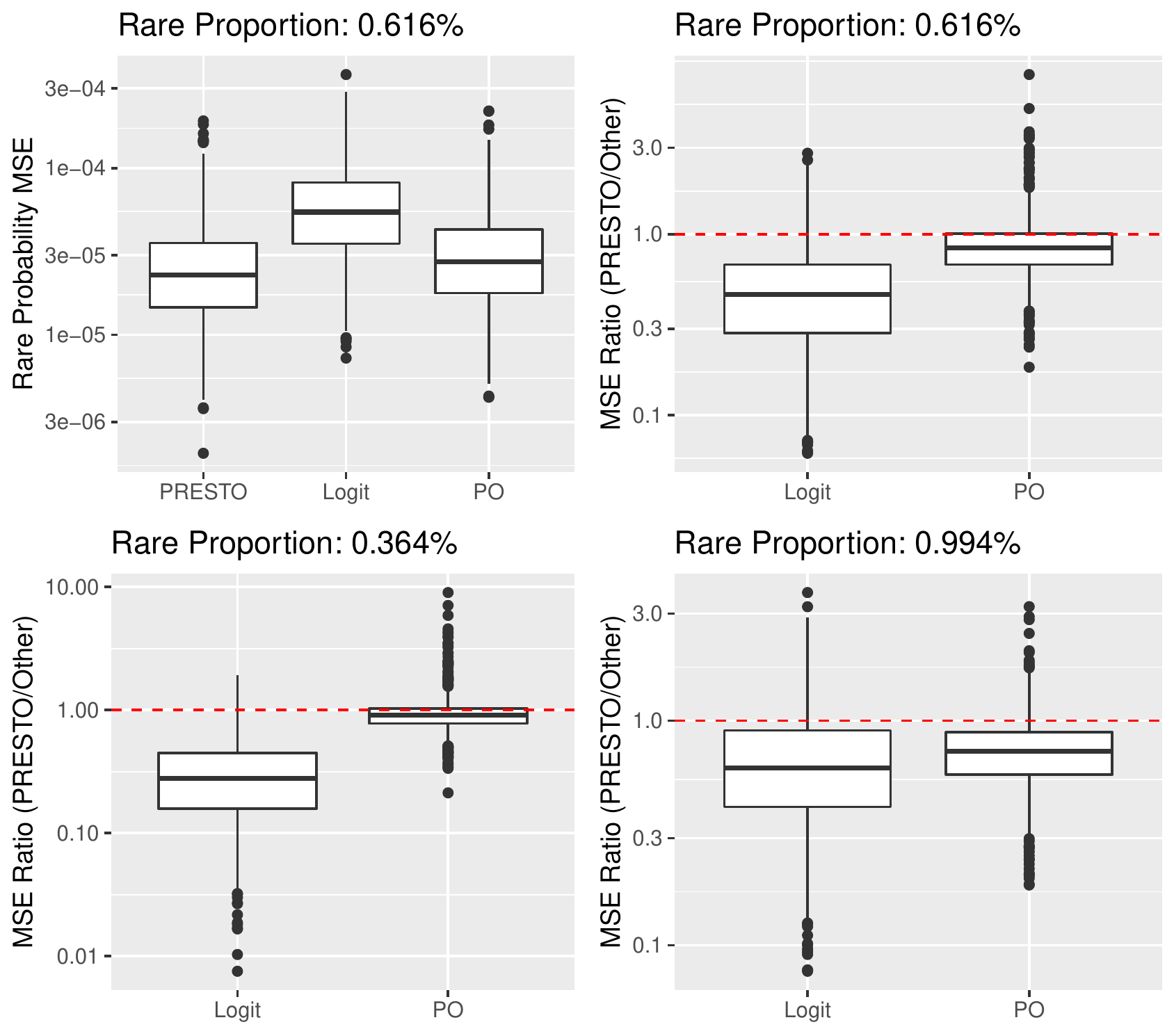}
\caption{Same plots as in Figure \ref{fig.sparse.3}, but for uniform differences synthetic experiment in Section \ref{dense.sim}.}
\label{fig.unif}
\end{center}
\end{figure}

\section{Experiments}\label{sec.sim}

To illustrate the efficacy of \textsc{PRESTO}, we conduct two synthetic experiments and also examine two real data sets. In Section \ref{sparse.sim}, we generate random \(\boldsymbol{y}\) that have conditional probabilities based on a relaxation of the proportional odds model with sparse differences between adjacent decision boundary parameter vectors, rather than parameterizing all decision boundaries with the same \(\boldsymbol{\beta}\). This setting is well-suited to \textsc{PRESTO}. In Section \ref{dense.sim}, we show that \textsc{PRESTO} also performs well in a less favorable setting, where the differences between adjacent decision boundaries are instead dense; nonetheless, \textsc{PRESTO} still outperforms logistic regression and proportional odds models. In Section \ref{real.data} we compare the performance of \textsc{PRESTO} to logistic regression and the proportional odds model at estimating rare probabilities in a real data experiment. Finally, in Section \ref{real.data.2} we conduct a second real data experiment on a data set of patients diagnosed with diabetes, where we vary the rarity of the outcome of interest. See Section \ref{sec.est.presto} of the appendix for all implementation details. The code generating all plots and tables is available at \url{https://github.com/gregfaletto/presto}.

\subsection{Synthetic Data: Sparse Differences Setting}\label{sparse.sim}

We repeat the following procedure for 700 simulations. First we generate data using \(n = 2500\), \(p = 10\), and \(K =  4\). We draw a random \(\boldsymbol{X} \in [-1, 1]^{n \times p}\), where \(X_{ij} \sim \operatorname{Uniform}(-1, 1)\) for all \(i \in \{1, \ldots, n\}\) and \(j \in \{1, \ldots, p\}\). Then \(\boldsymbol{y} \in \{1, \ldots, K\}^n\) is generated according to a relaxation of the proportional odds model; instead of \eqref{prop_odds}, we generate probabilities according to \eqref{prop_odds_gen} where the \(\boldsymbol{\beta}_k\) are generated in the following way for sparsity settings of \(\eta \in\{ 1/3, 1/2\}\): first, we generate \(\boldsymbol{\beta}_1\) by taking the vector \((0.5, \ldots, 0.5)^\top\), but setting all of the entries equal to 0 randomly with probability \(1 - \eta\) for each entry independently. Then we set \( \boldsymbol{\beta}_k =  \boldsymbol{\beta}_{k-1} + \boldsymbol{\psi}_k, \qquad k \in \{2, \ldots, K - 1\},\) where \(\boldsymbol{\psi}_k \in \mathbb{R}^p\) are iid random vectors for each \(k \in \{2, \ldots, K - 1\}\) generated according to the following distribution:
\[
\psi_{kj} = \begin{cases}
0, & \text{with probability } 1- \eta,
\\ 0.5, & \text{with probability } \eta/2,
\\ -0.5, & \text{ with probability } \eta/2,
\end{cases} \ \ \  j \in \{1, \ldots, p\}.
\]
We consider three possible sets of intercepts: \(\boldsymbol{\alpha} =  (0,3,5), (0, 3.5, 5.5)\), and \((0, 4, 6)\), so that the first two categories are common and the remaining categories are rare. The final rare class is the one of interest; in the three settings, the average proportions of observations falling in the rare class are \(1.00\%\), \(0.62\%\), and \(0.37\%\), respectively, for the \(\eta = 1/3\) setting and \(1.17\%\), \(0.71\%\), and \(0.43\%\) for the \(\eta = 1/2\) setting.

The fact that the decision boundaries may cross in the support of \(\boldsymbol{X}\), which would mean that for such \(\boldsymbol{x}\) some class probabilities are defined to be negative, puts practical limits on the magnitude of \(\boldsymbol{\psi}_k\) in simulations. (See Section 3.6.1 of \citealt{agresti2010analysis} for a discussion of this point.) Also, for this reason, in each simulation we check whether or not the conditional probabilities are positive for each class for every sampled \(\boldsymbol{x}\); if not, we generate new \(\boldsymbol{\psi}_2, \ldots, \boldsymbol{\psi}_{K-1}\) for a limited number of iterations, ending the simulation study in failure if no suitable \(\boldsymbol{\psi}_k\) can be found in a reasonable number of attempts. The parameters we used generated positive probabilities for all observations across all simulations. 

We then estimate a model for each method; for logistic regression, we estimate the binary classification problem of whether or not each observation is in class \(K\), and for proportional odds and \textsc{PRESTO}, we fit a full model on all \(K\) responses. For \textsc{PRESTO}, we use 5-fold cross-validation to choose a value of \(\lambda_n\) among 20 choices, selecting the \(\lambda_n\) with the best out-of-fold Brier score (other metrics, like negative log likelihood, failed because some values of \(\lambda_n\) in some folds resulted in models yielding negative probabilities, so these other metrics were undefined). The 20 candidate values of \(\lambda_n\) are generated in the following way: the largest \(\lambda_n\) value, \(\lambda_n^{(20)}\), is the smallest \(\lambda_n\) for which all of the estimated sparse differences equal 0; the smallest \(\lambda_n\) value is set to \(\lambda_n^{(1)} = 0.01 \cdot \lambda_n^{(20)}\), and the remaining \(\lambda_n\) values are generated at equal intervals on a logarithmic scale between these two values.

Each of these models yields estimated probabilities that each observation lies in class \(K\). In the final step of each simulation run, we compute the mean squared error of these estimated probabilities for each method.

In Figure \ref{fig.sparse.3}, we show boxplots of the empirical mean squared errors for each method in the setting where the rare class is observed in \(0.71\%\) of observations when \(\eta = 1/2\). In order to see how the methods compare pairwise on each simulation, we also show boxplots of the ratio between the mean squared error of \textsc{PRESTO} and the other two methods in each of the three simulation settings. We also conduct one-tailed paired \(t\)-tests of the alternative hypothesis that the mean MSE for \textsc{PRESTO} is lower than each of the competitor methods in each setting; all 12 of the \(p\)-values (provided in Table \ref{sparse.wilcox.tab} of Appendix \ref{sim.output}) are below \(0.01\). Finally, in Appendix \ref{sim.output} we also provide   the means and standard errors for the MSE of each method in each simulation setting in Table \ref{sparse.sum.stats.tab}, as well as boxplots like the one in the top left corner of Figure \ref{fig.sparse.3} for the other two intercept settings and all boxplots for the \(\eta = 1/3\) setting.

We see that \textsc{PRESTO} typically estimates these rare probabilities better than logistic regression, which despite being correctly specified struggles with class imbalance and does not draw strength from estimating the easier decision boundary between classes 1 and 2, and the proportional odds model, whose assumptions are not satisfied in this setting.

\subsection{Synthetic Data: Dense Differences Setting}\label{dense.sim}

In real data sets the differences between adjacent decision boundary parameter vectors may not always be exactly sparse, so we conduct another synthetic experiment in the same way as in Section \ref{sparse.sim}, except \(\boldsymbol{\beta}_{1j} \sim \operatorname{Uniform}(-.5, .5)\) and each \(\psi_{kj} \sim \operatorname{Uniform}(-.5, .5)\), iid across \(j \in \{1, \ldots, p\}\) and \(k \in \{2, \ldots, K - 1\}\). We also add an extra intercept setting of \((0, 2.5, 4.5)\). This yields average rare class proportions of \(0.99\%\), \(0.62\%\), and \(0.36\%\) using the same intercepts as the experiments in Section \ref{sparse.sim} and \(1.60\%\) in the new intercept setting. The uniformly distributed differences can be considered ``approximately" sparse in the sense that while no deviations will exactly equal 0, some will be large and important to estimate, and some will be essentially negligible.

Figure \ref{fig.unif} and Table \ref{unif.wilcox.tab} summarize the results, along with additional figures and tables in Appendix \ref{sim.output}. We again see that \textsc{PRESTO} outperforms both competitor methods by statistically significant margins.

\begin{table}[!htbp] \centering 
  \caption{Calculated \(p\)-values for one-tailed paired \(t\)-tests for uniform differences simulation setting of Section \ref{dense.sim} testing the alternative hypothesis that PRESTO's rare probability MSE is less than each competitor method in each rarity setting. (Statistically significant \(p\)-values indicate better performance for PRESTO).} 
  \label{unif.wilcox.tab} 
\begin{tabular}{@{\extracolsep{5pt}} ccc} 
\\[-1.8ex]\hline 
\hline \\[-1.8ex] 
 Rare Class Proportion & Logit \(p\)-value & PO \(p\)-value \\ 
\hline \\[-1.8ex] 
 1.6\% & \(<\) 1e-10 & \(<\) 1e-10  \\ 
0.99\% & \(<\) 1e-10  & \(<\) 1e-10  \\ 
0.62\% & \(<\) 1e-10  &\(<\) 1e-10  \\ 
0.36\% & \(<\) 1e-10  & 0.000242 \\ 
\hline \\[-1.8ex] 
\end{tabular} 
\end{table}

\subsection{Real Data Experiment 1: Soup Tasting}\label{real.data}

We conduct a real data experiment using the \texttt{soup} data set from the R \texttt{ordinal} package \citep{ordinal}. The data come from a study \citep{Christensen2011} of participants who tasted soups and responded whether they thought each soup was a reference product they had previously been familiarized with or a new test product. The respondents also stated how sure they were in their response on a three-level scale, yielding a total of \(K = 6\) possible ordered outcomes for \(n = 1847\) observations. The outcome of interest corresponds to the respondent being sure the tasted soup was the reference and is observed in 228 observations (about \(12\%\) of the total). All of the features are categorical, and after one-hot encoding we have \(p = 22\) binary features related to the soup, the respondent, and the testing environment\footnote{The categorical predictors \texttt{PRODID} and \texttt{RESP} are omitted because in some splits not all levels of these features are observed in the training set, making it impossible to estimate parameters for these features.}. This may be a promising setting for \textsc{PRESTO} because, while the responses have a well-defined ordering, it's plausible that different features could have varying impacts at different levels of respondent certainty.

We complete the following procedure 350 times: first, we randomly split the data into training (\(90\%\) of the data) and test (\(10\%\)) sets. We estimate models using PRESTO, logistic regression, and the proportional odds model on the training data and evaluate on the test set. 

We are interested in the accuracy of the rare class probabilities, but we can't evaluate rare probability MSE directly since we don't observe the true probabilities. Brier score could be a reasonable proxy, but it is known to be a poor metric in the presence of class imbalance \citep{benedetti2010scoring}. Instead we estimate rare probability MSE using the following procedure. For each method, we sort the estimated test set rare class probabilities in ascending order and assign the observations into 10 bins: the first \(1/10\) observations go in the first bin, and so on. Then we estimate the mean squared error of the estimated probabilities by \(\frac{1}{n} \sum_{i=1}^n ( \hat{\pi}_1^{(i)} - o_{b(i)}  )^2\), where \(\hat{\pi}_1^{(i)}\) is the estimated rare class probability for observation \(i\) and \(o_{b(i)}\) is the observed rare class proportion in the bin containing observation \(i\). This is similar to \textit{expected calibration error} \citep{naeini2015obtaining}, though we use squared error rather than absolute error. 10 equal frequency bins follows the default of the R \texttt{CalibratR} package that implements expected calibration error \citep{10.1093/bioinformatics/bty984}.

By this metric, the mean error for \textsc{PRESTO} is 0.0096, 0.0157 for logistic regression and 0.0135 for proportional odds. Figure \ref{fig.soup} displays boxplots of the results as in the synthetic experiments which indicate that \textsc{PRESTO} typically outperforms the other methods. We do not report \(p\)-values or standard errors since the observed samples are dependent (random splits of the same data set).

\begin{figure}[htbp]
\begin{center}
\includegraphics[width=\linewidth]{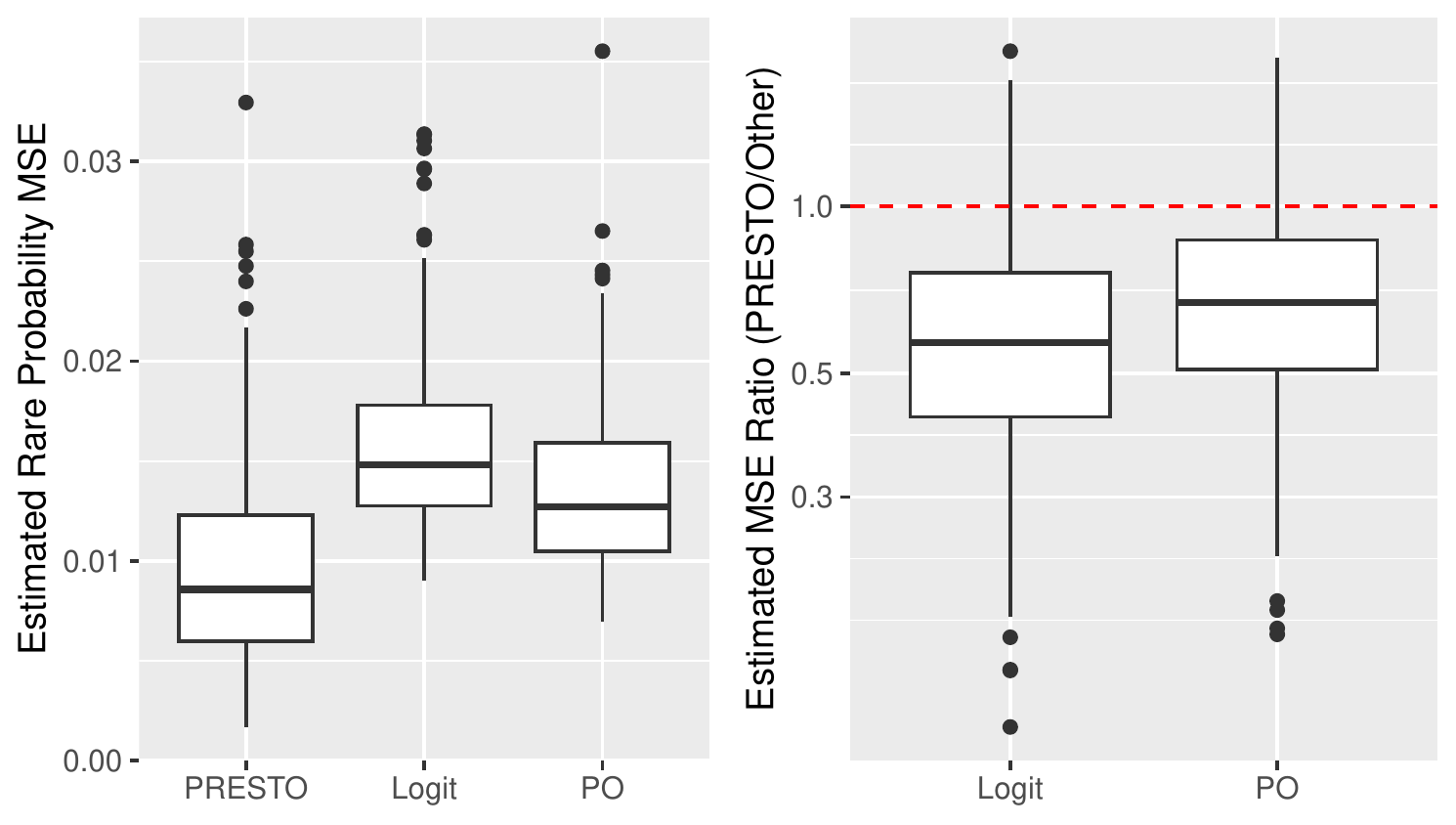}
\caption{Left: Estimated MSEs of estimated rare class probabilities for each method across 350 random draws of training and test sets in real data experiment from Section \ref{real.data}. Right: ratios of estimated MSE for \textsc{PRESTO} divided by MSE of each other method (\textsc{PRESTO} performs better if ratio is less than 1).}
\label{fig.soup}
\end{center}
\end{figure}

\subsection{Real Data Experiment 2: Diabetes}\label{real.data.2}

We present another real data experiment using the data set \texttt{PreDiabetes} from the R \texttt{MLDataR} package \citep{mldatar}. This data set contains \(n = 3059\) observations of patients who were eventually diagnosed with diabetes. Each observation consists of the age at which the patient was diagnosed with prediabetes and diabetes as well as \(p =5\) covariates. Given an age \(a\), we form an ordinal variable based on the patient's status of non-diabetic, prediabetic, or diabetic at age \(a -1\). We do this for ages \(a \in \{30, 35, 40, \ldots, 65\}\). The number of patients diagnosed with diabetes increases with \(a\), so varying \(a\) allows us to change the rarity of the rarest class in a natural way. \(0.92\%\) of patients in the data were diagnosed with diabetes before age \(a = 30\), and \(50.93\%\) of the patients were diagnosed with diabetes before age \(a = 65\).

We use PRESTO, logistic regression, and the proportional odds model to estimate the probability that each patient was diagnosed with diabetes before age \(a\) for each \(a\). Much like our soup tasting data application, in each setting we take repeated random splits of the data, using \(90\%\) of the data selected at random for training and \(10\%\) for testing. In each iteration we again evaluate each method on the test data using the same estimator for mean squared error of the estimated rare class probabilities. We repeat this procedure 49 times in each of the 8 settings. 

\begin{figure}[htbp]
\begin{center}
\includegraphics{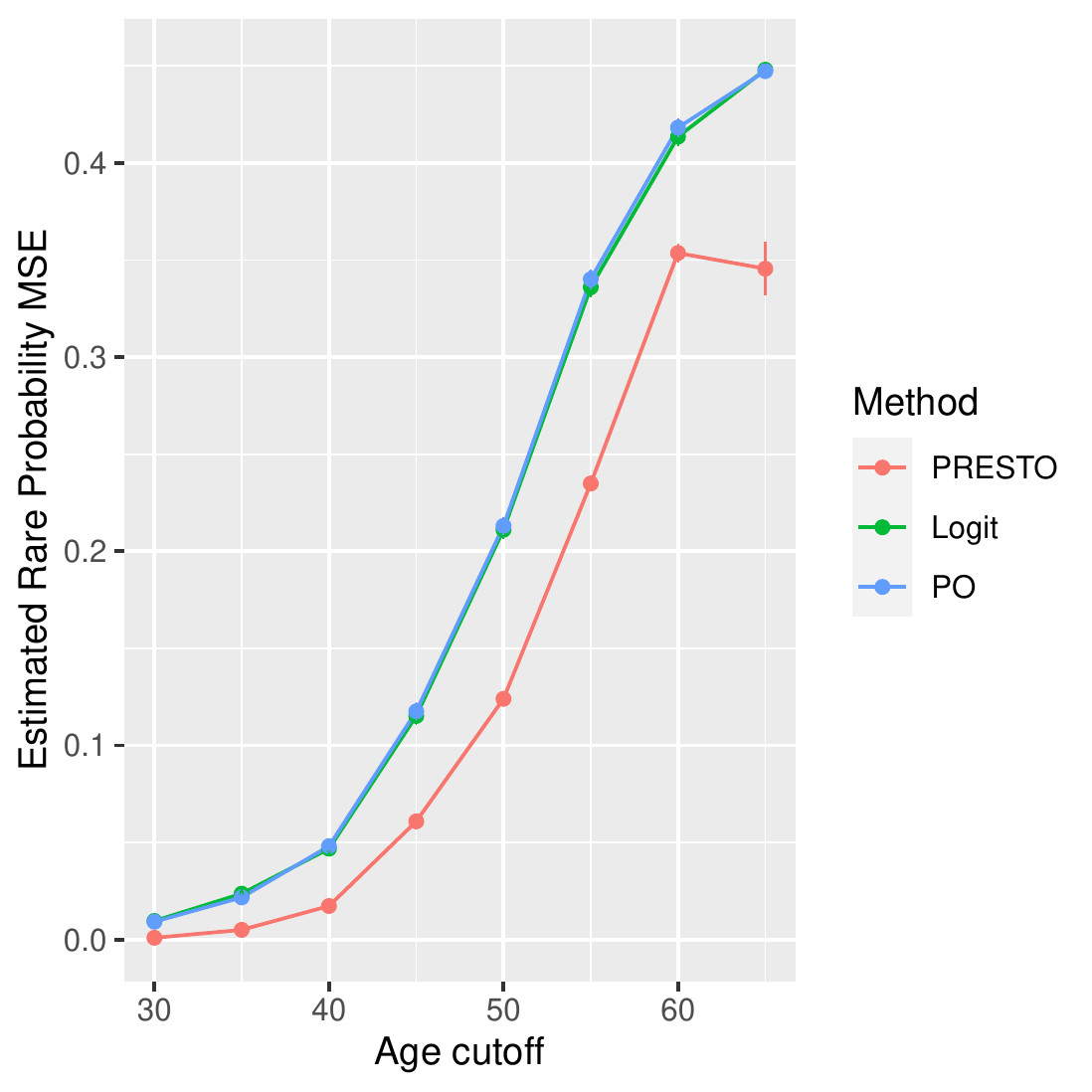}
\caption{Estimated MSEs of estimated rare class probabilities for each method and each age cutoff across 49 random draws of training and test sets in real data experiment from Section \ref{real.data.2}.}
\label{fig.diabetes}
\end{center}
\end{figure}

\begin{table}
\caption{Estimated rare class MSE for each method at each age cutoff in prediabetes real data experiment from Section \ref{real.data.2}.}
\label{tab.diabetes}
\centering
\begin{tabular}[t]{l|l|l|l}
\hline
Age cutoff  & PRESTO & Logit & PO\\
\hline
30 & 0.000943 & 0.009609 & 0.009217\\
\hline
35 & 0.005013 & 0.023658 & 0.021740\\
\hline
40 & 0.017307 & 0.046828 & 0.048123\\
\hline
45 & 0.060896 & 0.115189 & 0.117525\\
\hline
50 & 0.124000 & 0.211083 & 0.213059\\
\hline
55 & 0.234906 & 0.336009 & 0.340130\\
\hline
60 & 0.353615 & 0.413534 & 0.418179\\
\hline
65 & 0.345535 & 0.448015 & 0.447290\\
\hline
\end{tabular}
\end{table}

We display the results in a plot in Figure \ref{fig.diabetes}. We also provide the mean MSEs for each method at each age cutoff in Table \ref{tab.diabetes}. We see that PRESTO outperforms both logistic regression and the proportional odds model in all of these settings. (For age cutoffs \(a=29\) and below we were unable to estimate the proportional odds model on all subsamples because of the difficulty of having at least one observation from each class in both the training and test sets for 49 random draws.) PRESTO seems to outperform the other methods at all class rarities, and the absolute performance gap increases as the rare class becomes less rare.

\section{Conclusion}\label{sec.conclusion}

By leveraging data from earlier decision boundaries, but relaxing the rigid proportional odds assumption, \textsc{PRESTO} can substantially improve estimation of the probability of rare events, even when the assumption of sparse differences between adjacent decision boundary weight vectors does not exactly hold. Future work could explore \(\ell_1\) penalties for the coefficients themselves, not just the differences between the coefficients, to allow for simultaneous feature selection and model estimation. Inference for \textsc{PRESTO} could also be possible by extending the method for exact post-selection inference for the generalized lasso path by \citet{hyun2018exact}, or similar work on the fused lasso by \citet{chen2022more}, to our generalized linear model setting. Future work could also explore the empirical performance of \textsc{PRESTO} in even more depth, perhaps by using large-scale real world data sets like those used in \citet{10.1145/2783258.2788578}.

There are other possible extensions that could improve estimation. For example, we set the first decision boundary as the one that is directly penalized, with differences from this boundary assumed to be sparse. This makes sense if the classes become increasingly rare and the first decision boundary is the most balanced. However, it may make more sense to directly penalize whichever decision boundary has the best balance of observed responses on each side. Penalizing the differences from this boundary might improve estimation since this decision boundary ought to be the easiest to estimate. This could improve estimation in settings like the real data experiment from Section \ref{real.data} where the most balanced decision boundary is closer to the center of the responses. 

Also, in cases where the final categories are very rare, a better bias/variance tradeoff might be achieved by reimposing the proportional odds assumption, imposing an exact equality constraint for the last few decision boundaries. In these settings, data might be too rare to hope for better estimation by relaxing the proportional odds assumption even with regularization. 

Lastly, in Section \ref{sec.est.presto} of the appendix we discuss possible faster approaches than the one used in the present work for solving the PRESTO optimization problem.

\bibliography{mybib2fin}
\bibliographystyle{abbrvnat}

\newpage
\appendix
\onecolumn

In Section \ref{sim.output}, we display summary statistics and additional figures for the observed mean squared errors (MSEs) for each method from the synthetic data experiments from Sections \ref{sparse.sim} and \ref{dense.sim}. We also briefly investigate the effect of implementing PRESTO with a squared \(\ell_2\) (ridge) penalty rather than an \(\ell_1\) penalty in Section \ref{ridge.supp}. We provide the proofs of Theorems \ref{est.known.beta} and \ref{log.imb} in Section \ref{main.proofs}. In Section \ref{main.thm.sec}, we present synthetic data experiments and analysis justifying the validity of one of the assumptions of Theorem \ref{main.cov.thm.2} in Sections \ref{min.eigen.bound} and \ref{min.eigen.cond.sim}, and we then prove Theorem \ref{main.cov.thm.2}. Theorems \ref{est.known.beta}, \ref{log.imb}, and \ref{main.cov.thm.2} depend on Lemma \ref{asym.matrix}, which is stated at the beginning of Section \ref{main.proofs} and proven in Section \ref{lemmas.sec}. We prove Theorem \ref{cons.thm} in Section \ref{cons.thm.proof}. Finally, in Section \ref{sec.est.presto} we provide implementation details for estimating \textsc{PRESTO}.

\section{More Simulation Results}\label{sim.output}

For more results from the synthetic experiments, see Tables \ref{sparse.wilcox.tab}, \ref{sparse.sum.stats.tab}, and \ref{dense.sum.stats.tab}, along with Figures \ref{fig.sparse.4}, \ref{fig.sparse}, \ref{fig.sparse.2}, \ref{fig.dense.2}, and \ref{fig.dense.3}.

\begin{table}[!htbp] \centering 
  \caption{Similar to Table \ref{unif.wilcox.tab}; calculated \(p\)-values for one-tailed paired \(t\)-tests for sparse differences simulation setting of Section \ref{sparse.sim} (statistically significant \(p\)-values indicate better performance for PRESTO).} 
  \label{sparse.wilcox.tab} 
\begin{tabular}{@{\extracolsep{5pt}} cccc} 
\\[-1.8ex]\hline 
\hline \\[-1.8ex] 
  Rare Prop. & Sparsity & Logit \(p\)-value & PO \(p\)-value \\ 
\hline \\[-1.8ex] 
1\%&  1/3 & 1.69e-33 & 6.42e-41 \\ 
1.17\%& 1/2 & 1.61e-15 & 2.78e-66 \\ 
0.61\%&  1/3 & 5.19e-74 & 4.21e-19 \\ 
0.71\%& 1/2 & 8.68e-48 & 3.38e-35 \\ 
0.37\%&  1/3 & 3.08e-61 & 0.00165 \\ 
0.43\% & 1/2& 3.75e-64 & 2.57e-11 \\ 
\hline \\[-1.8ex] 
\end{tabular} 
\end{table}

\begin{table}[!htbp] \centering 
  \caption{Means and standard errors of empirical MSEs for each method in each of three intercept settings in the sparse differences synthetic experiment setting of Section \ref{sparse.sim}.} 
  \label{sparse.sum.stats.tab} 
\begin{tabular}{@{\extracolsep{5pt}} ccccc} 
\\[-1.8ex]\hline 
\hline \\[-1.8ex] 
 Rare Class Proportion & Sparsity &  PRESTO & Logistic Regression & Proportional Odds \\ 
\hline \\[-1.8ex] 
1\% &  1/3& 6.05e-05 (2.1e-06) & 9.38e-05 (2.5e-06) & 8.62e-05 (3.1e-06) \\ 
1.17\%& 1/2 & 9.87e-05 (2.9e-06) & 1.25e-04 (3.3e-06) & 1.66e-04 (5.5e-06) \\ 
0.61\%&  1/3 & 3.03e-05 (1.1e-06) & 6.90e-05 (2.1e-06) & 3.64e-05 (1.4e-06) \\ 
0.71\%& 1/2 & 5.22e-05 (1.9e-06) & 8.89e-05 (2.5e-06) & 7.50e-05 (3e-06) \\ 
0.37\% &  1/3& 1.40e-05 (6e-07) & 5.66e-05 (2.4e-06) & 1.49e-05 (6.1e-07) \\ 
0.43\%& 1/2 & 2.63e-05 (1e-06) & 7.21e-05 (2.7e-06) & 3.17e-05 (1.4e-06) \\ 
\hline \\[-1.8ex] 
\end{tabular}
\end{table}

\begin{figure}[htbp]
\begin{center}
\includegraphics{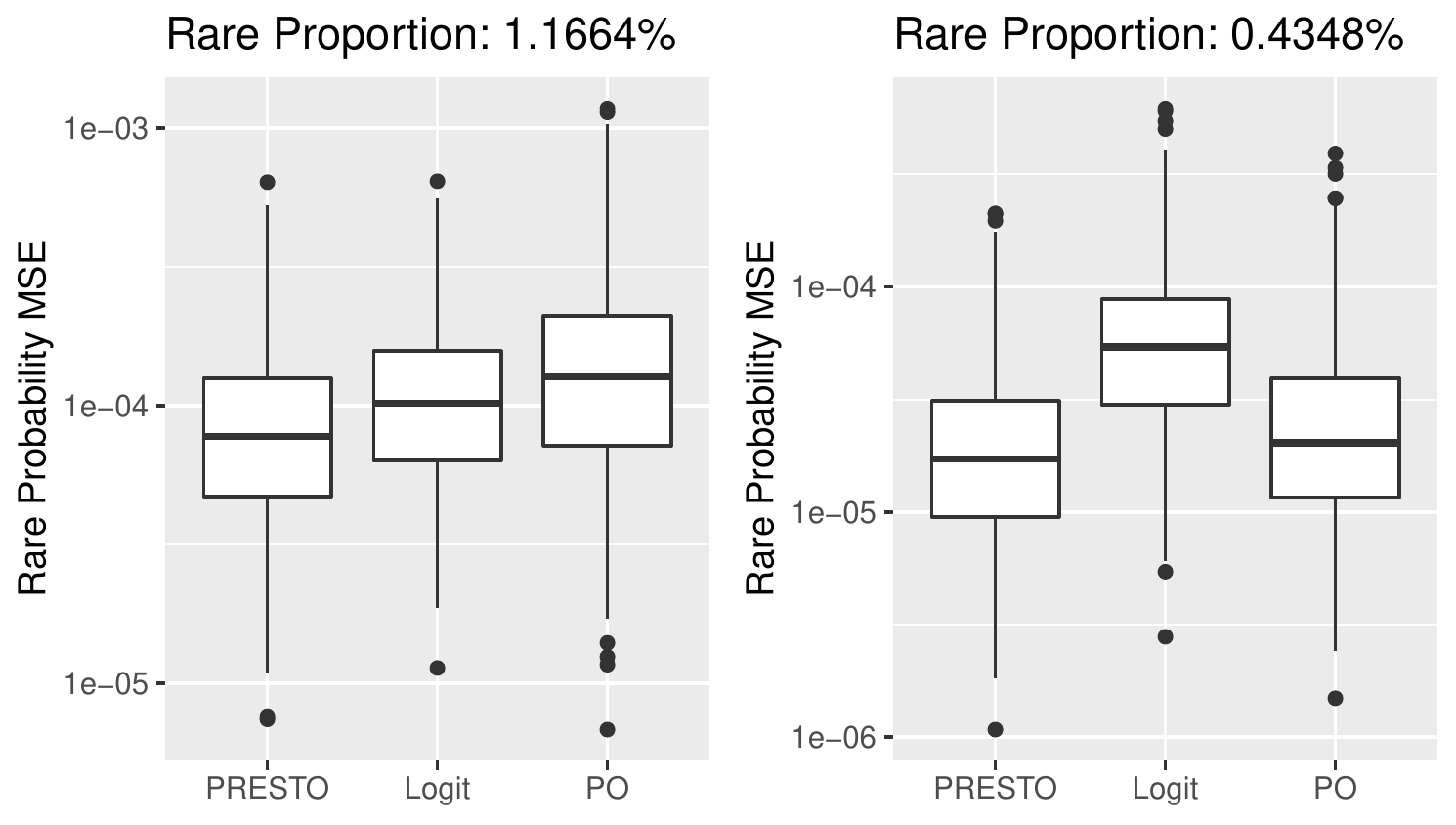}
\caption{MSE of predicted rare class probabilities for each method across all \(n = 2500\) observations, across 700 simulations, in sparse differences synthetic experiment setting of Section \ref{sparse.sim} with sparsity \(1/3\). (These plots are for the two intercept settings that weren't shown in the main text for the sparsity setting of \(1/3\))}
\label{fig.sparse.4}
\end{center}
\end{figure}

\begin{table}[!htbp] \centering 
  \caption{Means and standard errors of empirical MSEs for each method in each of four intercept settings in the uniform differences synthetic experiment setting of Section \ref{dense.sim}.} 
  \label{dense.sum.stats.tab} 
\begin{tabular}{@{\extracolsep{5pt}} cccc} 
\\[-1.8ex]\hline 
\hline \\[-1.8ex] 
  Rare Class Proportion & PRESTO & Logistic Regression & Proportional Odds \\ 
\hline \\[-1.8ex] 
 1.6\% & 1.13e-04 (2.7e-06) & 1.35e-04 (2.9e-06) & 1.85e-04 (5.2e-06) \\ 
0.99\% & 5.82e-05 (1.7e-06) & 9.25e-05 (2.3e-06) & 8.53e-05 (2.6e-06) \\ 
0.62\% & 2.86e-05 (8.3e-07) & 6.51e-05 (1.7e-06) & 3.43e-05 (9.8e-07) \\ 
0.36\% & 1.33e-05 (4.4e-07) & 5.36e-05 (2.2e-06) & 1.43e-05 (5e-07) \\ 
\hline \\[-1.8ex] 
\end{tabular} 
\end{table} 

\begin{figure}[htbp]
\begin{center}
\includegraphics{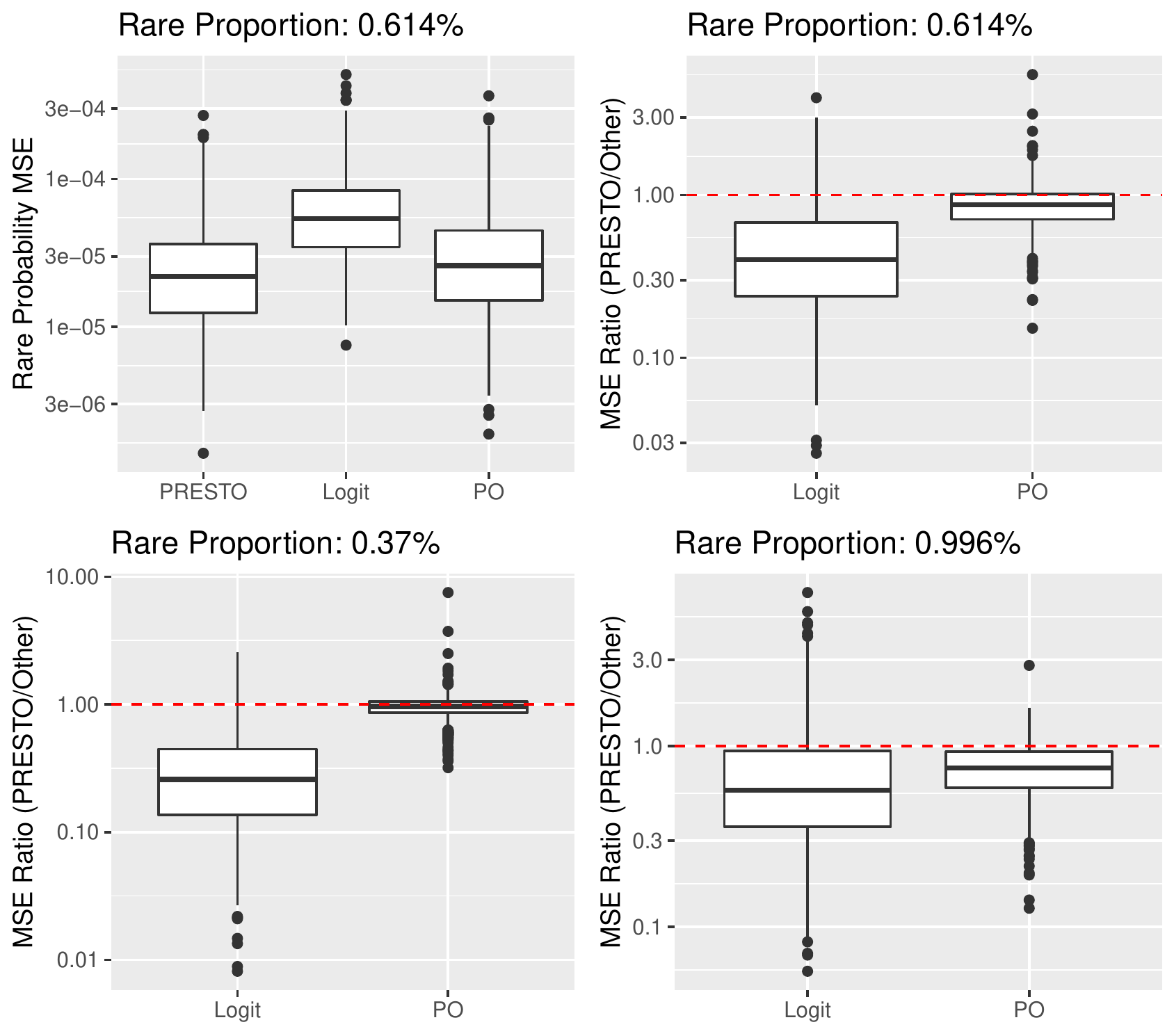}
\caption{Same as Figure \ref{fig.sparse.3}, but for the simulations with sparsity \(1/2\).}
\label{fig.sparse}
\end{center}
\end{figure}

\begin{figure}[htbp]
\begin{center}
\includegraphics{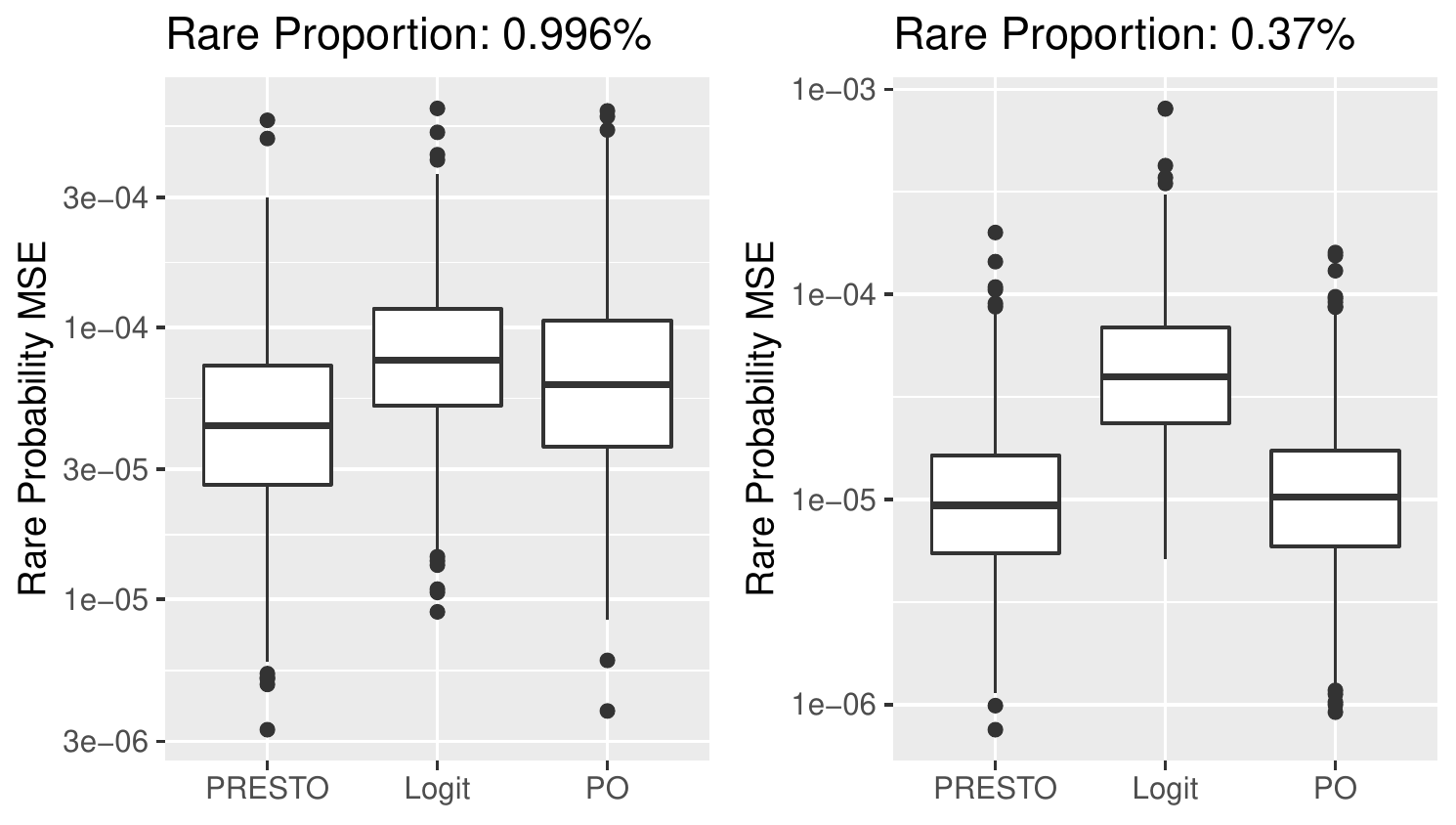}
\caption{Same as Figure \ref{fig.sparse.4}, but for the simulations with sparsity \(1/2\).}
\label{fig.sparse.2}
\end{center}
\end{figure}

\begin{figure}[htbp]
\begin{center}
\includegraphics{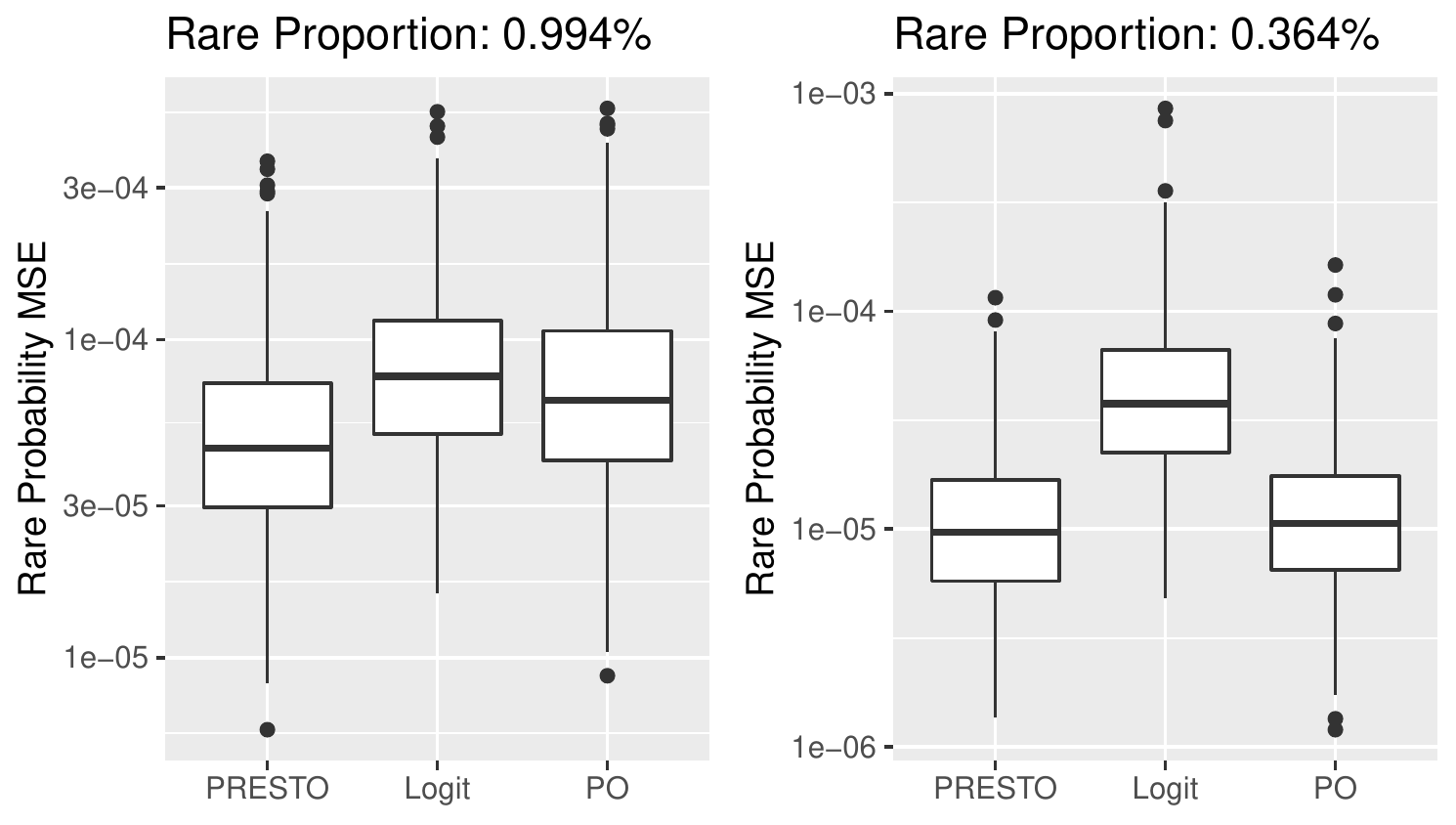}
\caption{MSE of predicted rare class probabilities for each method across all \(n = 2500\) observations, across 700 simulations, in uniform differences synthetic experiment setting of Section \ref{dense.sim}. (These plots are for two of the intercept settings that weren't shown in the main text.)}
\label{fig.dense.2}
\end{center}
\end{figure}

\begin{figure}[htbp]
\begin{center}
\includegraphics{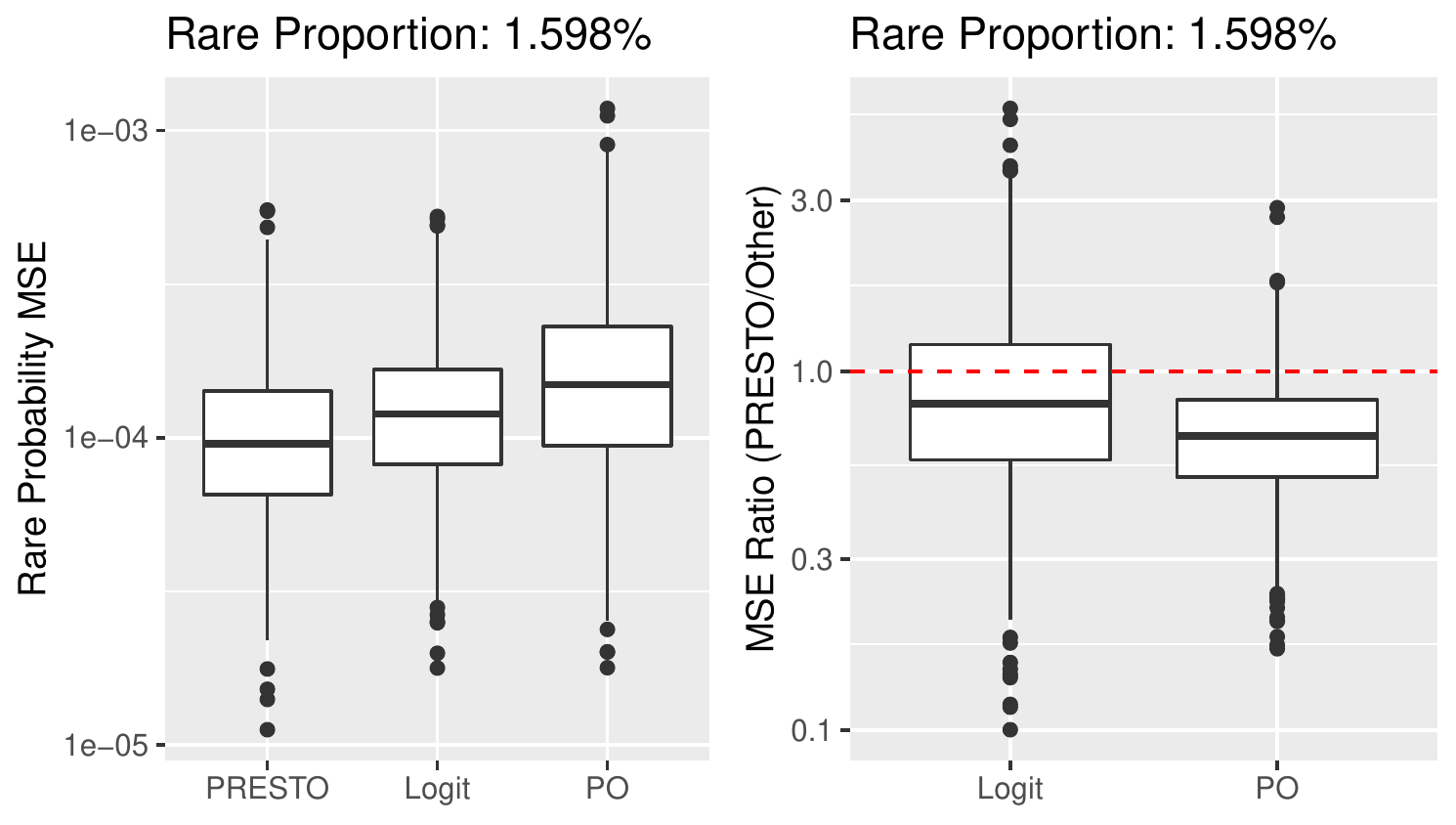}
\caption{MSE of predicted rare class probabilities for each method across all \(n = 2500\) observations, across 700 simulations, in uniform differences synthetic experiment setting of Section \ref{dense.sim} for intercept setting of \((0, 2.5, 4.5)\).}
\label{fig.dense.3}
\end{center}
\end{figure}

\subsection{Ridge PRESTO}\label{ridge.supp}

We briefly investigate the effect of implementing PRESTO with a ridge penalty instead of an \(\ell_1\) penalty, similarly to proposals by \citet[Section 4.2.2]{tutz2016regularized} and \citet[Equation 8]{ugba2021smoothing},
\[
\lambda_n \left( \sum_{j=1}^p     \beta_{j1}^2 +  \sum_{j=1}^p  \sum_{k=2}^{K-1} \left(  \beta_{jk} - \beta_{j,k-1}\right)^2 \right) 
.
\]
We implement this method (``PRESTO\_L2") in the sparse differences synthetic data experiment of Section \ref{sparse.sim} on the same simulated data that was used for the other methods in the intercept setting \((0,3,5)\) for both sparsity levels. The implementation is identical to PRESTO in every way except for the ridge penalty---the method is implemented using our modification of the \texttt{ordinalNet} R package and the tuning parameter is selected in the same way. 

Figures \ref{fig.sparse.10} and \ref{fig.sparse.11} display the results. (The results for all methods but PRESTO\_L2 are identical to previous plots and are only displayed for reference.) We also present the means and standard deviations of the MSEs for each method in these settings in Table \ref{tab.presto.l2.mean.se}, and \(p\)-values for one-tailed paired \(t\)-tests of the alternative hypothesis that PRESTO has a lower MSE than the competitor methods in Table \ref{tab.presto.l2.p.values}. We see that in practice, PRESTO and PRESTO\_L2 seem to perform similarly in our setting, though the \(t\)-tests show that PRESTO does outperform PRESTO\_L2 at a \(5\%\) significance level in both settings. We might expect PRESTO to better outperform PRESTO\_L2 in settings where \(p/n\) is larger and where the sparsity is lower. We also note it is not clear if PRESTO\_L2 enjoys a high-dimensional consistency guarantee similar to Theorem \ref{main.cov.thm.2}.

\begin{figure}[htbp]
\begin{center}
\includegraphics{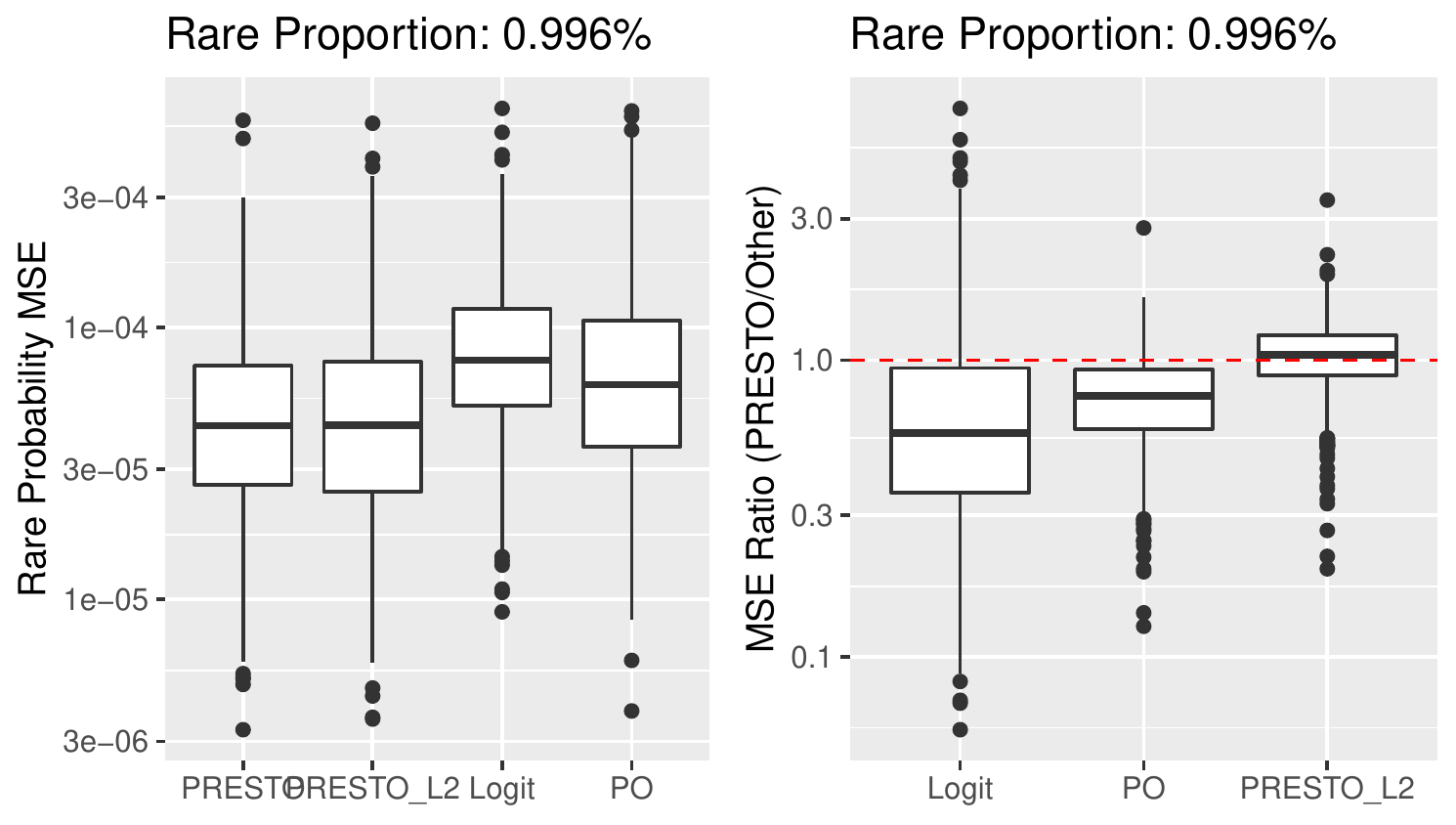}
\caption{MSE of predicted rare class probabilities for each method across all \(n = 2500\) observations, across 700 simulations, in sparse differences synthetic experiment setting of Section \ref{sparse.sim} with sparsity \(1/3\) for intercept setting of \((0, 3, 5)\), with PRESTO\_L2 implemented as well, as described in Section \ref{ridge.supp}.}
\label{fig.sparse.10}
\end{center}
\end{figure}

\begin{figure}[htbp]
\begin{center}
\includegraphics{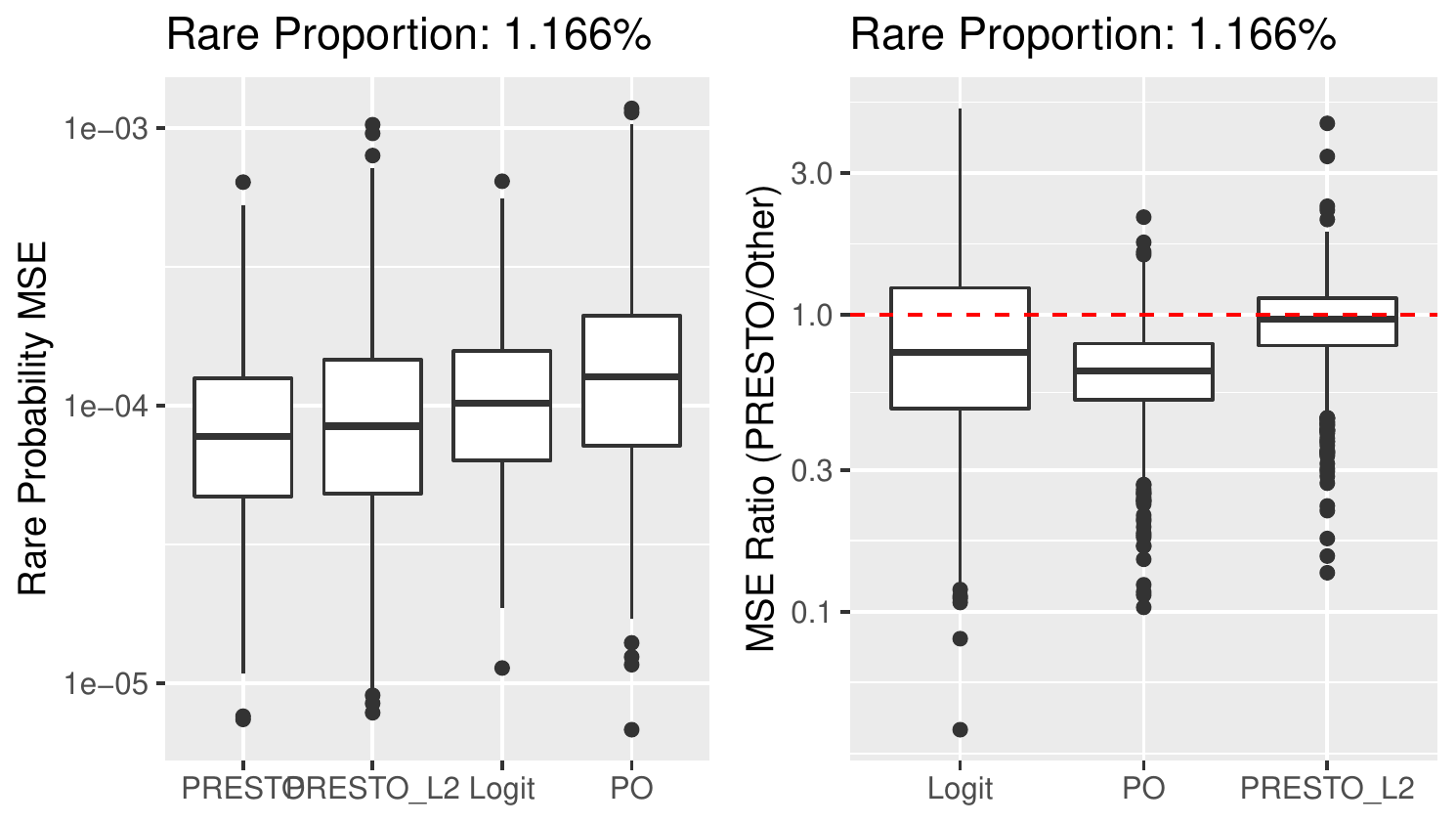}
\caption{MSE of predicted rare class probabilities for each method across all \(n = 2500\) observations, across 700 simulations, in sparse differences synthetic experiment setting of Section \ref{sparse.sim} with sparsity \(1/2\) for intercept setting of \((0, 3, 5)\), with PRESTO\_L2 implemented as well, as described in Section \ref{ridge.supp}.}
\label{fig.sparse.11}
\end{center}
\end{figure}

\begin{table}[!htbp] \centering 
  \caption{Means and standard errors of empirical MSEs for each method in each of three intercept settings in the sparse differences synthetic experiment setting of Section \ref{sparse.sim} for intercept setting of \((0, 3, 5)\), with PRESTO\_L2 implemented as well, as described in Section \ref{ridge.supp}.} 
  \label{tab.presto.l2.mean.se} 
\begin{tabular}{@{\extracolsep{5pt}} cccccc} 
\\[-1.8ex]\hline 
\hline \\[-1.8ex] 
 Rare Prop. & Sparsity &  PRESTO & Logistic Regression & Proportional Odds & PRESTO\_L2 \\ 
\hline \\[-1.8ex] 
 1\% & \(1/3\) & 6.05e-05 (2.1e-06) & 9.38e-05 (2.5e-06) & 8.62e-05 (3.1e-06) & 6.25e-05 (2.4e-06) \\ 
 1.17\% & \(1/2\) & 9.87e-05 (2.9e-06) & 1.25e-04 (3.3e-06) & 1.66e-04 (5.5e-06) & 1.17e-04 (4.2e-06) \\ 
\hline \\[-1.8ex] 
\end{tabular} 
\end{table} 

\begin{table}[!htbp] \centering 
  \caption{Calculated \(p\)-values for one-tailed paired \(t\)-tests for sparse differences simulation setting of Section \ref{sparse.sim} for intercept setting of \((0, 3, 5)\), with PRESTO\_L2 implemented as well, as described in Section \ref{ridge.supp}. (Statistically significant \(p\)-values indicate better performance for PRESTO).} 
  \label{tab.presto.l2.p.values} 
\begin{tabular}{@{\extracolsep{5pt}} ccccc} 
\\[-1.8ex]\hline 
\hline \\[-1.8ex] 
 Rare Class Proportion & Sparsity &  Logistic Regression & Proportional Odds & PRESTO\_L2 \\ 
\hline \\[-1.8ex] 
 1\%  & \(1/3\) & 1.69e-33 & 6.42e-41 & 0.0358 \\ 
 1.17\% & \(1/2\) & 1.61e-15 & 2.78e-66 & 1.51e-14 \\ 
\hline \\[-1.8ex] 
\end{tabular} 
\end{table}

\section{Statement of Lemma \ref{asym.matrix} and Proofs of Theorems \ref{log.imb} and \ref{est.known.beta}}\label{main.proofs}

In Section \ref{asym.mat.sec} we state Lemma \ref{asym.matrix}, and we prove Theorems \ref{log.imb} and \ref{est.known.beta} in Section \ref{thms.1.2.proofs}.

\subsection{Statement of Lemma \ref{asym.matrix}}\label{asym.mat.sec}

Theorems \ref{log.imb}, \ref{est.known.beta}, and \ref{main.cov.thm.2} relate to the asymptotic covariance matrices of the maximum likelihood estimators of the parameters of the proportional odds and logistic regression models. Under mild regularity conditions, the asymptotic covariance matrix of any maximum likelihood estimator (when scaled by \(\sqrt{n}\)) is known to be the inverse of the Fisher information matrix
\[
-\E \left[ \pderiv{^2}{\boldsymbol{\theta} \boldsymbol{\theta}^\top} \mathcal{L}(\boldsymbol{\theta})  \right] ,
\]
where \(\boldsymbol{\theta}\) are the parameters estimated by the model and \( \mathcal{L}(\boldsymbol{\theta})  \) is the log likelihood \citep[Section 4.2.2]{serfling1980}. In the proof of Lemma \ref{asym.matrix}, we calculate these Fisher information matrices for the proportional odds and logistic regression models and verify the needed regularity conditions.

\begin{lemma}\label{asym.matrix} Assume that no class has probability 0 for any \(\boldsymbol{x} \in \mathcal{S}\) (equivalently, assume that all of the intercepts in the proportional odds model \eqref{prop_odds} are not equal, so \(\alpha_1 < \ldots < \alpha_{K-1}\)). Assume that \(dF(\boldsymbol{x})\) has bounded support.

\begin{enumerate}

\item The Fisher information matrix for the maximum likelihood estimator of the proportional odds model \eqref{prop_odds} is
\[
I^{\text{prop. odds}} (\boldsymbol{\alpha}, \boldsymbol{\beta})= \begin{pmatrix} I_{\alpha \alpha}^{\text{prop. odds}} &  \left( I_{\beta \alpha}^{\text{prop. odds}} \right)^\top 
\\ I_{\beta \alpha}^{\text{prop. odds}}& I_{\beta \beta}^{\text{prop. odds}}
\end{pmatrix} \in \mathbb{R}^{(K - 1 + p) \times (K - 1 + p)}
\]
where
\begin{align}
I_{\alpha \alpha}^{\text{prop. odds}}(\boldsymbol{\alpha}, \boldsymbol{\beta}) = ~ &  \begin{pmatrix}
M_1 & -\tilde{M}_2 & 0   & \cdots   & 0  & 0
\\ -\tilde{M}_2 & M_2 &- \tilde{M}_3   & \cdots   & 0  & 0
\\ 0 & -\tilde{M}_3 & M_3   & \cdots   & 0  & 0
\\ \vdots & \vdots & \vdots & \ddots  & \vdots & \vdots
\\ 0 & 0 & 0 &  \cdots & M_{K-2}  &- \tilde{M}_{K-1}
\\ 0 & 0 & 0 &  \cdots   & -\tilde{M}_{K-1}  & M_{K-1}
\end{pmatrix} ,\label{alpha.block}
\\I_{\beta \alpha}^{\text{prop. odds}}(\boldsymbol{\alpha}, \boldsymbol{\beta}) = ~ &  \begin{pmatrix}
J_1^{\boldsymbol{x}} + \tilde{J}_{2}^{\boldsymbol{x}}
\\   J_2^{\boldsymbol{x}} + \tilde{J}_{3}^{\boldsymbol{x}}
\\ \vdots
\\  J_{K-1}^{\boldsymbol{x}} + \tilde{J}_{K}^{\boldsymbol{x}}
\end{pmatrix}   , \qquad \text{and} \label{alpha.beta.block}
\\ I_{\beta \beta}^{\text{prop. odds}}(\boldsymbol{\alpha}, \boldsymbol{\beta})  = ~ & \sum_{k=1}^K \left( J_k^{\boldsymbol{x} \boldsymbol{x}^\top} +  \tilde{J}_k^{\boldsymbol{x} \boldsymbol{x}^\top} \right) ,\label{beta.block}
\end{align}
where
\begin{align}
M_k & := \int \left[ p_{k}(\boldsymbol{x}) (1 - p_{k} (\boldsymbol{x}) ) \right]^2   \left(     \frac{1 }{\pi_{k}(\boldsymbol{x}) }   +  \frac{1}{\pi_{k+1}(\boldsymbol{x}) }  \right) \ d F(\boldsymbol{x}) , \qquad k \in \{1, \ldots, K-1\}, \label{m.def}
\\ \tilde{M}_k & := \int  p_{k}(\boldsymbol{x}) (1 - p_{k}(\boldsymbol{x}) ) p_{k-1}(\boldsymbol{x}) (1 - p_{k-1}(\boldsymbol{x}) ) \cdot   \frac{1 }{\pi_{k}(\boldsymbol{x}) } \ d F(\boldsymbol{x}) , \qquad k \in \{2, \ldots, K-1\} \label{m.tilde.def}
\end{align}
and
\begin{align*}
J_k & := \int \pi_k(\boldsymbol{x}) p_k (\boldsymbol{x}) [ 1 - p_k (\boldsymbol{x}) ] \ d F(\boldsymbol{x})  \in \mathbb{R},
\\ J_k^{\boldsymbol{x}} & := \int \boldsymbol{x} \pi_k(\boldsymbol{x}) p_k (\boldsymbol{x}) [ 1 - p_k (\boldsymbol{x}) ] \ d F(\boldsymbol{x})  \in \mathbb{R}^p,
\\ J_k^{\boldsymbol{x} \boldsymbol{x}^\top} & := \int \boldsymbol{x} \boldsymbol{x}^\top \pi_k(\boldsymbol{x}) p_k (\boldsymbol{x}) [ 1 - p_k (\boldsymbol{x}) ] \ d F(\boldsymbol{x})  \in \mathbb{R}^{p \times p},
\\ \tilde{J}_k & := \int \pi_k(\boldsymbol{x}) p_{k-1} (\boldsymbol{x}) [ 1 - p_{k-1} (\boldsymbol{x}) ] \ d F(\boldsymbol{x})  \in \mathbb{R},
\\ \tilde{J}_k^{\boldsymbol{x}} & := \int \boldsymbol{x} \pi_k(\boldsymbol{x}) p_{k-1} (\boldsymbol{x}) [ 1 - p_{k-1} (\boldsymbol{x}) ] \ d F(\boldsymbol{x})  \in \mathbb{R}^p, \qquad \text{and}
\\ \tilde{J}_k^{\boldsymbol{x} \boldsymbol{x}^\top} & := \int \boldsymbol{x} \boldsymbol{x}^\top \pi_k(\boldsymbol{x}) p_{k-1} (\boldsymbol{x}) [ 1 - p_{k-1} (\boldsymbol{x}) ] \ d F(\boldsymbol{x})  \in \mathbb{R}^{p \times p}
\end{align*}
for all \(k \in [K]\).

\item The Fisher information matrix for the maximum likelihood estimator of the logistic regression model predicting whether or not each observation is in class 1 is
\begin{equation}\label{logistic.fisher}
I^{\text{logistic}}(\alpha_1, \boldsymbol{\beta})= \begin{pmatrix} I_{\alpha \alpha}^{\text{logistic}} & \left( I_{\beta \alpha}^{\text{logistic}} \right)^\top
\\ I_{\beta \alpha}^{\text{logistic}} & I_{\beta \beta}^{\text{logistic}}
\end{pmatrix} = \E \left[ \pi_1(\boldsymbol{X})[1 - \pi_1(\boldsymbol{X})] \boldsymbol{\tilde{X}} \boldsymbol{\tilde{X}}^\top \right] \in \mathbb{R}^{(p+1) \times (p+1)}
\end{equation}
where \(\boldsymbol{\tilde{X}} := \begin{pmatrix} \boldsymbol{1} & \boldsymbol{X} \end{pmatrix}\) (an \(n\)-vector of all ones followed by \(\boldsymbol{X}\)) and
\begin{align}
I_{\alpha \alpha}^{\text{logistic}}(\alpha_1, \boldsymbol{\beta}) =  ~ & M_1^{\text{logistic}}  ,\label{log.reg.alpha.info}
\\ I_{\beta \alpha}^{\text{logistic}}(\alpha_1, \boldsymbol{\beta})  = ~ & J_1^{\boldsymbol{x}\text{; logistic}}+\tilde{J}_2^{\boldsymbol{x}\text{; logistic}} ,  \qquad \text{and} \label{log.reg.alpha.beta.info}
\\ I_{\beta \beta}^{\text{logistic}}(\alpha_1, \boldsymbol{\beta})  = ~ &  J_1^{\boldsymbol{x} \boldsymbol{x}^\top\text{; logistic}}  + \tilde{J}_2^{\boldsymbol{x} \boldsymbol{x}^\top\text{; logistic}} \label{log.reg.beta.info} 
,
\end{align}
where we define
\begin{align}
M_1^{\text{logistic}} & := \int  \pi_{1}(\boldsymbol{x}) (1 - \pi_{1} (\boldsymbol{x}) ) \ d F(\boldsymbol{x})  \label{m.1.logit.def} 
\end{align}
and
\begin{align*}
J_1^{\boldsymbol{x}\text{; logistic}} & := \int \boldsymbol{x} \pi_1(\boldsymbol{x})^2  [ 1 - \pi_1 (\boldsymbol{x}) ] \ d F(\boldsymbol{x})  \in \mathbb{R}^p = J_1^{\boldsymbol{x}},
\\ J_1^{\boldsymbol{x} \boldsymbol{x}^\top\text{; logistic}} & := \int \boldsymbol{x} \boldsymbol{x}^\top \pi_1(\boldsymbol{x})^2 [ 1 - \pi_1 (\boldsymbol{x}) ] \ d F(\boldsymbol{x})  \in \mathbb{R}^{p \times p} = J_1^{\boldsymbol{x} \boldsymbol{x}^\top}  ,
\\ \tilde{J}_2^{\boldsymbol{x}\text{; logistic}} & := \int \boldsymbol{x} \pi_{1} (\boldsymbol{x}) [ 1 - \pi_{1} (\boldsymbol{x}) ]^2 \ d F(\boldsymbol{x})  \in \mathbb{R}^p, \qquad \text{and}
\\ \tilde{J}_2^{\boldsymbol{x} \boldsymbol{x}^\top\text{; logistic}} & := \int \boldsymbol{x} \boldsymbol{x}^\top \pi_{1} (\boldsymbol{x}) [ 1 - \pi_{1} (\boldsymbol{x}) ]^2 \ d F(\boldsymbol{x})  \in \mathbb{R}^{p \times p}.
\end{align*}

\item The information matrices \(I^{\text{prop. odds}} (\boldsymbol{\alpha}, \boldsymbol{\beta}) \) and \( I^{\text{logistic}}(\alpha_1, \boldsymbol{\beta})\) are finite and positive definite, and the following convergences hold:

\[
\sqrt{n} \cdot  \left( \boldsymbol{\hat{\theta}}_1^{\text{prop. odds}}  - \boldsymbol{\theta}_1 \right) \xrightarrow{d} \mathcal{N} \left(  \boldsymbol{0}, \left(  I^{\text{prop. odds}} (\boldsymbol{\alpha}, \boldsymbol{\beta}) \right)^{-1} \right)
\]
and

\[
\sqrt{n} \cdot \left( \boldsymbol{\hat{\theta}}_1^{\text{logistic}} - \boldsymbol{\theta}_1 \right) \xrightarrow{d} \mathcal{N} \left( \boldsymbol{0} ,  \left( I^{\text{logistic}}(\alpha_1, \boldsymbol{\beta})  \right)^{-1}\right).
\]
Further, because these information matrices are symmetric and positive definite, by Observation 7.1.2 in \citet{horn_johnson_2012} the principal submatrices \(I_{\alpha \alpha}^{\text{prop. odds}} := I_{\alpha \alpha}^{\text{prop. odds}}(\boldsymbol{\alpha}, \boldsymbol{\beta})\), \(I_{\beta \beta}^{\text{prop. odds}} := I_{\beta \beta}^{\text{prop. odds}}(\boldsymbol{\alpha}, \boldsymbol{\beta})\), and  \( I_{\beta \beta}^{\text{logistic}} := I_{\beta \beta}^{\text{logistic}}(\alpha_1, \boldsymbol{\beta})\) are all positive definite. Finally, we can characterize the finite-sample bias: for any \(\boldsymbol{v} \in \mathbb{R}^{p+1}\), \( [ (\hat{\alpha},  \boldsymbol{\hat{\beta}}^\top) - (\alpha, \boldsymbol{\beta})^\top] \boldsymbol{v} = \mathcal{O}(1/n)\).

\end{enumerate}

\end{lemma}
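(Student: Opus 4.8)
The plan is to treat the three parts of the lemma in turn, with most of the work in Part 1 and routine regularity bookkeeping in Part 3. For Part 1, I would write the likelihood contribution of observation $i$ with response $y_i = k$ as $\pi_k(\boldsymbol{x}_i) = F(\alpha_k + \boldsymbol{\beta}^\top \boldsymbol{x}_i) - F(\alpha_{k-1} + \boldsymbol{\beta}^\top \boldsymbol{x}_i)$ and use the logistic identity $F'(t) = F(t)(1 - F(t))$ to differentiate $\pi_k$ once with respect to each intercept $\alpha_j$ (nonzero only for $j \in \{k-1, k\}$, which is the source of the tridiagonal structure) and with respect to $\boldsymbol{\beta}$. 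Rather than assembling the Hessian directly, I would invoke the information-matrix equality together with the multinomial identity that, since $\sum_k \pi_k \equiv 1$ and hence $\sum_k \nabla^2 \pi_k \equiv 0$, the per-observation Fisher information equals $\E\big[\sum_{k=1}^K \pi_k(\boldsymbol{x})^{-1} (\nabla \pi_k(\boldsymbol{x}))(\nabla \pi_k(\boldsymbol{x}))^\top\big]$. Substituting the first derivatives and collecting the $\boldsymbol{\alpha}$-$\boldsymbol{\alpha}$, $\boldsymbol{\beta}$-$\boldsymbol{\alpha}$, and $\boldsymbol{\beta}$-$\boldsymbol{\beta}$ blocks, the scalars $M_k, \tilde M_k$ and the vectors and matrices $J_k^{\boldsymbol{x}}, \tilde J_k^{\boldsymbol{x}}, J_k^{\boldsymbol{x}\boldsymbol{x}^\top}, \tilde J_k^{\boldsymbol{x}\boldsymbol{x}^\top}$ emerge exactly as defined, yielding \eqref{alpha.block}--\eqref{beta.block}.

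Part 2 is the $K = 2$ specialization: for the binary model predicting $\mathbbm{1}\{y = 1\}$ we have $p_1 = \pi_1 = F(\alpha_1 + \boldsymbol{\beta}^\top \boldsymbol{x})$, and the computation above collapses, using $\pi_1^2(1-\pi_1) + \pi_1(1-\pi_1)^2 = \pi_1(1-\pi_1)$, to the standard weighted-design form $\E[\pi_1(\boldsymbol{X})(1-\pi_1(\boldsymbol{X}))\boldsymbol{\tilde X}\boldsymbol{\tilde X}^\top]$; I would then read off \eqref{log.reg.alpha.info}--\eqref{log.reg.beta.info} by matching the $k=1$ and $k=2$ contributions to $M_1^{\text{logistic}}$, the $J_1^{\boldsymbol{x};\text{logistic}}$ terms, and the $\tilde J_2^{\boldsymbol{x};\text{logistic}}$ terms.

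For Part 3, finiteness follows because $\alpha_1 < \cdots < \alpha_{K-1}$ together with boundedness of $\mathcal{S}$ forces each of $p_k(\boldsymbol{x})$, $1 - p_k(\boldsymbol{x})$, and $\pi_k(\boldsymbol{x})$ to lie in a compact subinterval of $(0,1)$ on $\mathcal{S}$, so every integrand --- including the $1/\pi_k(\boldsymbol{x})$ factors in \eqref{m.def}--\eqref{m.tilde.def} and the bounded $\boldsymbol{x}\boldsymbol{x}^\top$ --- is bounded. Positive definiteness: both information matrices have the form $\E[(\text{score})(\text{score})^\top]$, hence are positive semidefinite, and $\boldsymbol{a}^\top I \boldsymbol{a} = 0$ forces $\boldsymbol{a}^\top \nabla \pi_k(\boldsymbol{x}) = 0$ for every $k$ and $dF$-a.e.\ $\boldsymbol{x}$; using that $\{1, x_1, \ldots, x_p\}$ are linearly independent in $L^2(dF)$ (equivalent to $\Cov(\boldsymbol{X}) \succ 0$) together with strict positivity and real-analyticity of $F'$, one peels off the intercept coefficients $a_{\alpha_1}, a_{\alpha_2}, \ldots$ one at a time (starting from $k = 1$, where only $\alpha_1$- and $\boldsymbol{\beta}$-directions appear) to conclude $\boldsymbol{a} = \boldsymbol{0}$. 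The asymptotic normality statements are then the classical MLE central limit theorem (\citet[Section 4.2.2]{serfling1980}), whose hypotheses --- smoothness of $\log \pi_k$, legitimacy of differentiating under the integral, and a finite nonsingular information --- are all immediate here because the support is bounded and the probabilities are bounded away from $0$ and $1$. Positive definiteness of $I_{\alpha\alpha}^{\text{prop. odds}}$, $I_{\beta\beta}^{\text{prop. odds}}$, and $I_{\beta\beta}^{\text{logistic}}$ then follows from Observation 7.1.2 of \citet{horn_johnson_2012}. Finally, the $\mathcal{O}(1/n)$ claim is the standard statement that the bias of the MLE is $O(1/n)$: a second-order stochastic expansion of $\boldsymbol{\hat\theta}$ around $\boldsymbol{\theta}$ has remainder controlled by third derivatives of the log-likelihood, which are bounded on the bounded support, so $\E[\boldsymbol{\hat\theta}] - \boldsymbol{\theta} = \mathcal{O}(1/n)$ and hence $[(\hat\alpha, \boldsymbol{\hat\beta}^\top) - (\alpha, \boldsymbol{\beta})^\top]\boldsymbol{v} = \mathcal{O}(1/n)$ for any fixed $\boldsymbol{v}$.

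I expect the positive-definiteness step of Part 3 to be the main obstacle: establishing that the only $\boldsymbol{a}$ annihilating all the $\nabla \pi_k(\boldsymbol{x})$ is $\boldsymbol{a} = \boldsymbol{0}$ requires disentangling the $K - 1$ intercept directions, which enter only through logistic densities $F'(\alpha_j + \boldsymbol{\beta}^\top \boldsymbol{x})$ evaluated at the same linear predictor shifted by different $\alpha_j$; one must use both the analyticity of $F'$ (to vary $\boldsymbol{\beta}^\top \boldsymbol{x}$ along a segment of the support) and the richness of the design ($\Cov(\boldsymbol{X}) \succ 0$) to force each coefficient to vanish. Everything else is careful but routine: derivative bookkeeping in Parts 1--2 and invocation of textbook MLE asymptotics in Part 3.
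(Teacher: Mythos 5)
Your proposal is correct, and it reaches the stated block formulas by a genuinely different route than the paper in two places. First, for Parts 1--2 the paper computes the full Hessian of the log likelihood block by block and then takes its negative expectation, whereas you invoke the score outer-product form $\E\big[\sum_{k}\pi_k(\boldsymbol{x})^{-1}(\nabla\pi_k)(\nabla\pi_k)^\top\big]$ via the multinomial identity $\sum_k\nabla^2\pi_k\equiv 0$. This avoids all second-derivative bookkeeping at the cost of a little algebra to match the resulting integrands to the paper's $J$, $\tilde J$ forms: e.g.\ for the $\boldsymbol{\beta}\boldsymbol{\beta}$ block your route produces $\sum_k\pi_k(1-p_k-p_{k-1})^2\,\boldsymbol{x}\boldsymbol{x}^\top$, which equals the paper's $\sum_k\pi_k[p_k(1-p_k)+p_{k-1}(1-p_{k-1})]\,\boldsymbol{x}\boldsymbol{x}^\top$ but not by inspection, so your claim that the terms ``emerge exactly as defined'' needs that one extra identity spelled out (the $\boldsymbol{\alpha}\boldsymbol{\alpha}$ block does match immediately). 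Second, for positive definiteness the paper argues through strict concavity of the log likelihood, citing \citet{Pratt1981} for the proportional odds case; you instead show directly that $\boldsymbol{a}^\top I\boldsymbol{a}=0$ forces $\boldsymbol{a}^\top\nabla\pi_k(\boldsymbol{x})=0$ for all $k$ and a.e.\ $\boldsymbol{x}$ and peel off the coefficients. Your argument is in fact easier than you anticipate: the $k=1$ equation reads $F'(\alpha_1+\boldsymbol{\beta}^\top\boldsymbol{x})\,(a_{\alpha_1}+\boldsymbol{a}_\beta^\top\boldsymbol{x})=0$, and since $F'>0$ and $\Cov(\boldsymbol{X})\succ 0$ this alone kills $a_{\alpha_1}$ and $\boldsymbol{a}_\beta$, after which $a_{\alpha_2},\ldots,a_{\alpha_{K-1}}$ fall out one at a time; no analyticity of $F'$ is needed. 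Your approach buys a self-contained proof of nonsingularity without the external concavity result; the paper's buys a shorter write-up by outsourcing that step. The remaining items (regularity conditions for the MLE CLT, and the $\mathcal{O}(1/n)$ bias, which the paper cites from \citet{Cordeiro1991} and you derive from the standard second-order expansion) coincide in substance.
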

\begin{proof} Provided in Section \ref{lemmas.sec}.
\end{proof}

\begin{remark}\label{j.remark} 
Note that \(J_K = 0\) because \(1 - p_{K}\left( \boldsymbol{x} \right) = 1 - \mathbb{P}(y\left( \boldsymbol{x} \right) \leq K \mid \boldsymbol{x}) = 0\) for all \(\boldsymbol{x}\), and similarly for \(J_K^{\boldsymbol{x}}\) and \(J_K^{\boldsymbol{x} \boldsymbol{x}^\top}\). Likewise, \(\tilde{J}_1 = 0\) because \(p_{0}\left( \boldsymbol{x} \right) = \mathbb{P}(y\left( \boldsymbol{x} \right) \leq 0 \mid \boldsymbol{x}) = 0\) for all \(\boldsymbol{x}\), and similarly for \(\tilde{J}_1^{\boldsymbol{x}}\) and \(\tilde{J}_1^{\boldsymbol{x} \boldsymbol{x}^\top}\).

We also take a moment to briefly establish some identities we will use later. For any \(k \in \{1, \ldots, K-1\}\),
\begin{align}
J_k^{\boldsymbol{x}} + \tilde{J}_{k+1}^{\boldsymbol{x}}  = ~ &  \int  \boldsymbol{x} \pi_k(\boldsymbol{x}) p_k (\boldsymbol{x}) [ 1 - p_k (\boldsymbol{x}) ] \ d F(\boldsymbol{x})   \nonumber
+  \int  \boldsymbol{x}  \pi_{k+1}(\boldsymbol{x}) p_{k} (\boldsymbol{x}) [ 1 - p_{k} (\boldsymbol{x}) ] \ d F(\boldsymbol{x}) 
 \nonumber
\\ = ~ &  \int  \boldsymbol{x} \left[ \pi_k(\boldsymbol{x}) + \pi_{k+1}(\boldsymbol{x}) \right] p_k (\boldsymbol{x}) [ 1 - p_k (\boldsymbol{x}) ] \ d F(\boldsymbol{x})  \label{j.2.j.tilde.3.sum}
.
\end{align}
Similarly,
\begin{align}
J_k^{\boldsymbol{x} \boldsymbol{x}^\top}   + \tilde{J}_{k+1}^{\boldsymbol{x} \boldsymbol{x}^\top} = ~ &  \int \boldsymbol{x} \boldsymbol{x}^\top \left[ \pi_k(\boldsymbol{x}) + \pi_{k+1}(\boldsymbol{x}) \right] p_k (\boldsymbol{x}) [ 1 - p_k (\boldsymbol{x}) ] \ d F(\boldsymbol{x}) , \nonumber 
\end{align}
so from \eqref{beta.block} we have
\begin{align}
\ I_{\beta \beta}^{\text{prop. odds}}(\boldsymbol{\alpha}, \boldsymbol{\beta})  = ~ & \sum_{k=1}^K \left( J_k^{\boldsymbol{x} \boldsymbol{x}^\top} +  \tilde{J}_k^{\boldsymbol{x} \boldsymbol{x}^\top} \right) \nonumber
\\  = ~ &  \tilde{J}_{1}^{\boldsymbol{x} \boldsymbol{x}^\top}  +  \sum_{k=1}^{K-1}  \left( J_k^{\boldsymbol{x} \boldsymbol{x}^\top}   + \tilde{J}_{k+1}^{\boldsymbol{x} \boldsymbol{x}^\top} \right) + J_K^{\boldsymbol{x} \boldsymbol{x}^\top}   \nonumber
\\  = ~ & \int \boldsymbol{x} \boldsymbol{x}^\top \pi_1(\boldsymbol{x}) \underbrace{p_{0} (\boldsymbol{x})}_{= \mathbb{P}(y \leq 0 \mid \boldsymbol{x}) = 0} [ 1 - p_{0} (\boldsymbol{x}) ] \ d F(\boldsymbol{x}) +  \sum_{k=1}^{K-1}  \left( J_k^{\boldsymbol{x} \boldsymbol{x}^\top}   + \tilde{J}_{k+1}^{\boldsymbol{x} \boldsymbol{x}^\top} \right) \nonumber
\\ & + \int \boldsymbol{x} \boldsymbol{x}^\top \pi_K(\boldsymbol{x}) p_K (\boldsymbol{x}) \left[ 1 - \underbrace{p_K (\boldsymbol{x})}_{= \mathbb{P}(y \leq K \mid \boldsymbol{x}) = 1} \right] \ d F(\boldsymbol{x}) \nonumber
\\  = ~ &  \sum_{k=1}^{K-1}  \int \boldsymbol{x} \boldsymbol{x}^\top \left[ \pi_k(\boldsymbol{x}) + \pi_{k+1}(\boldsymbol{x}) \right] p_k (\boldsymbol{x}) [ 1 - p_k (\boldsymbol{x}) ] \ d F(\boldsymbol{x})  
.
\label{i.beta.beta.id}
\end{align}

\end{remark}

\subsection{Proofs of Theorems \ref{log.imb} and \ref{est.known.beta}}\label{thms.1.2.proofs}

Equipped with the results of Lemma \ref{asym.matrix}, we proceed to prove Theorems \ref{log.imb} and \ref{est.known.beta}. (Recall that for square matrices \(\boldsymbol{A}\) and \(\boldsymbol{B}\) of equal dimension \(p\), we say \(\boldsymbol{A} \preceq  \boldsymbol{B}\) if \(\boldsymbol{B} - \boldsymbol{A}\) is positive semidefinite.)

\begin{proof}[Proof of Theorem \ref{log.imb}] 

\begin{enumerate}

\item

Note that the assumptions of Lemma \ref{asym.matrix} are satisfied. Since the inverse of the asymptotic covariance matrix is
\[
 \E \left[ \pi(\boldsymbol{X}) [1 - \pi(\boldsymbol{X})] \boldsymbol{\tilde{X}} \boldsymbol{\tilde{X}}^\top \right]  \preceq    \pi_{\text{rare}} (1 -   \pi_{\text{rare}} ) \E \left[ \boldsymbol{\tilde{X}} \boldsymbol{\tilde{X}}^\top \right]  
\]
(where the second step is valid because \(t \mapsto t(1-t)\) is monotone increasing in \(t\) for \(t \in [0, 1/2]\)), by Corollary 7.7.4 in \citet{horn_johnson_2012} the largest eigenvalue of the inverse of the asymptotic covariance matrix is no larger than \(\pi_{\text{rare}} (1 -   \pi_{\text{rare}} ) \lambda_{\text{max}}\). Therefore
\begin{align*}
\mathrm{Asym. } \Cov \left( (\sqrt{n} \cdot \hat{\alpha}, \sqrt{n} \cdot \boldsymbol{\hat{\beta}} ) \right) & =  \left( \E \left[ \pi(\boldsymbol{X}) [1 - \pi(\boldsymbol{X})] \boldsymbol{\tilde{X}} \boldsymbol{\tilde{X}}^\top \right] \right)^{-1}
 \succeq  \left(  \pi_{\text{rare}} (1 -   \pi_{\text{rare}} ) \E \left[ \boldsymbol{\tilde{X}} \boldsymbol{\tilde{X}}^\top \right] \right)^{-1}
\end{align*}
has smallest eigenvalue at least \(1/[\lambda_{\text{max}}  \pi_{\text{rare}} (1 -   \pi_{\text{rare}} )  ]\), which is larger than \(1/[\lambda_{\text{max}}  \pi_{\text{rare}} ]\), again using Corollary 7.7.4 in \citet{horn_johnson_2012}. So for any \(\boldsymbol{v} \in \mathbb{R}^{p+1}\), by Theorem 5.1.8 in \citet{lehmann1999elements}
\begin{align*}
\mathrm{Asym. } \Var \left(   (\hat{\alpha}, \boldsymbol{\hat{\beta}}^\top)  \boldsymbol{v} \right) =  \boldsymbol{v}^\top  \mathrm{Asym. } \Cov \left( (\sqrt{n} \cdot \hat{\alpha}, \sqrt{n} \cdot \boldsymbol{\hat{\beta}} ) \right) \boldsymbol{v}
\geq  \frac{\boldsymbol{v}^\top \boldsymbol{v}}{\lambda_{\text{max}}  \pi_{\text{rare}} }
.
\end{align*}
Finally, since we have already shown that \(  (\hat{\alpha}, \boldsymbol{\hat{\beta}}^\top)   \) is asymptotically unbiased, the asymptotic MSE is equal to this asymptotic variance:
\begin{align*}
\mathrm{Asym. MSE}((\hat{\alpha}, \boldsymbol{\hat{\beta}}^\top)  \boldsymbol{v})) =  ~ & \E \left[ \lim_{n \to \infty} \left( \sqrt{n} \cdot \left[ (\hat{\alpha}, \boldsymbol{\hat{\beta}}^\top)  \boldsymbol{v}  - (\alpha, \boldsymbol{\beta}^\top)  \boldsymbol{v}  \right] \right)^2 \right]
\\  =  ~ & \E \bigg[ \lim_{n \to \infty} \left( \sqrt{n} \cdot \left[ \E \left[ (\hat{\alpha}, \boldsymbol{\hat{\beta}}^\top)  \boldsymbol{v} \right] -  (\hat{\alpha}, \boldsymbol{\hat{\beta}}^\top)  \boldsymbol{v}  \right] \right)^2 
\\ & +  \lim_{n \to \infty} \left( \sqrt{n} \cdot \left[   (\alpha, \boldsymbol{\beta}^\top)  \boldsymbol{v}   - \E [ (\hat{\alpha}, \boldsymbol{\hat{\beta}}^\top)  \boldsymbol{v} \right]\right)^2 \bigg] 
\\  =  ~ &\mathrm{Asym. } \Var \left( \sqrt{n}\left[  (\alpha, \boldsymbol{\beta}^\top)  \boldsymbol{v}  -  (\hat{\alpha}, \boldsymbol{\hat{\beta}}^\top)  \boldsymbol{v}  \right]  \right)  + 0
\\ \geq ~ &  \frac{\boldsymbol{v}^\top \boldsymbol{v}}{\lambda_{\text{max}}  \pi_{\text{rare}} }
,
\end{align*}
where in the second-to-last step we used \([ (\hat{\alpha},  \boldsymbol{\hat{\beta}}^\top) - (\alpha, \boldsymbol{\beta})^\top] \boldsymbol{v} = \mathcal{O}(1/n)\) from Lemma \ref{asym.matrix}.

\item Because \((\hat{\alpha}, \boldsymbol{\hat{\beta}})  \mapsto \hat{\pi}(\boldsymbol{z})\) is differentiable for all \(\boldsymbol{z} \in \mathbb{R}^p\), by the delta method (Theorem 3.1 in \citealt{van2000asymptotic})
\[
\sqrt{n} \cdot [ \hat{\pi}(\boldsymbol{z}) - \pi(\boldsymbol{z})  ] \xrightarrow{d}   \mathcal{N} \left( 0,  \pi (\boldsymbol{z})^2  \left[1 - \pi (\boldsymbol{z}) \right]^2  \left(1, \boldsymbol{z}^\top \right) \left( I^{\text{logistic}} (\boldsymbol{\alpha}, \boldsymbol{\beta}) \right)^{-1}   \left(1, \boldsymbol{z}^\top \right)^\top \right)
\]
for any \(\boldsymbol{z} \in \mathbb{R}^p\). Therefore
\begin{align*}
\mathrm{Asym. } \Var \left(  \sqrt{n} \cdot \hat{\pi}(\boldsymbol{z})  \right) & =  \pi (\boldsymbol{z})^2  \left[1 - \pi (\boldsymbol{z}) \right]^2  \left(1, \boldsymbol{z}^\top \right) \left( I^{\text{logistic}} (\boldsymbol{\alpha}, \boldsymbol{\beta}) \right)^{-1}   \left(1, \boldsymbol{z}^\top \right)^\top
\\ & \geq   \pi (\boldsymbol{z})^2  \left[1 - \pi (\boldsymbol{z}) \right]^2  \left \lVert \left(1, \boldsymbol{z}^\top \right) \right \rVert_2^2 \lambda_{\text{min}} \left( \left( I^{\text{logistic}} (\boldsymbol{\alpha}, \boldsymbol{\beta}) \right)^{-1}  \right)
\\ & =  \frac{\pi (\boldsymbol{z})^2  \left[1 - \pi (\boldsymbol{z}) \right]^2  \left \lVert \left(1, \boldsymbol{z}^\top \right) \right \rVert_2^2 }{\left \lVert I^{\text{logistic}} (\boldsymbol{\alpha}, \boldsymbol{\beta}) \right \rVert_{\text{op}}}
\\ & \geq  \frac{\pi (\boldsymbol{z})^2  \left[1 - \pi (\boldsymbol{z}) \right]^2  \left \lVert \left(1, \boldsymbol{z}^\top \right) \right \rVert_2^2 }{ \pi_{\text{rare}} (1 -   \pi_{\text{rare}} )  \left \lVert \E \left[ \boldsymbol{\tilde{X}} \boldsymbol{\tilde{X}}^\top \right] \right \rVert_{\text{op}}}
\\ & \stackrel{(*)}{\geq}  \frac{\pi (\boldsymbol{z})^2  \left[1 - \pi_{\text{rare}} \right]^2   }{ \pi_{\text{rare}} (1 -   \pi_{\text{rare}} ) \lambda_{\text{max}}}
\\ & = \frac{\pi (\boldsymbol{z})^2  \left[1 - \pi_{\text{rare}} \right]   }{ \pi_{\text{rare}} \lambda_{\text{max}}}
,
\end{align*}
where \(\lambda_{\text{min}}(\cdot)\) denotes the minimum eigenvalue of \(\cdot\) and \((*)\) uses \(\left \lVert \left(1, \boldsymbol{z}^\top \right) \right \rVert_2^2 \geq 1\) and \(\pi (\boldsymbol{z})  \leq \pi_{\text{rare}}\) for all \(\boldsymbol{z} \in \mathcal{S}\). This yields
\begin{align*}
\mathrm{Asym. } \Var \left( \sqrt{n}\frac{\pi(\boldsymbol{z}) - \hat{\pi}_n(\boldsymbol{z})}{\pi(\boldsymbol{z})}  \right) =  \frac{1}{\pi(\boldsymbol{z})^2}  \mathrm{Asym. } \Var \left(  \sqrt{n} \cdot \hat{\pi}(\boldsymbol{z})  \right)  
 \geq   \frac{1 - \pi_{\text{rare}} }{ \pi_{\text{rare}} \lambda_{\text{max}}}
.
\end{align*}
Similarly to the previous result, \(\hat{\pi}(\boldsymbol{z})  \) is asymptotically unbiased, its finite sample bias \(\pi(\boldsymbol{z}) - \E [ \hat{\pi}_n(\boldsymbol{z})]\) is \(\mathcal{O}(1/n)\) by standard maximum likelihood theory \citep{Cordeiro1991} since it is a maximum likelihood estimator by the functional equivariance of maximum likelihood esimators, and its asymptotic MSE is equal to its asymptotic variance:
\begin{align*}
\mathrm{Asym. MSE}(\hat{\pi}(\boldsymbol{z})) =  ~ & \E \left[ \lim_{n \to \infty} \left( \sqrt{n} \cdot \frac{\pi(\boldsymbol{z}) - \hat{\pi}_n(\boldsymbol{z})}{\pi(\boldsymbol{z})} \right)^2 \right]
\\  =  ~ & \E \left[ \lim_{n \to \infty} \left( \sqrt{n} \cdot \frac{ \E [ \hat{\pi}_n(\boldsymbol{z})] - \hat{\pi}_n(\boldsymbol{z})}{\pi(\boldsymbol{z})} \right)^2 +  \lim_{n \to \infty} \left( \sqrt{n} \cdot \frac{\pi(\boldsymbol{z}) - \E [ \hat{\pi}_n(\boldsymbol{z})]}{\pi(\boldsymbol{z})} \right)^2 \right] 
\\  =  ~ &\mathrm{Asym. } \Var \left( \sqrt{n}\frac{\pi(\boldsymbol{z}) - \hat{\pi}_n(\boldsymbol{z})}{\pi(\boldsymbol{z})}  \right)  + 0
\\ \geq ~ & \frac{1 - \pi_{\text{rare}} }{ \pi_{\text{rare}}\lambda_{\text{max}}}
.
\end{align*}

\end{enumerate}
\end{proof}

\begin{proof}[Proof of Theorem \ref{est.known.beta}] \begin{enumerate} \item Again, the assumptions of Lemma \ref{asym.matrix} are satisfied. Lemma \ref{asym.matrix} shows that the asymptotic covariance matrix of the scaled maximum likelihood estimates of the parameters of logistic regression (a special case of the proportional odds model with \(K = 2\) categories) is
\begin{align*}
\mathrm{Asym. } \Cov \left(  \sqrt{n} \cdot \left( \hat{\alpha}, \boldsymbol{\hat{\beta}}  \right)^\top \right)  & = \left( I^{\text{logistic}} (\boldsymbol{\alpha}, \boldsymbol{\beta}) \right)^{-1}
 = \begin{pmatrix} I_{\alpha \alpha}^{\text{logistic}} & \left( I_{\beta \alpha}^{\text{logistic}} \right)^\top
\\ I_{\beta \alpha}^{\text{logistic}} & I_{\beta \beta}^{\text{logistic}}
\end{pmatrix}^{-1} 
,
\end{align*}
so in the case that \(\boldsymbol{\beta}\) is known, we have
\[
\mathrm{Asym. } \Var \left(  \sqrt{n} \cdot  \hat{\alpha}_q   \right)  = \left(  I_{\alpha \alpha}^{\text{logistic}} \right)^{-1} =  \frac{1}{I_{\alpha \alpha}^{\text{logistic}}}
.
\]
If \(\boldsymbol{\beta}\) is not known, then if \( I_{\beta \beta}^{\text{logistic}}\) is positive definite (and therefore invertible) the formula for block matrix inversion yields
\begin{align}
\mathrm{Asym. } \Var \left(  \sqrt{n} \cdot  \hat{\alpha}   \right) & = \frac{1}{I_{\alpha \alpha}^{\text{logistic}} -  \left( I_{\beta \alpha}^{\text{logistic}} \right)^\top \left(I_{\beta \beta}^{\text{logistic}}\right)^{-1}   I_{\beta \alpha}^{\text{logistic}}} \label{block.mat.inv.lem.form}
.
\end{align}
We know that \( I_{\beta \beta}^{\text{logistic}}\) is positive definite because \(I^{\text{logistic}} (\boldsymbol{\alpha}, \boldsymbol{\beta})\) is finite and positive definite from Lemma \ref{asym.matrix}, so the principal submatrix \( I_{\beta \beta}^{\text{logistic}}\) is positive definite by Observation 7.1.2 in \citet{horn_johnson_2012}. 
%
%
Further, since we know from Lemma \ref{asym.matrix} that the covariance matrix of \((\hat{\alpha}, \boldsymbol{\hat{\beta}})\) is finite and positive definite under our conditions, this also implies that 
\begin{equation}\label{Q.lb.lem}
0 < I_{\alpha \alpha}^{\text{logistic}} -  \left( I_{\beta \alpha}^{\text{logistic}} \right)^\top \left(I_{\beta \beta}^{\text{logistic}}\right)^{-1}   I_{\beta \alpha}^{\text{logistic}} < \infty.
\end{equation}

Now we seek a lower bound for \(\mathrm{Asym. } \Var \left(  \sqrt{n} \cdot  \hat{\alpha}   \right)\). We see from \eqref{block.mat.inv.lem.form} that we can get such a bound by lower-bounding \(\left( I_{\beta \alpha}^{\text{logistic}} \right)^\top \left(I_{\beta \beta}^{\text{logistic}}\right)^{-1}   I_{\beta \alpha}^{\text{logistic}}\). Because \(t \mapsto t(1 -t)\) is upper-bounded by \(1/4\) for all \(t \in [0,1]\),
\begin{align*}
 I_{\beta \beta}^{\text{logistic}} & = \int \boldsymbol{x} \boldsymbol{x}^\top \pi_2(\boldsymbol{x})[1 - \pi_2(\boldsymbol{x})] \ d F(\boldsymbol{x})
 \preceq   \int \boldsymbol{x} \boldsymbol{x}^\top \cdot \frac{1}{4} \ d F(\boldsymbol{x})
=  \frac{1}{4} \E\left[ \boldsymbol{X} \boldsymbol{X}^\top \right].
\end{align*}
Then
\begin{align}
\left( I_{\beta \alpha}^{\text{logistic}} \right)^\top \left(I_{\beta \beta}^{\text{logistic}}\right)^{-1}  I_{\beta \alpha}^{\text{logistic}} \nonumber
 & \geq  \left( I_{\beta \alpha}^{\text{logistic}} \right)^\top \left( \frac{1}{4} \E\left[ \boldsymbol{X} \boldsymbol{X}^\top \right] \right)^{-1}   I_{\beta \alpha}^{\text{logistic}}   \nonumber
\\ & \geq 4 \lambda_{\text{min}}  \left( \left( \E\left[ \boldsymbol{X} \boldsymbol{X}^\top \right]   \right)^{-1} \right) \left \lVert I_{\beta \alpha}^{\text{logistic}} \right \rVert_2^2    \nonumber
\\ & = \frac{4  \left \lVert I_{\beta \alpha}^{\text{logistic}} \right \rVert_2^2 }{ \left \lVert \E\left[ \boldsymbol{X} \boldsymbol{X}^\top \right]   \right \rVert_{\text{op}}}  \nonumber
\\ & \geq \frac{4  \pi_{\text{min}}^2(1 - \pi_{\text{min}})^2 \left \lVert \E \left[ \boldsymbol{X} \right] \right \rVert_2^2} { \lambda_{\text{max}}} = : \Delta , \label{asym.var.ineq.1}
\end{align}
where \(\lambda_{\text{min}}(\cdot)\) denotes the minimum eigenvalue of \(\cdot\) and the last step follows because
\begin{align*}
 \left \lVert I_{\beta \alpha}^{\text{logistic}} \right \rVert_2    =  ~ &  \left \lVert \int \boldsymbol{x} \pi_2(\boldsymbol{x}) [1 - \pi_2(\boldsymbol{x})]  \ d F(\boldsymbol{x}) \right \rVert_2
\\ \geq  ~ &  \left \lVert   \int   \boldsymbol{x}   \pi_{\text{min}}(1 - \pi_{\text{min}})  \ d F(\boldsymbol{x})  \right \rVert_2 
\\ = ~ &  \pi_{\text{min}}(1 - \pi_{\text{min}})  \left \lVert \E \left[ \boldsymbol{X} \right] \right \rVert_2
\end{align*}
(where we used the fact that \(\boldsymbol{X}\) has support only over nonnegative numbers).
Therefore \eqref{block.mat.inv.lem.form} and \eqref{asym.var.ineq.1} yield 
\begin{equation}\label{thm.2.var.result}
 \mathrm{Asym. } \Var \left(  \sqrt{n} \cdot  \hat{\alpha}   \right) \geq   \left(  \frac{1}{\mathrm{Asym. } \Var \left(  \sqrt{n} \cdot  \hat{\alpha}_q   \right)} -  \Delta \right)^{-1}
 .
\end{equation}
The remainder of the argument is similar to the end of the proof of Theorem \ref{log.imb}: the asymptotic unbiasedness and \(\mathcal{O}(1/n)\) finite-sample bias of these estimators yields
\[
\mathrm{Asym. MSE}(\hat{\alpha}) =  \mathrm{Asym. } \Var \left(  \sqrt{n} \cdot  [ \alpha - \hat{\alpha}]  \right) 
\]
and
\[
\mathrm{Asym. MSE}(\hat{\alpha}_q)  = \mathrm{Asym. } \Var \left(  \sqrt{n} \cdot [\alpha -  \hat{\alpha}_q]  \right) 
.
\]
Then from \eqref{Q.lb.lem} we know that
\begin{equation}\label{needed.Q.cond.lem.var}
I_{\alpha \alpha}^{\text{logistic}} >  \left( I_{\beta \alpha}^{\text{logistic}} \right)^\top \left(I_{\beta \beta}^{\text{logistic}}\right)^{-1}   I_{\beta \alpha}^{\text{logistic}}  \geq \frac{4  \pi_{\text{min}}^2(1 - \pi_{\text{min}})^2 \left \lVert \E \left[ \boldsymbol{X} \right] \right \rVert_2^2} { \lambda_{\text{max}}} .
\end{equation}
Making the appropriate substitutions into \eqref{needed.Q.cond.lem.var} yields
\[
\frac{1}{\mathrm{Asym. MSE}(\hat{\alpha}_q) } -  \Delta > 0,
\]
and then substituting into \eqref{thm.2.var.result} yields
\begin{align*}
\mathrm{Asym. MSE}(\hat{\alpha})   \geq  ~ &    \left(  \frac{1}{\mathrm{Asym. MSE}(\hat{\alpha}_q)  } -  \Delta \right)^{-1}%
\\ = ~ & \frac{\mathrm{Asym. MSE}(\hat{\alpha}_q) }{  1  -  \Delta \cdot \mathrm{Asym. MSE}(\hat{\alpha}_q)  }
\\ \stackrel{(*)}{\geq} ~ &  \mathrm{Asym. MSE}(\hat{\alpha}_q) \cdot \left(1 + \Delta \cdot \mathrm{Asym. MSE}(\hat{\alpha}_q) \right)
\\ \iff \qquad \frac{\mathrm{Asym. MSE}(\hat{\alpha})    -  \mathrm{Asym. MSE}(\hat{\alpha}_q)  }{ \left[ \mathrm{Asym. MSE}(\hat{\alpha}_q) \right]^2 } \geq  ~ & \Delta,
\end{align*}
where in \((*)\) we used the inequality \(c/(1-ct) \leq c(1 + c t)\) for any \(c > 0\), \(t < \frac{1}{c}\).

\item 

Because \((\hat{\alpha}, \boldsymbol{\hat{\beta}})  \mapsto \hat{\pi}(\boldsymbol{z})\) is differentiable for all \(\boldsymbol{z} \in \mathbb{R}^p\), by the delta method (Theorem 3.1 in \citealt{van2000asymptotic})
\[
\sqrt{n} \cdot [ \hat{\pi}(\boldsymbol{z}) - \pi(\boldsymbol{z})  ] \xrightarrow{d}   \mathcal{N} \left( 0,  \pi (\boldsymbol{z})^2  \left[1 - \pi (\boldsymbol{z}) \right]^2  \left(1, \boldsymbol{z}^\top \right) \left( I^{\text{logistic}} (\boldsymbol{\alpha}, \boldsymbol{\beta}) \right)^{-1}   \left(1, \boldsymbol{z}^\top \right)^\top \right)
\]
for any \(\boldsymbol{z} \in \mathbb{R}^p\), and similarly
\[
\sqrt{n} \cdot [ \hat{\pi}_q(\boldsymbol{z}) - \pi(\boldsymbol{z})  ] \xrightarrow{d}   \mathcal{N} \left( 0,  \frac{\pi (\boldsymbol{z})^2  \left[1 - \pi (\boldsymbol{z}) \right]^2  }{I_{\alpha \alpha}^{\text{logistic}}} \right)
.
\]
We can find \(\left( I^{\text{logistic}} (\boldsymbol{\alpha}, \boldsymbol{\beta}) \right)^{-1} \) using the formula for block matrix inversion if
\[
\boldsymbol{D} := I_{\beta \beta}^{\text{logistic}} - \frac{I_{\beta \alpha}^{\text{logistic}}\left( I_{\beta \alpha}^{\text{logistic}} \right)^\top}{ I_{\alpha \alpha}^{\text{logistic}} }
\]
is positive definite (and therefore invertible; note that \(\boldsymbol{D}\) is symmetric) and \(I_{\alpha \alpha}^{\text{logistic}} > 0\). We have
\[
I_{\alpha \alpha}^{\text{logistic}} = \int \pi_2(\boldsymbol{x})[1 - \pi_2(\boldsymbol{x})] \ d F(\boldsymbol{x}) \geq    \int \pi_{\text{min}}[1  -\pi_{\text{min}} ]  \ d F(\boldsymbol{x}) = \pi_{\text{min}}[1  -\pi_{\text{min}} ]  > 0,
\] 
and by Theorem 1.12 in \citet{zhang2005schur}, we then know \(\boldsymbol{D}\) is positive definite (and invertible) since \(I^{\text{logistic}} (\boldsymbol{\alpha}, \boldsymbol{\beta})\) is by Lemma \ref{asym.matrix} and \( I_{\alpha \alpha}^{\text{logistic}} > 0\). Let \(\lambda_{\text{max}}^{\boldsymbol{D}} := \lVert \boldsymbol{D} \rVert_{\text{op}}\) be the largest eigenvalue of \(\boldsymbol{D}\); then \(1/\lambda_{\text{max}}^{\boldsymbol{D}} \) is the smallest eigenvalue of \(\boldsymbol{D}^{-1}\). Then for any \(\boldsymbol{z} \in \mathbb{R}^p\), we have
\begin{align}
& \begin{pmatrix} 1 &  \boldsymbol{z}^\top \end{pmatrix} \left( I^{\text{logistic}} (\boldsymbol{\alpha}, \boldsymbol{\beta}) \right)^{-1}   \begin{pmatrix} 1 \\  \boldsymbol{z} \end{pmatrix}   \nonumber
\\ = ~ &  \begin{pmatrix} 1 &  \boldsymbol{z}^\top \end{pmatrix}    \begin{pmatrix}
\frac{1}{I_{\alpha \alpha}^{\text{logistic}}} + \frac{1}{\left(I_{\alpha \alpha}^{\text{logistic}} \right)^2} \left( I_{\beta \alpha}^{\text{logistic}} \right)^\top  \boldsymbol{D}^{-1}   I_{\beta \alpha}^{\text{logistic}}  & - \frac{1}{I_{\alpha \alpha}^{\text{logistic}}} \left( I_{\beta \alpha}^{\text{logistic}} \right)^\top \boldsymbol{D}^{-1}
\\    - \frac{1}{I_{\alpha \alpha}^{\text{logistic}}}   \boldsymbol{D}^{-1} I_{\beta \alpha}^{\text{logistic}}  &  \boldsymbol{D}^{-1}
\end{pmatrix} \begin{pmatrix} 1 \\  \boldsymbol{z} \end{pmatrix}    \nonumber
\\ = ~ &   \frac{1}{I_{\alpha \alpha}^{\text{logistic}}} + \frac{1}{\left(I_{\alpha \alpha}^{\text{logistic}} \right)^2} \left( I_{\beta \alpha}^{\text{logistic}} \right)^\top  \boldsymbol{D}^{-1}   I_{\beta \alpha}^{\text{logistic}}   + \boldsymbol{z}^\top \boldsymbol{D}^{-1} \boldsymbol{z}  - 2 \frac{1}{I_{\alpha \alpha}^{\text{logistic}}}    I_{\beta \alpha}^{\text{logistic}} \boldsymbol{D}^{-1} \boldsymbol{z}  \nonumber
\\ \stackrel{(a)}{=} ~ &  \frac{1}{I_{\alpha \alpha}^{\text{logistic}}}   + \frac{1}{\left(I_{\alpha \alpha}^{\text{logistic}} \right)^2}
  \left \lVert \boldsymbol{D}^{-1/2}   \left(  I_{\beta \alpha}^{\text{logistic}}  - I_{\alpha \alpha}^{\text{logistic}} \boldsymbol{z} \right) 
\right  \rVert_2^2  \nonumber
\\ \stackrel{(b)}{\geq} ~ &  \frac{1}{I_{\alpha \alpha}^{\text{logistic}}}   + \frac{1}{\left(I_{\alpha \alpha}^{\text{logistic}} \right)^2 \lambda_{\text{max}}^{\boldsymbol{D}} }
  \left \lVert   I_{\beta \alpha}^{\text{logistic}}  - I_{\alpha \alpha}^{\text{logistic}} \boldsymbol{z} 
\right  \rVert_2^2 \label{thm.2.result.intmd}
\\ \geq ~ &  \frac{1}{I_{\alpha \alpha}^{\text{logistic}}}   \nonumber
,
\end{align}
where \((a)\) follows from
\begin{align*}
& \frac{1}{\left(I_{\alpha \alpha}^{\text{logistic}} \right)^2}  \left \lVert \boldsymbol{D}^{-1/2}   \left(  I_{\beta \alpha}^{\text{logistic}}  - I_{\alpha \alpha}^{\text{logistic}} \boldsymbol{z} \right) 
\right  \rVert_2^2 
\\ = ~ &  \frac{1}{\left(I_{\alpha \alpha}^{\text{logistic}} \right)^2}  \left(  \left( I_{\beta \alpha}^{\text{logistic}} \right)^\top  \boldsymbol{D}^{-1}   I_{\beta \alpha}^{\text{logistic}}   +  \left(I_{\alpha \alpha}^{\text{logistic}} \right)^2 \boldsymbol{z}^\top \boldsymbol{D}^{-1} \boldsymbol{z}  - 2 I_{\alpha \alpha}^{\text{logistic}}   I_{\beta \alpha}^{\text{logistic}}  \boldsymbol{D}^{-1} \boldsymbol{z}  \right)
\end{align*}
and \((b)\) uses the fact that \(1/\sqrt{ \lambda_{\text{max}}^{\boldsymbol{D}} }\) is the smallest eigenvalue of \(\boldsymbol{D}^{-1/2}\). If we can show that \(  \left \lVert   I_{\beta \alpha}^{\text{logistic}}  - I_{\alpha \alpha}^{\text{logistic}} \boldsymbol{z}  
\right  \rVert_2 \neq 0\), then \eqref{thm.2.result.intmd} is enough to establish the strict inequality in the result. Using \eqref{logistic.fisher}, note that
\begin{align*}
0 = ~ & I_{\beta \alpha}^{\text{logistic}}  - I_{\alpha \alpha}^{\text{logistic}} \boldsymbol{z}  
%
%
\\ = ~ & \int \boldsymbol{x} \pi(\boldsymbol{x}) [1 - \pi(\boldsymbol{x})] \ d F(\boldsymbol{x})  - \boldsymbol{z} \int \pi(\boldsymbol{x}) [1 - \pi(\boldsymbol{x})] \ d F(\boldsymbol{x}) 
\\ \iff \qquad \boldsymbol{z}  = ~ & \frac{\int \boldsymbol{x} \pi(\boldsymbol{x}) [1 - \pi(\boldsymbol{x})] \ d F(\boldsymbol{x})}{\int \pi(\boldsymbol{x}) [1 - \pi(\boldsymbol{x})] \ d F(\boldsymbol{x}) }  
,
\end{align*}
so for all \(\boldsymbol{z} \neq \E \left[ \boldsymbol{X} \pi(\boldsymbol{X})[1 - \pi(\boldsymbol{X})] \right] / \E \left[ \pi(\boldsymbol{X})[1 - \pi(\boldsymbol{X})] \right] \), we have \( \left \lVert    I_{\beta \alpha}^{\text{logistic}}  - I_{\alpha \alpha}^{\text{logistic}} \boldsymbol{z} 
\right  \rVert_2^2 > 0\). So for any \(\boldsymbol{z} \in \mathbb{R}^p \setminus \E \left[ \boldsymbol{X} \pi(\boldsymbol{X})[1 - \pi(\boldsymbol{X})] \right] / \E \left[ \pi(\boldsymbol{X})[1 - \pi(\boldsymbol{X})] \right]\), we have
\begin{align*}
\pi (\boldsymbol{z})^2  \left[1 - \pi (\boldsymbol{z}) \right]^2  \begin{pmatrix} 1 & \boldsymbol{z}^\top \end{pmatrix}  \begin{pmatrix} I_{\alpha \alpha}^{\text{logistic}} & \left( I_{\beta \alpha}^{\text{logistic}} \right)^\top
\\ I_{\beta \alpha}^{\text{logistic}} & I_{\beta \beta}^{\text{logistic}}
\end{pmatrix}^{-1}    \begin{pmatrix} 1 \\  \boldsymbol{z} \end{pmatrix} 
> ~ & \frac{\pi (\boldsymbol{z})^2  \left[1 - \pi (\boldsymbol{z}) \right]^2 }{I_{\alpha \alpha}^{\text{logistic}}} 
\\ \iff \qquad \mathrm{Asym. } \Var \left(  \sqrt{n} \cdot \hat{\pi}_q(\boldsymbol{z})  \right) < ~ &   \mathrm{Asym. } \Var \left(  \sqrt{n} \cdot  (\hat{\pi}(\boldsymbol{z}) \right) 
.
\end{align*}

\end{enumerate}
\end{proof}

\section{Investigation of Plausibility of Assumption \eqref{delta.small.assum.fin} and Proof of Theorem \ref{main.cov.thm.2}}\label{main.thm.sec}

Before we prove Theorem \ref{main.cov.thm.2}, we begin by investigating the plausibility of Assumption \eqref{delta.small.assum.fin}. We start by investigating whether \( \lambda_{\text{min}} \left( I_{\beta \beta} -   2 \frac{I_{\beta \alpha_1} I_{\beta \alpha_1}^\top}{I_{\alpha_1 \alpha_1}} \right) \) is bounded away from 0 in Section \ref{min.eigen.bound}. Then in Section \ref{min.eigen.cond.sim} we directly investigate whether Assumption \eqref{delta.small.assum.fin} seems to hold in a variety of contexts. We prove Theorem \ref{main.cov.thm.2} in Section \ref{proof.thm.cov.sec}, and we prove the supporting result Lemma \ref{lem.ineq.final.thm} in Section \ref{sec.lem.ineq.final.thm}.


\subsection{Investigating Whether \( \lambda_{\text{min}} \left( I_{\beta \beta} -   2 \frac{I_{\beta \alpha_1} I_{\beta \alpha_1}^\top}{I_{\alpha_1 \alpha_1}} \right) \) is Bounded Away From Zero}\label{min.eigen.bound}


To investigate the plausibility of the assumption that \( I_{\beta \beta} -   2 \frac{I_{\beta \alpha_1} I_{\beta \alpha_1}^\top}{I_{\alpha_1 \alpha_1}} \) is positive definite (and that the upper bound for \(\pi_{\text{rare}}\) in Equation \ref{delta.small.assum.fin} can hold) empirically, we perform two simulation studies, with setups similar to those of our synthetic data experiments in Section \ref{sec.sim} of the paper. We repeat Simulation Study A 25 times. Using \(n = 10^6\), \(p = 10\), and \(K = 3\), we generate \(\boldsymbol{X} \in \mathbb{R}^{n \times p}\) with \(\boldsymbol{X}_{ij}\sim \operatorname{Uniform}(-1, 1)\) iid for all \(i \in \{1, \ldots, n\}\) and \(j \in \{1, \ldots, p\}\). We then generate \(y_i \in \{1, 2, 3\}\) from \(\boldsymbol{x}_i\) for each \(i \in \{1, \ldots, n\}\) according to the proportional odds model \eqref{prop_odds}, using \(\boldsymbol{\beta} = (1, \ldots, 1)^\top\) and \(\boldsymbol{\alpha} = (0, 20)\) (so that class 3 is very rare, with \(\pi_{\text{rare}} \approx 4.54 \cdot 10^{-5}\)). We estimate \( I_{\beta \beta}\), \( I_{\beta \alpha_1}\), and \(I_{\alpha_1 \alpha_1}\) using empirical (``plug-in") estimates of the expressions in \eqref{alpha.block}, \eqref{alpha.beta.block}, and \eqref{beta.block}; for example, using \eqref{i.beta.beta.id} we estimated \( I_{\beta \beta}\) by
\[
\frac{1}{n} \sum_{i=1}^n \sum_{k=1}^{K-1}   \boldsymbol{x}_i \boldsymbol{x}_i^\top \left[ \pi_k(\boldsymbol{x}_i) + \pi_{k+1}(\boldsymbol{x}_i) \right] p_k (\boldsymbol{x}_i) [ 1 - p_k (\boldsymbol{x}_i) ] 
.
\]
(Note that we used the true \(\pi_k(\cdot)\) and \(p_k(\cdot)\) functions in these estimates.) Finally, we use these estimated quantities to estimate \(I_{\beta \beta} -   2 \frac{I_{\beta \alpha_1} I_{\beta \alpha_1}^\top}{I_{\alpha_1 \alpha_1}} \), and we calculate the minimum eigenvalue of this estimated matrix. Across all 25 simulations, the sample mean of this minimum eigenvalue is 0.02361, and the minimum is 0.02359. The standard error is \(2.94 \cdot 10^{-6}\), and the \(95\%\) confidence interval for the mean of the minimum eigenvalue is \((0.02360, 0.02362)\). See Figure \ref{min_lambda_plot} for a boxplot of the 25 estimated minimum eigenvalues. These results seem to suggest that the assumption that \( \lambda_{\text{min}} \left( I_{\beta \beta} -   2 \frac{I_{\beta \alpha_1} I_{\beta \alpha_1}^\top}{I_{\alpha_1 \alpha_1}} \right) > 0 \) is reasonable under the assumptions of Theorem \ref{main.cov.thm.2} for \(\boldsymbol{X}\) with this iid uniform distribution. 

\begin{figure}[htbp]
\begin{center}
\includegraphics{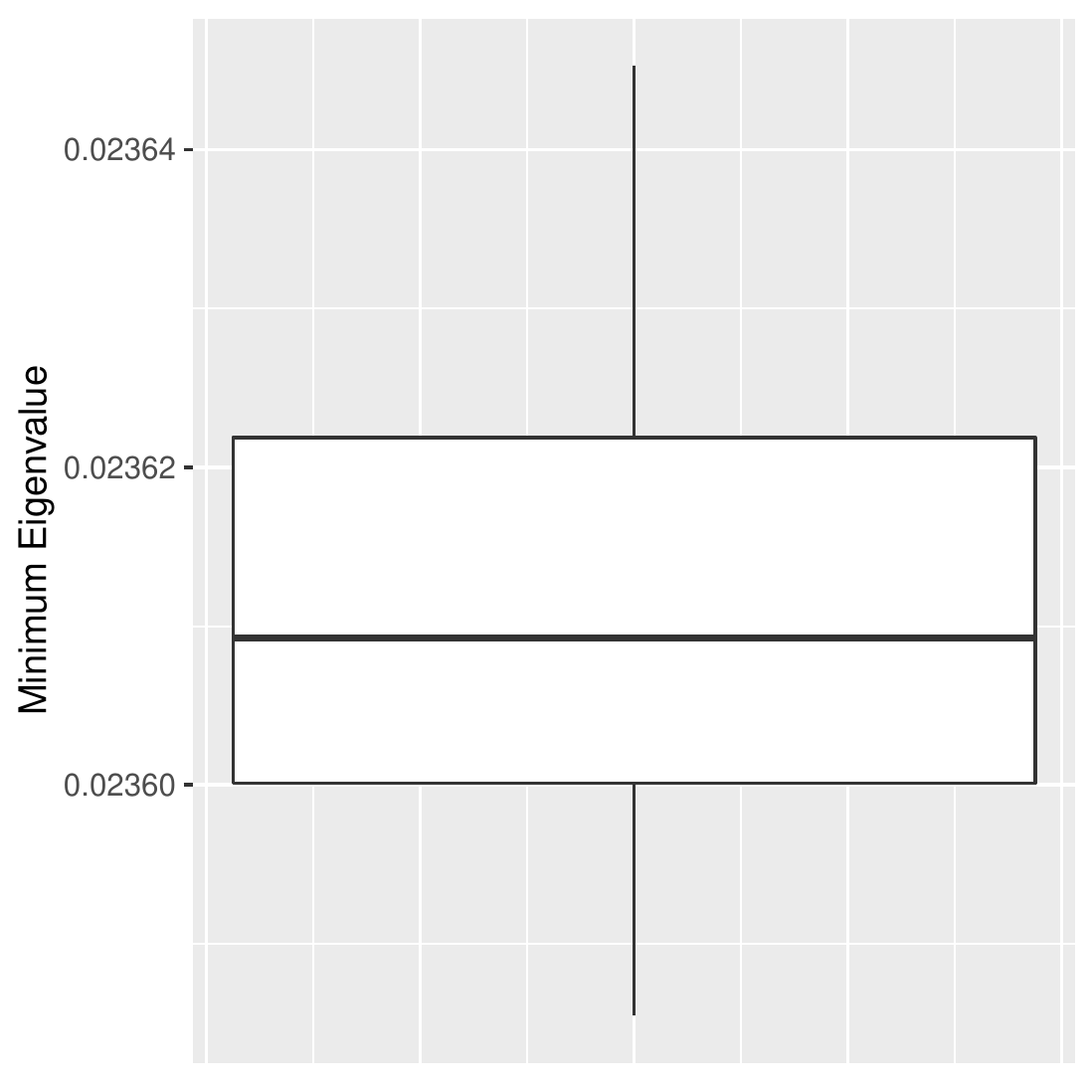}
\caption{Boxplot of the estimated minimum eigenvalues of \(I_{\beta \beta} -   2 \frac{I_{\beta \alpha_1} I_{\beta \alpha_1}^\top}{I_{\alpha_1 \alpha_1}} \) in Simulation Study A as described in Section \ref{min.eigen.bound}.}
\label{min_lambda_plot}
\end{center}
\end{figure}

Next, we conduct another simulation study to investigate whether \( \lambda_{\text{min}} \left( I_{\beta \beta} -   2 \frac{I_{\beta \alpha_1} I_{\beta \alpha_1}^\top}{I_{\alpha_1 \alpha_1}} \right)  \) is bounded away from 0 if the features in \(\boldsymbol{X}\) are correlated, and over a wider range of \(\pi_{\text{rare}}\). In Simulation Study B, we generate matrices of standard multivariate Gaussian data truncated entrywise between \(-3\) and \(3\), with \(n = 10^6\) and \(p = 10\). We generate matrices with covariance matrices
\[
\begin{pmatrix}
1 & \rho & \cdots & \rho
\\ \rho & 1 & \cdots & \rho
\\ \vdots & \vdots & \ddots & \vdots
\\ \rho & \rho & \cdots & 1
\end{pmatrix}
\]
for \(\rho \in \{0, 0.25, 0.5, 0.75\}\). We then generate ordinal responses from the proportional odds model with \(K = 3\) and \(\boldsymbol{\beta} = (1, 1, \ldots, 1)^\top\). The intercept for the first separating hyperplane is 0 (so that the expected class probabilities for classes 1 and 2 are both close to \(1/2\)) and we choose three different values for the second intercept by analytically solving for intercepts such that \(\pi_{\text{rare}} = \sup_{x \in [-3,3]^p}\{\pi_3(\boldsymbol{x})\}\) equals \(10^{-5}, 10^{-6}\), or \(10^{-7}\). To see the behavior as \(\pi_{\text{rare}}\) vanishes, we also investigate a fourth setting where all of the \(\pi_3(\boldsymbol{x})\) terms in the Fisher information are simply set to 0 (and \(\pi_2(\boldsymbol{x})\) is coerced to equal \(1 - \pi_1(\boldsymbol{x})\) pointwise), as if the intercept for the second separating hyperplane were infinity. In each setting, we generate 7 random matrices and estimate \(\lambda_{\text{min}} \left( I_{\beta \beta} -   2 \frac{I_{\beta \alpha_1} I_{\beta \alpha_1}^\top}{I_{\alpha_1 \alpha_1}} \right)\) in the same way as in Simulation Study A.

The results are displayed in Figures \ref{min_eigen_corr_0}, \ref{min_eigen_corr_025}, \ref{min_eigen_corr_05}, and \ref{min_eigen_corr_075} for correlations of 0, 0.25, 0.5, and 0.75, respectively. In each plot the rare class probability is displayed on the horizontal axis on a log scale (with a line break after the far left result for the asymptotic case with \(\pi_{\text{rare}} = 0\)), and the average minimum eigenvalue for each of the seven random matrices is displayed on the vertical axis. We see that the means of the minimum eigenvalues are well above 0; we also note that in every generated matrix in every setting, the minimum eigenvalue was strictly positive.

\begin{figure}[htbp]
\begin{center}
\includegraphics[scale=0.7]{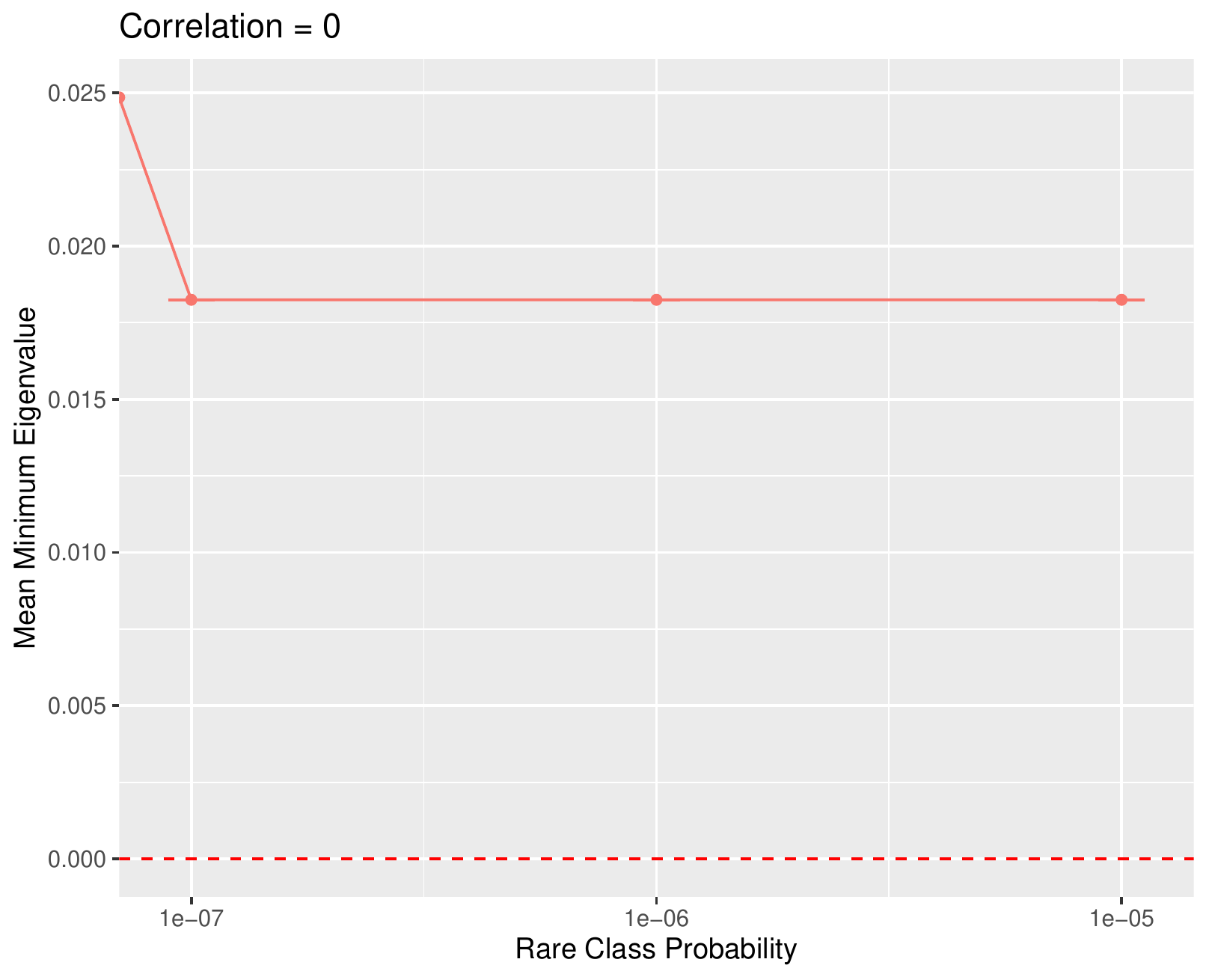}
\caption{Plot of the average estimated minimum eigenvalues of \(I_{\beta \beta} -   2 \frac{I_{\beta \alpha_1} I_{\beta \alpha_1}^\top}{I_{\alpha_1 \alpha_1}} \) in Simulation Study B as described in Section \ref{min.eigen.bound} with each column of \(\boldsymbol{X}\) having correlation 0 with all of the others.}
\label{min_eigen_corr_0}
\end{center}
\end{figure}

\begin{figure}[htbp]
\begin{center}
\includegraphics[scale=0.7]{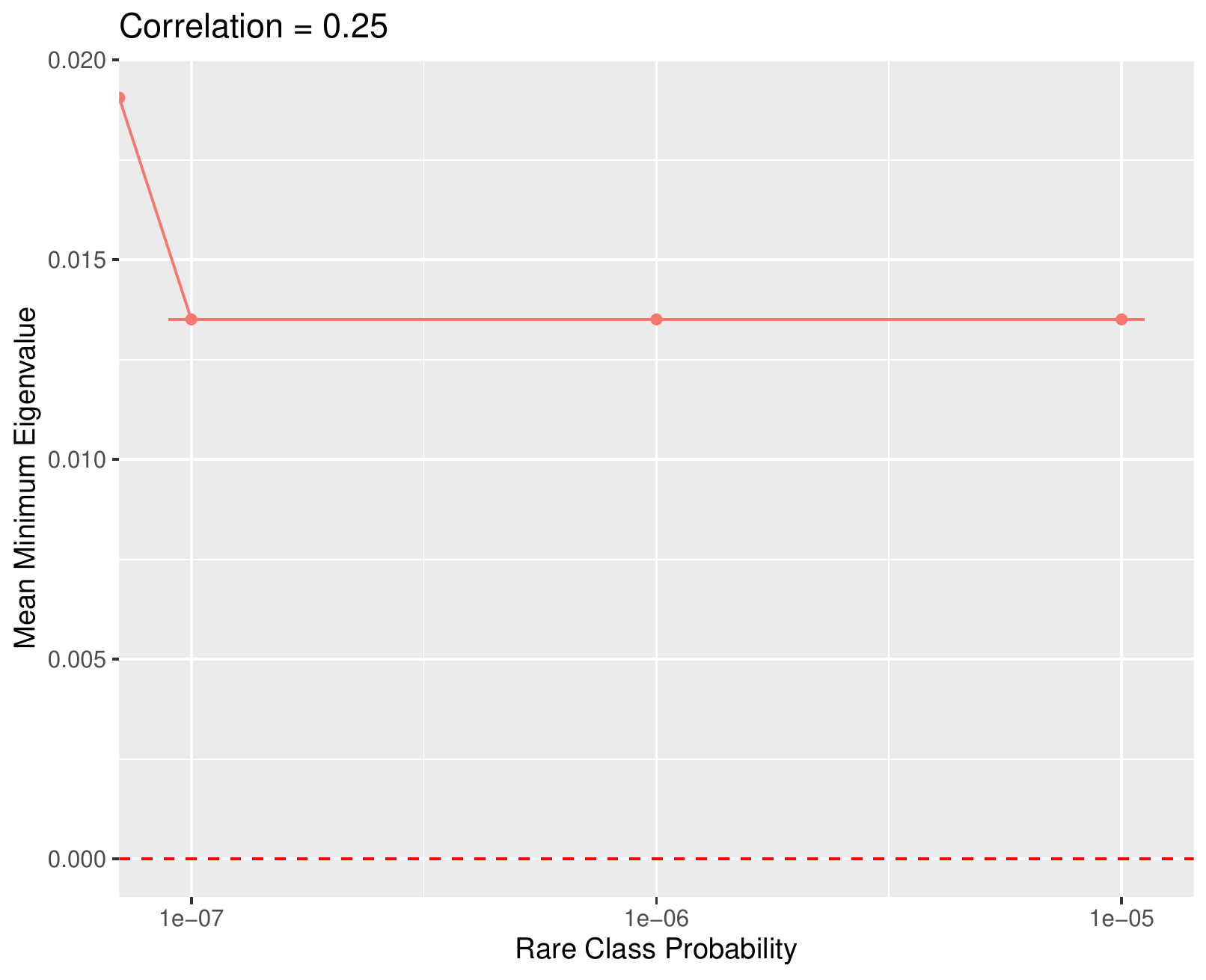}
\caption{Plot of the average estimated minimum eigenvalues of \(I_{\beta \beta} -   2 \frac{I_{\beta \alpha_1} I_{\beta \alpha_1}^\top}{I_{\alpha_1 \alpha_1}} \) in Simulation Study B as described in Section \ref{min.eigen.bound} with each column of \(\boldsymbol{X}\) having correlation 0.25 with all of the others.}
\label{min_eigen_corr_025}
\end{center}
\end{figure}

\begin{figure}[htbp]
\begin{center}
\includegraphics[scale=0.7]{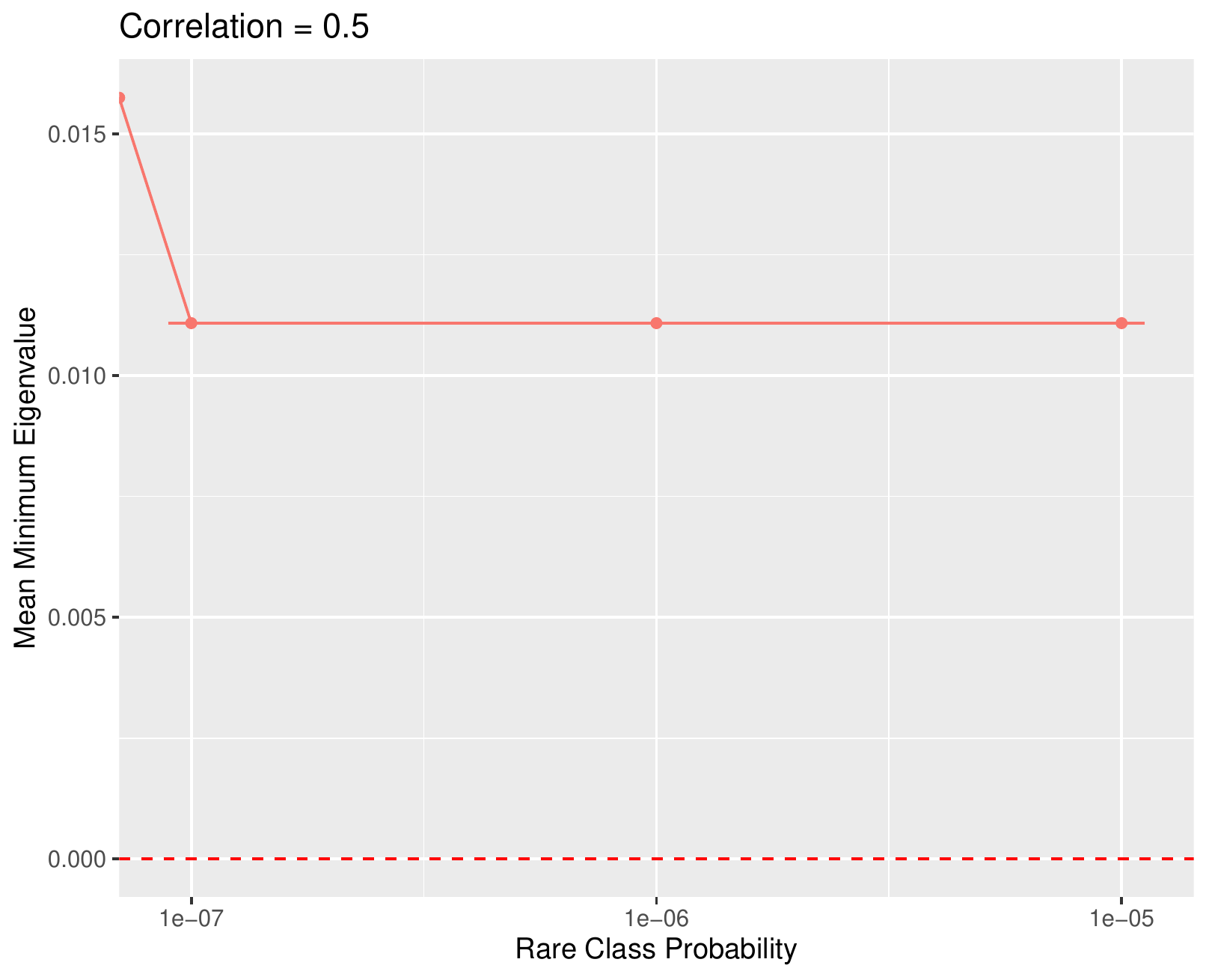}
\caption{Plot of the average estimated minimum eigenvalues of \(I_{\beta \beta} -   2 \frac{I_{\beta \alpha_1} I_{\beta \alpha_1}^\top}{I_{\alpha_1 \alpha_1}} \) in Simulation Study B as described in Section \ref{min.eigen.bound} with each column of \(\boldsymbol{X}\) having correlation 0.5 with all of the others.}
\label{min_eigen_corr_05}
\end{center}
\end{figure}

\begin{figure}[htbp]
\begin{center}
\includegraphics[scale=0.7]{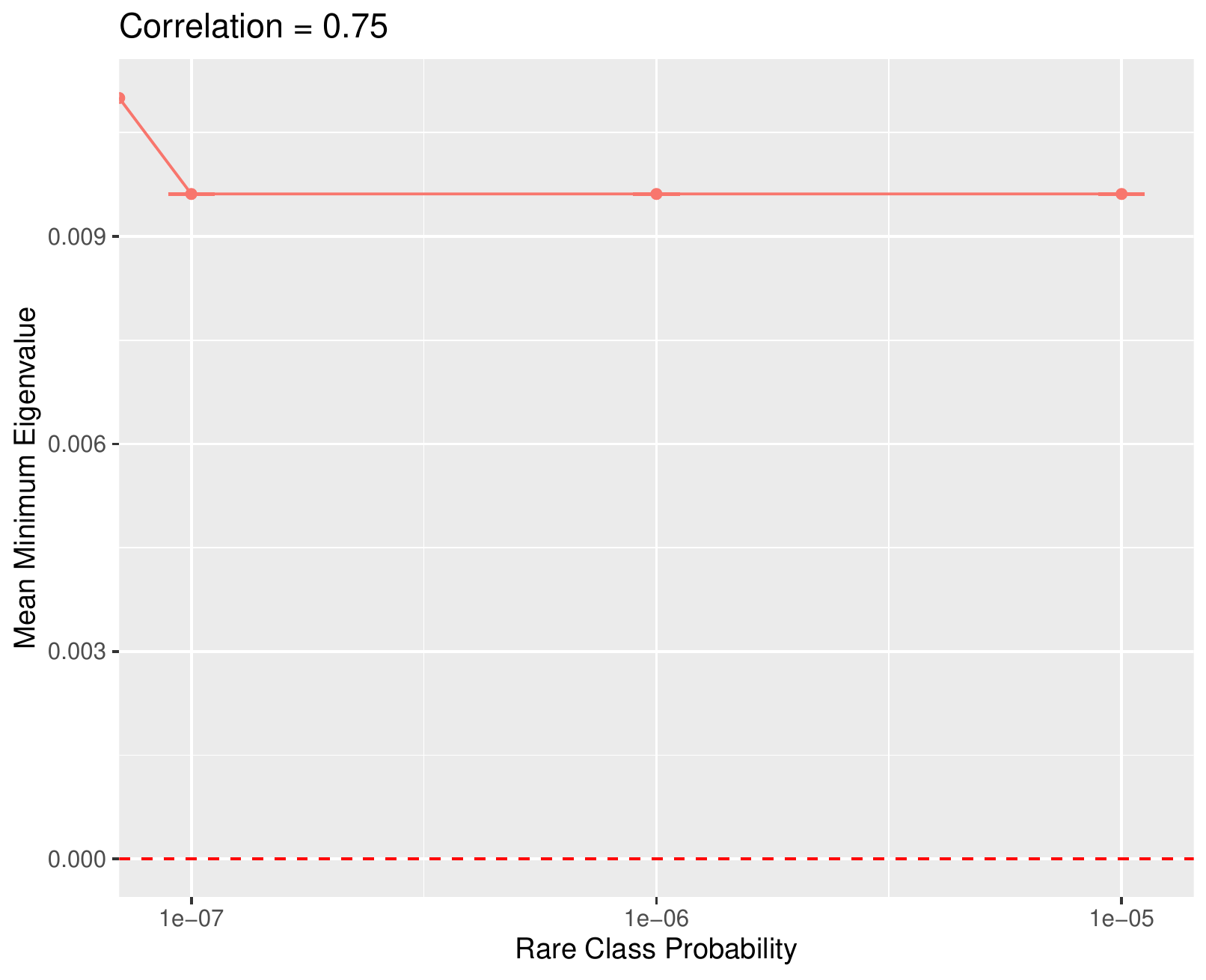}
\caption{Plot of the average estimated minimum eigenvalues of \(I_{\beta \beta} -   2 \frac{I_{\beta \alpha_1} I_{\beta \alpha_1}^\top}{I_{\alpha_1 \alpha_1}} \) in Simulation Study B as described in Section \ref{min.eigen.bound} with each column of \(\boldsymbol{X}\) having correlation 0.75 with all of the others.}
\label{min_eigen_corr_075}
\end{center}
\end{figure}

Finally, we also investigate analytically whether \( \lambda_{\text{min}} \left( I_{\beta \beta} -   2 \frac{I_{\beta \alpha_1} I_{\beta \alpha_1}^\top}{I_{\alpha_1 \alpha_1}} \right)  \) is bounded away from 0 as \(\pi_{\text{rare}}\) becomes arbitrarily small. To see that this seems reasonable, note that from \eqref{i.beta.beta.id} we have 
\begin{align*}
I_{\beta \beta} = ~ &  \sum_{k=1}^{2}  \int \boldsymbol{x} \boldsymbol{x}^\top \left[ \pi_k(\boldsymbol{x}) + \pi_{k+1}(\boldsymbol{x}) \right] p_k (\boldsymbol{x}) [ 1 - p_k (\boldsymbol{x}) ] \ d F(\boldsymbol{x})
\\ = ~ &   \int \boldsymbol{x} \boldsymbol{x}^\top  \left( \left[ \pi_1(\boldsymbol{x}) + \pi_{2}(\boldsymbol{x}) \right] \pi_1 (\boldsymbol{x}) [ 1 - \pi_1 (\boldsymbol{x})] + \left[ \pi_2(\boldsymbol{x}) + \pi_{3}(\boldsymbol{x}) \right] \pi_3 (\boldsymbol{x}) [ 1 - \pi_3 (\boldsymbol{x}) ] \right) \ d F(\boldsymbol{x})
\\ = ~ &   \int \boldsymbol{x} \boldsymbol{x}^\top  \left( \left[ 1 -  \pi_{3}(\boldsymbol{x}) \right] \pi_1 (\boldsymbol{x}) [ 1 - \pi_1 (\boldsymbol{x})] + \left[ 1 -  \pi_{1}(\boldsymbol{x}) \right] \pi_3 (\boldsymbol{x}) [ 1 - \pi_3 (\boldsymbol{x}) ] \right) \ d F(\boldsymbol{x})
\\ = ~ &   \int \boldsymbol{x} \boldsymbol{x}^\top   [ 1 - \pi_1 (\boldsymbol{x})]   [ 1 - \pi_3 (\boldsymbol{x}) ]  \left( \pi_1 (\boldsymbol{x})+ \pi_3 (\boldsymbol{x})\right) \ d F(\boldsymbol{x})
\\ = ~ &   \int \boldsymbol{x} \boldsymbol{x}^\top   [  \pi_2 (\boldsymbol{x}) +  \pi_3 (\boldsymbol{x})]   [1 -  \pi_3 (\boldsymbol{x}) ]  [ \pi_1 (\boldsymbol{x})+ \pi_3 (\boldsymbol{x}) ] \ d F(\boldsymbol{x})
%
%
,
\end{align*}
which is non-vanishing in \(\pi_{\text{rare}}\). (In particular, there seems to be no reason to suspect that the eigenvalues of \(I_{\beta \beta}\) change drastically for, say \( \pi_3 (\boldsymbol{x}) \leq  10^{-7}\) for all \(\boldsymbol{x} \in \mathcal{S}\), as in Simulation Study B, versus \( \pi_3 (\boldsymbol{x}) \leq 10^{-20}\) for all \(\boldsymbol{x} \in \mathcal{S}\).) Further, \eqref{h} below shows that
\[
 \left \lVert \frac{I_{\beta \alpha_1} I_{\beta \alpha_1}^\top}{I_{\alpha_1 \alpha_1}} \right\rVert_{\text{op}} =   \frac{ \lVert I_{\beta \alpha_1} \rVert_2^2}{ I_{\alpha_1 \alpha_1}}
\]
(where \(   \left \lVert  \cdot \right\rVert_{\text{op}}  \) is the operator norm) is bounded from above by a constant not depending on \(\pi_{\text{rare}}\). This also suggests that \(\lambda_{\text{min}} \left( I_{\beta \beta} -   2 \frac{I_{\beta \alpha_1} I_{\beta \alpha_1}^\top}{I_{\alpha_1 \alpha_1}} \right)\) should be insensitive to \(\pi_{\text{rare}}\) becoming vanishingly small. Taken together, this suggests that Assumption \eqref{ub.new} does not become more implausible as \(\pi_{\text{rare}}\) becomes arbitrarily small.

\subsection{Investigating Whether Assumption \eqref{delta.small.assum.fin} is Plausible}\label{min.eigen.cond.sim}

Having developed evidence that \( \lambda_{\text{min}} \left( I_{\beta \beta} -   2 \frac{I_{\beta \alpha_1} I_{\beta \alpha_1}^\top}{I_{\alpha_1 \alpha_1}} \right) \) is bounded away from 0, we now directly investigate the plausibility of the assumption
\begin{equation}\label{ub.new}
 \pi_{\text{rare}} \leq 
  \frac{  \lambda_{\text{min}} \left(I_{\beta \beta} -   2 \frac{I_{\beta \alpha_1} I_{\beta \alpha_1}^\top}{I_{\alpha_1 \alpha_1}}  \right)  }{3 M^2   \left(   2   
 +  M \right) }
 .
\end{equation}
First we examine this in the context of Simulation Study A. Note that \(M = \lVert (1, \ldots, 1) \rVert_2 = \sqrt{p}\), \(\mathcal{S} = [-1, 1]^p\) and
\[
\sup_{\boldsymbol{x} \in \mathcal{S}} \{\pi_3(\boldsymbol{x}) \} =  \pi_3(\boldsymbol{x}^*)  =  1 - \frac{1}{1+ \exp\{-(20 + \sum_{j=1}^p  x_j^* )\}},
\]
for \(\boldsymbol{x}^* = (-1, \ldots, -1)\). So all of the quantities in \eqref{ub.new} are known except the minimum eigenvalue, which we were able to estimate with seemingly high precision. It turns out that \eqref{ub.new} is satisfied even when we use the minimum value of \(\lambda_{\text{min}}\) across all 25 simulations as our estimate; in this case, the left side of \eqref{ub.new} is \(4.54 \cdot 10^{-5}\) and the right side is \(1.52 \cdot 10^{-4}\).

Next we investigate \eqref{ub.new} in the context of Simulation Study B. In each of the four correlation settings, we calculate the proportion of generated random matrices in which \eqref{ub.new} is satisfied at each rarity level (according to our minimum eigenvalue estimates). The results are displayed in Figures \ref{min_eigen_cond_corr_0}, \ref{min_eigen_cond_corr_025}, \ref{min_eigen_cond_corr_05}, and \ref{min_eigen_cond_corr_075}. We see that \eqref{ub.new} is not satisfied for any of the random matrices where \(\pi_{\text{rare}} = 10^{-5}\), but it is satisfied for all of the matrices with all of the remaining values of \(\pi_{\text{rare}}\) regardless of correlation of the features. This suggests that assumption \eqref{delta.small.assum.fin} is indeed reasonable for \(\pi_{\text{rare}}\) small enough.

\begin{figure}[htbp]
\begin{center}
\includegraphics[scale=0.7]{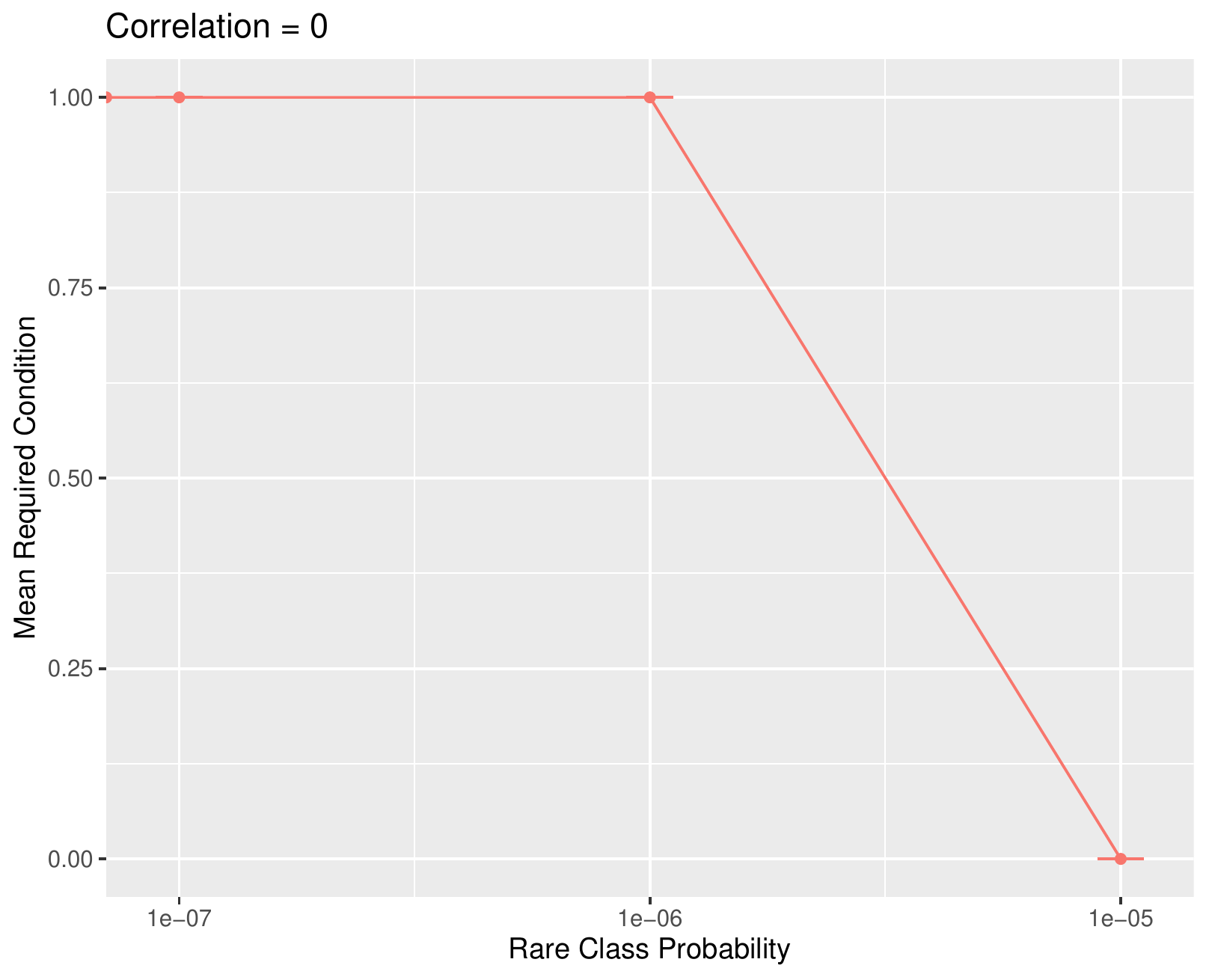}
\caption{Plot of the proportion of random matrices in Simulation Study B (as described in Section \ref{min.eigen.bound}) in which \eqref{ub.new} is satisfied in the setting where each column of \(\boldsymbol{X}\) has correlation 0 with all of the others.}
\label{min_eigen_cond_corr_0}
\end{center}
\end{figure}

\begin{figure}[htbp]
\begin{center}
\includegraphics[scale=0.7]{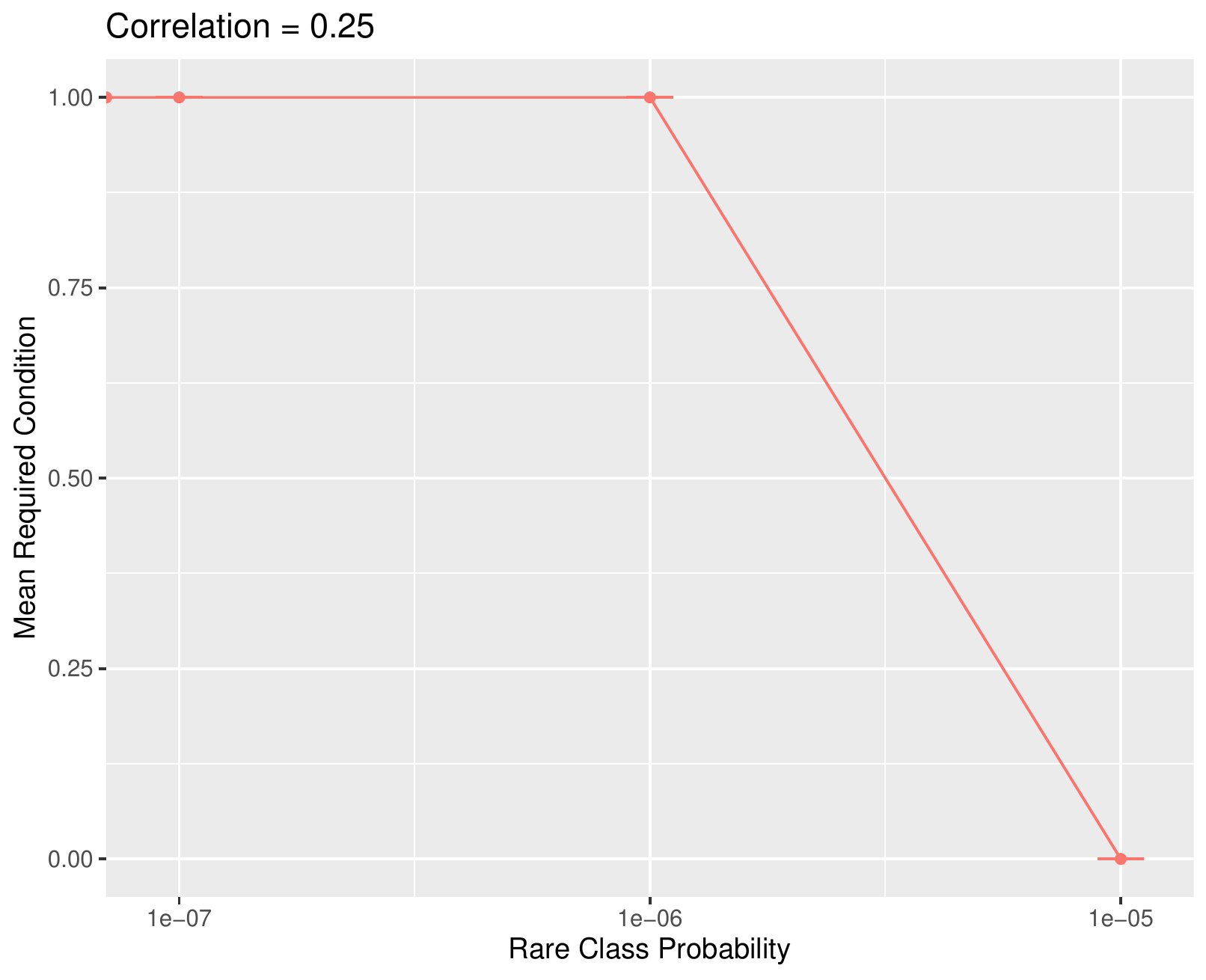}
\caption{Plot of the proportion of random matrices in Simulation Study B (as described in Section \ref{min.eigen.bound}) in which \eqref{ub.new} is satisfied in the setting where each column of \(\boldsymbol{X}\) has correlation 0.25 with all of the others.}
\label{min_eigen_cond_corr_025}
\end{center}
\end{figure}

\begin{figure}[htbp]
\begin{center}
\includegraphics[scale=0.7]{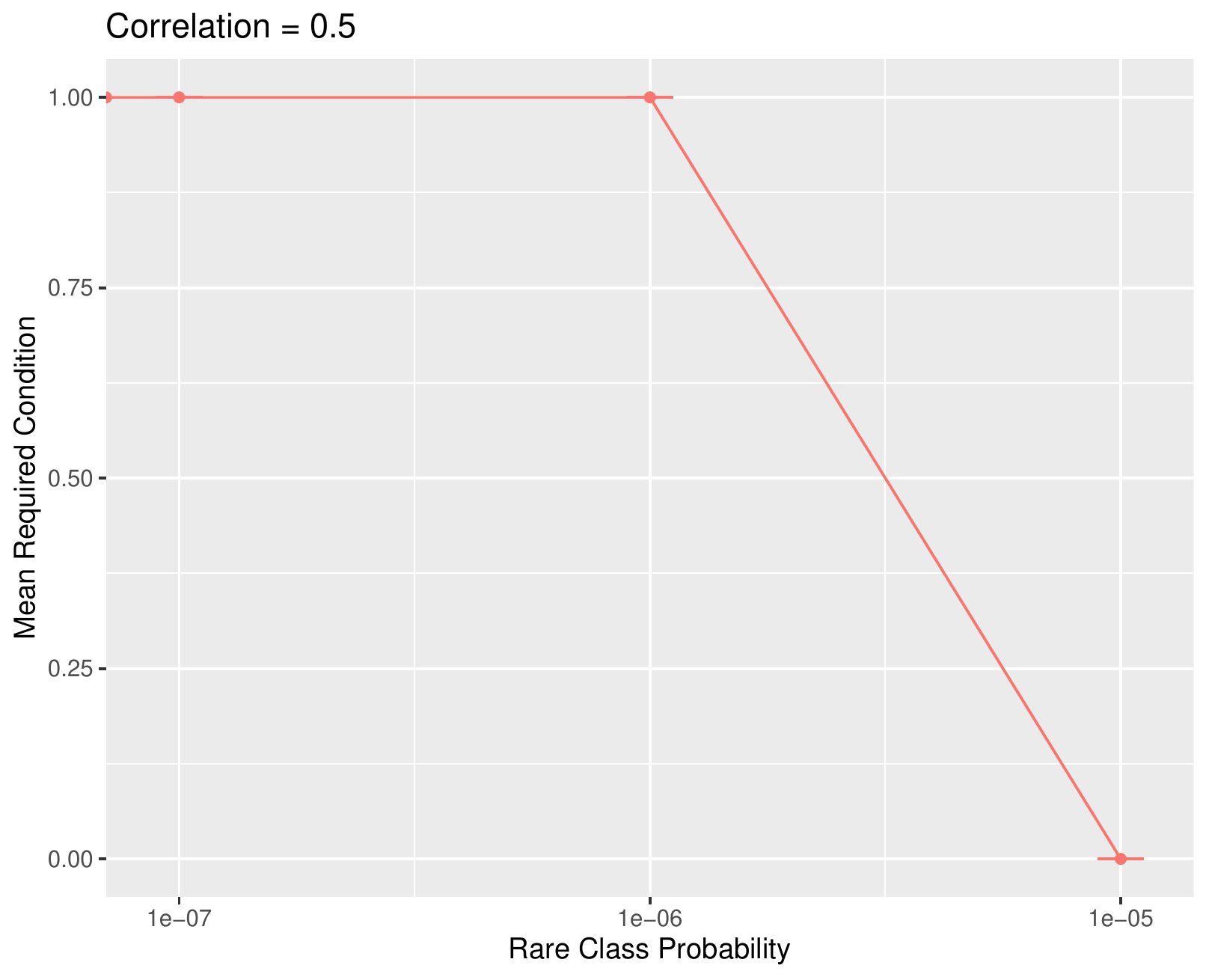}
\caption{Plot of the proportion of random matrices in Simulation Study B (as described in Section \ref{min.eigen.bound}) in which \eqref{ub.new} is satisfied in the setting where each column of \(\boldsymbol{X}\) has correlation 0.5 with all of the others.}
\label{min_eigen_cond_corr_05}
\end{center}
\end{figure}

\begin{figure}[htbp]
\begin{center}
\includegraphics[scale=0.7]{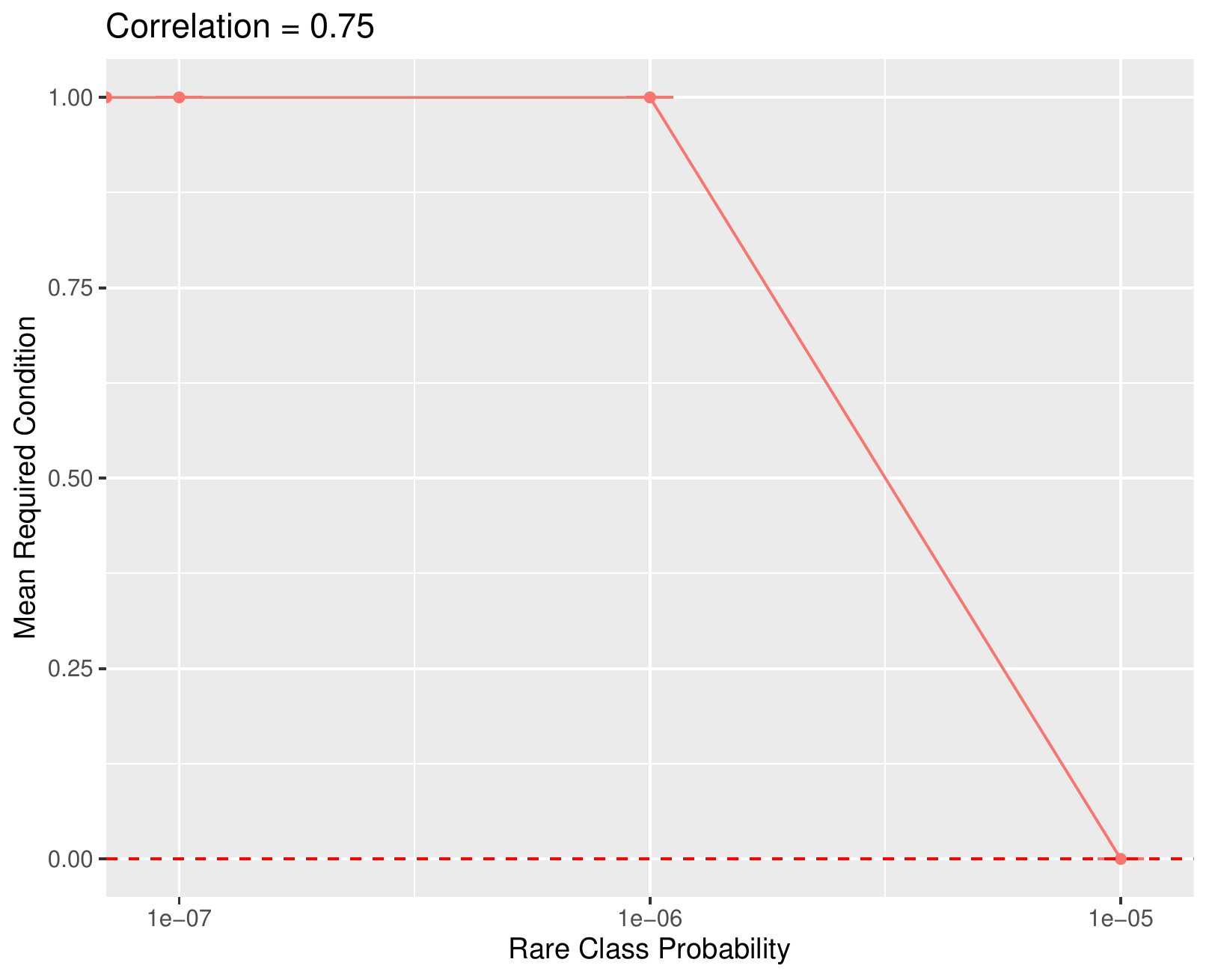}
\caption{Plot of the proportion of random matrices in Simulation Study B (as described in Section \ref{min.eigen.bound}) in which \eqref{ub.new} is satisfied in the setting where each column of \(\boldsymbol{X}\) has correlation 0.75 with all of the others.}
\label{min_eigen_cond_corr_075}
\end{center}
\end{figure}

\subsection{Proof of Theorem \ref{main.cov.thm.2}}\label{proof.thm.cov.sec}

Before we proceed with the proof of Theorem \ref{main.cov.thm.2}, we state a lemma with inequalities we will use.

\begin{lemma}\label{lem.ineq.final.thm} The following inequalities hold under the assumptions of Theorem \ref{main.cov.thm.2}:
\begin{align}
I_{\alpha_2 \alpha_2} \leq ~ & \pi_{\text{rare}} I_{\alpha_1 \alpha_1} \cdot \frac{1}{1/4 - \Delta^2}, \label{a}
\\ I_{\alpha_2 \alpha_2} \leq ~ & \pi_{\text{rare}} \left(1 + \frac{\pi_{\text{rare}}}{1/2 - \Delta} \right), \label{b}
\\ \left| I_{\alpha_1 \alpha_2}  \right| \leq ~ & I_{\alpha_2 \alpha_2},  \label{e}
\\ \left \lVert I_{\beta \alpha_2} \right \rVert_2 \leq ~ & M \cdot I_{\alpha_2 \alpha_2},   \label{f}
\\  \frac{\left \lVert I_{\beta \alpha_2} \right \rVert_2^2}{I_{\alpha_2 \alpha_2}} \leq ~ & M^2 \pi_{\text{rare}} \left(1 + \frac{\pi_{\text{rare}}}{1/2 - \Delta} \right), \qquad \text{and} \label{g}
\\ \frac{ \lVert I_{\beta \alpha_1} \rVert_2^2}{ I_{\alpha_1 \alpha_1}} \leq ~ &  \frac{M^2}{4} \label{h}
.
\end{align}

\end{lemma}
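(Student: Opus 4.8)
The plan is to reduce each of the six inequalities to an elementary pointwise comparison of the integrands defining the Fisher-information entries in Lemma~\ref{asym.matrix}, specialized to $K=3$. The first step is bookkeeping: with three classes $p_0\equiv 0$, $p_1(\boldsymbol{x})=\pi_1(\boldsymbol{x})$, $p_2(\boldsymbol{x})=1-\pi_3(\boldsymbol{x})$, $p_3\equiv 1$, so (writing $\pi_k$ for $\pi_k(\boldsymbol{x})$) the blocks we need are $I_{\alpha_1\alpha_1}=M_1$, $I_{\alpha_2\alpha_2}=M_2$, $I_{\alpha_1\alpha_2}=-\tilde{M}_2$, and by \eqref{j.2.j.tilde.3.sum}, $I_{\beta\alpha_1}=\int\boldsymbol{x}\,(\pi_1+\pi_2)\,\pi_1(1-\pi_1)\,dF(\boldsymbol{x})$ and $I_{\beta\alpha_2}=\int\boldsymbol{x}\,(\pi_2+\pi_3)\,(1-\pi_3)\pi_3\,dF(\boldsymbol{x})$. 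The crucial simplification is to substitute $\frac{1}{\pi_k}+\frac{1}{\pi_{k+1}}=\frac{\pi_k+\pi_{k+1}}{\pi_k\pi_{k+1}}$ in $M_1,M_2$ from \eqref{m.def} and use $\pi_1+\pi_2=1-\pi_3$ and $\pi_2+\pi_3=1-\pi_1$; this collapses the $M_1$- and $M_2$-integrands to the common closed forms $\pi_1(1-\pi_1)^2(1-\pi_3)/\pi_2$ and $\pi_3(1-\pi_3)^2(1-\pi_1)/\pi_2$, and gives the $\tilde{M}_2$-integrand as $\pi_1(1-\pi_1)\pi_3(1-\pi_3)/\pi_2$. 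After this step every claim is a consequence of: $\pi_3\le\pi_{\text{rare}}$ on $\mathcal{S}$, $\pi_1,\pi_2\in[\tfrac12-\Delta,\tfrac12+\Delta]$ (hence $\pi_k(1-\pi_k)\ge\tfrac14-\Delta^2$ for $k\in\{1,2\}$ and $\pi_1\pi_2\le\tfrac14$), $\lVert\boldsymbol{x}\rVert_2\le M$ on $\mathcal{S}$, and $\int dF=1$; positivity of every class probability (making the divisions legitimate) and finiteness of every integral follow from Assumptions $Y(3)$ and $X(\mathbb{R}^p)$.

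With these rewritten forms, \textbf{(a)} follows because the ratio of the $M_2$-integrand to the $M_1$-integrand is exactly $\frac{\pi_3(1-\pi_3)}{\pi_1(1-\pi_1)}\le\frac{\pi_{\text{rare}}}{1/4-\Delta^2}$ pointwise, so integrating yields $I_{\alpha_2\alpha_2}\le\frac{\pi_{\text{rare}}}{1/4-\Delta^2}\,I_{\alpha_1\alpha_1}$. For \textbf{(b)}, expanding $M_2=\int\big(\pi_3(1-\pi_3)^2+\pi_3^2(1-\pi_3)^2/\pi_2\big)\,dF$ and bounding $(1-\pi_3)^2\le1$, one factor $\pi_3\le\pi_{\text{rare}}$, and $\pi_2\ge\tfrac12-\Delta$ gives $M_2\le\big(1+\tfrac{\pi_{\text{rare}}}{1/2-\Delta}\big)\int\pi_3\,dF\le\pi_{\text{rare}}\big(1+\tfrac{\pi_{\text{rare}}}{1/2-\Delta}\big)$. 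For \textbf{(e)}, the ratio of the $\tilde{M}_2$-integrand to the $M_2$-integrand equals $\frac{\pi_1}{\pi_1+\pi_2}\le1$, so $\lvert I_{\alpha_1\alpha_2}\rvert=\tilde{M}_2\le M_2=I_{\alpha_2\alpha_2}$.

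For \textbf{(f)}, with $w_2(\boldsymbol{x}):=(\pi_2+\pi_3)(1-\pi_3)\pi_3\ge0$ the triangle inequality for vector-valued integrals gives $\lVert I_{\beta\alpha_2}\rVert_2\le\int\lVert\boldsymbol{x}\rVert_2\,w_2\,dF\le M\int w_2\,dF$; and the ratio of $w_2$ to the $M_2$-integrand equals $\frac{\pi_2}{\pi_1+\pi_2}\le1$, so $\int w_2\,dF\le M_2=I_{\alpha_2\alpha_2}$, proving $\lVert I_{\beta\alpha_2}\rVert_2\le M\,I_{\alpha_2\alpha_2}$. Then \textbf{(g)} is immediate from (f) and (b): $\lVert I_{\beta\alpha_2}\rVert_2^2/I_{\alpha_2\alpha_2}\le M^2 I_{\alpha_2\alpha_2}\le M^2\pi_{\text{rare}}\big(1+\tfrac{\pi_{\text{rare}}}{1/2-\Delta}\big)$. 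For \textbf{(h)}, for any unit vector $\boldsymbol{u}$ write $\boldsymbol{u}^\top I_{\beta\alpha_1}=\int(\boldsymbol{u}^\top\boldsymbol{x})\,w_1\,dF$ with $w_1:=(\pi_1+\pi_2)\pi_1(1-\pi_1)$ and let $v_1$ denote the $M_1$-integrand; Cauchy--Schwarz applied to the splitting $w_1=(w_1/\sqrt{v_1})\sqrt{v_1}$ together with $(\boldsymbol{u}^\top\boldsymbol{x})^2\le\lVert\boldsymbol{x}\rVert_2^2\le M^2$ gives $(\boldsymbol{u}^\top I_{\beta\alpha_1})^2\le M^2\,I_{\alpha_1\alpha_1}\int(w_1^2/v_1)\,dF$, and the closed forms show $w_1^2/v_1=(\pi_1+\pi_2)\pi_1\pi_2\le\tfrac14$; taking the supremum over $\boldsymbol{u}$ yields $\lVert I_{\beta\alpha_1}\rVert_2^2/I_{\alpha_1\alpha_1}\le M^2/4$.

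The one step needing genuine care is \textbf{(a)}: the crude route of bounding $M_2$ via (b) and $M_1$ from below loses a factor of order $(1/2-\Delta)$ and produces the wrong constant, so it is essential to first put the $M_1$- and $M_2$-integrands into the common closed form, after which the bound holds exactly at the integrand level with the numerator $\pi_3(1-\pi_3)\le\pi_{\text{rare}}$ and the denominator $\pi_1(1-\pi_1)\ge\tfrac14-\Delta^2$ appearing directly. Everything else is routine algebra together with the elementary probability inequalities collected in the first paragraph.
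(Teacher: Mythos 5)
Your proof is correct and follows essentially the same route as the paper's: both reduce every entry to the closed-form integrands obtained from $\pi_1+\pi_2+\pi_3=1$ (the paper's identity \eqref{m2.id}) and then argue pointwise, with only cosmetic differences — your integrand-ratio phrasing of (a), (e), (f) (which incidentally avoids the paper's explicit use of $\pi_3<\pi_1$ in (f)), and a Cauchy–Schwarz weight-splitting for (h) where the paper bounds $\lVert I_{\beta\alpha_1}\rVert_2$ and $I_{\alpha_1\alpha_1}$ by the same integral directly. All steps check out.
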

\begin{proof}
Provided immediately after the proof of Theorem \ref{main.cov.thm.2}, in Section \ref{sec.lem.ineq.final.thm}.
\end{proof}

By an argument analogous to the one used at the end of the proof of Theorem \ref{log.imb}, it is enough to upper-bound 
\[
 \left \lVert \Cov \left( \lim_{n \to \infty}  \sqrt{n} \left[ \boldsymbol{\hat{\beta}}^{\text{prop. odds}} - \boldsymbol{\beta}  \right] \right) \right\rVert_{\text{op}} = \left \lVert \mathrm{Asym. } \Cov \left( \sqrt{n} \cdot \boldsymbol{\hat{\beta}}^{\text{prop. odds}} \right) \right\rVert_{\text{op}}.
 \]
Using Lemma \ref{asym.matrix} and the block matrix inversion formula,
\begin{align}
\mathrm{Asym. } \Cov \left( \sqrt{n} \cdot \boldsymbol{\hat{\beta}}^{\text{prop. odds}} \right) 
= ~ & \left( I_{\beta \beta} - I_{\alpha \beta}^\top I_{\alpha \alpha}^{-1} I_{\alpha \beta} \right)^{-1} \nonumber
\\ \implies \qquad \left \lVert \mathrm{Asym. } \Cov \left( \sqrt{n} \cdot \boldsymbol{\hat{\beta}}^{\text{prop. odds}} \right) \right\rVert_{\text{op}}
= ~ & \frac{1}{\lambda_{\text{min}} \left( I_{\beta \beta} - I_{\alpha \beta}^\top I_{\alpha \alpha}^{-1} I_{\alpha \beta} \right)} \label{final.cov.res}
,
\end{align}
where \(\lambda_{\text{min}} (\cdot)\) is the minimum eigenvalue of \(\cdot\) and \(\left \lVert \cdot \right \rVert_{\text{op}} \) is the operator norm. \(I_{\alpha \alpha}\) is a \(2 \times 2\) matrix, so
\[
I_{\alpha \alpha}^{-1} = \frac{1}{\operatorname{det} \left( I_{\alpha \alpha} \right)}  \begin{pmatrix} I_{\alpha_2 \alpha_2} & - I_{\alpha_1 \alpha_2} \\
-I_{\alpha_2 \alpha_1} & I_{\alpha_1 \alpha_1}
\end{pmatrix},
\]
and
\begin{align}
I_{\alpha \beta}^\top I_{\alpha \alpha}^{-1} I_{\alpha \beta}  = ~ &   \frac{1}{\operatorname{det} \left( I_{\alpha \alpha} \right)} \begin{pmatrix}I_{\beta \alpha_1} & I_{\beta \alpha_2} \end{pmatrix}  \begin{pmatrix} I_{\alpha_2 \alpha_2} & - I_{\alpha_1 \alpha_2} \\
-I_{\alpha_2 \alpha_1} & I_{\alpha_1 \alpha_1}
\end{pmatrix}  \begin{pmatrix}I_{\alpha_1 \beta} \\ I_{\alpha_2 \beta} \end{pmatrix}  \nonumber
\\ = ~ &   \frac{1}{\operatorname{det} \left( I_{\alpha \alpha} \right)} \left(  I_{\alpha_2 \alpha_2}  I_{\beta \alpha_1} I_{\beta\alpha_1 }^\top +  I_{\alpha_1 \alpha_1} I_{\beta \alpha_2} I_{\beta \alpha_2}^\top - I_{\alpha_1 \alpha_2} I_{\beta \alpha_1}  I_{ \beta \alpha_2}^\top -   I_{\alpha_1 \alpha_2}I_{\beta \alpha_2} I_{\beta \alpha_1}^\top \right) \nonumber
\\ = ~ &   \frac{1}{\operatorname{det} \left( I_{\alpha \alpha} \right)} \left(  I_{\alpha_2 \alpha_2}  I_{\beta \alpha_1} I_{\beta\alpha_1 }^\top +  I_{\alpha_1 \alpha_1} I_{\beta \alpha_2} I_{\beta \alpha_2}^\top + | I_{\alpha_1 \alpha_2} | I_{\beta \alpha_1}  I_{ \beta \alpha_2}^\top +  | I_{\alpha_1 \alpha_2} | I_{\beta \alpha_2} I_{\beta \alpha_1}^\top \right) \label{q.def}
,
\end{align}
where in the last step we used that \( I_{\alpha_1 \alpha_2} = -\tilde{M}_2 < 0\), which is clear from \eqref{alpha.beta.block} and \eqref{m.tilde.def}. Because we know from Lemma \ref{asym.matrix} that \(I_{\alpha \alpha}\) (and therefore also \(I_{\alpha \alpha}^{-1}\)) is positive definite, by Observation 7.1.6 in \citet{horn_johnson_2012} \(I_{\alpha \beta}^\top I_{\alpha \alpha}^{-1} I_{\alpha \beta} \) is positive definite as well. Therefore we can use an upper bound on \(1/\operatorname{det} \left( I_{\alpha \alpha} \right) \) to upper bound \(I_{\alpha \beta}^\top I_{\alpha \alpha}^{-1} I_{\alpha \beta}  \). We have
\begin{align}
\operatorname{det} \left( I_{\alpha \alpha} \right) = ~ & I_{\alpha_1 \alpha_1} I_{\alpha_2 \alpha_2} - I_{\alpha_1 \alpha_2}^2 \nonumber
\\ \stackrel{(a)}{\geq} ~ & I_{\alpha_1 \alpha_1} I_{\alpha_2 \alpha_2} - I_{\alpha_2 \alpha_2}^2  \nonumber
\\ = ~ &  I_{\alpha_2 \alpha_2} \left( I_{\alpha_1 \alpha_1}  - I_{\alpha_2 \alpha_2} \right)  \nonumber
\\ \stackrel{(b)}{\geq} ~ &  I_{\alpha_2 \alpha_2} \left( I_{\alpha_1 \alpha_1}  -  \pi_{\text{rare}} I_{\alpha_1 \alpha_1} \cdot \frac{1}{1/4 - \Delta^2} \right)  \nonumber
\\ = ~ & \left( 1 - \frac{\pi_{\text{rare}}}{1/4 - \Delta^2} \right) I_{\alpha_1 \alpha_1} I_{\alpha_2 \alpha_2}  \nonumber
\\ \stackrel{(c)}{\geq} ~ & \frac{1}{2} I_{\alpha_1 \alpha_1} I_{\alpha_2 \alpha_2}  \nonumber
,
\end{align}
where in \((a)\) we used \eqref{e}, in \((b)\) we used \eqref{a}, and \((c)\) uses that from the upper bound \eqref{delta.small.assum.fin} for \(\pi_{\text{rare}}\) we have
\begin{align*}
\pi_{\text{rare}} \leq ~ & \frac{1}{2}\left(\frac{1}{2} - \Delta\right)\left(\frac{1}{2} + \Delta\right) 
 =  \frac{1}{2} \left(\frac{1}{4} - \Delta^2\right)
\\ \iff \qquad  \frac{\pi_{\text{rare}}}{1/4 - \Delta^2}  \leq ~ & \frac{1}{2} 
\\ \iff \qquad  1 - \frac{\pi_{\text{rare}}}{1/4 - \Delta^2}  \geq ~ & \frac{1}{2} 
.
\end{align*}
Now we can bound \(I_{\alpha \beta}^\top I_{\alpha \alpha}^{-1} I_{\alpha \beta} \) using \eqref{q.def}:
\begin{align*}
I_{\alpha \beta}^\top I_{\alpha \alpha}^{-1} I_{\alpha \beta} \preceq ~ &   \frac{2}{I_{\alpha_1 \alpha_1} I_{\alpha_2 \alpha_2}}   \left( I_{\alpha_2 \alpha_2}  I_{\beta \alpha_1} I_{\beta\alpha_1 }^\top +  I_{\alpha_1 \alpha_1} I_{\beta \alpha_2} I_{\beta \alpha_2}^\top   + | I_{\alpha_1 \alpha_2} | I_{\beta \alpha_1}  I_{ \beta \alpha_2}^\top +  | I_{\alpha_1 \alpha_2} | I_{\beta \alpha_2} I_{\beta \alpha_1}^\top  \right) 
\\ = ~ &  2   \left(
 \frac{  I_{\beta \alpha_1} I_{\beta\alpha_1 }^\top }{I_{\alpha_1 \alpha_1} }
+ \frac{I_{\beta \alpha_2} I_{\beta \alpha_2}^\top }{ I_{\alpha_2 \alpha_2}} 
+ \frac{ | I_{\alpha_1 \alpha_2} |  }{I_{\alpha_1 \alpha_1} I_{\alpha_2 \alpha_2}}  \left( I_{\beta \alpha_1}  I_{ \beta \alpha_2}^\top + I_{\beta \alpha_2} I_{\beta \alpha_1}^\top \right)  \right)
,
\end{align*}
and
\begin{align*}
I_{\beta \beta} - I_{\alpha \beta}^\top I_{\alpha \alpha}^{-1} I_{\alpha \beta} \succeq ~ &  I_{\beta \beta}  -  2 \frac{  I_{\beta \alpha_1} I_{\beta\alpha_1 }^\top }{I_{\alpha_1 \alpha_1} } 
 -    2  \left( \frac{I_{\beta \alpha_2} I_{\beta \alpha_2}^\top }{ I_{\alpha_2 \alpha_2}} 
+ \frac{ | I_{\alpha_1 \alpha_2} |  }{I_{\alpha_1 \alpha_1} I_{\alpha_2 \alpha_2}}  \left( I_{\beta \alpha_1}  I_{ \beta \alpha_2}^\top + I_{\beta \alpha_2} I_{\beta \alpha_1}^\top \right)  \right)
\end{align*}
Note that \( I_{\beta \beta}\), \(I_{\beta \alpha_1} I_{\beta \alpha_1}^\top \), \(I_{\beta \alpha_2} I_{\beta \alpha_2}^\top \), and \( I_{\beta \alpha_1}  I_{ \beta \alpha_2}^\top + I_{\beta \alpha_2} I_{\beta \alpha_1}^\top \) are all symmetric. By Weyl's theorem, it holds that for symmetric matrices with matching dimensions \(\boldsymbol{A}\) and \(\boldsymbol{B}\),
\[
\lambda_{\text{min}} (\boldsymbol{A} - \boldsymbol{B}) \geq \lambda_{\text{min}} (\boldsymbol{A}) - \lVert \boldsymbol{B} \rVert_{\text{op}}
,
\]
because for any \(\boldsymbol{v}\)
\begin{align*}
(\boldsymbol{A} - \boldsymbol{B}) \boldsymbol{v} = \boldsymbol{A}\boldsymbol{v} - \boldsymbol{B}\boldsymbol{v} \geq  \lambda_{\text{min}} (\boldsymbol{A}) \boldsymbol{v} -  \lVert \boldsymbol{B} \rVert_{\text{op}} \boldsymbol{v} =   \left( \lambda_{\text{min}} (\boldsymbol{A})  -  \lVert \boldsymbol{B} \rVert_{\text{op}} \right) \boldsymbol{v}
.
\end{align*}
So
\begin{align*}
& \lambda_{\text{min}}  \left( I_{\beta \beta} - I_{\alpha \beta}^\top I_{\alpha \alpha}^{-1} I_{\alpha \beta} \right)
\\ \geq ~ & \lambda_{\text{min}} \left(I_{\beta \beta} -  2 \frac{I_{\beta \alpha_1} I_{\beta \alpha_1}^\top}{I_{\alpha_1 \alpha_1}}  \right)   -   \frac{2}{ I_{\alpha_2 \alpha_2} }  \left\lVert   I_{\beta \alpha_2} I_{\beta \alpha_2}^\top   \right\rVert_{\text{op}}
  -    2 \frac{ | I_{\alpha_1 \alpha_2} |  }{I_{\alpha_1 \alpha_1} I_{\alpha_2 \alpha_2}}   \left\lVert I_{\beta \alpha_1}  I_{ \beta \alpha_2}^\top + I_{\beta \alpha_2} I_{\beta \alpha_1}^\top   \right\rVert_{\text{op}}
\\ \stackrel{(a)}{\geq} ~ &   \lambda_{\text{min}} \left(I_{\beta \beta} - 2 \frac{I_{\beta \alpha_1} I_{\beta \alpha_1}^\top}{I_{\alpha_1 \alpha_1}}  \right)  
-   \frac{2}{ I_{\alpha_2 \alpha_2} }  \left\lVert   I_{\beta \alpha_2} I_{\beta \alpha_2}^\top   \right\rVert_{\text{op}}
  -    2 \frac{ | I_{\alpha_1 \alpha_2} |  }{I_{\alpha_1 \alpha_1} I_{\alpha_2 \alpha_2}}  \left( \left\lVert I_{\beta \alpha_1}  I_{ \beta \alpha_2}^\top   \right\rVert_{\text{op}}  +  \left\lVert  I_{\beta \alpha_2} I_{\beta \alpha_1}^\top   \right\rVert_{\text{op}} \right)
\\ \stackrel{(b)}{\geq} ~ &   \lambda_{\text{min}} \left(I_{\beta \beta} -  2 \frac{I_{\beta \alpha_1} I_{\beta \alpha_1}^\top}{I_{\alpha_1 \alpha_1}}  \right)   
-   2\frac{\lVert   I_{\beta \alpha_2} \rVert_2^2}{ I_{\alpha_2 \alpha_2}}   
-    2 \frac{ | I_{\alpha_1 \alpha_2} |  }{I_{\alpha_1 \alpha_1} I_{\alpha_2 \alpha_2}}  \left(2 \left\lVert I_{\beta \alpha_1} \right \rVert_2 \left \lVert I_{ \beta \alpha_2}   \right\rVert_2  \right)
\\ \stackrel{(c)}{\geq} ~ &   \lambda_{\text{min}} \left(I_{\beta \beta} - 2 \frac{I_{\beta \alpha_1} I_{\beta \alpha_1}^\top}{I_{\alpha_1 \alpha_1}}  \right)   
 -   2 M^2  \pi_{\text{rare}} \left(1 + \frac{\pi_{\text{rare}}}{1/2 - \Delta} \right)
-    2 \frac{  I_{\alpha_2 \alpha_2}  }{I_{\alpha_1 \alpha_1} I_{\alpha_2 \alpha_2}}  \left(2 \left\lVert I_{\beta \alpha_1} \right \rVert_2 \cdot M \cdot I_{\alpha_2 \alpha_2}  \right)
\\ = ~ &   \lambda_{\text{min}} \left(I_{\beta \beta} - 2  \frac{I_{\beta \alpha_1} I_{\beta \alpha_1}^\top}{I_{\alpha_1 \alpha_1}}  \right)    
 -  2M \left(   M  \pi_{\text{rare}} \left(1 + \frac{\pi_{\text{rare}}}{1/2 - \Delta} \right) 
 +   2\frac{  I_{\alpha_2 \alpha_2}  }{I_{\alpha_1 \alpha_1}}   \left\lVert I_{\beta \alpha_1} \right \rVert_2  \right)
\\ \stackrel{(d)}{\geq} ~ &   \lambda_{\text{min}} \left(I_{\beta \beta} - 2 \frac{I_{\beta \alpha_1} I_{\beta \alpha_1}^\top}{I_{\alpha_1 \alpha_1}}  \right)    
 -  2M \left(  M   \pi_{\text{rare}} \left(1 + \frac{\pi_{\text{rare}}}{1/2 - \Delta} \right)
 +   2\frac{1  }{I_{\alpha_1 \alpha_1}}   \left\lVert I_{\beta \alpha_1} \right \rVert_2  \cdot  \pi_{\text{rare}} \left(1 + \frac{\pi_{\text{rare}}}{1/2 - \Delta} \right) \right)
\\ = ~ &   \lambda_{\text{min}} \left(I_{\beta \beta} - 2 \frac{I_{\beta \alpha_1} I_{\beta \alpha_1}^\top}{I_{\alpha_1 \alpha_1}}  \right)    -  2M  \pi_{\text{rare}} \left(1 + \frac{\pi_{\text{rare}}}{1/2 - \Delta} \right) \left(  M   
 +   2\frac{ \left\lVert I_{\beta \alpha_1} \right \rVert_2    }{I_{\alpha_1 \alpha_1}}  \right)
\\ \stackrel{(e)}{\geq} ~ &  \lambda_{\text{min}} \left(I_{\beta \beta} - 2 \frac{I_{\beta \alpha_1} I_{\beta \alpha_1}^\top}{I_{\alpha_1 \alpha_1}}  \right)   
 - 2 M   \pi_{\text{rare}} \left(1 + \frac{1}{2} \right) \left(  M   
 +   \frac{M^2}{2} \right)
\\ \stackrel{(f)}{\geq} ~ &  \frac{1}{2}\lambda_{\text{min}} \left(I_{\beta \beta} -  2\frac{I_{\beta \alpha_1} I_{\beta \alpha_1}^\top}{I_{\alpha_1 \alpha_1}}  \right)
,
\end{align*}
where in \((a)\) we used the triangle inequality, in \((b)\) we used that for any \(\boldsymbol{a}, \boldsymbol{b} \in \mathbb{R}^n\) it holds that \(\lVert \boldsymbol{a} \boldsymbol{b}^\top \rVert_{\text{op}} = | \boldsymbol{b}^\top \boldsymbol{a} | \leq \lVert \boldsymbol{a} \rVert_2 \lVert \boldsymbol{b} \rVert_2\) (note that \(\boldsymbol{a} \boldsymbol{b}^\top \) is rank one with eigenvector \(\boldsymbol{a}\)) as well as the triangle inequality, in \((c)\) we used \eqref{e}, \eqref{f}, and \eqref{g}, in \((d)\) we used \eqref{b}, \((e)\) follows from \eqref{h} and
\[
\frac{\pi_{\text{rare}}}{1/2 - \Delta} \leq   \frac{1/2(1/2 - \Delta)(1/2 + \Delta)}{1/2 - \Delta}  =  \frac{1}{2} \left( \frac{1}{2} + \Delta \right)   \leq   \frac{1}{2} \left( \frac{1}{2} + \frac{1}{2} \right)   \leq \frac{1}{2}
\]
(since \(\Delta \leq 1/2\)), and in \((f)\) we used our assumptions that
\[
\lambda_{\text{min}} \left(I_{\beta \beta} -   2 \frac{I_{\beta \alpha_1} I_{\beta \alpha_1}^\top}{I_{\alpha_1 \alpha_1}}  \right)  > 0
\]
and
\begin{align*}
 \pi_{\text{rare}} \leq ~ &  \frac{  \lambda_{\text{min}} \left(I_{\beta \beta} -   2 \frac{I_{\beta \alpha_1} I_{\beta \alpha_1}^\top}{I_{\alpha_1 \alpha_1}}  \right)  }{3 M^2   \left(   2   
 +  M \right) }
\\ \iff \qquad  3 M  \pi_{\text{rare}}  \left(   M   
 +  \frac{M^2}{2} \right) \leq ~ &   \frac{1}{2}   \lambda_{\text{min}} \left(I_{\beta \beta} - 2  \frac{I_{\beta \alpha_1} I_{\beta \alpha_1}^\top}{I_{\alpha_1 \alpha_1}}  \right)  
 .
 \end{align*}
Substituting this into \eqref{final.cov.res} shows that
\[
 \left \lVert \Cov \left( \lim_{n \to \infty}  \sqrt{n} \left[\boldsymbol{\beta} -  \boldsymbol{\hat{\beta}}^{\text{prop. odds}} \right] \right) \right\rVert_{\text{op}} \leq  \frac{2}{\lambda_{\text{min}} \left(I_{\beta \beta} -  2\frac{I_{\beta \alpha_1} I_{\beta \alpha_1}^\top}{I_{\alpha_1 \alpha_1}}  \right)}
 .
 \]

\subsection{Proof of Lemma \ref{lem.ineq.final.thm}}\label{sec.lem.ineq.final.thm}

We omit ${\bf x}$ in integrals, when it is clear (e.g.,
$\pi_1$ stands for $\pi_1({\bf x})$).

  \noindent {\bf Proof of \eqref{a}:}
  
  From \eqref{alpha.beta.block} and \eqref{m.def}, using \( \pi_1(\boldsymbol{x}) +  \pi_2(\boldsymbol{x}) +  \pi_3(\boldsymbol{x}) = 1\) we have

  \begin{align}
    I_{\alpha_2 \alpha_2}&= M_2   \nonumber
    \\ & = \int \left[ \pi_3(\boldsymbol{x}) (1 - \pi_3 (\boldsymbol{x}) ) \right]^2   \left(     \frac{1 }{\pi_{2}(\boldsymbol{x}) }   +  \frac{1}{\pi_{3}(\boldsymbol{x}) }  \right) \ d F(\boldsymbol{x})   \nonumber
    \\ & = \int \left[ \pi_3(\boldsymbol{x}) (1 - \pi_3 (\boldsymbol{x}) ) \right]^2   \left(     \frac{\pi_2(\boldsymbol{x}) + \pi_3(\boldsymbol{x}) }{\pi_{2}(\boldsymbol{x}) \pi_{3}(\boldsymbol{x}) }  \right) \ d F(\boldsymbol{x})   \nonumber
    \\ & = \int \pi_3(\boldsymbol{x}) [1 - \pi_3 (\boldsymbol{x}) ]^2   \left(     \frac{1 - \pi_1(\boldsymbol{x}) }{\pi_{2}(\boldsymbol{x})  }  \right) \ d F(\boldsymbol{x})  \label{m2.id}
    \\
                         &\le \pi_{\text{rare}}\int(1-\pi_3)^2(1-\pi_1)/\pi_2dF  \nonumber\\
                         &= \pi_{\text{rare}}\int\pi_1  (1-\pi_1)^2 \frac{(1-\pi_3)^2}{\pi_2}\left[\frac{1}{\pi_1(1-\pi_1)}\right]dF  \nonumber
                         \\ &\stackrel{(a)}{\leq} \pi_{\text{rare}}\int\pi_1 (1-\pi_1)^2 \frac{1-\pi_3}{\pi_2} \left[\frac{1}{(1/2-\Delta)(1/2+\Delta)}\right]dF  \nonumber
                         \\ & = \pi_{\text{rare}} \int \pi_1^2 (1 - \pi_1)^2 \frac{\pi_1 + \pi_2}{\pi_1 \pi_2}  \frac{1}{1/4 - \Delta^2}  \nonumber\\
                          & = \pi_{\text{rare}} \int \pi_1^2 (1 - \pi_1)^2\ \left( \frac{1}{\pi_1} + \frac{1}{\pi_2} \right) \frac{1}{1/4 - \Delta^2}  \nonumber \\
& = \pi_{\text{rare}} M_1 \frac{1}{1/4 - \Delta^2} \nonumber \\
& = \pi_{\text{rare}}I_{\alpha_1\alpha_1}\frac{1}{1/4 - \Delta^2} \nonumber
,
  \end{align}
where in \((a)\) we used the fact that \(t \mapsto 1/[t(1-t)]\) is nonincreasing in \(t\) for \(t \in (0, 1/2]\), so it is largest when \(t\) is small as possible, and by assumption \(\inf_{\boldsymbol{x} \in \mathcal{S}} \{ \pi_1 \wedge 1 -  \pi_1\} \geq 1/2 - \Delta\).

  \noindent {\bf Proof of \eqref{b}:} Using \eqref{m2.id} we have

  \begin{align*}
    I_{\alpha_2 \alpha_2}&= \int \pi_3(\boldsymbol{x}) [1 - \pi_3 (\boldsymbol{x}) ]^2   \left(     \frac{ \pi_2(\boldsymbol{x}) + \pi_3(\boldsymbol{x}) }{\pi_{2}(\boldsymbol{x})  }  \right) \ d F(\boldsymbol{x}) 
\\ &=\int\pi_3(1-\pi_3)^2(1+\pi_3/\pi_2)dF
\\ & \leq \pi_{\text{rare}} \int(1-0)^2  \left(1+\frac{\pi_{\text{rare}}}{1/2-\Delta}\right) dF
.
  \end{align*}

%

   \noindent {\bf Proof of \eqref{e}:} Using \eqref{m2.id} along with \eqref{alpha.block} and \eqref{m.tilde.def} we have
   
   \begin{align*}
I_{\alpha_2 \alpha_2}= ~ & \int \pi_3(\boldsymbol{x}) [1 - \pi_3 (\boldsymbol{x}) ]^2   \left(     \frac{1 - \pi_1(\boldsymbol{x}) }{\pi_{2}(\boldsymbol{x})  }  \right) \ d F(\boldsymbol{x}) 
\\ = ~ & \int \pi_3(\boldsymbol{x}) [1 - \pi_3 (\boldsymbol{x}) ][  1 - \pi_1(\boldsymbol{x}) ]   \left(     \frac{ \pi_1(\boldsymbol{x}) + \pi_2(\boldsymbol{x}) }{\pi_{2}(\boldsymbol{x})  }  \right) \ d F(\boldsymbol{x}) 
\\ = ~ &\int\pi_3(1-\pi_3)(1-\pi_1) \frac{\pi_1}{\pi_2}dF + \int\pi_3(1-\pi_3)(1-\pi_1)dF
\\ = ~ & \tilde{M}_2 + \int\pi_3(1-\pi_3)(1-\pi_1)dF
\\ = ~ &|I_{\alpha_1\alpha_2}| + \int\pi_3(1-\pi_3)(1-\pi_1)dF
\\ \geq ~  & |I_{\alpha_1\alpha_2}|.
\end{align*}

 \noindent {\bf Proof of \eqref{f}:} From \eqref{alpha.beta.block} and \eqref{j.2.j.tilde.3.sum} we have

\begin{align*}
\left \lVert I_{\beta \alpha_2}
\right\rVert_2 & = \left \lVert   \int  \boldsymbol{x} \left[ \pi_2(\boldsymbol{x}) + \pi_{3}(\boldsymbol{x}) \right] \pi_3 (\boldsymbol{x}) [ 1 - \pi_3 (\boldsymbol{x}) ] \ d F(\boldsymbol{x})  \right\rVert_2
\\ & = \left \lVert   \int  \boldsymbol{x} \left[1 - \pi_1(\boldsymbol{x}) \right] \pi_3 (\boldsymbol{x}) [ 1 - \pi_3 (\boldsymbol{x}) ] \ d F(\boldsymbol{x})  \right\rVert_2
\\ &\le\int\| \boldsymbol{x}\|_2 \pi_3 (\boldsymbol{x})[1-\pi_3(\boldsymbol{x})][1-\pi_1(\boldsymbol{x})]dF
\\ &\le M\int\pi_3(\boldsymbol{x})[1-\pi_3(\boldsymbol{x})][1-\pi_1(\boldsymbol{x})]dF
\\ & \leq M \cdot  \int \pi_3(\boldsymbol{x}) [1 - \pi_3 (\boldsymbol{x}) ] [ 1 - \pi_1(\boldsymbol{x})]   \left(   1 +   \frac{ \pi_3(\boldsymbol{x}) }{\pi_{2}(\boldsymbol{x})  }  \right) \ d F(\boldsymbol{x}) 
\\ & = M \cdot  \int \pi_3(\boldsymbol{x}) [1 - \pi_3 (\boldsymbol{x}) ] [ 1 - \pi_1(\boldsymbol{x})]   \left(     \frac{\pi_2(\boldsymbol{x}) +   \pi_3 (\boldsymbol{x}) }{\pi_{2}(\boldsymbol{x})  }  \right) \ d F(\boldsymbol{x}) 
\\ & = M \cdot  \int \pi_3(\boldsymbol{x}) [1 - \pi_3 (\boldsymbol{x}) ] [ 1 - \pi_1(\boldsymbol{x})]   \left(     \frac{ 1 - \pi_1(\boldsymbol{x}) }{\pi_{2}(\boldsymbol{x})  }  \right) \ d F(\boldsymbol{x}) 
\\ & \leq M \cdot  \int \pi_3(\boldsymbol{x}) [1 - \pi_3 (\boldsymbol{x}) ]^2   \left(     \frac{1 - \pi_1(\boldsymbol{x}) }{\pi_{2}(\boldsymbol{x})  }  \right) \ d F(\boldsymbol{x}) 
\\  & =  M \cdot
 I_{\alpha_2 \alpha_2}
 ,
 \end{align*}
where in the last inequality we used \(\pi_3(\boldsymbol{x}) \leq \pi_{\text{rare}} < 1/2 - \Delta \leq \pi_1(\boldsymbol{x})\) for all \(\boldsymbol{x} \in \mathcal{S}\) and in the last step we used \eqref{m2.id}.
 
  \noindent {\bf Proof of \eqref{g}:} This follows from \eqref{b} and \eqref{f}.
  
  \noindent {\bf Proof of \eqref{h}:}  Using from \eqref{alpha.beta.block}
\begin{align*}
 \lVert I_{\beta \alpha_1} \rVert_2 = ~ & \left\lVert \int \boldsymbol{x} \pi_1^2(\boldsymbol{x})[1 - \pi_1(\boldsymbol{x})] \ dF(\boldsymbol{x}) +  \int \boldsymbol{x} \pi_2(\boldsymbol{x}) \pi_1(\boldsymbol{x})[1 - \pi_1(\boldsymbol{x})] \ dF(\boldsymbol{x}) \right \rVert_2
\\ = ~ &  \left\lVert \int \boldsymbol{x} [\pi_1(\boldsymbol{x}) + \pi_2(\boldsymbol{x})] \pi_1(\boldsymbol{x})[1 - \pi_1(\boldsymbol{x})] \ dF(\boldsymbol{x})   \right \rVert_2
\\ \leq ~ & \int   \left\lVert  \boldsymbol{x}   \right \rVert_2 [\pi_1(\boldsymbol{x}) + \pi_2(\boldsymbol{x})] \pi_1(\boldsymbol{x})[1 - \pi_1(\boldsymbol{x})] \ dF(\boldsymbol{x})  
\\ \leq ~ & M \int  [\pi_1(\boldsymbol{x}) + \pi_2(\boldsymbol{x})] \pi_1(\boldsymbol{x})[1 - \pi_1(\boldsymbol{x})] \ dF(\boldsymbol{x})  
\end{align*}
and from \eqref{alpha.block}
\begin{align}
I_{\alpha_1 \alpha_1} = M_1 = ~ & \int (\pi_1(\boldsymbol{x})[1 - \pi_1(\boldsymbol{x})])^2 \left( \frac{1}{\pi_1(\boldsymbol{x})} + \frac{1}{\pi_2(\boldsymbol{x})} \right) \ d F(\boldsymbol{x}) \nonumber
\\ = ~ & \int (\pi_1(\boldsymbol{x})[1 - \pi_1(\boldsymbol{x})])^2  \frac{\pi_1(\boldsymbol{x}) + \pi_2(\boldsymbol{x})}{\pi_1(\boldsymbol{x}) \pi_2(\boldsymbol{x})} \ d F(\boldsymbol{x})  \nonumber
\\ = ~ & \int \pi_1(\boldsymbol{x})[1 - \pi_1(\boldsymbol{x})] [ \pi_1(\boldsymbol{x}) + \pi_2(\boldsymbol{x})]  \frac{1 - \pi_1(\boldsymbol{x})}{\pi_2(\boldsymbol{x})} \ d F(\boldsymbol{x}) \nonumber
\\ \geq ~ & \int \pi_1(\boldsymbol{x})[1 - \pi_1(\boldsymbol{x})] [ \pi_1(\boldsymbol{x}) + \pi_2(\boldsymbol{x})]   \ d F(\boldsymbol{x}) \nonumber 
,
\end{align}
we have
\[
\frac{ \lVert I_{\beta \alpha_1} \rVert_2^2}{ I_{\alpha_1 \alpha_1}} \leq  M^2 \int  [\pi_1(\boldsymbol{x}) + \pi_2(\boldsymbol{x})] \pi_1(\boldsymbol{x})[1 - \pi_1(\boldsymbol{x})] \ dF(\boldsymbol{x})  \leq     M^2 \int \frac{1}{4} [\pi_1(\boldsymbol{x}) + \pi_2(\boldsymbol{x})]  \ dF(\boldsymbol{x})  \leq \frac{M^2}{4},
\]
where in the second-to-last step we used that \(t \mapsto t(1-t) \leq 1/4\) for all \(t \in [0,1]\). 

\section{Proof of Lemma \ref{asym.matrix}}\label{lemmas.sec}

First we will calculate the Fisher information matrices, then we will show the convergence results. 

Since the logistic regression model can be considered a special case of the proportional odds model with \(K = 2\) categories, we will mostly focus our calculations on the proportional odds model.

\subsection{Calculating the Log Likelihood and Gradients}

In the proportional odds model, the likelihood can be expressed as 
\begin{align}
& \prod_{i=1}^n \prod_{k=1}^K \left( \frac{1}{1 + \exp\left\{ -(\alpha_k + \boldsymbol{\beta}^\top \boldsymbol{x}_{i} ) \right\}} - \frac{1}{1 + \exp\left\{ -(\alpha_{k-1} + \boldsymbol{\beta}^\top \boldsymbol{x}_{i} )\right\}} \right)^{ \mathbbm{1}\{y_i = k\} } \label{prop.odds.likelihood}
,
\end{align}
so the log likelihood can be expressed as
\begin{align*}
\mathcal{L}^{\text{prop. odds}}(\boldsymbol{\alpha}, \boldsymbol{\beta})  & = \sum_{i=1}^n \sum_{k=1}^K \mathbbm{1}\{y_i = k\} \log \left( \frac{1}{1 + \exp\left\{ -(\alpha_k + \boldsymbol{\beta}^\top \boldsymbol{x}_{i} ) \right\}} - \frac{1}{1 + \exp\left\{ -(\alpha_{k-1} + \boldsymbol{\beta}^\top \boldsymbol{x}_{i} )\right\}} \right)
\\ & = \sum_{i=1}^n \sum_{k=1}^K \mathbbm{1}\{y_i = k\} \log \left(p_{ik} - p_{i,k-1} \right)
\\ & = \sum_{i=1}^n \sum_{k=1}^K \mathbbm{1}\{y_i = k\} \log \left(\pi_{ik}  \right),
\end{align*}
where
\begin{align*}
p_{ik} & :=  p_k \left( \boldsymbol{x}_{i} \right) = \mathbb{P}(y_i \leq k \mid \boldsymbol{x}_i),
\\ \pi_{ik}&  := \mathbb{P}(y_i = k \mid \boldsymbol{x}_i ) = p_{ik} - p_{i,k-1} = p_k \left( \boldsymbol{x}_{i} \right)  -  p_{k-1} \left( \boldsymbol{x}_{i} \right) ,
\end{align*}
\(\alpha_0 := - \infty\), and \( \alpha_{K} := \infty\) (while \(\boldsymbol{\alpha} := (\alpha_1, \ldots, \alpha_{K-1})^\top\) are parameters to be estimated). Using
\[
\pderiv{}{t} \frac{1}{1 + \exp\{-t\}}  = \frac{1}{1 + \exp\{-t\}} \left(1 - \frac{1}{1 + \exp\{-t\}}\right)
,
\]
the gradient has entries corresponding to \(\boldsymbol{\beta}\) equal to
\begin{align*}
\nabla_{\boldsymbol{\beta}} \mathcal{L}^{\text{prop. odds}}(\boldsymbol{\alpha}, \boldsymbol{\beta}) & = \sum_{i=1}^n \sum_{k=1}^K \mathbbm{1}\{y_i = k\} \boldsymbol{x}_{i} \left( \frac{p_{ik} (1 - p_{ik}) - p_{i,k-1} (1 - p_{i,k-1})}{p_{ik} - p_{i,k-1}}\right)
\\ & = \sum_{i=1}^n \sum_{k=1}^K \mathbbm{1}\{y_i = k\} \boldsymbol{x}_{i} \left( \frac{p_{ik}  - p_{i,k-1} - \left( p_{ik}^2 -  p_{i,k-1}^2  \right) }{p_{ik} - p_{i,k-1}}\right)
\\ & = \sum_{i=1}^n \sum_{k=1}^K \mathbbm{1}\{y_i = k\} \boldsymbol{x}_{i}  \left( 1 - p_{ik} - p_{i,k-1} \right),
\end{align*}
and using
\begin{align*}
\pderiv{}{\alpha_k} \pi_{ik} & = \pderiv{}{\alpha_k} \left(p_{ik} - p_{i,k-1} \right)= p_{ik}(1 - p_{ik}) \qquad \text{and}
\\  \pderiv{}{\alpha_k} \pi_{i,k+1} & = \pderiv{}{\alpha_k} \left(p_{i,k+1} - p_{ik} \right)= -p_{ik}(1 - p_{ik}), 
\end{align*}
the entries corresponding to \(\boldsymbol{\alpha}\) equal
\begin{align*}
\pderiv{}{\alpha_k}  \mathcal{L}^{\text{prop. odds}}(\boldsymbol{\alpha}, \boldsymbol{\beta}) &  = \sum_{i=1}^n \pderiv{}{\alpha_k} \left(   \mathbbm{1}\{y_i = k\}   \log \left( \pi_{ik} \right) + \mathbbm{1}\{y_i = k + 1\}   \log \left(\pi_{i,k+1} \right) \right)
\\ &  = \sum_{i=1}^n \left( \mathbbm{1}\{y_i = k\}  \frac{p_{ik}(1 - p_{ik})}{\pi_{ik}} -  \mathbbm{1}\{y_i = k + 1\}  \frac{p_{ik}(1 - p_{ik})}{\pi_{i,k+1}}  \right)
\\ \implies \qquad \nabla_{\boldsymbol{\alpha}} \mathcal{L}^{\text{prop. odds}}(\boldsymbol{\alpha}, \boldsymbol{\beta}) &  = \sum_{i=1}^n  \boldsymbol{e}_{k}p_{ik}(1 - p_{ik}) \left(   \frac{\mathbbm{1}\{y_i = k\}}{\pi_{ik}}  -   \frac{ \mathbbm{1}\{y_i = k + 1\}}{\pi_{i,k+1}}  \right) ,
\end{align*}
where \(\boldsymbol{e}_k \in \{0, 1\}^{K-1}\) has the \(k^\text{th}\) entry equal to 1 and the rest equal to 0. (Note that since \(\alpha_0 = -\infty\), \(p_{i0} = 0\), and similarly \(p_{iK} = 1\) as expected.)

\subsection{Calculating the Hessian Matrices}\label{hessian.calc}

The entries of the Hessian corresponding to \(\boldsymbol{\beta}\), \(H_{\beta \beta}^{\text{prop. odds}} \) are
\begin{align*}
\nabla^2_{\boldsymbol{\beta}} \mathcal{L}^{\text{prop. odds}}(\boldsymbol{\alpha}, \boldsymbol{\beta}) = ~ & -\sum_{i=1}^n \sum_{k=1}^K \mathbbm{1}\{y_i = k\} \boldsymbol{x}_{i} \boldsymbol{x}_{i}^\top \left[ p_{ik} (1 - p_{ik}) + p_{i,k-1} \left( 1- p_{i,k-1} \right) \right]
\\ = ~ & -\sum_{i=1}^n  \boldsymbol{x}_{i} \boldsymbol{x}_{i}^\top \sum_{k=1}^{K}   \left( \mathbbm{1}\{y_i = k + 1\} +  \mathbbm{1}\{y_i = k\}\right)    \Var \left( \mathbbm{1}\{y_i \leq k\} \right) .
\end{align*}
Using
\[
\pderiv{}{\alpha_k} p_{ik}(1 -p_{ik})  =  p_{ik}(1 -p_{ik}) -2  p_{ik}^2(1 -p_{ik})  =  p_{ik}(1 -p_{ik}) \left(1 -2  p_{ik}\right),
\]
the entries corresponding to the \(\boldsymbol{\alpha}\) block of the Hessian \(H_{\alpha \alpha}^{\text{prop. odds}} \) are as follows:
\begin{align*}
\pderiv{^2}{\alpha_k^2} \mathcal{L}^{\text{prop. odds}}(\boldsymbol{\alpha}, \boldsymbol{\beta}) = ~ &  \sum_{i=1}^n  \pderiv{}{\alpha_k}  \left( p_{ik}(1 - p_{ik}) \left(   \frac{\mathbbm{1}\{y_i = k\}}{\pi_{ik}}  -   \frac{ \mathbbm{1}\{y_i = k + 1\}}{\pi_{i,k+1}}  \right)  \right)
\\ = ~ &  \sum_{i=1}^n   \bigg(  \mathbbm{1}\{y_i = k\} \frac{ \pi_{ik} p_{ik}(1 -p_{ik}) \left(1 -2  p_{ik}\right) - p_{ik}^2(1 - p_{ik})^2}{\pi_{ik}^2} 
\\ &  -    \mathbbm{1}\{y_i = k + 1\}\frac{\pi_{i,k+1} p_{ik}(1 -p_{ik}) \left(1 -2  p_{ik}\right) + p_{ik}^2(1 - p_{ik})^2}{\pi_{i,k+1}^2}  \bigg)
\\ = ~ &  \sum_{i=1}^n   p_{ik}(1 -p_{ik}) \bigg(  \mathbbm{1}\{y_i = k\} \frac{ \pi_{ik} \left(1 -2  p_{ik}\right) - p_{ik}(1 - p_{ik})}{\pi_{ik}^2} 
\\ &  -    \mathbbm{1}\{y_i = k + 1\}\frac{\pi_{i,k+1} \left(1 -2  p_{ik}\right) + p_{ik}(1 - p_{ik})}{\pi_{i,k+1}^2}  \bigg),
\\ \pderiv{^2}{\alpha_k \partial \alpha_{k-1} } \mathcal{L}^{\text{prop. odds}}(\boldsymbol{\alpha}, \boldsymbol{\beta}) = ~ &  \sum_{i=1}^n p_{ik}(1 - p_{ik})  \pderiv{}{\alpha_{k-1}}  \left(    \frac{\mathbbm{1}\{y_i = k\}}{\pi_{ik}}  -   \frac{ \mathbbm{1}\{y_i = k + 1\}}{\pi_{i,k+1}}  \right) 
\\ = ~ &  - \sum_{i=1}^n  \mathbbm{1}\{y_i = k\}p_{ik}(1 - p_{ik})   \left(    \frac{-p_{i,k-1}(1 - p_{i,k-1})}{\pi_{ik}^2}  \right) 
\\ = ~ &  \sum_{i=1}^n  \mathbbm{1}\{y_i = k\}     \cdot \frac{p_{ik}(1 - p_{ik}) p_{i,k-1}(1 - p_{i,k-1})}{\pi_{ik}^2}  , \qquad \text{and}
\\ \pderiv{^2}{\alpha_k \partial \alpha_{k'} } \mathcal{L}^{\text{prop. odds}}(\boldsymbol{\alpha}, \boldsymbol{\beta}) = ~ &  0, \qquad \text{all other } k \neq k',
\end{align*}
where \(\pi_{i,K+1} = 0\). (Note that \( \pderiv{^2}{\alpha_k \partial \alpha_{k+1} } \mathcal{L}^{\text{prop. odds}}(\boldsymbol{\alpha}, \boldsymbol{\beta}) \) is nonzero as well, but it matches the expression for \(  \pderiv{^2}{\alpha_{\tilde{k}} \partial \alpha_{\tilde{k}-1} } \mathcal{L}^{\text{prop. odds}}(\boldsymbol{\alpha}, \boldsymbol{\beta})  \) with \(\tilde{k} := k + 1\).) 

Finally, the entries corresponding to the \(\boldsymbol{\alpha}\) and \(\boldsymbol{\beta}\) mixed blocks \(H_{\alpha \beta}^{\text{prop. odds}} \) of the Hessian are
\begin{align*}
\pderiv{}{\alpha_k} \nabla_{\boldsymbol{\beta}} \mathcal{L}^{\text{prop. odds}}(\boldsymbol{\alpha}, \boldsymbol{\beta})  = ~ & \sum_{i=1}^n  \bigg( \mathbbm{1}\{y_i = k\} \boldsymbol{x}_{i}  \pderiv{}{\alpha_k}   \left( 1 - p_{ik} - p_{i,k-1} \right) 
\\ &  + \mathbbm{1}\{y_i = k + 1\} \boldsymbol{x}_{i}  \pderiv{}{\alpha_k}   \left( 1 - p_{i,k+1} - p_{i,k} \right)  \bigg)
\\   = ~ &  -\sum_{i=1}^n \boldsymbol{x}_{i}  p_{ik}\left( 1 - p_{ik} \right) \left(   \mathbbm{1}\{y_i = k\}   +   \mathbbm{1}\{y_i = k + 1\}   \right) , ~  k \in [K-1].
\end{align*}
\subsection{Calculation of the Fisher Information Matrices}

Now we find the Fisher information matrices
\[
I^{\text{prop. odds}} (\boldsymbol{\alpha}, \boldsymbol{\beta})= \begin{pmatrix} I_{\alpha \alpha}^{\text{prop. odds}} & I_{\beta \alpha}^{\text{prop. odds}}
\\ \left( I_{\beta \alpha}^{\text{prop. odds}}\right)^\top & I_{\beta \beta}^{\text{prop. odds}}
\end{pmatrix}
\]
and
\[
I^{\text{logistic}} (\boldsymbol{\alpha}, \boldsymbol{\beta})= \begin{pmatrix} I_{\alpha \alpha}^{\text{logistic}} & \left( I_{\beta \alpha}^{\text{logistic}} \right)^\top
\\ I_{\beta \alpha}^{\text{logistic}} & I_{\beta \beta}^{\text{logistic}}
\end{pmatrix}
\]
 by taking the negative expectation of each block (using a single observation). For the \(\alpha\) block, we have
\begin{align*}
-\E \left[ \pderiv{^2}{\alpha_k^2} \mathcal{L}^{\text{prop. odds}}(\boldsymbol{\alpha}, \boldsymbol{\beta})  \right] = ~ &  -\E \bigg[      p_{ik}(1 -p_{ik}) \bigg(  \mathbbm{1}\{y_i = k\} \frac{ \pi_{ik} \left(1 -2  p_{ik}\right) - p_{ik}(1 - p_{ik})}{\pi_{ik}^2} 
\\ &  -    \mathbbm{1}\{y_i = k + 1\}\frac{\pi_{i,k+1} \left(1 -2  p_{ik}\right) + p_{ik}(1 - p_{ik})}{\pi_{i,k+1}^2}  \bigg)
  \bigg]
\\ = ~ &  -\E \bigg[      p_{ik}(1 -p_{ik}) \bigg(  \frac{ \pi_{ik} \left(1 -2  p_{ik}\right) - p_{ik}(1 - p_{ik})}{\pi_{ik}} 
\\ &  -    \frac{\pi_{i,k+1} \left(1 -2  p_{ik}\right) + p_{ik}(1 - p_{ik})}{\pi_{i,k+1}}  \bigg)
\bigg]
\\ = ~ &  \E \bigg[    p_{ik}^2(1 - p_{ik})^2 \left(     \frac{1 }{\pi_{ik}}   +  \frac{1}{\pi_{i,k+1}}  \right)
\bigg]
\\ = ~ &  M_k, \qquad k \in [K-1],
\\ -\E \left[  \pderiv{^2}{\alpha_k \partial \alpha_{k-1} } \mathcal{L}^{\text{prop. odds}}(\boldsymbol{\alpha}, \boldsymbol{\beta})  \right] = ~ & -\E \left[    \mathbbm{1}\{y_i = k\}     \cdot \frac{p_{ik}(1 - p_{ik}) p_{i,k-1}(1 - p_{i,k-1})}{\pi_{ik}^2} \right]
\\ = ~ & -\E \left[   \frac{p_{ik}(1 - p_{ik}) p_{i,k-1}(1 - p_{i,k-1})}{\pi_{ik}} \right]
\\ = ~ & - \tilde{M}_k, \qquad k \in \{2, \ldots, K-1\}, \qquad \text{and}
\\  \E \left[ \pderiv{^2}{\alpha_k \partial \alpha_{k'} } \mathcal{L}^{\text{prop. odds}}(\boldsymbol{\alpha}, \boldsymbol{\beta}) \right] = ~ &  0, \qquad \text{all other } k \neq k',
\end{align*}
where we used the definitions of \(M_k\) and \(\tilde{M}_k\) in \eqref{m.def} and \eqref{m.tilde.def}. Therefore
 \(I_{\alpha \alpha}^{\text{prop. odds}}(\boldsymbol{\alpha}, \boldsymbol{\beta}) \in \mathbb{R}^{(K-1) \times (K-1)}\) has tridiagonal form
\[
I_{\alpha \alpha}^{\text{prop. odds}}(\boldsymbol{\alpha}, \boldsymbol{\beta}) = \begin{pmatrix}
M_1 & -\tilde{M}_2 & 0   & \cdots   & 0  & 0
\\ -\tilde{M}_2 & M_2 &- \tilde{M}_3   & \cdots   & 0  & 0
\\ 0 & -\tilde{M}_3 & M_3   & \cdots   & 0  & 0
\\ \vdots & \vdots & \vdots & \ddots  & \vdots & \vdots
\\ 0 & 0 & 0 &  \cdots & M_{K-2}  &- \tilde{M}_{K-1}
\\ 0 & 0 & 0 &  \cdots   & -\tilde{M}_{K-1}  & M_{K-1}
\end{pmatrix},
\]
verifying \eqref{alpha.block}. Observe that in the case of logistic regression (\(K=2\)), we have
\begin{align}
I_{\alpha \alpha}^{\text{logistic}}(\alpha_1, \boldsymbol{\beta}) = M_1 & = \int \left[ p_{1}(\boldsymbol{x}) (1 - p_{1} (\boldsymbol{x}) ) \right]^2   \left(     \frac{1 }{\pi_{1}(\boldsymbol{x}) }   +  \frac{1}{\pi_{2}(\boldsymbol{x}) }  \right) \ d F(\boldsymbol{x}) \nonumber
\\ & = \int \left[ \pi_{1}(\boldsymbol{x}) (1 - \pi_{1} (\boldsymbol{x}) ) \right]^2   \left(     \frac{1 }{\pi_{1}(\boldsymbol{x}) }   +  \frac{1}{1 - \pi_{1}(\boldsymbol{x}) }  \right) \ d F(\boldsymbol{x}) \nonumber
\\ & = \int \left[ \pi_{1}(\boldsymbol{x}) (1 - \pi_{1} (\boldsymbol{x}) ) \right]^2  \cdot  \frac{1}{ \pi_{1}(\boldsymbol{x})\left(1 - \pi_{1}(\boldsymbol{x}) \right)}  \ d F(\boldsymbol{x}) \nonumber
\\ & = \int \pi_{1}(\boldsymbol{x}) (1 - \pi_{1} (\boldsymbol{x}) )   \ d F(\boldsymbol{x}) \nonumber
\\ & =  M_1^{\text{logistic}}  , \nonumber 
\end{align}
which is \eqref{log.reg.alpha.info}. (This is equivalent to a logistic regression predicting whether \(y_i = 1\) regardless of \(K\).)
\subsection{Mixed block}
For the \(\alpha\)-\(\beta\) mixed block, we have for all \( k \in \{1, \ldots, K-1\}\)
\begin{align*}
 -\E \left[ \pderiv{^2}{\alpha_k \partial \boldsymbol{\beta}} \mathcal{L}^{\text{prop. odds}}(\boldsymbol{\alpha}, \boldsymbol{\beta})  \right]  = ~  & \E \left[     \boldsymbol{X}_1   p_k( \boldsymbol{X}_1) (1 - p_k( \boldsymbol{X}_1)) (\pi_{k} \left( \boldsymbol{X}_1\right)  + \pi_{k+1}\left( \boldsymbol{X}_1\right) ) \right].
\end{align*}
Then
\begin{align*}
 -\E \left[ \pderiv{^2}{\alpha_k \partial \boldsymbol{\beta}} \mathcal{L}^{\text{prop. odds}}(\boldsymbol{\alpha}, \boldsymbol{\beta})  \right]  = ~ & J_k^{\boldsymbol{x}} + \tilde{J}_{k+1}^{\boldsymbol{x}}, \qquad k \in \{1, \ldots, K-1\},
\end{align*}
so \(I_{\beta \alpha}^{\text{prop. odds}}(\boldsymbol{\alpha}, \boldsymbol{\beta}) \in \mathbb{R}^{(K-1) \times p}\) has the form
\[
I_{\beta \alpha}^{\text{prop. odds}}(\boldsymbol{\alpha}, \boldsymbol{\beta}) = \begin{pmatrix}
J_1^{\boldsymbol{x}} + \tilde{J}_{2}^{\boldsymbol{x}} 
\\  J_2^{\boldsymbol{x}} + \tilde{J}_{3}^{\boldsymbol{x}}
\\ \vdots
\\  J_{K-1}^{\boldsymbol{x}} + \tilde{J}_{K}^{\boldsymbol{x}}
\end{pmatrix},
\]
as in \eqref{alpha.beta.block}. In the case of logistic regression predicting whether \(y_i = 1\),
\begin{align*}
I_{\beta \alpha}^{\text{logistic}}(\alpha_1, \boldsymbol{\beta}) & = J_1^{\boldsymbol{x}\text{; logistic}}+\tilde{J}_2^{\boldsymbol{x}\text{; logistic}} 
 =  \int \boldsymbol{x} \pi_1 (\boldsymbol{x}) [ 1 - \pi_1 (\boldsymbol{x}) ] \ d F(\boldsymbol{x})    ,
\end{align*}
verifying \eqref{log.reg.alpha.beta.info}.
\subsection{Beta block}
Finally, for the \(\beta\) block, we have
\begin{align*}
I_{\beta \beta}^{\text{prop. odds}}(\boldsymbol{\alpha}, \boldsymbol{\beta}) & = - \E \left[ \nabla^2_{\boldsymbol{\beta}} \mathcal{L}^{\text{prop. odds}}(\boldsymbol{\alpha}, \boldsymbol{\beta})  \right] 
\\ & =  \E \left[ \E \left[  \boldsymbol{X}_1 \boldsymbol{X}_{1}^\top \sum_{k=1}^{K}   \left( \mathbbm{1}\{y_i = k + 1\} +  \mathbbm{1}\{y_i = k\}\right)    p_k(\boldsymbol{X}_1) (1 - p_k(\boldsymbol{X}_1))   \mid \boldsymbol{X} \right] \right]
\\  & =  \E \left[   \boldsymbol{X}_1 \boldsymbol{X}_{1}^\top \sum_{k=1}^{K}   \left(  \pi_{k + 1} \left(  \boldsymbol{X}_{1 \cdot}\right)  + \pi_{k}  \left(  \boldsymbol{X}_{1 \cdot}\right)  \right)   p_k(\boldsymbol{X}_1) (1 - p_k(\boldsymbol{X}_1))    \right]
\\ 
& = \sum_{k=1}^K \left( J_k^{\boldsymbol{x} \boldsymbol{x}^\top} +  \tilde{J}_k^{\boldsymbol{x} \boldsymbol{x}^\top} \right),
\end{align*}
as in \eqref{beta.block} (recall that \(J_K^{\boldsymbol{x} \boldsymbol{x}^\top} = 0\) and \(\tilde{J}_1^{\boldsymbol{x} \boldsymbol{x}^\top} = 0\) for all \(\boldsymbol{x}\)).
In the case of logistic regression predicting whether \(y_i = 1\) (for any \(K\)),
\begin{align}
I_{\beta \beta}^{\text{logistic}}(\alpha_1, \boldsymbol{\beta}) & =  J_1^{\boldsymbol{x} \boldsymbol{x}^\top\text{; logistic}}  + \tilde{J}_2^{\boldsymbol{x} \boldsymbol{x}^\top\text{; logistic}}   =   \int \boldsymbol{x} \boldsymbol{x}^\top \pi_1 (\boldsymbol{x}) [ 1 - \pi_1 (\boldsymbol{x}) ] \ d F(\boldsymbol{x})  , \nonumber 
\end{align}
matching \eqref{log.reg.beta.info}.

\subsection{Verifying the Asymptotic Distribution of Each Estimator}

By standard maximum likelihood theory (for example, the theorem on p. 145 of \citet[Section 4.2.2, multidimensional generalization on p. 148]{serfling1980}, or Theorem 13.1 in \citealt{wooldridge2010econometric}), the result holds if we can verify three regularity conditions. Before we do, we will define the set \(\boldsymbol{\Theta}\) of feasible parameters \((\boldsymbol{\alpha}, \boldsymbol{\beta})\). Since
\[
\alpha_1 < \alpha_2 < \ldots < \alpha_{K-1},
\]
where the strict inequality follows from our assumption that no class has probability 0 for any \(x \in [-1,1]\),
define the set \(\mathcal{A} \subset \mathbb{R}^{K-1}\) of points satisfying this constraint. Then define \(\boldsymbol{\Theta} := \mathcal{A} \times \mathbb{R}^p\), so that \((\boldsymbol{\alpha}, \boldsymbol{\beta}) \in \boldsymbol{\Theta} \). 

Now we state and verify the needed regularity conditions. 

\begin{enumerate}
\item (R1) \textit{The third derivatives of the log likelihood with respect to each parameter \((\boldsymbol{\alpha}, \boldsymbol{\beta})\) exist for all \(\boldsymbol{x} \in \mathcal{S}\).} This condition holds for both the proportional odds model and logistic regression because every entry of the Hessian matrices in Section \eqref{hessian.calc} are differentiable in every parameter for any \(K \geq 2\).

\item (R2) \textit{For each \((\boldsymbol{\alpha}_0, \boldsymbol{\beta}_0) \in \boldsymbol{\Theta}\), for all \((\boldsymbol{\alpha}, \boldsymbol{\beta})\) in a neighborhood of \((\boldsymbol{\alpha}_0, \boldsymbol{\beta}_0)\) it holds that (i) the element-wise absolute values of the gradients and Hessians of the likelihood are bounded by functions of \(\boldsymbol{x}\) with finite integral over \(\boldsymbol{x} \in \mathcal{S}\), and (ii) the element-wise absolute values of the third derivatives of the log likelihood is bounded by a function of \(\boldsymbol{x}\) with finite expectation with respect to \(\boldsymbol{X}\).} Because \(\boldsymbol{X}\) has bounded support, for these integrals and expectations to be finite it is enough for the bounding functions to over \(\mathcal{S}\) to be finite constants---that is, it is enough to find upper bounds on the absolute values of the gradients and Hessians of the likelihoods and the third derivatives of the log likelihoods. The logistic regression likelihood
\begin{align*}
& \prod_{i=1}^n \frac{  \exp \left\{  - \mathbbm{1} \left\{y_i = 1 \right\} (\alpha_1 + \boldsymbol{\beta}^\top \boldsymbol{x}_i ) \right\}  }{  1 + \exp\{-(\alpha_1 + \boldsymbol{\beta}^\top \boldsymbol{x}_i)\}  }
\end{align*}
has continuous second derivatives and therefore its gradient and Hessian both have finite element-wise absolute value on the bounded support. The same is true of the proportional odds likelihood \eqref{prop.odds.likelihood} when all outcomes have positive probability for all \(\boldsymbol{x} \in \mathcal{S}\), that is, \(\alpha_1 < \ldots < \alpha_{K-1}\). Finally, examining again the Hessian matrices in Section \ref{hessian.calc}, we see that they have continuous derivatives in every parameter for any \(K \geq 2\), so the third derivatives of the log likelihoods are bounded for any \((\boldsymbol{\alpha}_0, \boldsymbol{\beta}_0) \in \boldsymbol{\Theta}\) for all \(\boldsymbol{x} \in \mathcal{S}\).

\item (R3) \textit{The Fisher information matrices exist and are finite and positive definite.} One can see that both of the Fisher information matrices are finite entrywise for all \((\boldsymbol{\alpha}, \boldsymbol{\beta}) \in \boldsymbol{\Theta}\) by examining the matrices and noting that the probabilities for all of the classes are strictly greater than 0 over \(\mathcal{S}\) by assumption. To verify the positive definiteness of the Fisher information matrices
\[
-\E \left[ \pderiv{^2}{\boldsymbol{\theta} \boldsymbol{\theta}^\top} \mathcal{L}(\boldsymbol{\theta})  \right] ,
\]
it is enough to show that the log likelihood for each model is strictly concave, which implies that \(\pderiv{^2}{\boldsymbol{\theta} \boldsymbol{\theta}^\top} \mathcal{L}(\boldsymbol{\theta})\) is almost surely negative definite (since the log likelihood is twice differentiable). Strict concavity of the logistic regression log likelihood
\[
\sum_{i=1}^n \left[ - \mathbbm{1} \left\{y_i = 1 \right\} (\alpha_1 + \boldsymbol{\beta}^\top \boldsymbol{x}_i ) - \log \left( 1 + e^{-(\alpha_1 + \boldsymbol{\beta}^\top \boldsymbol{x}_i)} \right) \right]
\]
is easily seen, and \citet{Pratt1981} provides a proof that the log likelihood for the proportional odds model is strictly concave when the intercepts \(\alpha_1, \ldots, \alpha_{K-1}\) are not equal.

\end{enumerate}

Lastly, the asymptotic \(\mathcal{O}(1/n)\) bias is a consequence of standard maximum likelihood theory; see, for example, \citet{Cordeiro1991}.

\section{Proof of Theorem \ref{cons.thm}}\label{cons.thm.proof}

We prove Theorem \ref{cons.thm} in Section \ref{cons.thm.proof.subsec}, and Section \ref{cons.thm.proof.supp.res} contains proofs of the supporting lemmas.

\subsection{Proof of Theorem \ref{cons.thm}}\label{cons.thm.proof.subsec}

The proof will proceed as follows. First we will show that \textsc{PRESTO} is a member of a class of models described by \citet{Ekvall2022}, which means we can bound the estimation error of the parameters of \textsc{PRESTO} in a finite sample under their Theorem 3 once we show that its assumptions are satisfied. Their result depends on the probability of a particular event \(\mathcal{C}_{\kappa, n, p_n(K-1)} \), and in Proposition \ref{min.eigen.lem} we prove a lower bound on \(\mathbb{P} \left( \mathcal{C}_{\kappa, n, p_n(K-1)} \right) \) that tends towards 1 as \(n \to \infty\). This establishes the consistency of \textsc{PRESTO}.

In the notation of \citet{Ekvall2022}, we can express the objective function for the \textsc{PRESTO} estimator from \eqref{ordinal.pen.opt} as \( R \{b(y_i, \boldsymbol{x}_i, \boldsymbol{\theta})\} - R \{a(y_i, \boldsymbol{x}_i, \boldsymbol{\theta})\} + \lambda_n \lVert \boldsymbol{\theta} \rVert_1\), where \(R(\cdot) = F(\cdot)\), the logistic cumulative distribution function; \(\left(a(y_i, \boldsymbol{x}_i, \boldsymbol{\theta}); b(y_i, \boldsymbol{x}_i, \boldsymbol{\theta})\right)^\top =  \boldsymbol{Z}_i^\top \boldsymbol{\theta} + \boldsymbol{m}_i\) where
\begin{align*}
\boldsymbol{Z}_i  :=  ~ & \begin{pmatrix}
\mathbbm{1}\{y_i \geq 2\}  \boldsymbol{x}_i &  \mathbbm{1}\{y_i \leq K - 1 \}   \boldsymbol{x}_i
\\ \mathbbm{1}\{y_i \geq 3\}  \boldsymbol{x}_i & \mathbbm{1}\{2 \leq y_i \leq K - 1 \}  \boldsymbol{x}_i 
\\ \vdots & \vdots
\\  \mathbbm{1}\{y_i = K  \}  \boldsymbol{x}_i  & \mathbbm{1}\{y_i = K - 1\}  \boldsymbol{x}_i 
\end{pmatrix} \in \mathbb{R}^{p_n(K-1) \times 2},
\\ \boldsymbol{\theta} := ~ & \begin{pmatrix}
\boldsymbol{\beta}_1
\\ \boldsymbol{\psi}_2
\\ \vdots
\\ \boldsymbol{\psi}_{K-1}
\end{pmatrix} \in \mathbb{R}^{p_n(K-1)}, \qquad \text{and}
\\ \boldsymbol{m}_i := ~ & \begin{pmatrix}
\begin{cases} -\infty, & y_i = 1,
\\ \alpha_{k-1}, & y_i = k \ (k \geq 2)
\end{cases}
\\ 
\begin{cases} \alpha_k, & y_i = k \ (k < K - 1)
\\ \infty, & y_i = K
\end{cases}
\end{pmatrix} \in \overline{\mathbb{R}}^2,
\end{align*}
where \( \overline{\mathbb{R}} := \mathbb{R} \cup \{-\infty, \infty\}\) denotes the extended real number system; and, as elsewhere in the paper, \(\boldsymbol{\psi}_k = \boldsymbol{\beta}_k - \boldsymbol{\beta}_{k-1}\) for \(k \in \{2, \ldots, K - 1\}\). Observe that this is a special case of model (1) of \citet{Ekvall2022}. Now, since \(\boldsymbol{\Theta} = \mathbb{R}^{p_n(K-1)}\) is open and the standard logistic density \(r(t) = \exp\{-t\} / \left(1 + \exp\{-t\}\right)^2\) is strictly log-concave, strictly positive, and continuously differentiable on \(\mathbb{R}\), assumptions \((a)\) and \((b)\) of Theorem 3 in \citet{Ekvall2022} are satisfied, assumption \(S(C_4)\) is sufficient for assumption \((c)\), and assumption (e) is satisfied for \(C_3 = c_2\).

We now discuss Assumption 1 from \citet{Ekvall2022}. Note that a decision boundary crosses at some \(\boldsymbol{x}  \in \mathcal{S} \) if and only if for some \(k \in \{2, \ldots, K - 1\}\) it holds that
\begin{align*}
F \left(\alpha_{k} + \boldsymbol{\beta}_{k}^\top \boldsymbol{x} \right)  - F \left( \alpha_{k - 1} + \boldsymbol{\beta}_{k - 1}^\top \boldsymbol{x}  \right)  \leq ~ & 0
\\ \iff \qquad b(k, \boldsymbol{x}, \boldsymbol{\theta}) - a(k, \boldsymbol{x}, \boldsymbol{\theta}) \leq ~ & 0
.
\end{align*}
Observe that for \(k = 1\) it holds that \(m_{i1} = -\infty\) and for \(k = K\) it holds that \(m_{i2} = \infty\); that is, an element of \(\boldsymbol{m}_i\) is infinite. So Assumption \(T(C_4)\) is sufficient for it to either hold that \(\boldsymbol{Z}_i^\top \boldsymbol{\theta} + \boldsymbol{m}_i \in E\) where \(E \subseteq \{\boldsymbol{t} \in \mathbb{R}^2: t_1 < t_2\}\) or an element of \(\boldsymbol{m}_i\) is infinite. To satisfy Assumption 1, it only remains to show that there exists such an \(E\) that is compact. Note that \(\lVert \boldsymbol{\theta}_* \rVert_{\infty}\) is bounded under Assumption \(S(s, C_4)\) and \(\lVert \boldsymbol{\theta}_* \rVert_{0} \leq s\) under our sparsity assumption, so \(\lVert \boldsymbol{\theta}_* \rVert_1\) is bounded due to H\"{o}lder's inequality. Lastly \(\boldsymbol{m}_i\) is bounded because \(\max_{k \in \{1, \ldots, K-1\}} \left| \alpha_k \right| \leq C_4\) under Assumption \(T(C_4)\), so we have shown that we can find such an \(E\) that is bounded. Choose one that is closed and we have \(E\) compact by Bolzano-Weierstrass.

%

We have shown that the assumptions of Theorem 3 of \citet{Ekvall2022} are satisfied, so we have that for \(n\) large enough that \(p_n \geq K -1\), with probability at least \(\mathbb{P}\left(\mathcal{C}_{\kappa, n, p_n(K-1)} \right) - [p_n(K-1)]^{-c_3} \geq \mathbb{P}\left(\mathcal{C}_{\kappa, n, p_n(K-1)} \right) - (p_n^2)^{-c_3}  \)
\begin{align}
\left\lVert \boldsymbol{\hat{\theta}}^{\lambda_n} - \boldsymbol{\theta}_* \right\rVert_1 \leq  c_5 \sqrt{\frac{\log( p_n(K-1))}{n} } \leq c_5 \sqrt{\frac{\log( p_n^2 )}{n} }   = c_5 \sqrt{2} \sqrt{ \frac{\log p_n}{n}}, \label{thm3.prob}
\end{align}
where \(c_3\) and \(c_5\) are constants from \citet{Ekvall2022}, and \(\mathcal{C}_{\kappa, n, p_n(K-1)} \) is defined as follows. For a set \(\mathcal{A} \subseteq \{1, \ldots, p_n(K-1)\}\), define \(\boldsymbol{\theta}_{\mathcal{A}}  \in \mathbb{R}^{p_n(K-1)}\) to have \(j^{\text{th}}\) entry
\[
\left( \boldsymbol{\theta}_{\mathcal{A}} \right)_j := \begin{cases}
\boldsymbol{\theta}_j, & j \in \mathcal{A},
\\ 0, & j \notin \mathcal{A},
\end{cases}
\]
and define \(\boldsymbol{\theta}_{\mathcal{A}^c} := \boldsymbol{\theta} -   \boldsymbol{\theta}_{\mathcal{A}}  \). Then for an \(s\)-sparse (in the sense of Assumption \(S(s, c)\)) \(\boldsymbol{\theta}_*\) with support set \(\mathcal{S}\), define
\[
\mathbb{C}(\mathcal{S}) := \left\{ \boldsymbol{\theta} \in \mathbb{R}^{p_n(K-1)}: \left\lVert  \boldsymbol{\theta}_{\mathcal{S}^c} \right\rVert_1 \leq 3   \left\lVert  \boldsymbol{\theta}_{\mathcal{S}} \right\rVert_1 \right\}.
\]
We interpret \(\mathbb{C}(\mathcal{S}) \) to be a set of approximately \(s\)-sparse vectors (the vectors would be exactly \(s\)-sparse if \(\left\lVert  \boldsymbol{\theta}_{\mathcal{S}^c} \right\rVert_1 = 0\)). Then for \(\kappa > 0 \) and \(n \in \mathbb{N}\), define
\[
\mathcal{C}_{\kappa, n, p_n(K-1)}  := \left\{ (\boldsymbol{X}, \boldsymbol{y}):  \inf_{\{\boldsymbol{\theta} \in \mathbb{C}(\mathcal{S}): \lVert \boldsymbol{\theta} \rVert_2 = 1\}} \left\{ \boldsymbol{\theta}^\top \left(   \frac{1}{n} \sum_{i=1}^n  \boldsymbol{Z}_i \boldsymbol{Z}_i^\top \right) \boldsymbol{\theta} \right\} \geq  \kappa \right\}.
\]
If the set of \(\boldsymbol{\theta}\) over which this condition must hold were \(\mathbb{R}^{p_n(K-1)}\), this would be a minimum eigenvalue condition on \( \frac{1}{n} \sum_{i=1}^n  \boldsymbol{Z}_i \boldsymbol{Z}_i^\top \). This condition is sometimes called a \textit{restricted eigenvalue condition} \citep{PeterBickel2009}. We bound \(\mathbb{P}\left(\mathcal{C}_{\kappa, n, p_n(K-1)} \right)\) in Proposition \ref{min.eigen.lem}.

\begin{proposition}\label{min.eigen.lem}
Suppose the assumptions of Theorem \ref{cons.thm} hold. Let 
\[
\pi_{\text{rare,min}} := \inf_{\boldsymbol{x} \in \mathcal{S}, k \in \{1, \ldots, K\}} \left\{ \mathbb{P}(y_i = k \mid \boldsymbol{x}) \right\},
\]
and observe that \(\pi_{\text{rare,min}}  > 0\) under Assumptions \(X(\mathbb{R}^{p_n})\), \(S(s, C_4)\), and \(T(C_4)\). 
Assume \(n\) is large enough so that 
\begin{equation}\label{a.eigen.eq}
 n \pi_{\text{rare, min}} >  \max \left\{2 \left( C \sqrt{p_n} + \sqrt{\frac{a}{\lambda_{\text{min}}^*}} \right)^2, 2 \right\}
 \end{equation}
  for some \(a > 0\) (recall from the statement of Theorem \ref{cons.thm} that we assumed \(\lambda_{\text{min}}^* := \min_{k \in \{1, \ldots, K\}}  \lambda_{\text{min}} \left(  \E \left[  \boldsymbol{x}_i \boldsymbol{x}_i^\top \mid y_i = k\right]  \right)  > b\) for a fixed \(b>0\)). Then 
\[
\mathbb{P}\left(\mathcal{C}_{a \pi_{\text{rare,min}}/4, n, p_n(K-1)} \right) \geq  1 - 2K \exp \left\{ -\frac{1}{2} \pi_{\text{rare,min}}^2 n\right\} -  2K \exp \left\{-c \left( \sqrt{  \frac{ n \pi_{\text{rare,min}}}{2}}  - C \sqrt{p_n} - \sqrt{\frac{a}{\lambda_{\text{min}}^*}} \right)^2 \right\}  
.
\]

\end{proposition}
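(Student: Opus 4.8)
The plan is to prove something slightly stronger than the restricted-eigenvalue condition defining $\mathcal{C}_{\kappa,n,p_n(K-1)}$: an ordinary minimum-eigenvalue lower bound for $\tfrac1n\sum_i \boldsymbol{Z}_i\boldsymbol{Z}_i^\top$ valid over \emph{all} unit $\boldsymbol{\theta}$, so the cone $\mathbb{C}(\mathcal{S})$ plays no role. Writing $\boldsymbol{\theta}=(\boldsymbol{\beta}_1^\top,\boldsymbol{\psi}_2^\top,\dots,\boldsymbol{\psi}_{K-1}^\top)^\top$ and $\boldsymbol{\beta}_m:=\boldsymbol{\beta}_1+\sum_{j=2}^m\boldsymbol{\psi}_j$, a direct computation from the block form of $\boldsymbol{Z}_i$ gives $\boldsymbol{Z}_i^\top\boldsymbol{\theta}=\big(\mathbbm{1}\{y_i\ge 2\}\boldsymbol{x}_i^\top\boldsymbol{\beta}_{y_i-1},\ \mathbbm{1}\{y_i\le K-1\}\boldsymbol{x}_i^\top\boldsymbol{\beta}_{y_i}\big)^\top$. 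Hence $\|\boldsymbol{Z}_i^\top\boldsymbol{\theta}\|_2^2=(\boldsymbol{x}_i^\top\boldsymbol{\beta}_1)^2$ when $y_i=1$; it equals $(\boldsymbol{x}_i^\top\boldsymbol{\beta}_{y_i-1})^2+(\boldsymbol{x}_i^\top\boldsymbol{\beta}_{y_i})^2\ge\tfrac12(\boldsymbol{x}_i^\top\boldsymbol{\psi}_{y_i})^2$ when $2\le y_i\le K-1$ (using $u^2+v^2\ge\tfrac12(u-v)^2$ and $\boldsymbol{\beta}_{y_i}-\boldsymbol{\beta}_{y_i-1}=\boldsymbol{\psi}_{y_i}$); and it is $\ge 0$ when $y_i=K$. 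Summing and grouping by class, with $n_k:=|\{i:y_i=k\}|$ and $\widehat{\boldsymbol{\Sigma}}_k:=n_k^{-1}\sum_{i:y_i=k}\boldsymbol{x}_i\boldsymbol{x}_i^\top$,
\[
\boldsymbol{\theta}^\top\Big(\tfrac1n\sum_i \boldsymbol{Z}_i\boldsymbol{Z}_i^\top\Big)\boldsymbol{\theta}\ \ge\ \frac{n_1}{n}\boldsymbol{\beta}_1^\top\widehat{\boldsymbol{\Sigma}}_1\boldsymbol{\beta}_1+\frac12\sum_{k=2}^{K-1}\frac{n_k}{n}\boldsymbol{\psi}_k^\top\widehat{\boldsymbol{\Sigma}}_k\boldsymbol{\psi}_k\ \ge\ \frac12\Big(\min_{1\le k\le K-1}\tfrac{n_k}{n}\lambda_{\text{min}}(\widehat{\boldsymbol{\Sigma}}_k)\Big)\|\boldsymbol{\theta}\|_2^2 .
\]
So it suffices to show that, with the claimed probability, $n_k\ge n\pi_{\text{rare,min}}/2$ and $\lambda_{\text{min}}(\widehat{\boldsymbol{\Sigma}}_k)\ge a$ for every $k$; then the bracket is $\ge a\pi_{\text{rare,min}}/2$, so $\kappa=a\pi_{\text{rare,min}}/4$ works.

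For the class counts, $\pi_{\text{rare,min}}>0$ because under $X(\mathbb{R}^{p_n})$, $S(s,C_4)$, $T(C_4)$ the map $\boldsymbol{x}\mapsto\mathbb{P}(y_i=k\mid\boldsymbol{x})=F(\alpha_k+\boldsymbol{\beta}_k^\top\boldsymbol{x})-F(\alpha_{k-1}+\boldsymbol{\beta}_{k-1}^\top\boldsymbol{x})$ is continuous and strictly positive on the compact set $\mathcal{S}$ (strict positivity being the ``no crossing in $\mathcal{S}$'' part of $T(C_4)$ at $\boldsymbol{\theta}_*$), hence has a positive minimum. Since $\mathbb{P}(y_i=k)=\mathbb{E}[\mathbb{P}(y_i=k\mid\boldsymbol{x})]\ge\pi_{\text{rare,min}}$, Hoeffding's inequality for the binomial $n_k$ gives $\mathbb{P}(n_k<n\pi_{\text{rare,min}}/2)\le\mathbb{P}(n_k<n\mathbb{P}(y_i=k)/2)\le\exp\{-\tfrac12\mathbb{P}(y_i=k)^2n\}\le\exp\{-\tfrac12\pi_{\text{rare,min}}^2n\}$, and a union bound over $k$ yields the first error term (the factor $2K$ leaving room to spare); moreover \eqref{a.eigen.eq} forces $n\pi_{\text{rare,min}}/2>1$, so $n_k\ge 2$ on this event.

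For the per-class second-moment matrices I would condition on $\boldsymbol{y}$. The counts are $\boldsymbol{y}$-measurable, and conditionally on $\boldsymbol{y}$ the vectors $\{\boldsymbol{x}_i:y_i=k\}$ are i.i.d.\ draws from $\mathcal{L}(\boldsymbol{x}\mid y=k)$, a law on the bounded set $\mathcal{S}$ with second-moment matrix $\boldsymbol{\Sigma}_k$, $\lambda_{\text{min}}(\boldsymbol{\Sigma}_k)\ge\lambda_{\text{min}}^*$. Whitening, $\widehat{\boldsymbol{\Sigma}}_k=n_k^{-1}\boldsymbol{\Sigma}_k^{1/2}\boldsymbol{A}_k^\top\boldsymbol{A}_k\boldsymbol{\Sigma}_k^{1/2}$ with $\boldsymbol{A}_k$ having i.i.d.\ isotropic bounded rows $\boldsymbol{\Sigma}_k^{-1/2}\boldsymbol{x}_i$, so $\lambda_{\text{min}}(\widehat{\boldsymbol{\Sigma}}_k)\ge(\lambda_{\text{min}}^*/n_k)\,s_{\text{min}}(\boldsymbol{A}_k)^2$. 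A non-asymptotic lower tail bound for the smallest singular value of a matrix with i.i.d.\ light-tailed isotropic rows (this is where boundedness of $\mathcal{S}$ is essential), of the form $s_{\text{min}}(\boldsymbol{A}_k)\ge\sqrt{n_k}-C\sqrt{p_n}-v$ with probability $\ge 1-2e^{-cv^2}$, combined with $n_k\ge n\pi_{\text{rare,min}}/2$ and \eqref{a.eigen.eq} (which makes $\sqrt{n\pi_{\text{rare,min}}/2}-C\sqrt{p_n}-\sqrt{a/\lambda_{\text{min}}^*}$ positive and the deviation monotone in $n_k$), yields $\lambda_{\text{min}}(\widehat{\boldsymbol{\Sigma}}_k)\ge a$ conditionally on $\boldsymbol{y}$ with probability $\ge 1-2\exp\{-c(\sqrt{n\pi_{\text{rare,min}}/2}-C\sqrt{p_n}-\sqrt{a/\lambda_{\text{min}}^*})^2\}$. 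A union bound over $k$, intersection with the count event, and integration over $\boldsymbol{y}$ then give the second error term; the growth bound $p_n\le C_1n^{C_2}$, $C_2\in(0,1)$, ensures $\sqrt{p_n}=o(\sqrt n)$, so \eqref{a.eigen.eq} holds for large $n$ and the bound is non-vacuous.

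The main obstacle is this matrix-concentration step: producing a minimum-eigenvalue deviation inequality for the sample second-moment matrix of i.i.d.\ bounded vectors in the growing-dimension regime in \emph{exactly} the quantitative form above, and threading the constants $c,C$ (which originate from that inequality) through the substitutions $n_k\mapsto n\pi_{\text{rare,min}}/2$ and $\lambda_{\text{min}}(\boldsymbol{\Sigma}_k)\mapsto\lambda_{\text{min}}^*$, verifying that \eqref{a.eigen.eq} is precisely what keeps the argument of the exponential nonnegative. One would rely on a standard device here — a matrix deviation / smallest-singular-value bound for sub-Gaussian (in particular bounded) designs, or a matrix Chernoff bound — with the compactness of $\mathcal{S}$ furnishing the required boundedness. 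The algebraic reduction of the first paragraph and the binomial tail bound are routine by comparison.
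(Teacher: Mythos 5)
Your proposal is correct and its probabilistic half is essentially identical to the paper's: the paper also reduces the restricted eigenvalue over the cone $\mathbb{C}(\mathcal{S})$ to the unrestricted minimum eigenvalue, bounds the class counts via Hoeffding plus a union bound exactly as you do (Lemmas \ref{pos.def.lem} and \ref{bin.prob.bound}), and handles the per-class second-moment matrices by conditioning on $\boldsymbol{y}$, whitening by $\boldsymbol{\Sigma}_k^{-1/2}$, and invoking Vershynin's smallest-singular-value bound for matrices with i.i.d.\ bounded (hence sub-Gaussian) isotropic rows with the same choice of deviation parameter $t=\sqrt{n\pi_{\text{rare,min}}/2}-C\sqrt{p_n}-\sqrt{a/\lambda_{\text{min}}^*}$ (Lemma \ref{new.a.min.lem}). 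Where you genuinely diverge is the deterministic reduction: the paper writes $\boldsymbol{Z}_i=\boldsymbol{A}^{(y_i)}\otimes\boldsymbol{x}_i$, proves the identity $\sum_k\frac{n_k}{n}\boldsymbol{A}^{(k)}(\boldsymbol{A}^{(k)})^\top=\boldsymbol{B}\boldsymbol{D}\boldsymbol{B}^\top$ for a bidiagonal-inverse matrix $\boldsymbol{B}$, and extracts the factor $1/4$ from $\sigma_{\text{min}}(\boldsymbol{B})\ge 1/2$ together with $\min_k(n_k+n_{k+1})/n$ (Lemmas \ref{a.id.lem}--\ref{pos.def.lem}), whereas you compute $\boldsymbol{Z}_i^\top\boldsymbol{\theta}$ coordinatewise and get the same factor from the elementary inequality $u^2+v^2\ge\tfrac12(u-v)^2$. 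Your route is shorter, avoids three auxiliary lemmas, and makes transparent exactly which classes contribute to which block of $\boldsymbol{\theta}$; the paper's Kronecker/$\boldsymbol{B}\boldsymbol{D}\boldsymbol{B}^\top$ formulation is more modular and would adapt more easily if the fusion pattern (hence $\boldsymbol{A}^{(k)}$) were changed. One caveat worth recording: in your chain $\lambda_{\text{min}}(\widehat{\boldsymbol{\Sigma}}_k)\ge(\lambda_{\text{min}}^*/n_k)\,s_{\text{min}}(\boldsymbol{A}_k)^2$, the stated choice of $t$ only delivers $s_{\text{min}}(\boldsymbol{A}_k)\ge\sqrt{a/\lambda_{\text{min}}^*}$ and hence $\lambda_{\text{min}}(\widehat{\boldsymbol{\Sigma}}_k)\ge a/n_k$, not $a$; to conclude $\lambda_{\text{min}}(\widehat{\boldsymbol{\Sigma}}_k)\ge a$ you need $s_{\text{min}}(\boldsymbol{A}_k)\ge\sqrt{n_k a/\lambda_{\text{min}}^*}$, which forces a different (larger, $\sqrt{n_k}$-proportional) deviation parameter and a corresponding adjustment of \eqref{a.eigen.eq}. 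This is precisely the constant-threading obstacle you flagged, and the paper's own Lemma \ref{new.a.min.lem} contains the same normalization ambiguity (it applies the unnormalized singular-value bound to the $1/n_k$-scaled matrix), so you have faithfully reproduced the argument, imperfection included, rather than introduced a new gap.
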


\begin{proof} Provided in Section \ref{cons.thm.proof.supp.res}.
\end{proof}

Observe that since \(p_n \leq C_1 n^{C_2}\) for some \(C_1 > 0, C_2 \in (0, 1)\) this probability tends to 1 as \(n \to \infty\). Lemma \ref{theta.beta.bounds}, below, along with \eqref{thm3.prob} then shows that with probability at least 
\begin{equation}\label{prob.bound}
  1  -  p_n^{-C_5}  - 2K \exp \left\{ -\frac{1}{2} \pi_{\text{rare,min}}^2 n\right\} -  2K \exp \left\{-c \left( \sqrt{  \frac{ n \pi_{\text{rare,min}}}{2}}  - C \sqrt{p_n} - \sqrt{\frac{a}{\lambda_{\text{min}}^*}} \right)^2 \right\}
\end{equation}
  it holds that
\[
\left\lVert \boldsymbol{\hat{\beta}}^{\lambda_n} - \boldsymbol{\beta} \right\rVert_2 \leq    \left\lVert  \boldsymbol{\hat{\beta}}^{\lambda_n} - \boldsymbol{\beta}  \right\rVert_1 \leq  C_6 \sqrt{ \frac{\log p_n}{n}}
,
\]
where \(C_5 := 2c_3\) and \(C_6 :=  c_5 \sqrt{2}(K-1) \) (where \(c_5\) depends on the sparsity level \(s\)).
\begin{lemma}\label{theta.beta.bounds}
\(\left\lVert  \boldsymbol{\hat{\beta}}^{\lambda_n} - \boldsymbol{\beta}  \right\rVert_2  \leq \left\lVert   \boldsymbol{\hat{\beta}}^{\lambda_n} - \boldsymbol{\beta} \right\rVert_1  \leq (K-1) \left\lVert  \boldsymbol{\hat{\theta}}^{\lambda_n} - \boldsymbol{\theta}_* \right\rVert_1 \).
\end{lemma}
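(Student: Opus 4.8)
The plan is to exploit the fact that the map sending $\boldsymbol{\theta} = (\boldsymbol{\beta}_1^\top, \boldsymbol{\psi}_2^\top, \ldots, \boldsymbol{\psi}_{K-1}^\top)^\top$ to the stacked vector $\boldsymbol{\beta} = (\boldsymbol{\beta}_1^\top, \ldots, \boldsymbol{\beta}_{K-1}^\top)^\top$ is linear and block lower-triangular with identity blocks, since $\boldsymbol{\psi}_k = \boldsymbol{\beta}_k - \boldsymbol{\beta}_{k-1}$ gives $\boldsymbol{\beta}_k = \boldsymbol{\beta}_1 + \sum_{j=2}^k \boldsymbol{\psi}_j$ for each $k \in \{1, \ldots, K-1\}$ (with the convention that the empty sum is zero when $k=1$). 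The first inequality $\lVert \boldsymbol{\hat{\beta}}^{\lambda_n} - \boldsymbol{\beta} \rVert_2 \le \lVert \boldsymbol{\hat{\beta}}^{\lambda_n} - \boldsymbol{\beta} \rVert_1$ is the standard comparison of $\ell_2$ and $\ell_1$ norms on $\mathbb{R}^{p_n(K-1)}$ and requires no argument beyond citing it.

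For the second inequality, I would first fix $k$ and apply the triangle inequality to the identity $\boldsymbol{\hat{\beta}}_k^{\lambda_n} - \boldsymbol{\beta}_k = (\boldsymbol{\hat{\beta}}_1^{\lambda_n} - \boldsymbol{\beta}_1) + \sum_{j=2}^k (\boldsymbol{\hat{\psi}}_j^{\lambda_n} - \boldsymbol{\psi}_j)$, obtaining
\[
\lVert \boldsymbol{\hat{\beta}}_k^{\lambda_n} - \boldsymbol{\beta}_k \rVert_1 \le \lVert \boldsymbol{\hat{\beta}}_1^{\lambda_n} - \boldsymbol{\beta}_1 \rVert_1 + \sum_{j=2}^k \lVert \boldsymbol{\hat{\psi}}_j^{\lambda_n} - \boldsymbol{\psi}_j \rVert_1 \le \lVert \boldsymbol{\hat{\theta}}^{\lambda_n} - \boldsymbol{\theta}_* \rVert_1,
\]
where the last step uses that the $\ell_1$ norm of the full stacked vector $\boldsymbol{\hat{\theta}}^{\lambda_n} - \boldsymbol{\theta}_*$ equals the sum of the $\ell_1$ norms of its blocks, of which $\lVert \boldsymbol{\hat{\beta}}_1^{\lambda_n} - \boldsymbol{\beta}_1 \rVert_1$ and $\lVert \boldsymbol{\hat{\psi}}_2^{\lambda_n} - \boldsymbol{\psi}_2 \rVert_1, \ldots, \lVert \boldsymbol{\hat{\psi}}_{K-1}^{\lambda_n} - \boldsymbol{\psi}_{K-1} \rVert_1$ are a subset. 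Then I would sum this bound over $k = 1, \ldots, K-1$:
\[
\lVert \boldsymbol{\hat{\beta}}^{\lambda_n} - \boldsymbol{\beta} \rVert_1 = \sum_{k=1}^{K-1} \lVert \boldsymbol{\hat{\beta}}_k^{\lambda_n} - \boldsymbol{\beta}_k \rVert_1 \le (K-1) \lVert \boldsymbol{\hat{\theta}}^{\lambda_n} - \boldsymbol{\theta}_* \rVert_1,
\]
which is exactly the claim. (Equivalently, one could note that $\boldsymbol{\beta} = L\boldsymbol{\theta}$ for the described block lower-triangular matrix $L$ whose columns each contain at most $K-1$ ones, so the induced $\ell_1 \to \ell_1$ operator norm of $L$ is $K-1$; the elementary argument above avoids introducing $L$.)

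This lemma is entirely routine — there is no real obstacle. The only point requiring a moment of care is bookkeeping: keeping straight that $\boldsymbol{\beta}$ denotes the length-$p_n(K-1)$ concatenation of $\boldsymbol{\beta}_1, \ldots, \boldsymbol{\beta}_{K-1}$ (not a single length-$p_n$ vector as in the proportional odds model of Section~\ref{sec.theory.presto}), and that the decomposition of $\lVert \cdot \rVert_1$ over blocks is applied consistently on both sides.
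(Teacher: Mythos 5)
Your proposal is correct and follows essentially the same route as the paper's proof: both express $\boldsymbol{\hat{\beta}}_k^{\lambda_n} - \boldsymbol{\beta}_k$ as the partial sum $\sum_{k' \le k} (\boldsymbol{\hat{\theta}}_{k'}^{\lambda_n} - \boldsymbol{\theta}_{k'})$, apply the triangle inequality blockwise, and sum over $k$ (the paper counts the $(K-k)$ repetitions in the resulting double sum and bounds each by $K-1$, which is the same bookkeeping as your bounding each inner partial sum by the full $\ell_1$ norm). No gaps.
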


\begin{proof} Provided in Section \ref{cons.thm.proof.supp.res}.
\end{proof}

Finally, we can now show consistency by showing that the random variable \(\left\lVert  \boldsymbol{\hat{\beta}}^{\lambda_n} - \boldsymbol{\beta}   \right\rVert_2 \) converges in probability to 0. For any \(\epsilon > 0\),
\begin{align*}
& \lim_{n \to \infty} \mathbb{P}\left(  \left\lVert  \boldsymbol{\hat{\beta}}^{\lambda_n} - \boldsymbol{\beta}   \right\rVert_2  < \epsilon \right) 
 \\ \stackrel{(*)}{\geq} ~ &  \lim_{n \to \infty} \mathbb{P}\left(  \left\lVert  \boldsymbol{\hat{\beta}}^{\lambda_n} - \boldsymbol{\beta}  \right\rVert_2< C_6 \sqrt{ \frac{\log p_n}{n}}  \right)  
\\ \geq ~ &   \lim_{n \to \infty} \left(    1  -  p_n^{-C_5}  - 2K \exp \left\{ -\frac{1}{2} \pi_{\text{rare,min}}^2 n\right\} -  2K \exp \left\{-c \left( \sqrt{  \frac{ n \pi_{\text{rare,min}}}{2}}  - C \sqrt{p_n} - \sqrt{\frac{a}{\lambda_{\text{min}}^*}} \right)^2 \right\} \right)
\\ = ~ &  1,
\end{align*}
where \((*)\) follows because for large enough \(n\), \(\epsilon >  C_6 \sqrt{ \log (p_n) /n}\). This establishes Theorem \ref{cons.thm}. All that remains is to provide proofs for the supporting lemmas and proposition, which we do in the following section.

\begin{remark}
Observe that for finite \(n\) we have that for \(n\) large enough, with probability at least equal to the expression in \eqref{prob.bound} it holds that
\[
\sqrt{n} \left\lVert \boldsymbol{\hat{\beta}}^{\lambda_n} - \boldsymbol{\beta} \right\rVert_2 \leq   \sqrt{n}  \left\lVert  \boldsymbol{\hat{\beta}}^{\lambda_n} - \boldsymbol{\beta}  \right\rVert_1 \leq C_6 \sqrt{\log p_n}
.
\]
This establishes a high-probability finite sample relationship between our theoretical guarantee and the asymptotic mean squared error from Definition \ref{def.asym.mse}.

Also note that for any \(\epsilon > 0\), there exists an \(n_\epsilon\) such that the probability in \eqref{prob.bound} is less than \(\epsilon\) for all \(n \geq n_\epsilon\). Since there also exists an \(n_2\) large enough so that for all \(n \geq n_2\) it holds that
\[
\epsilon \geq C_6 \sqrt{\frac{\log p_n}{n}} 
,
\]
 we have for any \(\epsilon > 0\), for all \(n \geq \max \{n_\epsilon, n_2\}\)
\begin{align*}
 \mathbb{P} \left( \frac{  \left\lVert  \boldsymbol{\hat{\beta}}^{\lambda_n} - \boldsymbol{\beta}  \right\rVert_1}{\sqrt{\log( p_n)/n}} \geq C_6 \right)
 \leq  ~ & 1  -  p_n^{-C_5}  - 2K \exp \left\{ -\frac{1}{2} \pi_{\text{rare,min}}^2 n\right\} -  2K \exp \left\{-c \left( \sqrt{  \frac{ n \pi_{\text{rare,min}}}{2}}  - C \sqrt{p_n} - \sqrt{\frac{a}{\lambda_{\text{min}}^*}} \right)^2 \right\}
\\ < ~ & \epsilon
;
\end{align*}
that is, \( \left\lVert  \boldsymbol{\hat{\beta}}^{\lambda_n} - \boldsymbol{\beta}  \right\rVert_1 \in \mathcal{O}_p(\sqrt{\log (p_n)/n})\), and likewise for \(\left\lVert  \boldsymbol{\hat{\beta}}^{\lambda_n} - \boldsymbol{\beta}  \right\rVert_2\). Again, we emphasize that \(C_6\) depends on the sparsity \(s\) (though the above holds since we have assumed that \(s\) is fixed).
\end{remark}

\subsection{Supporting Results for Proof of Theorem \ref{cons.thm}}\label{cons.thm.proof.supp.res}

\begin{proof}[Proof of Proposition \ref{min.eigen.lem}]
We will show that there is a high probability event such that for \(a >0\) it holds that
\[
  \inf_{\{\boldsymbol{\theta} \in \mathbb{C}(\mathcal{S}): \lVert \boldsymbol{\theta} \rVert_2 = 1\}} \left\{ \boldsymbol{\theta}^\top \left(   \frac{1}{n} \sum_{i=1}^n  \boldsymbol{Z}_i \boldsymbol{Z}_i^\top \right) \boldsymbol{\theta} \right\} >  \frac{1}{2}a \pi_{\text{rare,min}}
  .
\]
Let
\[
\boldsymbol{A}^{(1)} = \begin{pmatrix} \boldsymbol{0}_{K-1} & \boldsymbol{e}_1 \end{pmatrix} \in \mathbb{R}^{(K-1)\times 2}
;
\]
\[
 \boldsymbol{A}^{(k)}= \begin{pmatrix} \boldsymbol{1}_{k-1} & \boldsymbol{1}_{k-1}
\\ 0 & 1
\\ \boldsymbol{0}_{K-k - 1} & \boldsymbol{0}_{K-k -1} \end{pmatrix} = \begin{pmatrix} \sum_{k'=1}^{k-1} \boldsymbol{e}_{k'} &  \sum_{k'=1}^{k} \boldsymbol{e}_{k'} \end{pmatrix} \in \mathbb{R}^{(K-1)\times 2}, ~  k \in \{2, \ldots, K-1\};
\]
and
\[
 \boldsymbol{A}^{(K)} =   \begin{pmatrix} \boldsymbol{1}_{K-1} & \boldsymbol{0}_{K-1} \end{pmatrix}  =   \begin{pmatrix} \sum_{k'=1}^{K-1} \boldsymbol{e}_{k'}  & \boldsymbol{0}_{K-1} \end{pmatrix}  \in \mathbb{R}^{(K-1) \times 2},
\]
where \(\boldsymbol{0}_n\) and \(\boldsymbol{1}_n\) are \(n\)-vectors of zeroes and ones (respectively) and \(\boldsymbol{e}_k\) is the standard basis vector in \(\mathbb{R}^{K-1}\) with a 1 in the \(k^{\text{th}}\) entry and zeroes elsewhere, and \(\boldsymbol{A}^{(k)}\in  \mathbb{R}^{(K-1) \times 2}\) for all \(k\). Note that
\begin{equation}\label{z.i.kron}
\boldsymbol{Z}_i = \boldsymbol{A}^{(y_i)} \otimes \boldsymbol{x}_i.
\end{equation}

Let
\[
\boldsymbol{B} = \begin{pmatrix}
1 & 1 &   1 & \cdots & 1
\\ 0 &  1 & 1 &  \cdots & 1
\\ 0 & 0 & 1 & \cdots & 1
\\ \vdots & \vdots & \vdots & \ddots & \vdots
\\ 0 & 0 & 0 & \cdots & 1
\end{pmatrix} = \begin{pmatrix}  \boldsymbol{e}_1 & \sum_{k'=1}^2 \boldsymbol{e}_{k'} & \cdots &  \sum_{k'=1}^{K-1} \boldsymbol{e}_{k'} \end{pmatrix} \in \mathbb{R}^{(K-1)\times(K-1)};
\]
that is, the columns of \(\boldsymbol{B}\) are \(\boldsymbol{B}_k =  \sum_{\ell=1}^k \boldsymbol{e}_{\ell}\). We will make use of the following lemmas, which we prove later in this section:

\begin{lemma}\label{a.id.lem}
\[
\sum_{k=1}^K \frac{n_k}{n} \boldsymbol{A}^{(k)}\left(\boldsymbol{A}^{(k)}\right)^\top = \boldsymbol{B} \boldsymbol{D} \boldsymbol{B}^\top
,
\]
where
\[
\boldsymbol{D} := \operatorname{diag} \left(  \frac{n_1 + n_2}{n}, \ldots, \frac{n_{K-1} + n_K}{n} \right) \in \mathbb{R}^{(K-1)\times(K-1)}
.
\]

\end{lemma}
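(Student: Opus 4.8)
The plan is to prove the identity by expanding both sides into a common basis of rank-one matrices. Write $\boldsymbol{B}_j := \sum_{\ell=1}^j \boldsymbol{e}_{\ell}$ for the $j$th column of $\boldsymbol{B}$, with the convention $\boldsymbol{B}_0 := \boldsymbol{0}_{K-1}$; then $\boldsymbol{B}\boldsymbol{D}\boldsymbol{B}^\top = \sum_{j=1}^{K-1} D_{jj}\,\boldsymbol{B}_j \boldsymbol{B}_j^\top$ since $\boldsymbol{D}$ is diagonal, where $D_{jj} = (n_j + n_{j+1})/n$. So it suffices to show the left-hand side also equals $\sum_{j=1}^{K-1} \frac{n_j + n_{j+1}}{n}\,\boldsymbol{B}_j \boldsymbol{B}_j^\top$.

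First I would read off from the displayed definitions of the $\boldsymbol{A}^{(k)}$ that their columns are consecutive partial sums of basis vectors: for $k \in \{1, \ldots, K-1\}$ the columns of $\boldsymbol{A}^{(k)}$ are $\boldsymbol{B}_{k-1}$ and $\boldsymbol{B}_k$ (using $\boldsymbol{B}_0 = \boldsymbol{0}_{K-1}$ to cover $k=1$), while the columns of $\boldsymbol{A}^{(K)}$ are $\boldsymbol{B}_{K-1}$ and $\boldsymbol{0}_{K-1}$. Hence $\boldsymbol{A}^{(k)}(\boldsymbol{A}^{(k)})^\top = \boldsymbol{B}_{k-1}\boldsymbol{B}_{k-1}^\top + \boldsymbol{B}_k \boldsymbol{B}_k^\top$ for $k \le K-1$, and $\boldsymbol{A}^{(K)}(\boldsymbol{A}^{(K)})^\top = \boldsymbol{B}_{K-1}\boldsymbol{B}_{K-1}^\top$.

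Next I would substitute these into $\sum_{k=1}^K \frac{n_k}{n}\boldsymbol{A}^{(k)}(\boldsymbol{A}^{(k)})^\top$ and collect the coefficient of each $\boldsymbol{B}_j \boldsymbol{B}_j^\top$. The $\boldsymbol{B}_0 \boldsymbol{B}_0^\top$ term drops out; for $j \in \{1, \ldots, K-2\}$ the matrix $\boldsymbol{B}_j \boldsymbol{B}_j^\top$ receives $n_j/n$ from the $k=j$ summand (where it is the second column) and $n_{j+1}/n$ from the $k=j+1$ summand (where it is the first column); and $\boldsymbol{B}_{K-1}\boldsymbol{B}_{K-1}^\top$ receives $n_{K-1}/n$ from the $k=K-1$ summand and $n_K/n$ from the $k=K$ summand. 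In every case the coefficient is $(n_j + n_{j+1})/n = D_{jj}$, so the left-hand side equals $\boldsymbol{B}\boldsymbol{D}\boldsymbol{B}^\top$ and the proof is complete. The only point requiring care is the bookkeeping at the two boundary indices $k=1$ and $k=K$ — where one ``column'' of $\boldsymbol{A}^{(k)}$ is the zero vector rather than a genuine partial sum — so that each $\boldsymbol{B}_j \boldsymbol{B}_j^\top$ with $1 \le j \le K-1$ collects exactly the two contributions $n_j$ and $n_{j+1}$ and nothing else; this is the ``hard part,'' though it is entirely routine.
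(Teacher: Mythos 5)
Your proof is correct and follows essentially the same route as the paper's: both expand each $\boldsymbol{A}^{(k)}(\boldsymbol{A}^{(k)})^\top$ as a sum of outer products of the partial-sum columns $\boldsymbol{B}_{k-1}$ and $\boldsymbol{B}_k$ (with the zero column at $k=1$ and $k=K$), then regroup so that each $\boldsymbol{B}_j\boldsymbol{B}_j^\top$ collects the coefficient $(n_j+n_{j+1})/n$. Your bookkeeping at the boundary indices matches the paper's case split, so nothing further is needed.
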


\begin{lemma}\label{b.op.lem}
\(\sigma_{\text{min}}^2 \left( \boldsymbol{B} \right)  \geq 1/2 \).
\end{lemma}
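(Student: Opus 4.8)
The plan is to pass to the inverse of $\boldsymbol{B}$. Since $\boldsymbol{B}$ is unit upper triangular it is invertible, so every singular value is nonzero and $\sigma_{\min}(\boldsymbol{B}) = \sigma_{\max}(\boldsymbol{B}^{-1})^{-1} = \lVert \boldsymbol{B}^{-1} \rVert_{\text{op}}^{-1}$. Consequently the claim $\sigma_{\min}^2(\boldsymbol{B}) \geq 1/2$ is equivalent to the operator-norm statement $\lVert \boldsymbol{B}^{-1} \rVert_{\text{op}}^2 \leq 2$, i.e. to $\lambda_{\max}\!\left( \boldsymbol{B}^{-1}(\boldsymbol{B}^{-1})^\top \right) \leq 2$. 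So the entire lemma reduces to controlling the largest eigenvalue of a single, fully explicit symmetric matrix, which is a much more tractable object than $\boldsymbol{B}^\top \boldsymbol{B}$ itself.

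The second step is to compute $\boldsymbol{B}^{-1}$ in closed form. Writing $\boldsymbol{S}$ for the nilpotent shift with ones on the first superdiagonal, $\boldsymbol{S} = \sum_{k=1}^{K-2} \boldsymbol{e}_k \boldsymbol{e}_{k+1}^\top$, the column identity $\boldsymbol{B}_k = \sum_{\ell=1}^k \boldsymbol{e}_\ell$ makes the telescoping relation $\boldsymbol{B}(\boldsymbol{I}-\boldsymbol{S}) = \boldsymbol{I}$ immediate, so $\boldsymbol{B}^{-1} = \boldsymbol{I} - \boldsymbol{S}$ is bidiagonal with ones on the diagonal and $-1$ on the superdiagonal. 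A direct multiplication gives $\boldsymbol{S}\boldsymbol{S}^\top = \operatorname{diag}(1,\dots,1,0)$, whence
\[
\boldsymbol{G} := \boldsymbol{B}^{-1}(\boldsymbol{B}^{-1})^\top = (\boldsymbol{I}-\boldsymbol{S})(\boldsymbol{I}-\boldsymbol{S}^\top) = \operatorname{diag}(2,\dots,2,1) - \boldsymbol{S} - \boldsymbol{S}^\top ,
\]
the symmetric tridiagonal matrix with diagonal $(2,\dots,2,1)$ and all off-diagonal entries equal to $-1$. This is where the problem becomes concrete: we must bound $\lambda_{\max}(\boldsymbol{G})$.

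The third step is the spectral analysis of $\boldsymbol{G}$. The route I would take is to treat $\boldsymbol{G}$ as a discrete second-difference (Jacobi) operator: for an eigenpair $(\lambda,\boldsymbol{v})$, the interior rows give the recurrence $v_{i-1} + v_{i+1} = (2-\lambda)v_i$, while the first and last rows act as boundary conditions of Dirichlet and Neumann type respectively. Setting $2-\lambda = 2\cos\theta$ and solving the recurrence under these two conditions pins down the admissible $\theta$, and hence $\lambda = 4\sin^2(\theta/2)$ in closed form, from which one reads off $\lambda_{\max}(\boldsymbol{G})$ and thereby the value of $\sigma_{\min}^2(\boldsymbol{B})$. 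The main obstacle is exactly this last step: one must establish $\lambda_{\max}(\boldsymbol{G}) \leq 2$, and the cheap estimates are too lossy to deliver the sharp constant — the triangle inequality only yields $\lVert \boldsymbol{B}^{-1} \rVert_{\text{op}} \leq \lVert \boldsymbol{I} \rVert_{\text{op}} + \lVert \boldsymbol{S} \rVert_{\text{op}} = 2$, and Gershgorin applied to $\boldsymbol{G}$ only controls $\lambda_{\max}(\boldsymbol{G})$ through the interior disc of radius $2$, so the required bound has to come from the explicit boundary-value analysis of $\boldsymbol{G}$ rather than from a crude norm estimate. Once the resulting lower bound on $\sigma_{\min}(\boldsymbol{B})$ is in hand, it feeds the restricted-eigenvalue estimate of Proposition \ref{min.eigen.lem} through Lemma \ref{a.id.lem} together with the Kronecker identity $\boldsymbol{Z}_i = \boldsymbol{A}^{(y_i)} \otimes \boldsymbol{x}_i$, which cleanly separates the $\boldsymbol{B}$-dependent factor from the conditional second-moment factor.
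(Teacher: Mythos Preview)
Your reduction to $\boldsymbol{B}^{-1}=\boldsymbol{I}-\boldsymbol{S}$ and to the tridiagonal $\boldsymbol{G}=\boldsymbol{B}^{-1}(\boldsymbol{B}^{-1})^\top$ is exactly the paper's approach, and your boundary-value computation is correct: with $n=K-1$ the eigenvalues are $\lambda_m=4\sin^2\!\big((2m-1)\pi/(2(2n+1))\big)$. But that same computation shows the lemma as stated is \emph{false}: already for $K=3$ one gets $\lambda_{\max}(\boldsymbol{G})=(3+\sqrt{5})/2\approx 2.618>2$, hence $\sigma_{\min}^2(\boldsymbol{B})=(3-\sqrt{5})/2\approx 0.382<1/2$, and $\lambda_{\max}(\boldsymbol{G})\uparrow 4$ as $K\to\infty$. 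So the ``explicit boundary-value analysis'' you propose cannot produce $\lambda_{\max}(\boldsymbol{G})\le 2$; that target is unattainable, not merely hard.

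What actually happens in the paper is that the proof only establishes the weaker bound $\sigma_{\min}^2(\boldsymbol{B})\ge 1/4$: it computes $\boldsymbol{B}^{-1}$ as you did and then uses the crude estimate $\lVert\boldsymbol{B}^{-1}\rVert_{\text{op}}\le\sqrt{\lVert\boldsymbol{B}^{-1}\rVert_1\,\lVert\boldsymbol{B}^{-1}\rVert_\infty}=\sqrt{2\cdot 2}=2$, giving $\sigma_{\min}(\boldsymbol{B})\ge 1/2$. This is the same $1/4$ you obtained from the triangle inequality $\lVert\boldsymbol{I}\rVert_{\text{op}}+\lVert\boldsymbol{S}\rVert_{\text{op}}=2$ and dismissed as ``too lossy.'' It is not too lossy for the paper's purposes: in the proof of Proposition~\ref{min.eigen.lem}, step $(f)$ uses precisely $\lambda_{\min}(\boldsymbol{B}\boldsymbol{B}^\top)\ge 1/4$, not $1/2$. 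In other words, the exponent in the lemma statement is a typo; the intended (and provable, and used) inequality is $\sigma_{\min}^2(\boldsymbol{B})\ge 1/4$, and your ``crude'' bound already proves it.
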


\begin{lemma}\label{jacob.lem.2}
For arbitrary matrices \(\boldsymbol{\overline{A}}\), \(\boldsymbol{\overline{B}}\), and \(\boldsymbol{\overline{C}}\), if \(\boldsymbol{\overline{B}} \succeq \boldsymbol{\overline{C}}\) and \(\boldsymbol{\overline{A}} \succeq \boldsymbol{0}\) then \(\boldsymbol{\overline{A}} \otimes \boldsymbol{\overline{B}} \succeq \boldsymbol{\overline{A}} \otimes \boldsymbol{\overline{C}}\).
\end{lemma}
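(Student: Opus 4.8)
The plan is to reduce the claim to the elementary fact that the Kronecker product of two positive semidefinite matrices is positive semidefinite. First I would note that $\succeq$ is only meaningful between conformable symmetric matrices, so $\boldsymbol{\overline{A}}$, $\boldsymbol{\overline{B}}$, $\boldsymbol{\overline{C}}$ are all symmetric and square (the first PSD, and $\boldsymbol{\overline{B}} - \boldsymbol{\overline{C}}$ PSD). Using bilinearity of the Kronecker product in its second argument,
\[
\boldsymbol{\overline{A}} \otimes \boldsymbol{\overline{B}} - \boldsymbol{\overline{A}} \otimes \boldsymbol{\overline{C}} = \boldsymbol{\overline{A}} \otimes \left( \boldsymbol{\overline{B}} - \boldsymbol{\overline{C}} \right),
\]
so it suffices to show $\boldsymbol{\overline{A}} \otimes \boldsymbol{\overline{M}} \succeq \boldsymbol{0}$, where $\boldsymbol{\overline{M}} := \boldsymbol{\overline{B}} - \boldsymbol{\overline{C}} \succeq \boldsymbol{0}$.

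Next I would invoke symmetric PSD square roots $\boldsymbol{\overline{A}}^{1/2}$ and $\boldsymbol{\overline{M}}^{1/2}$, which exist because $\boldsymbol{\overline{A}} \succeq \boldsymbol{0}$ and $\boldsymbol{\overline{M}} \succeq \boldsymbol{0}$. Combining the transpose rule $(\boldsymbol{P} \otimes \boldsymbol{Q})^\top = \boldsymbol{P}^\top \otimes \boldsymbol{Q}^\top$ with the mixed-product property $(\boldsymbol{P} \otimes \boldsymbol{Q})(\boldsymbol{R} \otimes \boldsymbol{S}) = (\boldsymbol{P}\boldsymbol{R}) \otimes (\boldsymbol{Q}\boldsymbol{S})$ yields
\[
\left( \boldsymbol{\overline{A}}^{1/2} \otimes \boldsymbol{\overline{M}}^{1/2} \right) \left( \boldsymbol{\overline{A}}^{1/2} \otimes \boldsymbol{\overline{M}}^{1/2} \right)^\top = \left( \boldsymbol{\overline{A}}^{1/2} \boldsymbol{\overline{A}}^{1/2} \right) \otimes \left( \boldsymbol{\overline{M}}^{1/2} \boldsymbol{\overline{M}}^{1/2} \right) = \boldsymbol{\overline{A}} \otimes \boldsymbol{\overline{M}},
\]
using $(\boldsymbol{\overline{A}}^{1/2})^\top = \boldsymbol{\overline{A}}^{1/2}$ and $(\boldsymbol{\overline{M}}^{1/2})^\top = \boldsymbol{\overline{M}}^{1/2}$. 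Hence $\boldsymbol{\overline{A}} \otimes \boldsymbol{\overline{M}}$ has the form $\boldsymbol{N}\boldsymbol{N}^\top$ and is therefore PSD, which establishes $\boldsymbol{\overline{A}} \otimes \boldsymbol{\overline{B}} \succeq \boldsymbol{\overline{A}} \otimes \boldsymbol{\overline{C}}$.

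An equivalent route, if a square-root-free argument is preferred, is via spectral decompositions $\boldsymbol{\overline{A}} = \sum_i \lambda_i \boldsymbol{u}_i \boldsymbol{u}_i^\top$ and $\boldsymbol{\overline{M}} = \sum_j \mu_j \boldsymbol{v}_j \boldsymbol{v}_j^\top$ with $\lambda_i, \mu_j \geq 0$, together with $(\boldsymbol{u}_i \boldsymbol{u}_i^\top) \otimes (\boldsymbol{v}_j \boldsymbol{v}_j^\top) = (\boldsymbol{u}_i \otimes \boldsymbol{v}_j)(\boldsymbol{u}_i \otimes \boldsymbol{v}_j)^\top$, which exhibits $\boldsymbol{\overline{A}} \otimes \boldsymbol{\overline{M}} = \sum_{i,j} \lambda_i \mu_j (\boldsymbol{u}_i \otimes \boldsymbol{v}_j)(\boldsymbol{u}_i \otimes \boldsymbol{v}_j)^\top$ as a nonnegative combination of rank-one PSD matrices. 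There is essentially no substantive obstacle here, since this is a standard property of Kronecker products (see, e.g., \citealt{horn_johnson_2012}); the only points requiring care are confirming that all matrices in play are symmetric and square so that $\succeq$ is well defined, and applying bilinearity, the transpose rule, and the mixed-product property with the correct factors.
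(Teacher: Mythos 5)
Your proof is correct and follows the same route as the paper's: rewrite the difference as $\boldsymbol{\overline{A}} \otimes (\boldsymbol{\overline{B}} - \boldsymbol{\overline{C}})$ and observe that the Kronecker product of two positive semidefinite matrices is positive semidefinite. The only difference is that the paper takes this last fact as known in a one-line proof, whereas you supply the square-root (and spectral) justification explicitly.
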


\begin{lemma}\label{pos.def.lem}
 With probability at least \( 1 - K \exp \left\{ -\frac{1}{2} \pi_{\text{rare,min}}^2 n\right\}\) it holds that \(\min \{n_1, \ldots, n_K\} > \pi_{\text{rare,min}}n/2  \), where \(n_k := \sum_{i=1}^n \mathbbm{1} \left\{ y_i = k\right\}\) is the number of observations in class \(k\).
\end{lemma}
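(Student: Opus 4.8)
The plan is to control each class count $n_k$ by a concentration inequality for binomial-type sums and then union-bound over $k \in \{1, \ldots, K\}$. Since the pairs $(\boldsymbol{x}_i, y_i)$ are iid across $i$, for each fixed $k$ the indicators $\mathbbm{1}\{y_i = k\}$, $i = 1, \ldots, n$, are iid Bernoulli with parameter $q_k := \mathbb{P}(y_1 = k)$, and $q_k = \mathbb{E}[\mathbb{P}(y_1 = k \mid \boldsymbol{x}_1)] \geq \pi_{\text{rare,min}}$ because $\mathbb{P}(y_1 = k \mid \boldsymbol{x}_1) \geq \pi_{\text{rare,min}}$ almost surely by the definition of $\pi_{\text{rare,min}}$. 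Hence $n_k = \sum_{i=1}^n \mathbbm{1}\{y_i = k\} \sim \mathrm{Binomial}(n, q_k)$ with $\mathbb{E}[n_k] = n q_k \geq \pi_{\text{rare,min}} n$. (Equivalently, one may condition on $\boldsymbol{X}$ and work with independent-but-heterogeneous Bernoullis $\mathbbm{1}\{y_i = k\} \mid \boldsymbol{x}_i \sim \mathrm{Bernoulli}(\mathbb{P}(y_i = k \mid \boldsymbol{x}_i))$, all with success probability at least $\pi_{\text{rare,min}}$; the argument is identical.)

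Next I would apply a lower-tail Hoeffding (or multiplicative Chernoff) bound to $n_k$. Because the summands are $[0,1]$-valued, hence have range $1$, and $\tfrac12 \pi_{\text{rare,min}} n \leq \mathbb{E}[n_k] - \tfrac12 \pi_{\text{rare,min}} n$,
\[
\mathbb{P}\!\left( n_k \leq \tfrac12 \pi_{\text{rare,min}} n \right) \leq \mathbb{P}\!\left( n_k - \mathbb{E}[n_k] \leq -\tfrac12 \pi_{\text{rare,min}} n \right) \leq \exp\!\left\{ -\frac{2 \left( \tfrac12 \pi_{\text{rare,min}} n \right)^2}{n} \right\} = \exp\!\left\{ -\tfrac12 \pi_{\text{rare,min}}^2 n \right\}.
\]

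Finally I would union-bound over the $K$ classes,
\[
\mathbb{P}\!\left( \min_{1 \leq k \leq K} n_k \leq \tfrac12 \pi_{\text{rare,min}} n \right) \leq \sum_{k=1}^K \mathbb{P}\!\left( n_k \leq \tfrac12 \pi_{\text{rare,min}} n \right) \leq K \exp\!\left\{ -\tfrac12 \pi_{\text{rare,min}}^2 n \right\},
\]
and pass to complements, which gives exactly the claimed $\mathbb{P}(\min_k n_k > \tfrac12 \pi_{\text{rare,min}} n) \geq 1 - K \exp\{-\tfrac12 \pi_{\text{rare,min}}^2 n\}$; the strict inequality comes for free because Hoeffding's bound already covers the closed event $\{n_k \leq \tfrac12 \pi_{\text{rare,min}} n\}$. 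I do not anticipate a genuine obstacle here: the only points requiring care are keeping track of the direction of the tail (we need the lower tail of each $n_k$) and, if one takes the conditioning route instead of the iid route, invoking a version of the concentration inequality valid for independent but non-identically-distributed summands.
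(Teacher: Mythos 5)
Your proof is correct and follows essentially the same route as the paper's: the paper factors the argument into a separate lemma (its Lemma \ref{bin.prob.bound}), which applies Hoeffding's lower-tail inequality to each $n_k$ and a union bound over the $K$ classes, and then instantiates it at $q = n\pi_{\text{rare,min}}/2$. Your direct application of Hoeffding plus the union bound is the same computation, and your observation that $\E[n_k] \geq \pi_{\text{rare,min}} n$ via $\mathbb{P}(y_1 = k) = \E[\mathbb{P}(y_1 = k \mid \boldsymbol{x}_1)] \geq \pi_{\text{rare,min}}$ is exactly the step the paper relies on implicitly.
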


\begin{lemma}\label{new.a.min.lem}
Under the assumptions of Proposition \ref{min.eigen.lem}, there exist constants \(c, C > 0\) such that \(\lambda_{\text{min}} \left( \frac{1}{n_k}  \sum_{i: y_i = k} \boldsymbol{x}_i \boldsymbol{x}_i^\top \right) \geq a\) for all \(k \in \{1, \ldots, K\}\) with probability at least
\begin{align*}
 1 -  2K \exp \left\{-c \left( \sqrt{  \frac{ n \pi_{\text{rare,min}}}{2}}  - C \sqrt{p_n} - \sqrt{\frac{a}{\lambda_{\text{min}}^*}} \right)^2 \right\}  -   K \exp \left\{ -\frac{1}{2} \pi_{\text{rare,min}}^2 n\right\}
,
\end{align*}
where \(n_k\) is the number of observations in class \(k\) and \(\lambda_{\text{min}}^* := \min_{k \in \{1, \ldots, K\}} \left\{  \lambda_{\text{min}} \left( \E \left[ \boldsymbol{X}^\top \boldsymbol{X} \mid y = k \right]  \right) \right\}\).
\end{lemma}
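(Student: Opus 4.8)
The plan is to reduce the claim to a per-class concentration statement and then apply a standard non-asymptotic bound for the smallest eigenvalue of a sample second-moment matrix of independent bounded vectors. First I would exploit the conditional independence structure of the model: conditional on the label vector $(y_1,\dots,y_n)$, the $\boldsymbol{x}_i$ are independent with $\boldsymbol{x}_i \sim dF(\cdot\mid y_i)$, so for each fixed $k$ the subsample $\{\boldsymbol{x}_i : y_i = k\}$ consists of $n_k$ iid draws from $dF(\cdot\mid y = k)$, each obeying $\lVert\boldsymbol{x}_i\rVert_\infty \le C_4$ (hence sub-Gaussian, with parameter controlled by $C_4$ and $p_n$) and with population second moment $\boldsymbol{\Sigma}_k = \E[\boldsymbol{x}_i\boldsymbol{x}_i^\top \mid y_i = k]$ satisfying $\lambda_{\text{min}}(\boldsymbol{\Sigma}_k) \ge \lambda_{\text{min}}^* > b > 0$. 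It therefore suffices to show that, with the stated probability, $\lambda_{\text{min}}\big(\frac{1}{n_k}\sum_{i:y_i=k}\boldsymbol{x}_i\boldsymbol{x}_i^\top\big) \ge a$ simultaneously over $k$.

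Next I would condition on $n_1,\dots,n_K$. By Lemma \ref{pos.def.lem}, outside an event of probability at most $K\exp\{-\frac12\pi_{\text{rare,min}}^2 n\}$ we have $n_k \ge m := n\pi_{\text{rare,min}}/2$ for all $k$. On that event, for each fixed $k$ I would whiten by $\boldsymbol{\Sigma}_k^{-1/2}$ (making the rows isotropic, with an $\ell_2$ bound depending on $p_n$ and $\lambda_{\text{min}}^*$) and invoke a Rudelson--Vershynin-type concentration inequality for the smallest singular value of a random matrix with iid rows. Combined with $\lambda_{\text{min}}(\boldsymbol{\Sigma}_k) \ge \lambda_{\text{min}}^*$, this gives, for each $k$, that $\lambda_{\text{min}}\big(\frac{1}{n_k}\sum_{i:y_i=k}\boldsymbol{x}_i\boldsymbol{x}_i^\top\big) \ge a$ except on an event of probability at most $2\exp\{-c(\sqrt{n_k} - C\sqrt{p_n} - \sqrt{a/\lambda_{\text{min}}^*})^2\}$, the form of the exponent being dictated by how the deviation parameter of the inequality trades off against the dimension term $\sqrt{p_n}$ and against the target level and conditioning through $\sqrt{a/\lambda_{\text{min}}^*}$. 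Since $t \mapsto (\sqrt{t} - C\sqrt{p_n} - \sqrt{a/\lambda_{\text{min}}^*})^2$ is increasing once $\sqrt{t} > C\sqrt{p_n} + \sqrt{a/\lambda_{\text{min}}^*}$, which holds for all $t \ge m$ by the sample-size hypothesis \eqref{a.eigen.eq}, I may replace $n_k$ by $m$ in the exponent. A union bound over the $K$ classes, together with the failure probability from Lemma \ref{pos.def.lem}, then yields exactly the stated bound.

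I expect the main obstacle to be the high-dimensional bookkeeping. Because $\boldsymbol{x}_i$ is only assumed coordinatewise bounded, both its vector sub-Gaussian parameter and, after whitening, the influence of the conditioning of $\boldsymbol{\Sigma}_k$ must be tracked with care, so that the dimension enters the exponent only through $\sqrt{p_n}$ (and not a larger power) and the $\lambda_{\text{min}}^*$-dependence appears as $\sqrt{a/\lambda_{\text{min}}^*}$; this is precisely what forces the hypothesis \eqref{a.eigen.eq}, which makes the exponent negative and is compatible with the regime $p_n \le C_1 n^{C_2}$, $C_2 < 1$ assumed in Theorem \ref{cons.thm}, so that the probability bound tends to $1$. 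A minor additional point is the legitimacy of applying the concentration inequality conditionally on $(y_1,\dots,y_n)$ and then integrating, which is justified by the conditional iid structure from the first step.
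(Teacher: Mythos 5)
Your proposal is correct and follows essentially the same route as the paper's proof: whitening by \(\boldsymbol{\Sigma}_k^{-1/2}\), applying the Rudelson--Vershynin smallest-singular-value bound (the paper cites Theorem 5.39 of Vershynin, 2012) conditionally on the labels, lower-bounding each \(n_k\) by \(n\pi_{\text{rare,min}}/2\) via Lemma \ref{pos.def.lem}, and taking a union bound over the \(K\) classes. The only cosmetic difference is that the paper fixes the deviation parameter \(t = \sqrt{n\pi_{\text{rare,min}}/2} - C\sqrt{p_n} - \sqrt{a/\lambda_{\text{min}}^*}\) up front rather than substituting \(n_k \ge m\) into the exponent afterward, which amounts to the same bound.
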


Using a union bound, there exists an event with probability at least
\[
1 - 2K \exp \left\{ -\frac{1}{2} \pi_{\text{rare,min}}^2 n\right\} -  2K \exp \left\{-c \left( \sqrt{  \frac{ n \pi_{\text{rare,min}}}{2}}  - C \sqrt{p_n} - \sqrt{\frac{a}{\lambda_{\text{min}}^*}} \right)^2 \right\}  
\]
on which the conclusions of all of the above lemmas hold for \(n\) large enough. On this event we have for \(n\) sufficiently large
\begin{align*}
\inf_{\{\boldsymbol{\theta} \in \mathbb{C}(\mathcal{S}): \lVert \boldsymbol{\theta} \rVert_2 = 1\}} \left\{ \boldsymbol{\theta}^\top \left(   \frac{1}{n} \sum_{i=1}^n  \boldsymbol{Z}_i \boldsymbol{Z}_i^\top \right) \boldsymbol{\theta} \right\}   \geq ~ & \inf_{\{\boldsymbol{\theta} \in \mathbb{R}^{p_n(K-1)}: \lVert \boldsymbol{\theta} \rVert_2 = 1\}} \left\{ \boldsymbol{\theta}^\top \left(   \frac{1}{n} \sum_{i=1}^n  \boldsymbol{Z}_i \boldsymbol{Z}_i^\top \right) \boldsymbol{\theta} \right\}  
\\ = ~ & \lambda_{\text{min}} \left( \frac{1}{n} \sum_{i=1}^n \boldsymbol{Z}_i \boldsymbol{Z}_i^\top \right) 
\\  \stackrel{(a)}{=} ~ & \lambda_{\text{min}} \left( \frac{1}{n}  \sum_{i=1}^n  \left( \boldsymbol{A}^{(y_i)} \otimes \boldsymbol{x}_i \right) \left(  \boldsymbol{A}^{(y_i)} \otimes \boldsymbol{x}_i \right)^\top \right) 
\\ =  ~ & \lambda_{\text{min}} \left( \frac{1}{n}   \sum_{i=1}^n \left( \boldsymbol{A}^{(y_i)}  \left(\boldsymbol{A}^{(y_i)}\right)^\top  \right) \otimes \left( \boldsymbol{x}_i \boldsymbol{x}_i^\top  \right)\right) 
\\ =  ~ & \lambda_{\text{min}} \left( \frac{1}{n}  \sum_{k=1}^{K}  \sum_{i:y_i=k}\left( \boldsymbol{A}^{(y_i)}  \left(\boldsymbol{A}^{(y_i)}\right)^\top  \right) \otimes \left( \boldsymbol{x}_i \boldsymbol{x}_i^\top  \right)\right) 
%
\\ =  ~ & \lambda_{\text{min}} \left( \frac{1}{n}  \sum_{k=1}^{K}  \left( \boldsymbol{A}^{(k)}  \left( \boldsymbol{A}^{(k)} \right)^\top  \right) \otimes \left(  \sum_{i:y_i=k} \boldsymbol{x}_i \boldsymbol{x}_i^\top  \right)\right) 
\\ =  ~ &  \lambda_{\text{min}} \left(  \sum_{k=1}^K \left( \frac{n_k}{n} \boldsymbol{A}^{(k)}\left(\boldsymbol{A}^{(k)}\right)^\top \right)  \otimes \left( \frac{1}{n_k} \sum_{i: y_i = k} \boldsymbol{x}_i \boldsymbol{x}_i^\top  \right) \right)
\\ \stackrel{(b)}{\geq}  ~ &  \lambda_{\text{min}} \left(  \sum_{k=1}^K \left( \frac{n_k}{n} \boldsymbol{A}^{(k)}\left(\boldsymbol{A}^{(k)}\right)^\top \right)  \otimes \left( a \boldsymbol{I}_p \right) \right)
\\ \stackrel{(c)}{=} ~ &    \lambda_{\text{min}} \left(  \sum_{k=1}^K \frac{n_k}{n} \boldsymbol{A}^{(k)}\left(\boldsymbol{A}^{(k)}\right)^\top \right) \lambda_{\text{min}} \left(a \boldsymbol{I}_p \right)  
\\ \stackrel{(d)}{=} ~ &  \lambda_{\text{min}} \left(  \boldsymbol{B} \boldsymbol{D} \boldsymbol{B}^\top \right)  \lambda_{\text{min}} \left( a \boldsymbol{I}_p\right) 
\\ \stackrel{(e)}{\geq} ~ &  a  \min_{k \in \{1, \ldots, K-1\}} \left\{ \frac{n_k + n_{k+1}}{n}  \right\}\lambda_{\text{min}} \left(  \boldsymbol{B} \boldsymbol{B}^\top \right)
\\ \stackrel{(f)}{\geq} ~ &  \frac{a}{4} \min_{k \in \{1, \ldots, K-1\}} \left\{ \frac{n_k + n_{k+1}}{n}  \right\}
\\ \stackrel{(g)}{>} ~ &  \frac{a\pi_{\text{rare,min}}}{4}
,
\end{align*}
where in \((a)\) we used \eqref{z.i.kron}, \((b)\) holds on the high probability event by Lemmas \ref{jacob.lem.2} and \ref{new.a.min.lem} because \(\lambda_{\text{min}} \left( \frac{1}{n_k}  \sum_{i: y_i = k} \boldsymbol{x}_i \boldsymbol{x}_i^\top \right) \geq a\) implies \(\frac{1}{n_k} \sum_{i: y_i = k} \boldsymbol{x}_i \boldsymbol{x}_i^\top  \succeq a \boldsymbol{I}_p\) for all \(k\), in \((c)\) we used the fact that for matrices \(\boldsymbol{M}, \boldsymbol{N}\) the eigenvalues of \(\boldsymbol{M} \otimes \boldsymbol{N}\) are the products of the eigenvalues of \(\boldsymbol{M}\) and \(\boldsymbol{N}\) and both of the factor matrices are positive semidefinite, in \((d)\) we applied Lemma \ref{a.id.lem}, \((e)\) follows from \(  \boldsymbol{B} \boldsymbol{D} \boldsymbol{B}^\top \succeq  \min_{k \in \{1, \ldots, K-1\}} \left\{ \frac{n_k + n_{k+1}}{n}  \right\}\boldsymbol{B} \boldsymbol{B}^\top\), in \((f)\) we applied Lemma \ref{b.op.lem}, and \((g)\) follows from Lemma \ref{pos.def.lem}. That is, on this high probability event it holds that 
\[
\inf_{\{\boldsymbol{\theta} \in \mathbb{C}(\mathcal{S}): \lVert \boldsymbol{\theta} \rVert_2 = 1\}} \left\{ \boldsymbol{\theta}^\top \left(   \frac{1}{n} \sum_{i=1}^n  \boldsymbol{Z}_i \boldsymbol{Z}_i^\top \right) \boldsymbol{\theta} \right\}  >   \frac{a\pi_{\text{rare,min}}}{4}
,
\]
and the conclusion follows.

\end{proof}

\begin{proof}[Proof of Lemma \ref{theta.beta.bounds}] 
For convenience, denote \(\boldsymbol{\theta}_1 = \boldsymbol{\beta}_1\) and \(\boldsymbol{\theta}_k = \boldsymbol{\psi}_k\), \(k \in \{2, \ldots, K-1\}\). Let
\begin{align*}
 \boldsymbol{\hat{\epsilon}}_{\boldsymbol{\theta}}^{\lambda_n} = ~ & \left( \left(   \boldsymbol{\hat{\epsilon}}_{\boldsymbol{\theta},1}^{\lambda_n} \right)^\top, \left(  \boldsymbol{\hat{\epsilon}}_{\boldsymbol{\theta},2}^{\lambda_n} \right)^\top, \ldots, \left(  \boldsymbol{\hat{\epsilon}}_{\boldsymbol{\theta},K-1}^{\lambda_n} \right)^\top \right)^\top
 :=   \left( \left( \boldsymbol{\hat{\theta}}_1^{\lambda_n} - \boldsymbol{\theta}_1\right)^\top, \left( \boldsymbol{\hat{\theta}}_2^{\lambda_n} - \boldsymbol{\theta}_2\right)^\top, \ldots, \left( \boldsymbol{\hat{\theta}}_{K-1}^{\lambda_n} - \boldsymbol{\theta}_{K-1}\right)^\top \right)^\top
 .
 \end{align*}
Note that \(\boldsymbol{\beta}_k = \sum_{k' \leq k} \boldsymbol{\theta}_{k'}\). Let \( \boldsymbol{\beta} := (\boldsymbol{\beta}_1^\top, \boldsymbol{\beta}_2^\top, \ldots, \boldsymbol{\beta}_{K-1}^\top)^\top\), and denote by
 \[
 \boldsymbol{\hat{\beta}}_k^{\lambda_n} := \sum_{k' \leq k} \boldsymbol{\hat{\theta}}_{k'}, \qquad k \in \{1, \ldots, K-1\}
 \]
 the estimates of \(\boldsymbol{\beta}\) yielded from the estimates of \(\boldsymbol{\theta}\). Let
\begin{align*}
 \boldsymbol{\hat{\epsilon}}_{\boldsymbol{\beta}}^{\lambda_n}  = ~ & \left( \left(   \boldsymbol{\hat{\epsilon}}_{\boldsymbol{\beta},1}^{\lambda_n} \right)^\top, \left(  \boldsymbol{\hat{\epsilon}}_{\boldsymbol{\beta},2}^{\lambda_n} \right)^\top, \ldots, \left(  \boldsymbol{\hat{\epsilon}}_{\boldsymbol{\beta},K-1}^{\lambda_n} \right)^\top \right)^\top
 \\ := ~ &  \left( \left( \boldsymbol{\hat{\beta}}_1^{\lambda_n} - \boldsymbol{\beta}_1\right)^\top, \left( \boldsymbol{\hat{\beta}}_2^{\lambda_n} - \boldsymbol{\beta}_2\right)^\top, \ldots, \left( \boldsymbol{\hat{\beta}}_{K-1}^{\lambda_n} - \boldsymbol{\beta}_{K-1}\right)^\top \right)^\top
 ,
\end{align*}
 and observe that
 \begin{align}
 \boldsymbol{\hat{\epsilon}}_{\boldsymbol{\beta}}^{\lambda_n} = ~ &  \left( \left( \boldsymbol{\hat{\beta}}_1^{\lambda_n} - \boldsymbol{\beta}_1\right)^\top, \left( \boldsymbol{\hat{\beta}}_2^{\lambda_n} - \boldsymbol{\beta}_2\right)^\top, \ldots, \left( \boldsymbol{\hat{\beta}}_{K-1}^{\lambda_n} - \boldsymbol{\beta}_{K-1}\right)^\top \right)^\top \nonumber
 \\  = ~ &  \left( \left( \boldsymbol{\hat{\theta}}_1^{\lambda_n} - \boldsymbol{\theta}_1\right)^\top, \left(  \sum_{k' \leq 2} \boldsymbol{\hat{\theta}}_{k'} -   \sum_{k' \leq 2} \boldsymbol{\theta}_{k'} \right)^\top, \ldots, \left(  \sum_{k' \leq K -1} \boldsymbol{\hat{\theta}}_{k'} -  \sum_{k' \leq K-1} \boldsymbol{\theta}_{k'} \right)^\top \right)^\top \nonumber
\\  = ~ &  \left( \left(   \boldsymbol{\hat{\epsilon}}_{\boldsymbol{\theta},1}^{\lambda_n} \right)^\top, \left(  \sum_{k' \leq 2} \boldsymbol{\hat{\epsilon}}_{\boldsymbol{\theta},k'}^{\lambda_n} \right)^\top, \ldots, \left(  \sum_{k' \leq K -1}  \boldsymbol{\hat{\epsilon}}_{\boldsymbol{\theta},k'}^{\lambda_n} \right)^\top \right)^\top \label{theta.sum}
.
 \end{align}
Then
%
\begin{align*}
\left\lVert \boldsymbol{\hat{\epsilon}}_{\boldsymbol{\beta},1}^{\lambda_n} \right\rVert_1 
= ~ &  \sum_{k=1}^{K-1}   \left\lVert \boldsymbol{\hat{\epsilon}}_{\boldsymbol{\beta},k}^{\lambda_n} \right\rVert_1 
\stackrel{(a)}{=} \sum_{k=1}^{K-1}   \left\lVert   \sum_{k' \leq k} \boldsymbol{\hat{\epsilon}}_{\boldsymbol{\theta},k'}^{\lambda_n} \right\rVert_1
\stackrel{(b)}{\leq}  \sum_{k=1}^{K-1}  \sum_{k' \leq k}   \left\lVert   \boldsymbol{\hat{\epsilon}}_{\boldsymbol{\theta},k'}^{\lambda_n} \right\rVert_1 = \sum_{k=1}^{K-1}  (K - k) \left\lVert   \boldsymbol{\hat{\epsilon}}_{\boldsymbol{\theta},k}^{\lambda_n} \right\rVert_1  
\\ \leq ~ &   (K - 1)  \sum_{k=1}^{K-1}  \left\lVert   \boldsymbol{\hat{\epsilon}}_{\boldsymbol{\theta},k}^{\lambda_n} \right\rVert_1 = (K-1) \left\lVert \boldsymbol{\hat{\epsilon}}_{\boldsymbol{\theta}}^{\lambda_n} \right\rVert_1.
\end{align*}
where in \((a)\) we used \eqref{theta.sum} and in \((b)\) we used the triangle inequality. Lastly, \( \left\lVert  \boldsymbol{\hat{\beta}}^{\lambda_n} - \boldsymbol{\beta}  \right\rVert_2  \leq \left\lVert   \boldsymbol{\hat{\beta}}^{\lambda_n} - \boldsymbol{\beta} \right\rVert_1\) is a property of the \(\ell_1\) and \(\ell_2\) norms.

\end{proof}

\begin{proof}[Proof of Lemma \ref{a.id.lem}]
\begin{align*}
& \sum_{k=1}^K \frac{n_k}{n} \boldsymbol{A}^{(k)}\left(\boldsymbol{A}^{(k)}\right)^\top 
\\ = ~ & \frac{n_1}{n}  \boldsymbol{A}^{(1)} \left(\boldsymbol{A}^{(1)}\right)^\top + \sum_{k=2}^{K-1} \frac{n_k}{n} \boldsymbol{A}^{(k)}\left(\boldsymbol{A}^{(k)}\right)^\top + \frac{n_K}{n} \boldsymbol{A}^{(K)} \left(\boldsymbol{A}^{(K)}\right)^\top
%
%
\\ = ~ & \frac{n_1}{n}  \begin{pmatrix} \boldsymbol{0}_{K-1} & \boldsymbol{e}_1 \end{pmatrix} \begin{pmatrix} \boldsymbol{0}_{K-1}^\top \\ \boldsymbol{e}_1^\top \end{pmatrix}
 + \sum_{k=2}^{K-1} \frac{n_k}{n} \begin{pmatrix} \sum_{k'=1}^{k-1} \boldsymbol{e}_{k'} & \sum_{k'=1}^{k} \boldsymbol{e}_{k'}  \end{pmatrix}  \begin{pmatrix} \sum_{k'=1}^{k-1} \boldsymbol{e}_{k'}^\top \\ \sum_{k'=1}^{k} \boldsymbol{e}_{k'}^\top  \end{pmatrix} 
\\ & + \frac{n_K}{n} \begin{pmatrix} \sum_{k'=1}^{K-1} \boldsymbol{e}_{k'} & \boldsymbol{0}_{K-1} \end{pmatrix} \begin{pmatrix} \sum_{k'=1}^{K-1} \boldsymbol{e}_{k'}^\top \\ \boldsymbol{0}_{K-1}^\top \end{pmatrix}
\\ = ~ & \frac{n_1}{n}  \left( \boldsymbol{0}_{K-1} \boldsymbol{0}_{K-1}^\top +  \boldsymbol{e}_1  \boldsymbol{e}_1^\top  \right)
+ \sum_{k=2}^{K-1} \frac{n_k}{n} \left(  \sum_{k'=1}^{k-1} \boldsymbol{e}_{k'}  \sum_{k'=1}^{k-1} \boldsymbol{e}_{k'}^\top +  \sum_{k'=1}^{k} \boldsymbol{e}_{k'}  \sum_{k'=1}^{k} \boldsymbol{e}_{k'}^\top \right)
\\ & + \frac{n_K}{n}  \left( \sum_{k'=1}^{K-1} \boldsymbol{e}_{k'}  \sum_{k'=1}^{K-1} \boldsymbol{e}_{k'}^\top  + \boldsymbol{0}_{K-1} \boldsymbol{0}_{K-1}^\top \right)
%
%
\\ = ~ & \frac{n_1}{n}  \boldsymbol{B}_1  \boldsymbol{B}_1^\top  
+ \sum_{k=2}^{K-1} \frac{n_k}{n} \left(  \boldsymbol{B}_{k-1}  \boldsymbol{B}_{k-1}^\top +  \boldsymbol{B}_{k}   \boldsymbol{B}_{k}^\top \right)
 + \frac{n_K}{n}  \boldsymbol{B}_{K-1} \boldsymbol{B}_{K-1}^\top 
\\ =  ~ &  \sum_{k=1}^{K-1} \frac{n_k + n_{k+1}}{n} \boldsymbol{B}_k \boldsymbol{B}_k^\top
\\ = ~ &  \boldsymbol{B} \boldsymbol{D} \boldsymbol{B}^\top.
\end{align*}
\end{proof}

\begin{proof}[Proof of Lemma \ref{b.op.lem}]
\(\boldsymbol{B}\) is full rank with inverse
\[
\boldsymbol{B}^{-1} = \begin{pmatrix}
1 & -1 & 0 &  0  & \cdots & 0 
\\ 0 & 1 & -1  & 0 & \cdots & 0 
\\ 0 & 0 & 1 & -1   & \cdots & 0 
\\ \vdots & \vdots & \vdots & \vdots & \ddots & \vdots 
\\ 0 & 0 & 0 & 0 & \cdots & 1 
\end{pmatrix}.
\]
We have
\[
\lVert \boldsymbol{B}^{-1} \rVert_{\text{op}} \leq \sqrt{  \lVert \boldsymbol{B}^{-1} \rVert_{1}  \lVert \boldsymbol{B}^{-1} \rVert_{\infty} } = \sqrt{2 \cdot 2} = 2,
\]
where we have used that the \(\ell_1\) norm of each column and row is at most 2. Then the result follows from \(\sigma_{\text{min}} \left( \boldsymbol{B} \right) = 1/ \lVert \boldsymbol{B}^{-1} \rVert_{\text{op}}\).
\end{proof}

\begin{proof}[Proof of Lemma \ref{jacob.lem.2}]
The matrix \(\boldsymbol{\overline{A}} \otimes (\boldsymbol{\overline{B}} - \boldsymbol{\overline{C}})\) is the Kronecker product of two positive semidefinite matrices and is therefore positive semidefinite.
\end{proof}

\begin{proof}[Proof of Lemma \ref{pos.def.lem}]

First we state a lemma we will use.

\begin{lemma}\label{bin.prob.bound}
Suppose that for all \(\boldsymbol{x} \in \mathcal{S}\) and \(k \in \{1, \ldots, K\}\) it holds that \(\mathbb{P}(y_i = k \mid \boldsymbol{x}) \geq \pi_{\text{rare,min}}\). Let \(n_k\) be the number of observations in class \(k\) from a data set of size \(n\). Then for any \(q \in \{1, \ldots, \lfloor n\pi_{\text{rare,min}} \rfloor\}\),
\[
\mathbb{P} \left(  \bigcap_{k=1}^K \left\{ n_k >   q  \right\} \right)  \geq   1 - K \exp \left\{ -2n \left( \pi_{\text{rare,min}} - \frac{q}{n} \right)^2\right\}
.
\]

\end{lemma}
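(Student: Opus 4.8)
The plan is to combine a union bound over the $K$ classes with a one-sided Hoeffding inequality for each class count. First I would note that since the pairs $(\boldsymbol{x}_i, y_i)$ are iid, the indicators $\mathbbm{1}\{y_i = k\}$, $i = 1, \ldots, n$, are iid Bernoulli random variables with common success probability $p_k := \mathbb{P}(y_1 = k) = \mathbb{E}_{\boldsymbol{x}}[\mathbb{P}(y_1 = k \mid \boldsymbol{x})]$, so that $n_k = \sum_{i=1}^n \mathbbm{1}\{y_i = k\}$ is $\mathrm{Binomial}(n, p_k)$. The hypothesis $\mathbb{P}(y_i = k \mid \boldsymbol{x}) \geq \pi_{\text{rare,min}}$ for all $\boldsymbol{x} \in \mathcal{S}$ passes to the marginal by taking expectations over $\boldsymbol{x}$, giving $p_k \geq \pi_{\text{rare,min}}$ for every $k$.

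Next I would bound $\mathbb{P}(n_k \leq q)$ for a fixed $k$. Writing this as $\mathbb{P}\big(n_k/n \leq p_k - (p_k - q/n)\big)$ and observing that $q \leq \lfloor n\pi_{\text{rare,min}} \rfloor \leq n\pi_{\text{rare,min}} \leq n p_k$ ensures $p_k - q/n \geq 0$, the lower-tail Hoeffding inequality for iid $[0,1]$-valued summands gives $\mathbb{P}(n_k \leq q) \leq \exp\{-2n(p_k - q/n)^2\}$. Since $t \mapsto (t - q/n)^2$ is nondecreasing on $[q/n, \infty)$ and $p_k \geq \pi_{\text{rare,min}} \geq q/n$, this is at most $\exp\{-2n(\pi_{\text{rare,min}} - q/n)^2\}$, a bound that does not depend on $k$.

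Finally, a union bound yields
\[
\mathbb{P}\!\left( \bigcap_{k=1}^K \{n_k > q\} \right) = 1 - \mathbb{P}\!\left( \bigcup_{k=1}^K \{n_k \leq q\} \right) \geq 1 - \sum_{k=1}^K \mathbb{P}(n_k \leq q) \geq 1 - K \exp\!\left\{ -2n\left( \pi_{\text{rare,min}} - \tfrac{q}{n} \right)^2 \right\},
\]
which is the claimed inequality. The argument is essentially routine; the only points requiring any care are the reduction from the conditional lower bound on $\mathbb{P}(y_i = k \mid \boldsymbol{x})$ to an exact Binomial model for $n_k$ with marginal success probability at least $\pi_{\text{rare,min}}$, and the verification that the restriction $q \leq \lfloor n\pi_{\text{rare,min}} \rfloor$ makes Hoeffding's inequality applicable in the correct (lower-tail) direction with the stated constant. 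I do not anticipate a substantive obstacle.
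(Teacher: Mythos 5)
Your proposal is correct and follows exactly the same route as the paper's proof: a one-sided Hoeffding bound on each $\mathbb{P}(n_k \leq q)$ followed by a union bound over the $K$ classes. The only difference is that you spell out the reduction from the conditional bound to the marginal Binomial model and the monotonicity step replacing $p_k$ by $\pi_{\text{rare,min}}$, details the paper leaves implicit.
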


\begin{proof} Provided later in this section.
\end{proof}



Note that \(n \pi_{\text{rare,min}}/2 \leq \lfloor n \pi_{\text{rare,min}}  \rfloor\) because \(x/2 \leq \lfloor x \rfloor\) for all \(x \geq 2\) and \(n  \pi_{\text{rare,min}} \geq 2\) due to assumption \eqref{a.eigen.eq}. So the assumptions of Lemma \ref{bin.prob.bound} are satisfied for \(q = n \pi_{\text{rare,min}}/2\), and there are more than \(n \pi_{\text{rare,min}}/2\) observations in each class with probability at least 
\begin{align*}
 1 - K \exp \left\{ -2n \left( \pi_{\text{rare,min}} - \pi_{\text{rare,min}}/2\} \right)^2\right\}  = ~ & 1 - K \exp \left\{ -\frac{1}{2} \pi_{\text{rare,min}}^2 n\right\}
 .
 \end{align*}
\end{proof}

\begin{proof}[Proof of Lemma \ref{new.a.min.lem}]

We will prove the result by using a concentration inequality on the minimum singular value of a random matrix (which will correspond to the square root of the minimum eigenvalue of \(\frac{1}{n_k}  \sum_{i: y_i = k} \boldsymbol{x}_i \boldsymbol{x}_i^\top\); recall that the eigenvalues of \(\boldsymbol{A}^\top \boldsymbol{A}\) are the squares of the singular values of \(\boldsymbol{A}\)). However, the result we use applies only to random matrices with isotropic second moment matrices, so we need to standardize \(\frac{1}{n_k}  \sum_{i: y_i = k} \boldsymbol{x}_i \boldsymbol{x}_i^\top\) by its inverse square root second moment matrix. We can relate this quantity to the minimum eigenvalue of \(\frac{1}{n_k}  \sum_{i: y_i = k} \boldsymbol{x}_i \boldsymbol{x}_i^\top\) by our claim (that we will verify later) that for a symmetric positive semidefinite matrix \(\boldsymbol{S}\) and symmetric positive definite \(\boldsymbol{\Sigma}\) it holds that
\begin{equation}\label{lam.min.claim}
\lambda_{\text{min}}\left( \boldsymbol{S} \right)  \geq \lambda_{\text{min}} \left( \boldsymbol{\Sigma}^{-1/2} \boldsymbol{S} \boldsymbol{\Sigma}^{-1/2} \right) \lambda_{\text{min}} \left( \boldsymbol{\Sigma} \right).
\end{equation}
Then substituting \(\boldsymbol{S} =   \frac{1}{n_k}  \sum_{i: y_i = k } \boldsymbol{x}_i \boldsymbol{x}_i^\top  \) and \( \boldsymbol{\Sigma} = \boldsymbol{\Sigma}_k = \E \left[  \boldsymbol{x}_i \boldsymbol{x}_i^\top \mid y_i = k\right] \) (which we assumed has a strictly postive minimum eigenvalue and is therefore invertible) into \eqref{lam.min.claim} yield that on an event where \(n_k = \sum_{i=1}^n \mathbbm{1} \left\{y_i = k\right\} \geq 1\), we have that almost surely
\begin{align}
\lambda_{\text{min}}\left(   \frac{1}{n_k}  \sum_{i: y_i = k } \boldsymbol{x}_i \boldsymbol{x}_i^\top  \right)  \geq ~ &  \lambda_{\text{min}} \left(  \frac{1}{n_k}  \boldsymbol{\Sigma}_k^{-1/2}     \sum_{i: y_i = k } \boldsymbol{x}_i \boldsymbol{x}_i^\top  \boldsymbol{\Sigma}_k^{-1/2} \right) \lambda_{\text{min}} \left( \boldsymbol{\Sigma}_k \right) \nonumber
\\  \geq ~ &  \lambda_{\text{min}} \left(  \frac{1}{n_k}  \boldsymbol{\Sigma}_k^{-1/2}     \sum_{i: y_i = k } \boldsymbol{x}_i \boldsymbol{x}_i^\top  \boldsymbol{\Sigma}_k^{-1/2} \right) \lambda_{\text{min}}^*  \label{min.eigen}
,
\end{align}
where the second line uses our assumption from the statement of Theorem \ref{cons.thm} and the fact that \(  \frac{1}{n_k}  \boldsymbol{\Sigma}_k^{-1/2}     \sum_{i: y_i = k } \boldsymbol{x}_i \boldsymbol{x}_i^\top  \boldsymbol{\Sigma}_k^{-1/2}  \) is almost surely positive semidefinite. Next we lower-bound the minimum eigenvalue of the random matrix \( \frac{1}{n_k}  \boldsymbol{\Sigma}_k^{-1/2}     \sum_{i: y_i = k } \boldsymbol{x}_i \boldsymbol{x}_i^\top  \boldsymbol{\Sigma}_k^{-1/2} \) using a concentration inequality from \citet{vershynin_2012}.
Observe that (as \citealt{vershynin2018high} points out in Example 2.5.8(c)) the \(\psi_2\) norm of a bounded random vector \(\boldsymbol{T} \in \mathbb{R}^p\) can be upper-bounded as follows:
\begin{align*}
\lVert \boldsymbol{T} \rVert_{\psi_2} = ~ & \sup_{\boldsymbol{v}: \boldsymbol{v} \in \mathbb{R}^p, \lVert \boldsymbol{v} \rVert_2 = 1} \left\{ \lVert \boldsymbol{T}^\top \boldsymbol{v} \rVert_{\psi_2} \right\}
\\ = ~ & \sup_{\boldsymbol{v}: \boldsymbol{v} \in \mathbb{R}^p, \lVert \boldsymbol{v} \rVert_2 = 1} \left\{  \inf \left\{ t > 0: \E \exp \left(  \frac{\left[ \boldsymbol{T}^\top \boldsymbol{v}\right]^2}{t^2}  \right) \leq 2 \right\} \right\}
\\ \leq ~ & \inf \left\{ t > 0: \exp \left(  \frac{\lVert \boldsymbol{T} \rVert_\infty^2}{t^2}  \right) \leq 2 \right\} 
\\ = ~ &   \frac{\lVert \boldsymbol{T} \rVert_\infty}{ \sqrt{\log 2} }
.
\end{align*}
Therefore under our assumptions \(\boldsymbol{x}_i \mid \boldsymbol{y}\) is bounded and therefore subgaussian, with \(\psi_2\) norm at most \(\lVert \boldsymbol{x}_i \rVert_{\psi_2} \leq \lVert \boldsymbol{x} \rVert_\infty   /\sqrt{\log 2} =  C_4/\sqrt{\log 2}\) for all \(k\). So for any \(k \in \{1, \ldots, K\}\), we have from Theorem 5.39 in \citet{vershynin_2012} that for any \(t \geq 0\) there exists \(c > 0\) such that the event
\begin{equation}\label{cond.prod.thm.5.39.guarantee}
\mathbb{P} \left( \sigma_{\text{min}} \left( \frac{1}{n_k}  \sum_{i: y_i = k} \boldsymbol{\Sigma}_k^{-1/2} \boldsymbol{x}_i   \right)  \geq \sqrt{n_k} - C \sqrt{p_n} - t  \mid \boldsymbol{y} \right) \geq 1 - 2 \exp\{-ct^2\} 
\end{equation}
holds almost surely for \(C = C_4^2/(\log 2) \sqrt{\log(9)/c_1}\), where \(C_4\) is as defined in the statement of Theorem \ref{cons.thm}, \(c_1\) is a constant from \citet{vershynin_2012}, \(\sigma_{\text{min}} ( \cdot)\) denotes the minimum singular value, and if the set \(\{i: y_i = k\}\) is empty we define \(\frac{1}{n_k}  \sum_{i: y_i = k} \boldsymbol{\Sigma}_k^{-1/2} \boldsymbol{x}_i \) to equal \(\boldsymbol{0}_p\) (and note that the inequality then trivially holds with probability one in this case, because \(n_k = 0\) so the right side is nonpositive). 
For \(k \in \{1, \ldots, K\}\) and \(t \geq 0\), define the event \(\mathcal{E}_k(t)\) by 
\[
\mathcal{E}_k(t) := \left\{ \sigma_{\text{min}} \left( \frac{1}{n_k}  \sum_{i: y_i = k} \boldsymbol{\Sigma}_k^{-1/2} \boldsymbol{x}_i   \right)  \geq \sqrt{n_k} - C \sqrt{p_n} - t  \right\}
,
\]
and observe that 
\begin{equation}\label{e.k.t.prob}
\mathbb{P}(\mathcal{E}_k(t) ) \geq 1 - 2 \exp\{-ct^2\}
\end{equation}
 by marginalizing \eqref{cond.prod.thm.5.39.guarantee} over \(\boldsymbol{y}\).
Now we consider a particular choice of \(t\). 
From our assumption \eqref{a.eigen.eq} we have
\begin{align*}
n \pi_{\text{rare, min}} > ~ & 2 \left( C \sqrt{p_n} + \sqrt{\frac{a}{\lambda_{\text{min}}^*}} \right)^2
 \\ \iff \qquad \sqrt{\frac{n \pi_{\text{rare,min}}}{2} } - C \sqrt{p_n}  - \sqrt{\frac{a}{\lambda_{\text{min}}^*}} > ~ &    0,
 \end{align*}
so we can choose \(t = \sqrt{ n \pi_{\text{rare,min}}  /2}  - C \sqrt{p_n} - \sqrt{\frac{a}{\lambda_{\text{min}}^*}}\), yielding that on 
\[
\mathcal{E}_k \left( \sqrt{\frac{n \pi_{\text{rare,min}}}{2} }  - C \sqrt{p_n} - \sqrt{\frac{a}{\lambda_{\text{min}}^*}} \right) \cap \{n_k \geq 1\}
\]
we have
\begin{align}
\sigma_{\text{min}} \left( \frac{1}{n_k}  \sum_{i: y_i = k} \boldsymbol{\Sigma}_k^{-1/2} \boldsymbol{x}_i   \right)  \geq ~ & \sqrt{n_k} - C \sqrt{p_n} - \sqrt{\frac{n \pi_{\text{rare,min}}}{2} }+ C \sqrt{p_n} + \sqrt{\frac{a}{\lambda_{\text{min}}^*}}  \nonumber
\\ = ~ & \sqrt{n_k} -  \sqrt{\frac{n \pi_{\text{rare,min}}}{2} }  + \sqrt{\frac{a}{\lambda_{\text{min}}^*}} \label{sing.val.ineq}
.
\end{align}
That is, on this event we can lower-bound the minimum eigenvalue of \( \frac{1}{n_k}  \boldsymbol{\Sigma}_k^{-1/2}     \sum_{i: y_i = k } \boldsymbol{x}_i \boldsymbol{x}_i^\top  \boldsymbol{\Sigma}_k^{-1/2} \) provided that \(n_k\) is at least 1 and large enough that the right side of \eqref{sing.val.ineq} is nonnegative. Next we work on lower-bounding \(n_k\) with high probability. Consider the event
\begin{align*}
\mathcal{N} :=  ~ & \left\{ n_k \geq  \frac{n \pi_{\text{rare,min}}  }{2} \qquad \forall k \in \{1, \ldots, K\}  \right\}
.
\end{align*}
Notice that on \(\mathcal{N}\) we have that the lower bound in \eqref{sing.val.ineq} is at least \(\sqrt{a/\lambda_{\text{min}}^*}\) and \(n_k  \geq 1\) for all \(k\) due to \eqref{a.eigen.eq}. That is, for any \(k\), on \(\mathcal{N} \cap \mathcal{E}_k \left(  \sqrt{ n \pi_{\text{rare,min}}  /2}  - C \sqrt{p_n} - \sqrt{\frac{a}{\lambda_{\text{min}}^*}} \right)\) inequality \eqref{sing.val.ineq} holds and yields \(\sigma_{\text{min}} \left( \frac{1}{n_k}  \sum_{i: y_i = k} \boldsymbol{\Sigma}_k^{-1/2} \boldsymbol{x}_i   \right) \geq  \sqrt{a/\lambda_{\text{min}}^*}  \). Since the eigenvalues of \(\frac{1}{n_k}  \sum_{i: y_i = k}  \boldsymbol{\Sigma}_k^{-1/2}  \boldsymbol{x}_i \boldsymbol{x}_i^\top  \boldsymbol{\Sigma}_k^{-1/2} \) are the squares of the singular values of \(\frac{1}{n_k}  \sum_{i: y_i = k} \boldsymbol{\Sigma}_k^{-1/2} \boldsymbol{x}_i \), on \(\mathcal{N} \cap  \mathcal{E}_k \left(  \sqrt{ n \pi_{\text{rare,min}}  /2}  - C \sqrt{p_n} - \sqrt{\frac{a}{\lambda_{\text{min}}^*}} \right)\) we have
\[
\lambda_{\text{min}} \left( \frac{1}{n_k}  \sum_{i: y_i = k} \boldsymbol{\Sigma}_k^{-1/2} \boldsymbol{x}_i \boldsymbol{x}_i^\top  \boldsymbol{\Sigma}_k^{-1/2}  \right) \geq  \frac{a}{\lambda_{\text{min}}^*} > 0.
\]
Finally, substituting this into \eqref{min.eigen} we have that on \(\left( \bigcap_{k=1}^K  \mathcal{E}_k \left(  \sqrt{ n \pi_{\text{rare,min}}  /2}  - C \sqrt{p_n} - \sqrt{\frac{a}{\lambda_{\text{min}}^*}} \right) \right) \cap  \mathcal{N} \) 
\begin{align*}
\lambda_{\text{min}} \left( \frac{1}{n_k}  \sum_{i: y_i = k} \boldsymbol{x}_i \boldsymbol{x}_i^\top   \right) 
\geq ~ &  \frac{a}{\lambda_{\text{min}}^*}  \lambda_{\text{min}}^* = a \qquad \forall k \in \{1, \ldots, K\}.
\end{align*}
Using a union bound, this holds with probability at least
\begin{align*}
& \mathbb{P} \left( \left( \bigcap_{k=1}^K  \mathcal{E}_k \left(  \sqrt{\frac{n \pi_{\text{rare,min}}}{2} }  - C \sqrt{p_n} - \sqrt{\frac{a}{\lambda_{\text{min}}^*}} \right) \right) \cap   \mathcal{N} \right) 
%
\\ \geq ~ & 1-  \sum_{k=1}^K \mathbb{P} \left(   \mathcal{E}_k^c \left(  \sqrt{\frac{n \pi_{\text{rare,min}}}{2} }  - C \sqrt{p_n} - \sqrt{\frac{a}{\lambda_{\text{min}}^*}} \right) \right)  - \mathbb{P} \left(   \mathcal{N}^c \right) 
\\ \geq ~ & 1- 2 K  \exp \left\{-c \left( \sqrt{ \frac{ n \pi_{\text{rare,min}}  }{2}}  - C \sqrt{p_n} - \sqrt{\frac{a}{\lambda_{\text{min}}^*}} \right)^2 \right\}\  -  K \exp \left\{ -\frac{1}{2} \pi_{\text{rare,min}}^2 n\right\}
,
\end{align*}
where in the last step we applied \eqref{e.k.t.prob} and Lemma \ref{pos.def.lem}.

Finally, we show \eqref{lam.min.claim}. Observe that for any \(\boldsymbol{v} \) with \( \lVert \boldsymbol{v} \rVert_2 = 1\)
\begin{align*}
\boldsymbol{v}^\top \boldsymbol{S} \boldsymbol{v} = ~ & \left( \boldsymbol{\Sigma}^{1/2} \boldsymbol{v}  \right)^\top  \boldsymbol{\Sigma}^{-1/2} \boldsymbol{S} \boldsymbol{\Sigma}^{-1/2}  \left( \boldsymbol{\Sigma}^{1/2} \boldsymbol{v}  \right)
\\ \geq ~ & \lambda_{\text{min}} \left(  \boldsymbol{\Sigma}^{-1/2} \boldsymbol{S} \boldsymbol{\Sigma}^{-1/2} \right) \lVert \boldsymbol{\Sigma}^{1/2} \boldsymbol{v} \rVert_2^2
\\ = ~ & \lambda_{\text{min}} \left(  \boldsymbol{\Sigma}^{-1/2} \boldsymbol{S} \boldsymbol{\Sigma}^{-1/2} \right)  \boldsymbol{v}^\top \boldsymbol{\Sigma} \boldsymbol{v}
\\ \geq ~ & \lambda_{\text{min}} \left(  \boldsymbol{\Sigma}^{-1/2} \boldsymbol{S} \boldsymbol{\Sigma}^{-1/2} \right)  \lambda_{\text{min}} \left(    \boldsymbol{\Sigma} \right),
\end{align*}
proving the claim. 

\end{proof}

\begin{proof}[Proof of Lemma \ref{bin.prob.bound}]


 By Hoeffding's inequality we have that for any \(k \in \{1, \ldots, K\}\) and any \(q \in \{1, \ldots, \lfloor n\pi_{\text{rare,min}} \rfloor\}\), \(\mathbb{P} \left( n_k \leq q \right) \leq   \exp \left\{ -2n \left( \pi_{\text{rare,min}} - \frac{q}{n} \right)^2\right\}\). Then using a union bound, we have
\begin{align*}
 \mathbb{P} \left(  \bigcap_{k=1}^K \left\{ n_k >   q  \right\} \right) 
 = ~ & 1 - \mathbb{P} \left(  \bigcup_{k=1}^K \left\{ n_k \leq   q  \right\} \right) 
\\ \geq ~ &  1 - \sum_{k=1}^K \mathbb{P} \left(  n_k \leq   q \right) 
\\ \geq ~ &  1 - K \exp \left\{ -2n \left( \pi_{\text{rare,min}} - \frac{q}{n} \right)^2\right\}
.
\end{align*}

\end{proof}

\section{Estimating \textsc{PRESTO}}\label{sec.est.presto}

Code generating all plots and tables that appear in the paper is available in the public GitHub repo at \url{https://github.com/gregfaletto/presto}. Executing the code in that repo generates the exact plots that appear in this version of the paper.

In all data experiments, we implemented PRESTO by slightly modifying the code for the R \texttt{ordinalNet} package (version 2.12), as we discuss below.

The synthetic data experiments from Sections \ref{sparse.sim} and \ref{dense.sim}, as well as Simulation Studies A and B in Appendix \ref{main.thm.sec}, were conducted in R Version 4.2.2 running on macOS 10.15.7 on an iMac with a 3.5 GHz Quad-Core Intel Core i7 processor and 32 GB or RAM. We used the R packages \texttt{MASS} (version 7.3.58.1), \texttt{simulator} (version 0.2.4), \texttt{ggplot2} (version 3.3.6), \texttt{cowplot} (version 1.1.1), and \texttt{stargazer} (version 5.2.3), all available for download on CRAN, as well as the base \texttt{parallel} package (version 4.2.2).

The real data experiments from Section \ref{real.data} and Appendix \ref{real.data.2} were conducted in R Version 4.3.0 running on macOS Ventura 13.3.1 on a MacBook Pro with a 2.3 GHz Quad-Core Intel Core i5 processor and 16 GB or RAM. We used the R packages \texttt{MASS} (version 7.3.58.4), \texttt{simulator} (version 0.2.4), \texttt{ggplot2} (version 3.4.2), \texttt{cowplot} (version 1.1.1), and \texttt{stargazer} (version 5.2.3), all available for download on CRAN, as well as the base \texttt{parallel} package (version 4.3.0). The data for the real data experiment from Section \ref{real.data} is the \texttt{soup} data set from the R \texttt{ordinal} package (version 2022.11.16). The data for the real data experiment from Appendix \ref{real.data.2} is the \texttt{PreDiabetes} data set from the R \texttt{MLDataR} package (version 1.0.1).

In the remainder of this section, we describe our implementation of PRESTO. We will use the notation and terminology of \citet{ordnet}, with the exception of continuing to use our convention of \(K\) total categories. We fit \textsc{PRESTO} by reparameterizing the efficient coordinate descent algorithm used to estimate \(\ell_1\)-penalized ordinal regression models in the R \texttt{ordinalNet} \citep{ordnet} package, in much the same way that the generalized lasso can be implemented using reparameterization; see \citet[Section 3]{tibshirani2011solution} or Section 4.5.1.1 of \citet{hastie2015statistical} for a textbook-level discussion.

The \texttt{ordinalNet} package implements elastic net penalized ordinal regression, including an \(\ell_1\)-penalized (or \(\ell_2\)-penalized, as we explored in Section \ref{ridge.supp}) relaxation of the proportional odds model. The parameters we seek to model are \(\boldsymbol{\eta}_i \in \mathbb{R}^{K-1}\), which model the probabilities of the \(K\) outcomes by the relation \(\boldsymbol{\eta}_i = g(\boldsymbol{p}_i)\), where \(\boldsymbol{p}_i \in \mathcal{S}^{K-1}\) (where \(\mathcal{S}^{K-1} := \{ \boldsymbol{p}: \boldsymbol{p} \in (0, 1)^{K-1}, \lVert p \rVert_1 < 1\}\) are the probabilities of outcomes \(\{1, \ldots, K -1\}\); in particular, \(p_{ik} = \mathbb{P}(y_i = k)\) for \(k \in \{1, \ldots, K-1\}\) and \(\mathbb{P}(y_i = K) = 1 - \sum_{k=1}^{K-1} p_{ik}\) ) and \(g: \mathcal{S}^{K-1} \to \mathbb{R}^{K-1} \) is an invertible function. For our model, the forwards cumulative probability model, where \(p_{ik} = \mathbb{P}(y_i \leq k)\),
\[
 \begin{bmatrix} g(\boldsymbol{p}_i) \end{bmatrix}_k = \log \left( \frac{\sum_{k'=1}^k p_{ik'} }{ 1 - \sum_{k'=1}^k p_{ik'}} \right), \qquad k \in \{1, \ldots, K\}.
\]
We choose the nonparallel model \(\boldsymbol{\eta}_i = \boldsymbol{c} + \boldsymbol{B}^\top \boldsymbol{x}_{i}\), where \(\boldsymbol{x}_{i}\) is the \(i^\text{th}\) row of \(\boldsymbol{X}\), \(\boldsymbol{c} \in \mathbb{R}^{K-1}\) is a vector of intercepts, and
\[
\boldsymbol{B} = \begin{bmatrix}  \boldsymbol{B}_{\cdot 1} & \cdots &   \boldsymbol{B}_{\cdot K - 1} \end{bmatrix}
\]
is a \(p \times (K-1)\) matrix of coefficients. Observe that we can simply write \(\boldsymbol{\eta}_i =  \boldsymbol{\tilde{X}}_{i} \boldsymbol{\beta}\), for \(\boldsymbol{\beta} \in \mathbb{R}^{(K-1)(p+1)}\) defined by
\[
\boldsymbol{\beta}  = \begin{pmatrix}
\boldsymbol{c}
\\ \boldsymbol{B}_{\cdot 1}
\\ \vdots
\\ \boldsymbol{B}_{\cdot K - 1}
\end{pmatrix}
\]
and \(\boldsymbol{\tilde{X}}_{i} \in \mathbb{R}^{(K-1) \times (K-1)(p+1)} \) defined by
\[
\boldsymbol{\tilde{X}}_{i} = \begin{pmatrix}
& \boldsymbol{x}_i^\top & \boldsymbol{0}_{p}^\top & \cdots & \boldsymbol{0}_p^\top & \boldsymbol{0}_p^\top
\\ & \boldsymbol{0}_{p}^\top &\boldsymbol{x}_i^\top & \cdots  & \boldsymbol{0}_p^\top
 & \boldsymbol{0}_p^\top
\\ \boldsymbol{I}_{K-1} &  \vdots & \vdots & \ddots & \vdots & \vdots
\\ &    \boldsymbol{0}_{p}^\top&    \boldsymbol{0}_{p}^\top  & \cdots &   \boldsymbol{x}_i^\top  &   \boldsymbol{0}_{p}^\top
\\ &   \boldsymbol{0}_{p}^\top  &  \boldsymbol{0}_{p}^\top& \cdots &  \boldsymbol{0}_{p}^\top &   \boldsymbol{x}_i^\top 
\end{pmatrix} = \begin{pmatrix} \boldsymbol{I}_{K-1}  & \boldsymbol{I}_{K-1} \otimes   \boldsymbol{x}_i^\top  \end{pmatrix} .
\]
The R \texttt{ordinalNet} package solves the convex \citep{wurm2017regularized, Pratt1981} optimization problem
\begin{align*}
& \underset{\boldsymbol{\beta}}{\arg \min}
\left\{ - \frac{1}{n} \sum_{i=1}^n \ell_i \left( g^{-1} \left(  \boldsymbol{\tilde{X}}_{i}^\top \boldsymbol{\beta} \right) \right)  +  \lambda \sum_{q=K}^{(K-1)(p+1)} \left| \beta_q \right| \right\}
\end{align*}
where
\[
\ell_i(\boldsymbol{p}_i) = \sum_{k=1}^{K-1} \mathbbm{1} \{y_i = k \} \log p_{ik} +  \mathbbm{1} \{y_i = K  \}  \log \left( 1 - \sum_{k=1}^{K-1}  p_{ik} \right).
\]
We would like to place an \(\ell_1\) penalty on the first differences \(B_{j,k+1} - B_{jk}\) for all \(k \in \{1, \ldots, K-2\}\). We can do this through the parameterization \(\boldsymbol{\Psi} \in \mathbb{R}^{p \times (K-1)}\) defined by
\[
\Psi_{jk} = \begin{cases}
B_{j1}, & j \in [p], k = 1,
\\  B_{jk} - B_{j,k-1}, & j \in [p], k \in \{2, \ldots, K - 1\}.
\end{cases} 
\]
Observe that these matrices are related by
\[
B_{jk} = \sum_{k'=1}^k \Psi_{jk'},
\]
so
\[
\eta_{ik} = c_{0k} + \boldsymbol{B}_{\cdot k}^\top \boldsymbol{x}_i =  c_{0k} + \sum_{k'=1}^k \boldsymbol{\Psi}_{ \cdot k'}^\top \boldsymbol{x}_i, \qquad k \in \{1, \ldots, K -1\}.
\]
Therefore we can simply write
\[
\boldsymbol{\eta}_i =  \boldsymbol{\tilde{X}}_{i}' \boldsymbol{\beta}',
\]
for \(\boldsymbol{\beta}' \in \mathbb{R}^{(K-1)(p+1)}\) defined by
\[
\boldsymbol{\beta}'  = \begin{pmatrix}
\boldsymbol{c}
\\ \boldsymbol{\Psi}_{ \cdot 1}
\\ \vdots
\\ \boldsymbol{\Psi}_{ \cdot K -1}
\end{pmatrix}
\]
and \(\boldsymbol{\tilde{X}}_{i}' \in \mathbb{R}^{(K-1) \times (K-1)(p+1)} \) defined by
\begin{equation}\label{mod.x}
\boldsymbol{\tilde{X}}_{i}' = \begin{pmatrix}
& \boldsymbol{x}_i^\top & \boldsymbol{0}_{p}^\top & \cdots & \boldsymbol{0}_p^\top & \boldsymbol{0}_p^\top
\\ & \boldsymbol{x}_i^\top &\boldsymbol{x}_i^\top & \cdots  & \boldsymbol{0}_p^\top
 & \boldsymbol{0}_p^\top
\\ \boldsymbol{I}_{K-1} &  \vdots & \vdots & \ddots & \vdots & \vdots
\\ &   \boldsymbol{x}_i^\top  &   \boldsymbol{x}_i^\top  & \cdots &   \boldsymbol{x}_i^\top  &   \boldsymbol{0}_{p}^\top
\\ &   \boldsymbol{x}_i^\top  &   \boldsymbol{x}_i^\top  & \cdots &   \boldsymbol{x}_i^\top  &   \boldsymbol{x}_i^\top 
\end{pmatrix}.
\end{equation}
We then seek to solve the slightly modified optimization problem (modification highlighted)
\begin{align*}
& \underset{\boldsymbol{\beta}'}{\arg \min}
\left\{ - \frac{1}{n} \sum_{i=1}^n \ell_i \left( g^{-1} \left(   \underbrace{\left(\boldsymbol{\tilde{X}}_{i}' \right)^\top}_{(*)} \boldsymbol{\beta}' \right) \right) +  \lambda \sum_{q=K}^{(K-1)(p+1)}  \left| \beta_q' \right|.  \right\}
\end{align*}
Though this can not be implemented within the framework of the existing \texttt{ordinalNet} package, the above modification only requires changing a handful of lines of the publicly available source code in the \texttt{ordinalNet} package. Though our implementation is simple, using the modified design matrix (the change above) could cause convergence of parameter estimation to be slow, because the resulting lasso problem effectively has highly correlated features, which slows the convergence of coordinate descent. See Section 4.5.1.1 of \citet{hastie2015statistical} for a textbook-level discussion of this point. Though this is a convenient way to estimate PRESTO, it is not the best approach for estimating PRESTO in practice, and the limitations of our estimation approach are not limitations of PRESTO itself. Known efficient strategies for solving optimization problems like PRESTO are discussed in \citet[Section 4.5]{hastie2015statistical} and \citet{Arnold2016}. Though PRESTO does not strictly lie within the class of optimization problems considered by \citet{xin2014efficient}, some of their ideas might also be helpful for better estimating PRESTO. An ADMM approach like the one proposed by \cite{zhu2017augmented} for generalized lasso problems may also be useful for estimating PRESTO. \citet{ko2019easily} propose their own methods for solving similar optimization problems.

\end{document}